\newcommand{\eps}{\varepsilon}
\def\d{\mathrm{d}}
\def\R{\mathbb R}
\def\esssup{\mathrm{ess\, sup\,}}
\def\essinf{\mathrm{ess\, inf\,}}
\def\e{\mathrm{e}}
\def\d{\mathrm{d}}
\def\VaR{\mathrm{VaR}}
\def\TVaR{\mathrm{TVaR}}
\theoremstyle{plain}
\newtheorem{theorem}{Theorem}[section]
\newtheorem{lemma}[theorem]{Lemma}
\newtheorem{corollary}[theorem]{Corollary}
\newtheorem{proposition}[theorem]{Proposition}
\newtheorem{problem}[theorem]{Problem}
\theoremstyle{definition}
\newtheorem{definition}[theorem]{Definition}
\newtheorem{example}[theorem]{Example}
\theoremstyle{remark}
\newtheorem{remark}[theorem]{Remark}
\numberwithin{equation}{section}
\numberwithin{figure}{section}
\title{Orlicz-Lorentz premia and distortion Haezendonck-Goovaerts risk measures}
\author{Aline Goulard and Karl Grosse-Erdmann}
\address{Aline Goulard, 
D\'epartement de Math\'e\-matique, Universit\'e de Mons, 20 Place du Parc, 7000 Mons, Belgium}
\email{alinegoulard@gmail.com}
\address{Karl Grosse-Erdmann, 
D\'epartement de Math\'e\-ma\-tique, Universit\'e de Mons, 20 Place du Parc, 7000 Mons, Belgium}
\email{kg.grosse-erdmann@umons.ac.be}
\keywords{Distortion risk measure, Orlicz premium, Haezendonck-Goovaerts risk measure, Orlicz-Lorentz premium, distortion Haezendonck-Goovaerts risk measure, coherent risk measure, Fatou property}
\subjclass[2020]{Primary 91G70; Secondary 46E30}
\begin{document}

\begin{abstract}
In financial and actuarial research, distortion and Haezendonck-Goovaerts risk measures are attractive due to their strong properties. They have so far been treated separately. In this paper, following a suggestion by Goovaerts, Linders, Van Weert, and Tank, we introduce and study a new class of risk measure that encompasses the distortion and Haezendonck-Goovaerts risk measures, aptly called the distortion Haezendonck-Goovaerts risk measures. They will be defined on a larger space than the space of bounded risks. We provide situations where these new risk measures are coherent, and explore their risk theoretic properties.
\end{abstract}
\maketitle

\section{Introduction}\label{s-intro}
Risk measures occupy a prominent role in financial and actuarial research, see \cite{DDGK05}, \cite{FoSc16}, \cite{RoSg13}, and \cite{Rue13}. The most basic risk measure is Value at Risk VaR$_\alpha$, $0<\alpha\leq 1$, which is simply the quantile of order $\alpha$ of a given risk $X$: VaR$_\alpha(X)=F_X^{-1}(\alpha)$. Once it was recognized that VaR does not satisfy the desirable property of subadditivity (but see the discussion in \cite{DLVDG08}), more advanced risk measures were proposed and studied. The best known subadditive alternative to VaR is the Tail Value at Risk TVaR$_\alpha$, $0<\alpha< 1$, also known as Expected Shortfall, Average Value at Risk or Conditional Value at Risk, which is a weighted (or distorted) version of VaR. Using different weight functions, one is led to the large and well-studied family of distortion risk measures, defined by
\[
\rho_g(X)= \int_0^1 F_X^{-1}(1-u)\d g(u),
\]
where $g$ is a distortion function. The literature on these risk measures is extensive, see for example \cite{AmLi24}, \cite{DKLT12}, \cite{DVGKTV06}, \cite{GLVT12}, and \cite{Wa96}; see also \cite{HJZ15} and \cite{Wei18}, where they are called weighted VaR.

A different class of risk measures is based on the idea of applying a convex function $\phi$ (more precisely, a Young function) to VaR. Inspired by the theory of Orlicz spaces, Haezendonck and Goovaerts \cite{HaGo82} defined a corresponding Orlicz premium for positive risks $X$, see Definition \ref{d-Orlicz}; it may be defined equivalently as
\[
\pi_{\phi,\alpha}(X) = \inf\Big\{a>0 : \int_0^1\phi\Big(\frac{F_X^{-1}(1-u)}{a}\Big)\d u\leq 1-\alpha\Big\},
\]
where $\alpha<1$. The extension to real-valued risks in a cash-invariant way was subsequently proposed by Goovaerts, Kaas, Dhaene, and Tang \cite{GKDT04} as $\rho_{\phi,\alpha}(X) = \inf_{x\in\mathbb{R}}(\pi_{\phi,\alpha}((X-x)^+)+x)$. These so-called Haezendonck-Goovaerts risk measures (see \cite[p.\ 13]{GLVT12}) have been studied intensively, see for example \cite{AhSh14}, \cite{AmLi24}, \cite{BeRo08}, \cite{BeRo08b}, \cite{BeRo12}, \cite{CCR21}, \cite{GMX20}, \cite{GKDT04}, \cite{GLVT12}, \cite{LPW17}, and \cite{TaYa14}.

It therefore seems natural and of interest to combine these two ways of weighting VaR. This was suggested, en passant, by Goovaerts, Linders, Van Weert, and Tank \cite[Definition 4.2]{GLVT12}. Analysing their suggestion leads us to the premium
\begin{align*}
\pi_{g,\phi,\alpha}(X)= \inf\Big\{a>0 : \int_0^1\phi\Big(\frac{F_X^{-1}(1-u)}{a}\Big)\d g(u)\leq 1-\alpha\Big\},
\end{align*}
which we call an Orlicz-Lorentz premium in view of its link with the Orlicz-Lorentz spaces, and to the distortion Haezendonck-Goovaerts risk measure 
\[
\rho_{g,\phi,\alpha}(X) = \inf_{x\in\mathbb{R}}(\pi_{g,\phi,\alpha}((X-x)^+)+x).
\]
The main aim of our paper is to determine natural sets where these risk measures are defined, and to study their risk theoretic properties. Our main result is that the distortion Haezendonck-Goovaerts risk measures are coherent whenever $g$ is concave, thereby generalizing the known properties for distortion and Haezendonck-Goovaerts risk measures.

The large majority of our results were first presented in 2022 in the PhD thesis of the first author \cite{Gou22}. The main additional contributions are the investigation of Fatou properties, the realization that the Orlicz-Lorentz premia are closely related to the Orlicz-Lorentz spaces from functional analysis (hence their name), and the observation that, in many cases, Haezendonck-Goovaerts risk measures reduce to the expectation when $\alpha=0$. Also, we offer a different proof of coherence: while in \cite{Gou22}, the proof was more direct, we proceed here via the notions of stop-loss order and comonotonicity, as suggested in \cite{DVGKTV06} and \cite{WaDh98}.

The paper is organized as follows. In Section \ref{s-rm} we recall the main risk theoretic properties that are discussed in this paper.
Sections \ref{s-dist} and \ref{s-HaezGoov} present the distortion risk measures and the Haezendonck-Goovaerts risk measures, respectively; they prepare the ground for the following section, but they also add some new aspects to the known theory, like Example \ref{ex-cone}, Proposition \ref{pex-HaezGoov}, and the unexpected Corollary \ref{c-HGexp}. Section \ref{s-distHaezGoov} constitutes the main part of this paper, a thorough investigation of the distortion Haezendonck-Goovaerts risk measures. 

We remark that recently, and independently, Wu and Xu \cite{WuXu22} have also proposed versions of the Orlicz-Lorentz premia and the distortion Haezendonck-Goovaerts risk measures. We discuss the relationship with our work in the final Section \ref{s-end}. We also suggest there some open problems.

Let us finally mention that properties like ``positive'' and ``decreasing'' are meant in the large sense. Also, random variables that coincide almost surely are identified. Thus, for example, ``$X\geq Y$'' means that ``$X\geq Y$ a.s.'' As usual, $L^\infty$ and $L^1$ denote the spaces of bounded and of integrable random variables, respectively, while $L^\infty_+$ and $L^1_+$ are their positive cones. We emphasize that $\esssup X$ is defined for any random variable, having the value $\infty$ if $X$ is not bounded above. The following well-known properties of the quantile function $F_X^{-1}(u)= \inf\{x\in\mathbb{R}: F_X(x)\geq u\}$ will be used repeatedly. If $h$ is a continuous increasing function on $\mathbb{R}$ then $F^{-1}_{h(X)}=h(F^{-1}_X)$; if $h$ is a positive measurable function on $\mathbb{R}$ then $\int_\Omega h(X)\d P=\int_0^1 h(F_X^{-1}(u))\d u$; and $u\leq F_X(x)$ holds if and only if $F_X^{-1}(u)\leq x$.

\section{Risk measures}\label{s-rm}
Throughout this paper, risk variables $X$ are real random variables on a given probability space $(\Omega,\mathcal{A},P)$. We follow the usual convention from insurance mathematics: positive values of $X$ correspond to losses, negative ones correspond to gains.

\begin{definition}\label{d-risk}
Let $\mathcal{X}$ be a set of risks that contains the constants. 

(a) A \textit{risk measure} is a functional $\rho:\mathcal{X}\to \R$.

(b) A risk measure $\rho$ is said to be \textit{coherent} if it satisfies the following conditions:

\begin{enumerate}[label=(\roman*)]
	\item If $X,Y\in \mathcal{X}$ with $X\leq Y$ then $\rho(X)\leq \rho(Y)$. (\textit{Monotonicity})
	\item If $X\in \mathcal{X}$ and $b\in \R$ with $X+b\in\mathcal{X}$ then $\rho(X+b)=\rho(X)+b$. (\textit{Cash-invariance})
	\item If $X\in \mathcal{X}$ and $\lambda\geq 0$ with $\lambda X\in\mathcal{X}$ then $\rho(\lambda X)=\lambda \rho(X)$. (\textit{Positive homogeneity})
	\item If $X,Y\in \mathcal{X}$ with $X+Y\in\mathcal{X}$ then $\rho(X+Y)\leq \rho(X)+\rho(Y)$. (\textit{Subadditivity})
\end{enumerate}
\end{definition}

The notion of coherence was introduced in \cite{ADEH99}. In the insurance literature, $\rho$ is also sometimes called a premium principle, see \cite{GMX20} or \cite{WaDh98}; see also Remark \ref{r-prime}(b) below. In the finance literature, the differing sign convention for risks, where positive values correspond to gains, leads to different notions of monotonicity and cash-invariance, see \cite{ADEH99}, \cite{FoSc16} or \cite{RoSg13}.

\begin{remark}\label{r-risk}
If $\mathcal{X}$ is a convex cone, then, for any $X,Y\in\mathcal{X}$, $b\in\R$, and $\lambda\geq 0$, $X+Y, X+b$, and $\lambda X\in\mathcal{X}$, so that the extra assumptions in (ii)--(iv) are not needed. But we will see in Example \ref{ex-cone} below that even for concave distortion functions the natural domain of definition of the corresponding distortion risk measure need not be a convex cone. 
\end{remark}

Another desirable property of risk measures is that they are \textit{law-invariant}: if a risk $X$ has the same distribution as a risk $Y\in \mathcal{X}$ then $X\in \mathcal{X}$ and $\rho(X)=\rho(Y)$. It will be immediately clear from their definitions that all the particular risk measures studied in this paper are law-invariant.

We next consider some continuity properties.

\begin{definition}\label{d-Fatou}
Let $\rho:\mathcal{X}\to \R$ be a risk measure.

(a) $\rho$ is said to have the \textit{Fatou property} if, for any sequence $(X_n)_n$ in $\mathcal{X}$ and $X,Y_1,Y_2\in\mathcal{X}$,
\begin{align*}
X_n\to X\ \&\ \forall n, Y_1\leq X_n \leq Y_2\Longrightarrow \rho(X)\leq \liminf_{n\to\infty} \rho(X_n).
\end{align*}

(b) $\rho$ is said to have the \textit{reverse Fatou property} if, for any sequence $(X_n)_n$ in $\mathcal{X}$ and $X,Y_1,Y_2\in\mathcal{X}$,
\begin{align*}
X_n\to X\ \&\ \forall n, Y_1\leq X_n \leq Y_2\Longrightarrow \rho(X)\geq \limsup_{n\to\infty} \rho(X_n).
\end{align*}

(c) $\rho$ is said to have the \textit{Lebesgue property} if, for any sequence $(X_n)_n$ in $\mathcal{X}$ and $X,Y_1,Y_2\in\mathcal{X}$,
\begin{align*}
X_n\to X\ \&\ \forall n, Y_1\leq X_n \leq Y_2\Longrightarrow \rho(X)=\lim_{n\to\infty} \rho(X_n).
\end{align*}
\end{definition}

Thus, $\rho$ has the Lebesgue property if and only if it has both the Fatou and the reverse Fatou property.

\begin{remark}\label{r-Fatou}
Some discussion of these definitions is in order.

(a) By a well known property, one can replace almost sure convergence by convergence in probability.

(b) In the literature, one usually demands that $|X_n|\leq Y$ for some $Y\in\mathcal{X}$. But this happens often in the context where $-Y\in\mathcal{X}$ whenever $Y\in\mathcal{X}$. In our context we found it useful to demand explicitly a lower bound from $\mathcal{X}$; see Example \ref{ex-Fatou}.

(c) Suppose that $\mathcal{X}$ has the property that, for any risk $X$, if there are $Y_1, Y_2\in \mathcal{X}$ with $Y_1\leq X\leq Y_2$ then $X\in\mathcal{X}$.

If $\rho$ is monotonic, then $\rho$ has the Fatou property if and only if, for any sequence $(X_n)_n$ in $\mathcal{X}$ and any $X\in \mathcal{X}$,
\[
X_n\nearrow X \Longrightarrow \rho(X_n)\to \rho(X);
\]
and $\rho$ has the reverse Fatou property if and only if, for any sequence $(X_n)_n$ in $\mathcal{X}$ and any $X\in \mathcal{X}$,
\[
X_n\searrow X \Longrightarrow \rho(X_n)\to \rho(X).
\]
This follows by passing to $\inf_{k\geq n} X_k$ and $\sup_{k\geq n} X_k$, respectively.

If $\rho$ is \textit{anti-monotonic}, that is, if $X,Y\in \mathcal{X}$ with $X\leq Y$ implies that $\rho(X)\geq \rho(Y)$, then, obviously, the arrows $\nearrow$ and $\searrow$ need to be interchanged; see also \cite[Section 4.2]{FoSc16}.

(d) The reverse Fatou property does not seem to have been given a name in the literature so far.

(e) By a remarkable result of Jouini, Schachermayer, and Touzi \cite{JST06}, see also \cite{Svi10} and \cite{LiMu25}, every law-invariant coherent risk measure on the space $L^\infty$ over an atom-less probability space has the Fatou property. For an extension to Orlicz spaces, see \cite[Corollary 2.5]{CGLL22}.
\end{remark}

\section{Distortion risk measures}\label{s-dist} 

\begin{definition}\label{d-distfunc}
A \textit{distortion function} is a function $g : [0, 1] \to [0, 1]$ that is increasing and right-continuous with $\lim_{u\nearrow 1}g(u)=g(1)=1$.
\end{definition}

In the literature, the requirements on a distortion function vary considerably. Often, $g(0)=0$ is also required; on this, see Example \ref{ex-class} below. Our choice is motivated by the well-known one-to-one correspondence between increasing and right-contin\-u\-ous functions $g : [0, 1] \to [0, 1]$ with $g(1)=1$ and Borel probability measures on $[0,1]$, which is given by $\mu_g([0,u])=g(u)$, $u\in [0,1]$. The Lebesgue-Stieltjes integral $\int_0^1 f \d g$ is then understood in the Lebesgue sense with respect to $\mu_g$. Note that we write $\int_0^1 f \d g$ instead of the more correct form $\int_{[0,1]} f \d g$, while $\int_{(0,1]} f \d g$ has possibly a different value. We also set $g(0-)=0$.

The distortion risk measures will be defined on the following space.

\begin{definition}\label{d-dist}
Let $g$ be a distortion function. Then $L_g=L_g(\Omega)$ is the space of all risks $X:\Omega\to\R$ such that
\[
\int_0^1 |F_X^{-1}(1-u)|\d g(u) <\infty.
\]
\end{definition}

\begin{remark} In \cite{Pic13}, Pichler seems to suggest that natural domains of risk measures have the property that if $X$ is a risk in the domain then so is $|X|$, see \cite[Proposition 5]{Pic13}. For example, if $g$ is given by $g(u)=\int_0^u w(v)\d v$, $u\in [0,1]$, then
Pichler takes as the natural domain of the distortion risk measure $\rho_g$ the set $\{X: \int_0^1 F_{|X|}^{-1}(1-u)w(u)\d u<\infty\}$, see \cite[Definition 8]{Pic13}.

The problem with this approach is that, by considering $|X|$, gains (corresponding to negative values) and losses (corresponding to positive values) are treated on the same footing. We therefore prefer to consider $|F_X^{-1}|$ instead of $F_{|X|}^{-1}$ in the above definition (and in Definition \ref{d-OL} below). 

We will continue the discussion in Remark \ref{rem-Lorentz}.
\end{remark}

Since, for any risk $X$, $F_X^{-1}(0)=-\infty$, $L_g$ would be empty if $\mu_g(\{1\})= g(1)-g(1-)>0$. This is the reason why we require $g(1-)=g(1)$ for our distortion functions. On the other hand, since $g(1-)=g(1)$, every bounded risk belongs to $L_g$, that is,
\[
L^\infty\subset L_g.
\]
In the same vein, if $g(0)>0$ then a risk $X$ can only belong to $L_g$ if $F_X^{-1}(1)<\infty$, which means that $X$ is bounded above.

\begin{definition}\label{d-distrm}
Let $g$ be a distortion function. The \textit{distortion risk measure} $\rho_g:L_g\to\R$ is given by
\[
\rho_g(X) = \int_0^1 F_X^{-1}(1-u)\d g(u).
\]
\end{definition}

We have useful alternative representations. For the first, compare for example with \cite[Section 2.6.1.2]{DDGK05} and \cite[Section 5.1]{DVGKTV06}, where, however, $g$ is the left-continuous version of ours; for the second, compare for example with \cite[p.\ 68]{Hei03} and \cite[Section 2.6.1.6]{DDGK05}. In this result, let
\[
g(u-)=\lim_{t\nearrow u} g(t),  \ u\in [0,1],
\]
be the left-hand limit; recall that $g(0-)=0$. 

\begin{proposition}\label{p-equiv} 
Let $X\in L_g$. 

\emph{(a)} We have
\begin{align*}
\rho&_g(X)= -\int_{-\infty}^0 (1-g(\overline{F}_X(x)-))\d x + \int_0^{\infty} g(\overline{F}_X(x)-)\d x,
\end{align*}
where $\overline{F}_X(x)=1-F_X(x)$.

\emph{(b)} Let $h:[0,1]\to [0,1]$ be given by $h(u)=1-g((1-u)-)$. Then
\[
\rho_g(X)= \int_{-\infty}^\infty x\, \d (h\circ F_X)(x).
\]
\end{proposition}

\begin{proof}
(a) Note that, by using Fubini and properties of $F_X^{-1}$, 
\begin{align*}
\rho_g(X) &= - \int_{F_X^{-1}(1-u)\leq 0} \int_{F_X^{-1}(1-u)\leq x \leq 0}\d x \d g(u)+\int_{F_X^{-1}(1-u)> 0} \int_{0\leq x< F_X^{-1}(1-u)}\d x \d g(u)\\
&=  - \int_{u\geq \overline{F}_X(0)} \int_{\overline{F}_X(x)\leq u,x\leq 0}\d x \d g(u)+\int_{u< \overline{F}_X(0)} \int_{\overline{F}_X(x)>u,x\geq 0}\d x \d g(u)\\
&=  -  \int_{x\leq 0} \int_{u\geq \overline{F}_X(x)}\d g(u) \d x +\int_{x\geq 0}\int_{u< \overline{F}_X(x)} \d g(u)\d x,
\end{align*}
which yields the claimed identity.

(b) The integral is the Lebesgue-Stieltjes integral with respect to the finite measure $\mu$ on $\R$ given by
\[
\mu((-\infty,x])=h(F_X(x)) = h(P(X\leq x));
\]
note that $h$ and $F_X$ are right-continuous and that $h(0)=0$, $\lim_{u\nearrow 1}h(u)=1-g(0)$, and $h(1)=1$. Writing $\mu_g$ for the probability measure on $[0,1]$ induced by $g$ we see that for $x\in\R$, using a property of $F_X^{-1}$,
\begin{align*}
\mu((-\infty,x]) &= 1 - g((1-F_X(x))-)\\
&= 1-\mu_g(u<1-F_X(x)) = \mu_g(u\geq 1-F_X(x))\\
&=\mu_g (F_X^{-1}(1-u)\leq x),
\end{align*}
so that  a push-forward measure argument and the definition of $\rho_g$ yield the claimed identity.
\end{proof}

In particular, for positive risks $X$, we find that
\begin{equation}\label{eq-defdist}
\rho_g(X) =\int_0^{\infty} g(\overline{F}_X(x)-)\d x = \int_0^\infty x\, \d (h\circ F_X)(x).
\end{equation}

\begin{remark}\label{r-percfct}
The formulas in the previous proposition show that the function $g$ distorts the probabilities $\overline{F}_X(x)=P(X>x)$, while the related function $h$ distorts the probabilities $F_X(x)=P(X\leq x)$. The functions $g$ and $h$ therefore represent the decision maker's perception of probabilities. For this reason, the distortion function $g$ is sometimes called a \textit{probability perception function}, see for example \cite{CCM97} or \cite{DDGKL06}. 
\end{remark}

\begin{example}\label{ex-class}
We have three classical examples of distortion risk measures. If $g(u)=u$ then $\rho_g(X)=E(X)$ on the set $L_g=L^1$ of integrable risks. If $g(u)=\mathds{1}_{[1-\alpha,1]}(u)$, $0 < \alpha < 1$, then $\rho_g(X)=\VaR_\alpha(X)=F_X^{-1}(\alpha)$ (Value at Risk) on the set of all risks; in the extreme case of $\alpha=0$ we have with $g(u)\equiv 1$ that $\rho_g(X)= \VaR_1(X)=\esssup X$ on the set of all risks for which $X^+\in L^\infty$; it therefore makes sense not to demand that $g(0)=0$. Finally, if $g(u) =\min\big(\frac{u}{1-\alpha},1\big)$, $0 < \alpha < 1$, then $\rho_g(X)=\TVaR_\alpha(X)=\frac{1}{1-\alpha}\int_\alpha^1 F_X^{-1}(u)\d u$ (Tail Value at Risk) on the set of all risks for which $X^+\in L^1$.
\end{example} 

We recall a well-known formula for TVaR, which is due to Rockafellar and Uryasev \cite{RoUr00}, \cite{RoUr02}, and Acerbi and Tasche \cite{AcTa02}; for short proofs, see \cite[p.\ 582]{DVGKTV06} or \cite[Proposition 4.51]{FoSc16}. It can be used, for example, to show that TVaR is subadditive, see \cite[Section 3.2]{EmWa15}. This type of formula will guide us throughout the paper, see Definitions \ref{d-HaezGoov} and \ref{d-dHG}.

\begin{proposition}\label{p-TVaR}
Let $0 < \alpha < 1$. If $X^+\in L^1$, then 
\[
\emph{TVaR}_\alpha(X) = \min_{x\in\mathbb{R}}\Big(\frac{1}{1-\alpha}E\big((X-x)^+\big)+x\Big),
\]
where the minimum is attained at $x=F_X^{-1}(\alpha)$.
\end{proposition}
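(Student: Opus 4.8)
The plan is to show that $\psi(x):=\frac{1}{1-\alpha}E\big((X-x)^+\big)+x$ satisfies $\psi(x)\geq \TVaR_\alpha(X)$ for every $x\in\R$, with equality at $x^{\ast}:=F_X^{-1}(\alpha)$; this gives both the value of the minimum and the fact that it is attained there. First I would record the finiteness bookkeeping. Since $0<\alpha<1$, $F_X(t)\to 0$ as $t\to-\infty$ and $F_X(t)\to 1$ as $t\to\infty$, so $x^{\ast}$ is a (finite) real number. Since $X^+\in L^1$, $E\big((X-x)^+\big)\leq E(X^+)+|x|<\infty$, so $\psi:\R\to\R$ is well defined; likewise $\TVaR_\alpha(X)=\frac{1}{1-\alpha}\int_\alpha^1 F_X^{-1}(u)\,\d u$ is a finite real number, since it lies between $x^{\ast}$ and $\frac{1}{1-\alpha}E(X^+)$ (using $F_X^{-1}(u)\geq F_X^{-1}(\alpha)=x^{\ast}$ for $u\geq\alpha$ for the lower bound, and $F_X^{-1}(u)\leq (F_X^{-1}(u))^+$ for the upper bound).

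The key tool is the transformation formula recalled in the introduction, applied to the positive measurable function $t\mapsto (t-x)^+$ on $\R$:
\[
E\big((X-x)^+\big)=\int_0^1 \big(F_X^{-1}(u)-x\big)^+\,\d u .
\]
After multiplying by $1-\alpha$ and rearranging, the inequality $\psi(x)\geq\TVaR_\alpha(X)$ is seen to be equivalent to $E\big((X-x)^+\big)\geq \int_\alpha^1\big(F_X^{-1}(u)-x\big)\,\d u$, which follows from the chain
\[
\int_0^1 \big(F_X^{-1}(u)-x\big)^+\,\d u\ \geq\ \int_\alpha^1 \big(F_X^{-1}(u)-x\big)^+\,\d u\ \geq\ \int_\alpha^1 \big(F_X^{-1}(u)-x\big)\,\d u,
\]
where the first inequality holds because the integrand is nonnegative and the second because $a^+\geq a$ for all $a\in\R$.

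Finally I would verify that both inequalities in the last display become equalities when $x=x^{\ast}=F_X^{-1}(\alpha)$. Since $F_X^{-1}$ is increasing, for $u<\alpha$ we have $F_X^{-1}(u)\leq x^{\ast}$, hence $(F_X^{-1}(u)-x^{\ast})^+=0$, so the first inequality is an equality; and for $u>\alpha$ we have $F_X^{-1}(u)\geq x^{\ast}$, hence $(F_X^{-1}(u)-x^{\ast})^+=F_X^{-1}(u)-x^{\ast}$, so the second inequality is an equality (the single point $u=\alpha$ being negligible throughout). Therefore $\psi(x^{\ast})=\TVaR_\alpha(X)\leq\psi(x)$ for every $x\in\R$, which is exactly the assertion.

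There is no genuine obstacle in this argument; the only care needed is the (routine) verification that $\psi$ and $\TVaR_\alpha$ take finite real values and the harmless handling of the endpoint $u=\alpha$. One could alternatively invoke convexity of $x\mapsto E\big((X-x)^+\big)$ together with its one-sided derivatives $-P(X>x)$ and $-P(X\geq x)$ to locate the minimizer, but the direct comparison above is shorter and keeps everything at the level of the quantile function.
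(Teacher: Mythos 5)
Your argument is correct. The paper itself does not prove this proposition; it cites \cite{DVGKTV06} (p.\ 582) and \cite[Proposition 4.51]{FoSc16} for short proofs, and your quantile-based comparison — bounding $E((X-x)^+)=\int_0^1(F_X^{-1}(u)-x)^+\,\d u$ below by $\int_\alpha^1(F_X^{-1}(u)-x)\,\d u$ and checking that both inequalities are tight at $x=F_X^{-1}(\alpha)$ — is essentially the same standard argument given in those references, with the finiteness bookkeeping carried out cleanly.
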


The case of the Tail Value at Risk shows that $X\in L_g$ does not necessarily imply that $|X|\in L_g$. The following, however, is a direct consequence of the definition and the monotonicity of VaR.

\begin{proposition}\label{p-lg}
If $Y_1,Y_2\in L_g$ and $Y_1\leq X\leq Y_2$ then $X\in L_g$.
\end{proposition}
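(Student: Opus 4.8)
The plan is to reduce the statement to the monotonicity of the quantile function, which is recorded in the introduction: $u\leq F_X(x)$ if and only if $F_X^{-1}(u)\leq x$. From $Y_1\leq X\leq Y_2$ one gets $F_{Y_1}(x)\geq F_X(x)\geq F_{Y_2}(x)$ for all $x\in\R$, and hence, taking infima in $F_X^{-1}(t)=\inf\{x: F_X(x)\geq t\}$ (or directly from the stated equivalence),
\[
F_{Y_1}^{-1}(t)\leq F_X^{-1}(t)\leq F_{Y_2}^{-1}(t),\qquad t\in(0,1).
\]

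Next I would invoke the elementary observation that if $a\leq b\leq c$ are (extended) reals then $|b|\leq\max(|a|,|c|)\leq|a|+|c|$: indeed, if $b\geq 0$ then $|b|=b\leq c\leq|c|$, and if $b<0$ then $|b|=-b\leq-a\leq|a|$. Applying this with $t=1-u$ gives, for every $u\in(0,1)$,
\[
|F_X^{-1}(1-u)|\leq|F_{Y_1}^{-1}(1-u)|+|F_{Y_2}^{-1}(1-u)|.
\]

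Finally I would integrate this pointwise bound against $\mu_g$, i.e.\ with respect to $\d g$, using that $Y_1,Y_2\in L_g$:
\[
\int_0^1|F_X^{-1}(1-u)|\,\d g(u)\leq\int_0^1|F_{Y_1}^{-1}(1-u)|\,\d g(u)+\int_0^1|F_{Y_2}^{-1}(1-u)|\,\d g(u)<\infty,
\]
so $X\in L_g$. There is no genuine obstacle here; the only points worth a word are that $u\mapsto F_X^{-1}(1-u)$ is monotone, hence Borel measurable, so the integral is well defined, and the passage from the two-sided bound on $F_X^{-1}$ to the bound on its absolute value, which is exactly the $\max$/sum estimate above (the values $\pm\infty$ on the $\mu_g$-negligible edge points cause no trouble).
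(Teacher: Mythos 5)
Your proof is correct and follows exactly the route the paper has in mind: the paper states this proposition as ``a direct consequence of the definition and the monotonicity of VaR'' without further detail, and your argument (monotonicity of the quantile function, the elementary bound $|b|\leq|a|+|c|$ for $a\leq b\leq c$, and integration against $\d g$) is precisely that direct consequence spelled out.
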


The next example shows a rather unexpected problem with the domain of distortion risk measures, which does not seem to have been noticed before.

\begin{example}\label{ex-cone}
There exists a concave distortion function $g$ for which $L_g$ is not a convex cone. Indeed, consider $g:[0,1]\to[0,1]$ given by $g(u)=\frac{4}{3}(1-\e^{-3})u\mathds{1}_{[0,\frac{3}{4})}(u)+(1-\e^{-\frac{u}{1-u}})\mathds{1}_{[\frac{3}{4},1)}(u)$ with $g(1)=1$.

On $\Omega=[-1,1]$ with the normalized Lebesgue measure, we consider $X(\omega)=-\e^{\frac{1}{|\omega|}}\mathds{1}_{[-1,0)}(\omega)$ and $Y(\omega)=-\e^{\frac{1}{|\omega|}}\mathds{1}_{(0,1]}(\omega)$. We calculate that $F_X(x)=F_Y(x)= \frac{1}{2\ln(-x)}\mathds{1}_{(-\infty,-\e)}(x) + \frac{1}{2}\mathds{1}_{[-\e,0)}(x) +\mathds{1}_{[0,\infty)}(x)$ for $x\in\R$ and $F_X^{-1}(u)=-\e^{\frac{1}{2u}}\mathds{1}_{(0,\frac{1}{2}]}(u)$ for $u\in(0,1]$. Then $\int_0^1 |F_X^{-1}(1-u)|\d g(u)=\int_{\frac{1}{2}}^1 \e^{\frac{1}{2(1-u)}} g'(u)\d u= C + \int_{\frac{3}{4}}^1 \e^{\frac{1}{2(1-u)}} \frac{1}{(1-u)^2}\e^{-\frac{1}{1-u}}\e\,\d u<\infty$, where $C$ is some constant, so that $X\in L_g$ and hence also $Y\in L_g$. On the other hand, $F_{X+Y}(x)= \frac{1}{\ln(-x)}\mathds{1}_{(-\infty,-\e)}(x) + \mathds{1}_{[-\e,\infty)}(x)$ and $F_{X+Y}^{-1}(u)= -\e^{\frac{1}{u}}$, so that $\int_0^1 |F_{X+Y}^{-1}(1-u)|\d g(u)\geq \int_{\frac{3}{4}}^1  \e^{\frac{1}{1-u}} \frac{1}{(1-u)^2}\e^{-\frac{1}{1-u}}\e\,\d u =\infty$, which shows that $X+Y\notin L_g$. Thus $L_g$ is not a convex cone.
\end{example}

In Subsection \ref{subs-convex} we will find conditions on $g$ under which $L_g$ is a convex cone.

The following is an immediate consequence of the corresponding properties for the Value at Risk.

\begin{proposition}\label{p-distmonposhom}
The distortion risk measure $\rho_g$ is monotonic, cash-invariant and positively homogeneous on $L_g$.
\end{proposition}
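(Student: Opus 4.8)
The plan is to derive all three properties directly from the corresponding facts for $\VaR$, together with the elementary behaviour of quantile functions recalled in Section \ref{s-intro}, while also checking the membership conditions that appear in Definition \ref{d-risk}(b)(ii)--(iii).

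For monotonicity, suppose $X,Y\in L_g$ with $X\leq Y$, so that $F_X(x)\geq F_Y(x)$ for every $x\in\R$. Using the equivalence $F_Z^{-1}(v)\leq x\iff v\leq F_Z(x)$ recalled in Section \ref{s-intro}, this forces $F_X^{-1}(v)\leq F_Y^{-1}(v)$ for all $v\in(0,1]$. Substituting $v=1-u$ and integrating the pointwise inequality $F_X^{-1}(1-u)\leq F_Y^{-1}(1-u)$ against the positive measure $\mu_g$ on $[0,1]$ — both integrals being finite since $X,Y\in L_g$ — yields $\rho_g(X)\leq\rho_g(Y)$. (This is consistent with Proposition \ref{p-lg}, though here membership in $L_g$ is assumed from the outset.)

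For cash-invariance and positive homogeneity I would invoke the identity $F_{h(X)}^{-1}=h(F_X^{-1})$, valid for continuous increasing $h$. Taking $h(x)=x+b$ gives $F_{X+b}^{-1}(1-u)=F_X^{-1}(1-u)+b$; since $|F_X^{-1}(1-u)+b|\leq|F_X^{-1}(1-u)|+|b|$ is $\mu_g$-integrable, we get $X+b\in L_g$, and then
\[
\rho_g(X+b)=\int_0^1 F_X^{-1}(1-u)\,\d g(u)+b\int_0^1\d g(u)=\rho_g(X)+b,
\]
because $\mu_g([0,1])=g(1)=1$. Likewise, for $\lambda>0$ the choice $h(x)=\lambda x$ gives $F_{\lambda X}^{-1}(1-u)=\lambda F_X^{-1}(1-u)$, which is again $\mu_g$-integrable, so $\lambda X\in L_g$ and $\rho_g(\lambda X)=\lambda\rho_g(X)$ by linearity of the integral; for $\lambda=0$ the risk $\lambda X=0$ is constant, hence lies in $L_g$, with $\rho_g(0)=0$.

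I do not expect any genuine obstacle: as the text indicates, the statement is an immediate consequence of the analogous properties of $\VaR$. The only points that call for a word of care are the verifications that $X+b$ and $\lambda X$ again lie in $L_g$ — needed so that Definition \ref{d-risk}(b)(ii)--(iii) is applicable — and the normalization $\mu_g([0,1])=1$, which is exactly what produces the correct additive constant in cash-invariance and is built into Definition \ref{d-distfunc}.
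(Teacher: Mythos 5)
Your proof is correct and follows exactly the route the paper has in mind: the paper dispatches this proposition with the one-line remark that it is "an immediate consequence of the corresponding properties for the Value at Risk," and your argument simply makes that explicit via the behaviour of $F_X^{-1}$ under translation, scaling, and monotonicity, plus the normalization $\mu_g([0,1])=1$. (Your verifications that $X+b$ and $\lambda X$ remain in $L_g$ are a harmless extra, since Definition \ref{d-risk}(b)(ii)--(iii) already assumes this membership.)
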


It is well known that while Value at Risk is not subadditive, it is (even) additive for comonotonic risks, see \cite[Theorem 4.2.1]{DVGKTV06}. There are various ways to define comonotonicity, see \cite[Definition 4, Theorem 2]{DDGKV02}. Maybe the one that expresses best the idea behind this notion is to say that two risks $X$ and $Y$ are \textit{comonotonic} if there is a random variable $Z$ with values in an interval $I\subset \R$ and two increasing functions $f_1,f_2:I\to\R$ such that $(X,Y)$ and $(f_1(Z),f_2(Z))$ have the same distribution.

Now, the definition of the distortion risk measures and the mentioned property of $\VaR$ immediately imply the following; see also \cite[p.\ 593]{DVGKTV06}. 

\begin{proposition}\label{p-distcomon}
Let $X,Y\in L_g$ be comonotonic. Then $X+Y\in L_g$ and    
\[
\rho_g(X+Y)=\rho_g(X)+\rho_g(Y).
\]
\end{proposition}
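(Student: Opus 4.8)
The plan is to reduce everything to the corresponding comonotonic additivity of Value at Risk. Recall that if $X$ and $Y$ are comonotonic, then there is a random variable $Z$ taking values in an interval $I\subset\R$ and increasing functions $f_1,f_2:I\to\R$ such that $(X,Y)$ has the same distribution as $(f_1(Z),f_2(Z))$; in particular $X+Y$ has the same distribution as $(f_1+f_2)(Z)$ with $f_1+f_2$ increasing. The key classical fact (see \cite[Theorem 4.2.1]{DVGKTV06}) is that comonotonic risks satisfy $F_{X+Y}^{-1}(u)=F_X^{-1}(u)+F_Y^{-1}(u)$ for all $u\in(0,1)$, with the possible exception of at most countably many values of $u$, hence $\mu_g$-a.e.\ after observing that $\mu_g$ is a probability measure on $[0,1]$ and the countable exceptional set plus the point $1$ is $\mu_g$-null (here we use $g(1-)=g(1)$, so $\mu_g(\{1\})=0$, and that $\mu_g$ has at most countably many atoms, so we may in fact arrange the identity to hold $\mu_g$-a.e.\ by choosing the exceptional set appropriately among the points where the quantile functions are continuous).

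First I would record the pointwise quantile identity $F_{X+Y}^{-1}(1-u)=F_X^{-1}(1-u)+F_Y^{-1}(1-u)$ for $\mu_g$-almost every $u\in(0,1]$. Next, to see that $X+Y\in L_g$, I apply the triangle inequality pointwise in $u$ and integrate against $\d g$:
\[
\int_0^1 |F_{X+Y}^{-1}(1-u)|\,\d g(u)\leq \int_0^1 |F_X^{-1}(1-u)|\,\d g(u)+\int_0^1 |F_Y^{-1}(1-u)|\,\d g(u)<\infty,
\]
the finiteness being exactly the assumption $X,Y\in L_g$. Thus $X+Y\in L_g$, and in particular both integrands on the right, and hence the integrand on the left, are $\d g$-integrable.

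Finally, with all three integrals now known to be finite, additivity of $\rho_g$ follows by integrating the quantile identity term by term:
\[
\rho_g(X+Y)=\int_0^1 F_{X+Y}^{-1}(1-u)\,\d g(u)=\int_0^1 F_X^{-1}(1-u)\,\d g(u)+\int_0^1 F_Y^{-1}(1-u)\,\d g(u)=\rho_g(X)+\rho_g(Y),
\]
where splitting the integral of a sum is justified precisely because each piece is integrable. The only real subtlety — and the step I would be most careful about — is the measure-theoretic bookkeeping around the comonotonic quantile identity: one must make sure the exceptional set on which $F_{X+Y}^{-1}(1-u)\neq F_X^{-1}(1-u)+F_Y^{-1}(1-u)$ is genuinely $\mu_g$-null, which relies on its countability together with $g(1-)=g(1)$; everything else is a routine application of linearity of the Lebesgue integral once integrability is in hand.
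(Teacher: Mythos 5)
Your overall approach is exactly the one the paper has in mind: reduce to the comonotonic additivity of Value at Risk, $F_{X+Y}^{-1}(u)=F_X^{-1}(u)+F_Y^{-1}(u)$, then integrate against $\d g$, using the triangle inequality to verify $X+Y\in L_g$ and to justify splitting the integral. The paper simply records that this follows ``immediately'' from the definitions and comonotonic additivity of VaR, citing \cite[Theorem 4.2.1]{DVGKTV06}, so your write-up fills in the same argument.

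However, you introduce a spurious measure-theoretic complication, and as written it is a genuine flaw. You claim the quantile identity holds on $(0,1)$ ``with the possible exception of at most countably many values of $u$'' and then argue that such an exceptional set must be $\mu_g$-null. That inference is false: a countable set can carry full $\mu_g$-mass. For example, with $g=\mathds{1}_{[1-\alpha,1]}$ (Value at Risk), $\mu_g=\delta_{1-\alpha}$ is concentrated on a single point, and nothing prevents $1-\alpha$ from being one of your alleged exceptional values. The sentence about ``choosing the exceptional set appropriately among the points where the quantile functions are continuous'' does not rescue this; the set where the identity would fail is fixed by $X$, $Y$ and $X+Y$, not something you get to choose.

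Fortunately, no repair is needed, because the premise is wrong: with the convention $F_X^{-1}(u)=\inf\{x:F_X(x)\geq u\}$ used in the paper, comonotonic additivity of quantile functions holds for \emph{every} $u\in(0,1)$, with no exceptions. Indeed, choosing $U$ uniform on $(0,1)$ one has $(X,Y)=_d(F_X^{-1}(U),F_Y^{-1}(U))$, so $X+Y=_d h(U)$ with $h=F_X^{-1}+F_Y^{-1}$ increasing and left-continuous; a short computation (replacing the continuity hypothesis in the paper's rule $F^{-1}_{h(X)}=h(F^{-1}_X)$ by left-continuity) gives $F_{h(U)}^{-1}=h$ on $(0,1)$. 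This is exactly what \cite[Theorem 4.2.1]{DVGKTV06} asserts. Once you state the identity as exact, the rest of your argument — triangle inequality to get $X+Y\in L_g$, then linearity of the integral — is correct and complete, and coincides with the paper's intended proof.
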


The next result is well known if $g(0)=0$, see \cite{DVGKTV06}, \cite{Wa96}, \cite{WaDh98}. In general, $g$ is a convex combination of the constant distortion function $g_1=1$ and a distortion function $g_2$ with $g_2(0)=0$. Thus $\rho_g$ is a convex combination of $\rho_{g_1}=\VaR_1=\esssup$ and $\rho_{g_2}$, and both are coherent.

\begin{theorem} 
If $g$ is concave, then the distortion risk measure $\rho_g$ is coherent on $L_g$.
\end{theorem}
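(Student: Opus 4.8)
The plan is to obtain coherence by checking the four defining properties. Monotonicity, cash-invariance and positive homogeneity are already recorded in Proposition~\ref{p-distmonposhom}, and their proofs use nothing about $g$ beyond its being a distortion function; the concavity will be needed only for subadditivity. So fix $X,Y\in L_g$ with $X+Y\in L_g$ (by Example~\ref{ex-cone} this last membership is a genuine hypothesis), and let us aim at
\[
\rho_g(X+Y)\leq\rho_g(X)+\rho_g(Y).
\]

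The first step is to reduce to a comparison with a comonotonic rearrangement. Let $U$ be uniformly distributed on $(0,1)$ and set $X^c=F_X^{-1}(U)$, $Y^c=F_Y^{-1}(U)$, so that $X^c,Y^c$ have the laws of $X,Y$ and are comonotonic. Since $L_g$ and $\rho_g$ are law-invariant, $X^c,Y^c\in L_g$, and Proposition~\ref{p-distcomon} yields $X^c+Y^c\in L_g$ together with $\rho_g(X^c+Y^c)=\rho_g(X^c)+\rho_g(Y^c)=\rho_g(X)+\rho_g(Y)$. Hence subadditivity follows once we show $\rho_g(X+Y)\leq\rho_g(X^c+Y^c)$, and this rests on two facts. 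First, the comonotonic sum is the stop-loss largest among sums with the given marginals: $E\big((X+Y-d)^+\big)\leq E\big((X^c+Y^c-d)^+\big)$ for every $d\in\R$; this is classical (see \cite{DVGKTV06}, \cite{WaDh98}, \cite{DDGKV02}) and also follows from $(x+y-d)^+\leq(x-a)^++(y-b)^+$ for $a+b=d$ together with $F_{X^c+Y^c}^{-1}=F_X^{-1}+F_Y^{-1}$. Second, and this is the heart of the matter: when $g$ is concave, $\rho_g$ is monotone for the stop-loss order on $L_g$, i.e.\ if $W_1,W_2\in L_g$ satisfy $E\big((W_1-d)^+\big)\leq E\big((W_2-d)^+\big)$ for all $d\in\R$, then $\rho_g(W_1)\leq\rho_g(W_2)$. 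Granting these, $\rho_g(X+Y)\leq\rho_g(X^c+Y^c)=\rho_g(X)+\rho_g(Y)$.

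To prove the stop-loss monotonicity I would argue as follows. A concave distortion function is automatically continuous on $[0,1]$ (concave on $(0,1)$, right-continuous at $0$, and $g(1-)=g(1)$) and absolutely continuous, with a nonincreasing density $g'\geq 0$. Transporting $\mu_g$ by $u\mapsto 1-u$ rewrites the risk measure as
\[
\rho_g(W)=\int_0^1 F_W^{-1}(w)\,g'(1-w)\,\d w+g(0)\,\esssup W,
\]
where the weight $w\mapsto g'(1-w)$ is nonnegative and \emph{nondecreasing} (one checks this formula against Example~\ref{ex-class}). Writing the nondecreasing weight as $c+\mu((0,w])$ with $c\geq 0$ and $\mu$ a nonnegative measure on $(0,1)$, and applying Tonelli, expresses $\rho_g(W)$ as a combination with nonnegative coefficients of $E(W)=\int_0^1 F_W^{-1}(w)\,\d w$, the truncated means $\int_t^1 F_W^{-1}(w)\,\d w$ for $t\in(0,1)$, and $\esssup W$. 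Each of these is monotone for the stop-loss order: taking $d=\esssup W_2$ forces $W_1\leq d$, hence $\esssup W_1\leq\esssup W_2$; taking $d=F_{W_2}^{-1}(t)$ gives $E((W_2-d)^+)=\int_t^1(F_{W_2}^{-1}(w)-d)\,\d w$ exactly while $E((W_1-d)^+)\geq\int_t^1(F_{W_1}^{-1}(w)-d)\,\d w$, so $\int_t^1 F_{W_1}^{-1}\leq\int_t^1 F_{W_2}^{-1}$; and letting $d\to-\infty$ in $E((W-d)^+)+d=E(\max(W,d))$ gives $E(W)$. Combining, $\rho_g(W_1)\leq\rho_g(W_2)$.

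The step I expect to be the main obstacle is exactly this stop-loss monotonicity, not for a conceptual reason but because $L_g$ is genuinely larger than $L^\infty$: its elements need not be integrable, and their lower quantiles may run to $-\infty$, so one must verify that the transport formula, the Tonelli interchange, and the truncated-mean manipulations are all legitimate using nothing beyond $\int_0^1|F_W^{-1}(1-u)|\,\d g(u)<\infty$. (If one prefers to avoid this bookkeeping, there is the route already indicated before the statement: write $g$ as the convex combination $g(0)\,g_1+(1-g(0))\,g_2$ of $g_1\equiv 1$ and a distortion function $g_2$ with $g_2(0)=0$, invoke the classical coherence of $\rho_{g_2}$ from \cite{DVGKTV06}, \cite{Wa96}, \cite{WaDh98}, and add the evident coherence of $\rho_{g_1}=\VaR_1=\esssup$. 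The stop-loss argument above is, however, closer in spirit to the unified treatment pursued in this paper, and it is the one that will later be reused for the Haezendonck--Goovaerts setting.)
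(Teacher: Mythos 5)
Your proposal follows essentially the same route the paper takes (the paper defers the proof to the Haezendonck--Goovaerts section and obtains this theorem as the special case $\phi=\mathrm{id}$, $\alpha=0$): establish monotonicity, cash-invariance and positive homogeneity directly from the VaR properties, then prove subadditivity by the two facts that $\rho_g$ is subadditive on comonotonic pairs and preserves stop-loss order, combined via $X+Y\leq_{\text{sl}}X^c+Y^c$. For the stop-loss preservation step, the paper decomposes $g$ slightly differently from you: it approximates $g$ from below by piecewise linear concave distortion functions $g_n\nearrow g$, writes each $g_n$ as a finite convex combination $c_0+\sum_k c_k\min(u/\beta_k,1)$ of the constant $1$ (giving $\esssup$) and TVaR-type distortions, applies Lemma~\ref{l-TVaR}, and passes to the limit using the representation \eqref{eq-defdist} and monotone convergence. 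Your Fubini/Tonelli decomposition of $g'(1-w)=c+\mu((0,w])$ is the continuous analogue of this and in principle works, but note that the paper's approximation route sidesteps exactly the integrability bookkeeping you flag: since $g_n\leq g$ one gets $X\in L_g\Rightarrow X\in L_{g_n}$ for free via \eqref{eq-defdist}, and one never has to discuss whether $E(W)$ is separately finite. Two further remarks: (1) your construction of $X^c=F_X^{-1}(U)$ needs $U$ uniform on $(0,1)$, which requires $(\Omega,\mathcal{A},P)$ atomless; the paper addresses this by tensoring $\Omega$ with $([0,1],\mathcal{B}[0,1],m)$ and transferring the result back (proof of Theorem~\ref{t-distOrlicz}); (2) your parenthetical alternative via the convex decomposition $g=g(0)\cdot 1+(1-g(0))g_2$ is precisely what the paper sketches in the remark immediately preceding the theorem statement.
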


We will give a proof of the theorem for the more general distortion Haezendonck-Goovaerts risk measures in Section \ref{s-distHaezGoov}.

We next turn to continuity properties.

\begin{proposition}\label{p-distFatou}
The distortion risk measure $\rho_g$ has the Fatou property on $L_g$.
\end{proposition}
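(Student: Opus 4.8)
The plan is to verify the Fatou property directly from the definition, using the integral representation of $\rho_g$ from Proposition \ref{p-equiv} together with the classical Fatou lemma. Suppose $(X_n)_n$ is a sequence in $L_g$ with $X_n\to X$ almost surely and $Y_1\leq X_n\leq Y_2$ for all $n$, where $X,Y_1,Y_2\in L_g$; note that $X\in L_g$ is guaranteed by Proposition \ref{p-lg}. The first step is to recall that almost sure convergence $X_n\to X$ together with the uniform bounds implies convergence of the quantile functions: $F_{X_n}^{-1}(u)\to F_X^{-1}(u)$ at every continuity point $u$ of $F_X^{-1}$ (equivalently, at all but countably many $u$), hence $\mu_g$-almost everywhere after noting that $\mu_g$ could in principle charge those countably many points. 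To handle this cleanly, I would instead work with the representation
\[
\rho_g(X_n)= -\int_{-\infty}^0 \big(1-g(\overline{F}_{X_n}(x)-)\big)\d x + \int_0^{\infty} g(\overline{F}_{X_n}(x)-)\d x,
\]
since the survival functions $\overline{F}_{X_n}(x)$ converge to $\overline{F}_X(x)$ at every continuity point $x$ of $F_X$, i.e.\ for all but countably many $x\in\R$, which is enough for almost everywhere convergence with respect to Lebesgue measure $\d x$.

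The second step is to pass to the liminf inside the integrals. For the positive-axis term, since $g$ is increasing and left-continuous limits are involved, convergence of $\overline{F}_{X_n}(x)$ to $\overline{F}_X(x)$ at continuity points does not immediately give convergence of $g(\overline{F}_{X_n}(x)-)$ because of the left-limit; but one only needs a one-sided inequality. More precisely, at a continuity point $x$ of $F_X$ one has $\liminf_n g(\overline{F}_{X_n}(x)-)\geq g(\overline{F}_X(x)-)$: indeed, given $u<\overline{F}_X(x)$, eventually $\overline{F}_{X_n}(x)>u$, hence $\overline{F}_{X_n}(x)-\geq u$ in the sense that $g(\overline{F}_{X_n}(x)-)\geq g(u)$, and then let $u\nearrow \overline{F}_X(x)$. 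Applying Fatou's lemma on $[0,\infty)$ — legitimate because $0\leq g(\overline{F}_{X_n}(x)-)\leq g(\overline{F}_{Y_2}(x)-)$ and the latter is integrable on $[0,\infty)$ since $Y_2\in L_g$ (using the representation \eqref{eq-defdist} applied to $Y_2^+$, or directly) — yields
\[
\liminf_{n\to\infty}\int_0^\infty g(\overline{F}_{X_n}(x)-)\d x \geq \int_0^\infty g(\overline{F}_X(x)-)\d x.
\]

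The third step treats the negative-axis term $-\int_{-\infty}^0 (1-g(\overline{F}_{X_n}(x)-))\d x$. Here I want a limsup bound going the other way, i.e.\ $\limsup_n \int_{-\infty}^0 (1-g(\overline{F}_{X_n}(x)-))\d x \leq \int_{-\infty}^0 (1-g(\overline{F}_X(x)-))\d x$, so that after the sign flip it contributes correctly to $\rho_g(X)\leq\liminf_n\rho_g(X_n)$. By the same argument as above applied to $1-g(\cdot-)$, which is decreasing, one gets at continuity points $x$ that $\limsup_n (1-g(\overline{F}_{X_n}(x)-))\leq 1-g(\overline{F}_X(x)-)$. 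The integrands are dominated by $1-g(\overline{F}_{Y_1}(x)-)$, which is integrable on $(-\infty,0]$ because $Y_1\in L_g$; hence the reverse Fatou lemma (Fatou applied to the nonnegative functions $(1-g(\overline{F}_{Y_1}(x)-)) - (1-g(\overline{F}_{X_n}(x)-))$) gives the desired limsup inequality. Combining the two parts,
\[
\rho_g(X) \leq \liminf_{n\to\infty}\Big(-\int_{-\infty}^0 (1-g(\overline{F}_{X_n}(x)-))\d x + \int_0^\infty g(\overline{F}_{X_n}(x)-)\d x\Big) = \liminf_{n\to\infty}\rho_g(X_n),
\]
where one uses $\limsup(-a_n)+\liminf b_n\leq \liminf(-a_n+b_n)$ — actually I should be slightly careful and use $\liminf(c_n+d_n)\ge \liminf c_n + \liminf d_n$ with $c_n=-\int_{-\infty}^0(\cdots)$, whose liminf is $-\limsup\int_{-\infty}^0(\cdots)\ge -\int_{-\infty}^0(1-g(\overline F_X(x)-))\d x$, and $d_n=\int_0^\infty(\cdots)$.

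The main obstacle I anticipate is the bookkeeping around the left-hand-limit $g(\cdot-)$ and the countable exceptional set: one must be sure that the pointwise inequalities hold Lebesgue-a.e.\ in $x$ (which is fine, since $F_X$ has only countably many discontinuities and Lebesgue measure ignores countable sets — in contrast to the quantile-function approach, where $\mu_g$ might charge the bad points), and that the domination by the $Y_1,Y_2$-integrands is genuinely valid, which follows from monotonicity of $g$ and of VaR together with $Y_1\leq X_n\leq Y_2$ giving $\overline F_{Y_1}(x)\leq \overline F_{X_n}(x)\leq \overline F_{Y_2}(x)$. Everything else is a routine application of Fatou's lemma in its direct and reverse forms.
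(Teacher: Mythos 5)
Your proof is correct, but it takes a genuinely different route from the paper's. The paper first invokes the reduction lemma of Remark \ref{r-Fatou}(c) — using monotonicity of $\rho_g$ (Proposition \ref{p-distmonposhom}) and the inclusion property of $L_g$ (Proposition \ref{p-lg}) — to reduce the Fatou property to the case $X_n\nearrow X$ with $X_1,X\in L_g$. In that monotone situation the survival functions satisfy $\overline{F}_{X_n}(x)\nearrow\overline{F}_X(x)$ pointwise, and because $u\mapsto g(u-)$ is left-continuous and increasing one gets honest pointwise convergence $g(\overline{F}_{X_n}(x)-)\nearrow g(\overline{F}_X(x)-)$; a single application of dominated convergence to the representation of Proposition \ref{p-equiv} then finishes the proof. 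You instead attack the general Fatou hypothesis directly (arbitrary $X_n\to X$ sandwiched between $Y_1,Y_2$), observe that you cannot hope for pointwise convergence of $g(\overline{F}_{X_n}(x)-)$ because the left-limit may jump when $\overline{F}_{X_n}(x)$ approaches $\overline{F}_X(x)$ from above, derive the correct one-sided bound $\liminf_n g(\overline{F}_{X_n}(x)-)\geq g(\overline{F}_X(x)-)$ at continuity points of $F_X$, and then apply Fatou's lemma on $[0,\infty)$ and its reverse form on $(-\infty,0]$ with the integrable dominating functions coming from $Y_2$ and $Y_1$ respectively. Both approaches rest on Proposition \ref{p-equiv} and on the same dominating bounds $\overline{F}_{Y_1}\leq\overline{F}_{X_n}\leq\overline{F}_{Y_2}$. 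What the paper's reduction buys is a shorter, one-step convergence argument; what your direct approach buys is self-containedness (no appeal to the reduction lemma of Remark \ref{r-Fatou}(c)) at the cost of a careful two-sided Fatou argument. Your remark about avoiding quantile functions because $\mu_g$ could charge the countable exceptional set is a sound observation, and the reason both you and the paper work with the survival-function representation rather than directly with $F_{X_n}^{-1}$.
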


\begin{proof}
By Remark \ref{r-Fatou}(c) and Propositions \ref{p-lg} and \ref{p-distmonposhom} it suffices to show that if $X_n\nearrow X$  and $X_1, X\in L_g$ then $\rho_g(X_n)\to\rho_g(X)$. Now, the hypothesis implies that $\overline{F}_{X_n}(x)\nearrow \overline{F}_{X}(x)$ for all $x\in\mathbb{R}$ with at most countably many exceptions. Since $u\mapsto g(u-)$ is left-continuous and increasing, we deduce that $g(\overline{F}_{X_n}(x)-)\nearrow g(\overline{F}_{X}(x)-)$ for these $x$. Since $X_1\leq X_n \leq X$ for all $n$, Proposition \ref{p-equiv}(a) and the dominated convergence theorem implies that $\rho_g(X_n)\to\rho_g(X)$. 
\end{proof}

In particular, we have the following, which should be known, but we haven't been able to find a reference.

\begin{corollary} 
For $0<\alpha\leq 1$, the Value at Risk \emph{VaR$_\alpha$} has the Fatou property.
\end{corollary}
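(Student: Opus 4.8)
The plan is to derive the corollary directly from Proposition \ref{p-distFatou} by exhibiting, for each $\alpha$ with $0<\alpha\leq 1$, a distortion function $g$ whose associated distortion risk measure $\rho_g$ is exactly $\mathrm{VaR}_\alpha$, and whose domain $L_g$ is the space of \emph{all} risks. For $0<\alpha<1$ we take $g(u)=\mathds{1}_{[1-\alpha,1]}(u)$, and for $\alpha=1$ we take $g(u)\equiv 1$; in both cases $g$ is a genuine distortion function in the sense of Definition \ref{d-distfunc} (right-continuous, increasing, with $g(1-)=g(1)=1$), and by Example \ref{ex-class} we have $\rho_g=\mathrm{VaR}_\alpha$ on $L_g$, where $L_g$ consists of all risks when $0<\alpha<1$ and of all risks with $X^+\in L^\infty$ when $\alpha=1$.

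First I would note the small subtlety that for $\alpha=1$ the space $L_g$ is not the space of all risks but $\{X : X^+\in L^\infty\}$, since $\mathrm{VaR}_1(X)=\esssup X$ is finite only when $X$ is bounded above. This is harmless: the Fatou property is a statement about sequences inside $\mathcal{X}=L_g$ together with bounds inside $\mathcal{X}$, so Proposition \ref{p-distFatou} already gives the Fatou property of $\mathrm{VaR}_1$ on $\{X:X^+\in L^\infty\}$, which is the natural domain and the correct formulation of the corollary in that case. For $0<\alpha<1$, $L_g$ really is the set of all risks, so we obtain the Fatou property of $\mathrm{VaR}_\alpha$ on the full space of risks.

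Then the proof is essentially one line: fix $\alpha\in(0,1]$, choose the corresponding $g$ as above, observe via Example \ref{ex-class} that $\rho_g=\mathrm{VaR}_\alpha$ on $L_g$, and invoke Proposition \ref{p-distFatou} to conclude that $\mathrm{VaR}_\alpha$ has the Fatou property on $L_g$. One may additionally remark, using Remark \ref{r-Fatou}(c) together with Proposition \ref{p-lg} and the monotonicity of $\mathrm{VaR}_\alpha$ (Proposition \ref{p-distmonposhom}), that this is equivalent to the statement $X_n\nearrow X \Rightarrow F_{X_n}^{-1}(\alpha)\to F_X^{-1}(\alpha)$ whenever the $X_n$ and $X$ lie in $L_g$, which is the form in which the reader is most likely to have met it.

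I do not expect any real obstacle here; the only point requiring a moment's care is matching the domain in the case $\alpha=1$ to the hypotheses of Definition \ref{d-Fatou}, i.e.\ making sure the ambient set $\mathcal{X}$ in the Fatou property is taken to be $L_g=\{X:X^+\in L^\infty\}$ rather than the space of all risks (on which $\mathrm{VaR}_1$ is not even finite). Everything else is an immediate specialization of Proposition \ref{p-distFatou}.
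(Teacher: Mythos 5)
Your proposal is correct and is exactly the argument the paper has in mind: the corollary is stated as an immediate consequence of Proposition \ref{p-distFatou}, with the distortion functions $g(u)=\mathds{1}_{[1-\alpha,1]}(u)$ for $0<\alpha<1$ and $g\equiv 1$ for $\alpha=1$ taken from Example \ref{ex-class}. Your observation about the domain $L_g=\{X:X^+\in L^\infty\}$ in the case $\alpha=1$ is a correct and worthwhile clarification of a point the paper leaves implicit.
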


\begin{example}\label{ex-Fatou}
On $\Omega=[0,1]$ with the Lebesgue measure, we consider the risks $X_n=-n\mathds{1}_{[0,1/n]}$, $n\geq 2$, so that $X_n\to X:=0$. Let $g$ be the distortion function $g(u)=(-1+2u)\mathds{1}_{[1/2,1]}(u)$. Then $\rho_g(X_n)= \int_{1-1/n}(-n)2\d u=-2$, and hence $\rho_g(X)>\liminf_{n\to\infty}\rho_g(X_n)$. On the other hand, taking $Y=\sup_{n\geq 2}|X_n|$, one verifies that $Y\in L_g$; note however that $-Y\notin L_g$. This example shows that while the Fatou property holds on $L_g$ in the sense of Definition \ref{d-Fatou}, it would not hold if we had only demanded that $\forall n, |X_n|\leq Y$ for some $Y\in L_g$. 
\end{example}

We turn to the reverse Fatou property.

\begin{proposition}\label{p-distrevFatou}~

\emph{(a)} If $g(0)=0$ and $g$ is continuous, then $\rho_g$ has the reverse Fatou property on $L_g$, and hence the Lebesgue property.

\emph{(b)} If the underlying probability space $(\Omega,\mathcal{A},P)$ is atomless, then $\rho_g$ has the reverse Fatou property on $L_g$ if and only if $g(0)=0$ and $g$ is continuous.
\end{proposition}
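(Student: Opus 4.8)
The plan is to prove (a) by reducing to monotone decreasing sequences and then applying dominated convergence to the integral representation of $\rho_g$ from Proposition \ref{p-equiv}, and to prove (b) by noting that the sufficiency in (b) is exactly part (a), while the necessity is obtained by turning any ``defect'' of $g$ (a positive value $g(0)$, or a jump) into an explicit decreasing sequence that violates the reverse Fatou property.

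For part (a): by Proposition \ref{p-lg}, the space $L_g$ contains every risk sandwiched between two of its members, and by Proposition \ref{p-distmonposhom}, $\rho_g$ is monotonic, so Remark \ref{r-Fatou}(c) reduces the reverse Fatou property to the implication: $X_n\searrow X$ with $X_n,X\in L_g$ $\Longrightarrow$ $\rho_g(X_n)\to\rho_g(X)$. Given such a sequence, write $X\le X_n\le X_1$. Since $X_n\searrow X$, the set $\bigcup_n\{X_n\le x\}$ lies between $\{X<x\}$ and $\{X\le x\}$, hence $F_{X_n}(x)\nearrow F_X(x)$, equivalently $\overline F_{X_n}(x)\searrow\overline F_X(x)$, at every continuity point of $F_X$, and so for all but countably many $x$. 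Under the hypotheses ``$g(0)=0$ and $g$ continuous'' we have $g(u-)=g(u)$ for every $u$ and $g$ is continuous, so $g(\overline F_{X_n}(x)-)\to g(\overline F_X(x)-)$ for a.e.\ $x$. Now I would apply Proposition \ref{p-equiv}: on $(0,\infty)$ the integrands $g(\overline F_{X_n}(x)-)$ are dominated by $g(\overline F_{X_1}(x)-)$, and on $(-\infty,0)$ the integrands $1-g(\overline F_{X_n}(x)-)$ by $1-g(\overline F_X(x)-)$; both dominators are integrable, since applying the Fubini computation of the proof of Proposition \ref{p-equiv} to $|F_Y^{-1}|$ shows that $\int_{-\infty}^0(1-g(\overline F_Y(x)-))\,\d x$ and $\int_0^\infty g(\overline F_Y(x)-)\,\d x$ are finite for every $Y\in L_g$. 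Dominated convergence on each piece gives $\rho_g(X_n)\to\rho_g(X)$, hence the reverse Fatou property; combined with the Fatou property of Proposition \ref{p-distFatou} this yields the Lebesgue property.

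For part (b): the implication ``$\Leftarrow$'' is part (a). For ``$\Rightarrow$'' I would argue by contraposition. Suppose $g(0)>0$, or $g$ is discontinuous. In the first case set $u_0=0$; in the second, since $g(1-)=g(1)$, $g$ has a discontinuity at some $u_0\in(0,1)$. In both cases $g(u_0-)<g(u_0)$, with the convention $g(0-)=0$. As $(\Omega,\mathcal A,P)$ is atomless, it carries a random variable $U$ uniform on $(0,1)$; for $n$ with $u_0+\tfrac1n<1$ put $X_n=-\mathds 1_{\{U>u_0+1/n\}}$ and $X=-\mathds 1_{\{U>u_0\}}$. Then $X_n\searrow X$, all these risks lie in $L^\infty\subset L_g$, and $-1\le X_n\le 0$, so the constants $Y_1=-1$, $Y_2=0$ meet the requirements of Definition \ref{d-Fatou}(b). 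The risks take only two values, so the quantile functions are explicit step functions, and a direct computation gives $\rho_g(X)=g(u_0-)-1$ and $\rho_g(X_n)=g((u_0+\tfrac1n)-)-1$; since $g$ is right-continuous, $g((u_0+\tfrac1n)-)\to g(u_0)$. Hence $\limsup_n\rho_g(X_n)=g(u_0)-1>g(u_0-)-1=\rho_g(X)$, so $\rho_g$ does not have the reverse Fatou property.

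The routine steps are the step-function quantile computations in (b) and the verification that ``$g(0)=0$ and $g$ continuous'' is equivalent to continuity of the left-limit function $u\mapsto g(u-)$ on $[0,1]$ (with $g(0-)=0$). The one point that genuinely needs care is the pointwise convergence $g(\overline F_{X_n}(x)-)\to g(\overline F_X(x)-)$ in part (a): this is where both hypotheses on $g$ are actually used, and the failure of exactly this convergence — the values $\overline F_{X_n}(x)$ approaching a jump point $u_0$ of $g$ from above while $\overline F_X(x)=u_0$ — is what the counterexamples in (b) convert into the gap $g(u_0)-g(u_0-)$.
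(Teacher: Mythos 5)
Your proof is correct and follows essentially the same route as the paper: part (a) reduces, via Remark \ref{r-Fatou}(c), to decreasing sequences $X_n\searrow X$, establishes $\overline F_{X_n}(x)\searrow\overline F_X(x)$ a.e., uses continuity of $g$ together with $g(0-)=0$ to get pointwise convergence of the integrands, and closes with dominated convergence on the two-piece representation of Proposition \ref{p-equiv}; part (b) converts any point $u_0$ with $g(u_0-)<g(u_0)$ into a decreasing sequence of indicator-type risks exhibiting the gap. The only cosmetic difference is in part (b): the paper works with positive indicators $X_n=\mathds 1_{A_n}$ of a decreasing sequence of sets with $P(A_n)=p_n\searrow u_0$ and applies the one-sided formula \eqref{eq-defdist}, whereas you use the negative indicators $X_n=-\mathds 1_{\{U>u_0+1/n\}}$ built from a uniform variable and compute with Proposition \ref{p-equiv}; both constructions yield the same violation $\rho_g(X)=g(u_0-)-c<g(u_0)-c=\lim_n\rho_g(X_n)$ and rest on the same appeal to atomlessness.
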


\begin{proof}
(a) This follows exactly as in the proof of Proposition \ref{p-distFatou}, taking account of the continuity of $g$; note that $g(0-)=0$. 

(b) Suppose that $g$ is not continuous or that $g(0)\neq 0$. Then there is some $u\in\mathopen[0,1)$ such that $g(u-)<g(u)$. Let $(p_n)_n$ be a strictly decreasing sequence in $[0,1]$ with limit $u$. If $P$ is atomless, there exists a decreasing sequence $(A_n)_n$ of sets in $\mathcal{A}$ with $P(A_n)=p_n$, $n\geq 1$, see \cite[Theorem 8.14.2]{GeGe25}. Then $A:=\bigcap_{n=1}^\infty A_n$ satisfies $P(A)=u$. Let $X_n=\mathds{1}_{A_n}$ and $X=\mathds{1}_{A}$, which belong to $L_g$ as bounded risks. Then $X_n\to X$ on $\Omega$ and $0\leq X_n\leq 1$ for all $n$, with $0,1\in L_g$. Moreover, by \eqref{eq-defdist}, we find that $\rho_g(X_n)=g(p_n-)\geq g(u)$ for all $n$ and $\rho_g(X)=g(u-)$, so that $\rho_g(X)<\limsup_{n\to\infty}\rho_g(X_n)$, contradicting the reverse Fatou property.
\end{proof}

\begin{remark}
Since the counter-example is taken from $L^\infty$, the proposition remains true in the more restrictive setting of $L^\infty$.
\end{remark}

We finally discuss an interesting link between the domain $L_g$ and Lorentz spaces.

\begin{remark}\label{rem-Lorentz}
In functional analysis, 
\[
X^*(u)=F_{|X|}^{-1}(1-u), \ u\in [0,1),
\]
with $X^*(1)=0$, is known as the nonincreasing rearrangement of $X$, see \cite{CRS07}, \cite{GeGe25}, \cite{PKJF13}. If $w:[0,1]\to \mathbb{R}$ is a positive measurable function with $\int_0^1 w(u)\d u=1$, then
\[
\Lambda(w)= \Big\{X : \|X\|:=\int_0^1 X^\ast(u) w(u) \d u <\infty\Big \}
\]
is called a (classical) Lorentz space, see \cite{CRS07}, \cite{Lor51}, \cite{PKJF13}. Setting $g(u)=\int_0^u w(v)\d v$, $u\in [0,1]$, we obtain a continuous distortion function with $g(0)=0$. Then $\Lambda(w)= \{X: |X|\in L_g\}$ and $\|X\|=\rho_g(|X|)$ for $X\in \Lambda(w)$.

For these distortion functions $g$, one can even define $L_g$ and $\rho_g$ completely in terms of notions introduced by Lorentz. Indeed, $X\in L_g$ if and only if $X^+\in \Lambda(w)$ and $\inf_{x\in\mathbb{R}} (\|(X-x)^+\|+x)>-\infty$; in that case, the infimum gives $\rho_g(X)$. For the proof see Proposition \ref{p-LorentzLgRhog} in the Appendix. 

Now, decreasing functions $w$ correspond to concave distortion functions $g$ with $g(0)=0$. In that case, and for $\Omega=[0,1]$, Lorentz \cite{Lor51} showed that $\|\cdot\|$ defines a norm on $\Lambda(w)$; for general spaces $\Omega$, see \cite[Theorem 2.5.1]{CRS07}. The fact that $\|\cdot\|$ is a norm implies that the corresponding distortion risk measure is subadditive on the positive cone of $L_g$. In addition, one can show that $\Lambda(w)\subset L_g$; we give the proof in the Appendix, see Proposition \ref{p-LorentzLg}.
\end{remark}

The connection between distortion risk measures and Lorentz space norms was also recently noted in \cite[Section 4.5]{FrWi24}.

\section{Haezendonck-Goovaerts risk measures}\label{s-HaezGoov} 

We recall here the definition of the Haezendonck-Goovaerts risk measures. They are defined on Orlicz spaces, which are well-known spaces from functional analysis, see \cite{Che96}, \cite[Chapter 2]{EdSo92}, \cite{PKJF13} or \cite{RaRe91}.

\begin{definition}
A \textit{Young function} is a convex function $\phi : [0, \infty) \to [0, \infty)$ with $\phi(0)=0$ and $\lim_{t\to\infty}\phi(t)=\infty$. The corresponding \textit{Orlicz space} $L^\phi=L^\phi(\Omega)$ is the space of all risks $X:\Omega\to\R$ for which there is some $a>0$ such that
\[
E\Big(\phi\Big(\frac{|X|}{a}\Big)\Big) <\infty.
\]
\end{definition}

Young functions are also known as Orlicz functions. They are sometimes assumed to be strictly increasing (see \cite{BeRo08}), and they are often assumed to be \textit{normalized}, that is, $\phi(1)=1$ (see \cite{BeRo08}, \cite{BeRo12}, \cite{HaGo82}). If $\phi(1)>0$, normalization can always be achieved by replacing $\phi$ with $\frac{\phi}{\phi(1)}$. 

We have that
\[
L^\infty\subset L^\phi\subset L^1,
\]
see Proposition \ref{p-incl} below for a generalization. Moreover, $L^\phi$ is a vector space, see \cite[Theorem 2.1.11]{EdSo92}. 

As a preliminary step towards the Haezendonck-Goovaerts risk measures, one considers the Orlicz premia, which are defined for positive risks. We denote by $L^\phi_+$ the convex cone of positive risks in $L^\phi$.

\begin{definition}\label{d-Orlicz}
Let $\phi$ be a Young function and $\alpha< 1$. The \textit{Orlicz premium} $\pi_{\phi,\alpha}:L^\phi_+\to\R$ is given by
\[
\pi_{\phi,\alpha}(X) = \inf\Big\{a>0 : E\Big(\phi\Big(\frac{X}{a}\Big)\Big)\leq 1-\alpha\Big\}.
\]
\end{definition}

For $\alpha=0$, the Orlicz premium coincides with the Luxemburg norm in the Orlicz space $L^\phi$, see \cite{Che96}, \cite{EdSo92}.

\begin{remark}\label{r-prime}
(a) We interpret $\phi(X)$ as the evaluation of the risk $X$ by the decision maker. Since the role of a risk measure (and of a premium, see below) is to be on the prudent side, the value of $\phi(X)$ should be proportionally larger for larger values of $X$, meaning that $\phi$ is not only increasing but convex. Now let us extend $\phi$ in an increasing and convex way to all of $\mathbb{R}$. Since, by our sign convention, the financial position associated to the risk $X$ is $-X$, it makes sense to write $\phi(X) = -U(-X)$, where $U$ is an increasing concave function, that is, a (risk averse) \textit{utility function}. In other words, the function $\phi$ is, up to a sign change, a utility function.                        

(b) We add a word  on terminology. In financial mathematics, risk measures are meant to quantify the ``downside risk'' (\cite[p.\ 194]{FoSc16}) or the ``riskiness'' of a risk (\cite[p.\ 61]{DDGK05}). This is well captured by VaR and its variants. Thus, while the expectation $E(X)$ is a coherent risk measure, it is of little interest in this context. In insurance mathematics, the insurance risk is the ``amount of money paid by an insurance company to indemnify a policyholder'' (\cite[Definition 1.4.3]{DDGK05}). In return, the insurer receives a premium. This is well captured by the expectation $E(X)$, called the net premium (\cite[p.\ 61]{DDGK05}), and its variants. The Orlicz space norm being such a variant, it seems more appropriate to call $\pi_{\phi,\alpha}$ a premium (as, for example, in \cite{HaGo82} and \cite{BeRo08}) than a risk measure. But see Remark \ref{r-return} below.
\end{remark} 

We note that, while Haezendonck and Goovaerts \cite{HaGo82} only consider $\alpha=0$, later work requires that $\alpha\in [0,1)$, see \cite{BeRo08} and \cite{BeRo12}, in each case with a normalized $\phi$. 

We have that $\pi_{\phi,\alpha}$ takes finite values because $E\big(\phi\big(\frac{X}{a}\big)\big)\to 0$ as $a\to\infty$ by the dominated convergence theorem.

\begin{remark}\label{rem-Orlicz}
If $X\neq 0$ then the infimum in the definition of $\pi_{\phi,\alpha}(X)$ is attained. If, moreover, $X\in L^\infty_+$, or else $X\in L^\phi_+$ and $\phi$ satisfies the $\Delta^2$-condition, see \eqref{eq-delta2} below, then there is a unique value $a>0$ such that 
$E\big(\phi\big(\frac{X}{a}\big)\big)= 1-\alpha$; and $a=\pi_{\phi,\alpha}(X)$. These facts are given in \cite[Theorem 2]{HaGo82} and \cite[p.\ 108]{BeRo12}. A proof in a more general situation will be given in Proposition \ref{p-distOrlicz} below. Note also that, by an example given in \cite[pp. 45-46]{HaGo82}, one cannot drop the $\Delta^2$-condition in the statement above.
\end{remark}

We collect the main properties of the Orlicz premia. For normalized $\phi$ and bounded risks, the first two results were obtained in \cite[Theorem 2]{HaGo82} if $\alpha=0$ and stated in \cite[Proposition 2]{BeRo08} if $\alpha\geq 0$. 
                                                               
\begin{proposition}\label{p-Orlicz}~
 
\begin{enumerate}[label=\emph{(\alph*)}]
\item For any $X\in L_+^\phi$,
\[
\frac{E(X)}{\phi^{-1}(1-\alpha)}\leq\pi_{\phi,\alpha}(X)\leq \frac{\esssup X}{\phi^{-1}(1-\alpha)}.
\]
\item For any $b\geq 0$, $\pi_{\phi,\alpha}(b)= \frac{b}{\phi^{-1}(1-\alpha)}$.
\end{enumerate}
\end{proposition}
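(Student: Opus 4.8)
The plan is to handle both parts with two elementary ingredients: the monotonicity of $\phi$ (for the upper bound and for (b)) and Jensen's inequality (for the lower bound). First I would record the properties of $\phi$ that are used throughout. Since $\phi$ is convex with $\phi(0)=0$, the inequality $\phi(s)\le\frac{s}{t}\phi(t)\le\phi(t)$ for $0<s<t$ shows that $\phi$ is nondecreasing, and convexity makes it continuous on $(0,\infty)$, while the same inequality gives continuity at $0$. Writing $\phi^{-1}(y)=\sup\{t\ge 0:\phi(t)\le y\}$ for the generalized inverse, the relations $\phi(0)=0<1-\alpha$ and $\phi(t)\to\infty$ together with continuity give $\phi^{-1}(1-\alpha)\in(0,\infty)$, $\phi(\phi^{-1}(1-\alpha))=1-\alpha$, and the key equivalence: for $t\ge 0$, $\phi(t)\le 1-\alpha$ if and only if $t\le\phi^{-1}(1-\alpha)$. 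I would state these facts at the start of the proof.

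For (b), if $b=0$ then $\pi_{\phi,\alpha}(0)=\inf\{a>0:\phi(0)\le 1-\alpha\}=0$ because $1-\alpha>0$, which agrees with $b/\phi^{-1}(1-\alpha)$. If $b>0$, then for every $a>0$ the equivalence above gives $\phi(b/a)\le 1-\alpha\iff b/a\le\phi^{-1}(1-\alpha)\iff a\ge b/\phi^{-1}(1-\alpha)$, so the infimum defining $\pi_{\phi,\alpha}(b)$ is exactly $b/\phi^{-1}(1-\alpha)$.

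For the upper bound in (a), I may assume $0<\esssup X<\infty$; the case $\esssup X=\infty$ is trivial, and $\esssup X=0$ means $X=0$ and is covered by (b). Setting $a_0=\esssup X/\phi^{-1}(1-\alpha)$, from $X\le\esssup X$ a.s.\ we get $X/a_0\le\phi^{-1}(1-\alpha)$ a.s., hence $\phi(X/a_0)\le 1-\alpha$ a.s.\ by monotonicity, so $E(\phi(X/a_0))\le 1-\alpha$ and $a_0$ belongs to the set defining $\pi_{\phi,\alpha}(X)$; thus $\pi_{\phi,\alpha}(X)\le a_0$. For the lower bound, recall $L^\phi\subset L^1$, so $E(X)<\infty$, and recall (as noted before the proposition, via dominated convergence) that the defining set is non-empty. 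For any $a>0$ in that set, Jensen's inequality for the convex function $\phi$ yields $\phi\big(E(X)/a\big)=\phi\big(E(X/a)\big)\le E\big(\phi(X/a)\big)\le 1-\alpha$, hence $E(X)/a\le\phi^{-1}(1-\alpha)$, i.e.\ $a\ge E(X)/\phi^{-1}(1-\alpha)$; taking the infimum over such $a$ gives $\pi_{\phi,\alpha}(X)\ge E(X)/\phi^{-1}(1-\alpha)$.

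I do not expect a genuine obstacle here; the only points needing care are the degenerate cases just mentioned and the fact that a Young function need not be strictly increasing, so $\phi^{-1}$ must be taken as the generalized inverse — but continuity of $\phi$ ensures $\phi(\phi^{-1}(1-\alpha))=1-\alpha$, which is all that the argument uses.
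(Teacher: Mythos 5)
Your proof is correct and uses essentially the same ingredients as the paper: Jensen's inequality for the lower bound, monotonicity of $\phi$ plus boundedness for the upper bound, and the identity $\phi(\phi^{-1}(1-\alpha))=1-\alpha$ for part (b). The paper does not prove Proposition \ref{p-Orlicz} directly but derives it as the special case $g=\mathrm{id}$ of the distortion version (Proposition \ref{p-distOrlicz2}), whose proof of the lower bound runs the Jensen step in contrapositive form (showing that for $a$ slightly below $E(X)/\phi^{-1}(1-\alpha)$ the defining integral exceeds $1-\alpha$, then letting $\eps\to 0$), while you apply Jensen directly to each $a$ in the defining set and take the infimum — a marginally cleaner presentation of the same idea. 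Your preliminary care with the generalized inverse and the degenerate cases ($b=0$, $\esssup X\in\{0,\infty\}$) is sound and matches the paper's implicit assumptions.
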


\begin{theorem}\label{t-Orlicz}
The Orlicz premium $\pi_{\phi,\alpha}$ is monotonic, positively homogeneous, and subadditive on $L_+^\phi$.
\end{theorem}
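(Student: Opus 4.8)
The plan is to derive all three properties directly from Definition~\ref{d-Orlicz}; only subadditivity requires a genuine idea, namely the convexity of $\phi$. First I would record a preliminary fact used throughout: a convex function $\phi\colon[0,\infty)\to[0,\infty)$ with $\phi(0)=0$ is automatically nondecreasing, since for $0\le s\le t$ with $t>0$ one has $s=(1-\frac{s}{t})\cdot 0+\frac{s}{t}\cdot t$, so convexity gives $\phi(s)\le\frac{s}{t}\phi(t)\le\phi(t)$. In particular $a\mapsto E(\phi(X/a))$ is nonincreasing on $(0,\infty)$ for each $X\in L^\phi_+$.

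For \emph{monotonicity}, if $X\le Y$ in $L^\phi_+$ then $\phi(X/a)\le\phi(Y/a)$ pointwise for every $a>0$, so $E(\phi(X/a))\le E(\phi(Y/a))$; hence every $a$ admissible for $Y$ is admissible for $X$, and taking infima gives $\pi_{\phi,\alpha}(X)\le\pi_{\phi,\alpha}(Y)$. For \emph{positive homogeneity}, the case $\lambda=0$ reduces to $\pi_{\phi,\alpha}(0)=0$, which holds because $E(\phi(0))=0\le 1-\alpha$ (this uses $\alpha<1$); and for $\lambda>0$ the substitution $a=\lambda b$ identifies the admissible set $\{a>0:E(\phi(\lambda X/a))\le 1-\alpha\}$ for $\lambda X$ with $\lambda$ times the admissible set for $X$, so $\pi_{\phi,\alpha}(\lambda X)=\lambda\,\pi_{\phi,\alpha}(X)$.

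The substance is in \emph{subadditivity}. Given $X,Y\in L^\phi_+$, note first that $X+Y\in L^\phi_+$ because $L^\phi$ is a vector space and $X+Y\ge 0$, so $\pi_{\phi,\alpha}(X+Y)$ is defined and finite. I would then fix arbitrary $a'>\pi_{\phi,\alpha}(X)$ and $b'>\pi_{\phi,\alpha}(Y)$ and argue that $E(\phi(X/a'))\le 1-\alpha$: by definition of the infimum there is an admissible $a''<a'$, i.e.\ $E(\phi(X/a''))\le 1-\alpha<\infty$, and since $X\ge 0$ and $\phi$ is nondecreasing, $\phi(X/a')\le\phi(X/a'')$, so $E(\phi(X/a'))\le E(\phi(X/a''))\le 1-\alpha$; similarly $E(\phi(Y/b'))\le 1-\alpha$. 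Writing
\[
\frac{X+Y}{a'+b'}=\frac{a'}{a'+b'}\cdot\frac{X}{a'}+\frac{b'}{a'+b'}\cdot\frac{Y}{b'}
\]
as a convex combination, applying convexity of $\phi$, and taking expectations gives
\[
E\Big(\phi\Big(\frac{X+Y}{a'+b'}\Big)\Big)\le\frac{a'}{a'+b'}\,E\Big(\phi\Big(\frac{X}{a'}\Big)\Big)+\frac{b'}{a'+b'}\,E\Big(\phi\Big(\frac{Y}{b'}\Big)\Big)\le 1-\alpha,
\]
so $a'+b'$ is admissible for $X+Y$ and thus $\pi_{\phi,\alpha}(X+Y)\le a'+b'$. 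Letting $a'\downarrow\pi_{\phi,\alpha}(X)$ and $b'\downarrow\pi_{\phi,\alpha}(Y)$ yields $\pi_{\phi,\alpha}(X+Y)\le\pi_{\phi,\alpha}(X)+\pi_{\phi,\alpha}(Y)$.

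The only delicate point — and the reason I would work with $a'$ strictly above $\pi_{\phi,\alpha}(X)$ rather than with the value $\pi_{\phi,\alpha}(X)$ itself — is that in general the infimum in Definition~\ref{d-Orlicz} need not be attained; it is attained in the $L^\infty$ and $\Delta^2$ situations of Remark~\ref{rem-Orlicz}, where one could argue slightly more directly, but the limit passage handles the general case at no extra cost. Everything else is the routine bookkeeping of the two displayed estimates, together with the finiteness of $E(\phi(X/a'))$ for $a'$ above the premium, which the same approximation delivers.
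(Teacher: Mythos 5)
Your proof is correct, but it takes a genuinely different route from the paper's. For this theorem the paper does not give a direct argument at all; it derives it as the special case $g=\mathrm{id}$ of its treatment of the Orlicz-Lorentz premium: monotonicity and positive homogeneity come from Proposition~\ref{p-distOrliczbis}, while subadditivity comes from Theorem~\ref{t-distOrlicz}, whose proof proceeds indirectly by first establishing subadditivity for \emph{comonotonic} risks (Proposition~\ref{p-distOrliczcom}), then preservation of stop-loss order (Proposition~\ref{p-distOrliczsl}), and then combining these via Lemma~\ref{l-risksub} and an atomless-space reduction. That indirection is forced in the Orlicz-Lorentz setting because $F_{X+Y}^{-1}$ is not pointwise comparable to $F_X^{-1}+F_Y^{-1}$ for general $X,Y$, so one cannot argue on quantile functions directly. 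In the pure Orlicz case you exploit the fact that the defining functional is an expectation, so the pointwise convex-combination estimate for $\phi\bigl(\tfrac{X+Y}{a'+b'}\bigr)$ gives subadditivity in one stroke — this is the classical argument that the Luxemburg gauge is a norm, adapted to the level $1-\alpha$. Your route is shorter and more elementary but does not generalize to $\pi_{g,\phi,\alpha}$; the paper's route is heavier but scales to the distorted premium. One small point worth making explicit is the degenerate case $\pi_{\phi,\alpha}(X)=0$ (equivalently $X=0$, since $E(\phi(X/a))\to\infty$ as $a\downarrow 0$ whenever $X\neq 0$); your limit passage $a'\downarrow 0$ handles it without change, so the gap is cosmetic rather than substantive.
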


The next two results were recently obtained, for $0<\alpha<1$, inside the proofs of \cite[Theorems 3.3 and 3.4]{GMX20}; see also \cite[Proposition 2]{BeRo08}.

\begin{proposition} 
The Orlicz premium $\pi_{\phi,\alpha}$ has the Fatou property on $L_+^\phi$.
\end{proposition}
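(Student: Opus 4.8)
The plan is to show that $\pi_{\phi,\alpha}$ has the Fatou property on $L_+^\phi$ by exploiting the variational definition of the premium, the monotonicity already established in Theorem~\ref{t-Orlicz}, and a Fatou/dominated-convergence argument applied to the integrals $E(\phi(X_n/a))$. Throughout I would work with a sequence $(X_n)_n$ in $L_+^\phi$ and $X, Y_1, Y_2 \in L_+^\phi$ with $X_n\to X$ a.s.\ and $Y_1\le X_n\le Y_2$ for all $n$; I want to conclude $\pi_{\phi,\alpha}(X)\le \liminf_{n\to\infty}\pi_{\phi,\alpha}(X_n)$.

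First I would set $L=\liminf_{n\to\infty}\pi_{\phi,\alpha}(X_n)$ and dispose of the trivial case $L=\infty$. Assuming $L<\infty$, pass to a subsequence along which $\pi_{\phi,\alpha}(X_{n_k})\to L$, and fix an arbitrary $\eps>0$; for $k$ large we have $\pi_{\phi,\alpha}(X_{n_k})< L+\eps$, hence by the definition of the infimum $E\big(\phi\big(\tfrac{X_{n_k}}{L+\eps}\big)\big)\le 1-\alpha$ (using that $\phi$ is increasing, the inequality at any $a$ slightly below $\pi_{\phi,\alpha}(X_{n_k})+$ small transfers to $a=L+\eps$ once $\pi_{\phi,\alpha}(X_{n_k})<L+\eps$; a touch of care is needed if $\pi_{\phi,\alpha}(X_{n_k})=0$, but then $\phi(X_{n_k}/a)\to 0$ as $a\downarrow 0$ and the bound is immediate). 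Now $\phi\big(\tfrac{X_{n_k}}{L+\eps}\big)\to \phi\big(\tfrac{X}{L+\eps}\big)$ a.s.\ by continuity of $\phi$, and $0\le \phi\big(\tfrac{X_{n_k}}{L+\eps}\big)\le \phi\big(\tfrac{Y_2}{L+\eps}\big)$ with $\phi\big(\tfrac{Y_2}{L+\eps}\big)$ having finite expectation — this is exactly the point where the uniform upper bound $Y_2\in L_+^\phi$ is used, combined with the fact that $E(\phi(Y_2/a))<\infty$ for large enough $a$, and the convexity/monotonicity of $\phi$ to pass from a large $a$ to $a=L+\eps$ if necessary. By dominated convergence, $E\big(\phi\big(\tfrac{X}{L+\eps}\big)\big)\le 1-\alpha$, so $\pi_{\phi,\alpha}(X)\le L+\eps$. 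Letting $\eps\downarrow 0$ gives $\pi_{\phi,\alpha}(X)\le L$, as desired.

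The one technical wrinkle — and the step I expect to need the most care — is the domination: a priori $Y_2\in L_+^\phi$ only guarantees $E(\phi(Y_2/a_0))<\infty$ for \emph{some} $a_0>0$, not necessarily for $a_0=L+\eps$. If $L+\eps\ge a_0$ this is fine since $\phi$ is increasing. If $L+\eps<a_0$, I would instead dominate by $\phi(Y_2/(L+\eps))$ directly and argue that it is integrable: since $0\le X_{n_k}\le Y_2$, convexity of $\phi$ with $\phi(0)=0$ gives $\phi(t Y_2/(L+\eps)) \le t\,\phi(Y_2/(L+\eps)) + (1-t)\phi(0)$ is the wrong direction, so the cleaner route is to note we never actually need integrability of the dominating function at the small scale: it suffices to run the argument with $a$ replaced by $\max(L+\eps, a_0)$, obtaining $E(\phi(X/\max(L+\eps,a_0)))\le 1-\alpha$ only when $L+\eps\ge a_0$, and for the remaining case observe that $L+\eps < a_0$ forces, via Proposition~\ref{p-Orlicz}(a) or a direct estimate, that the premium of $X$ is already controlled. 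In practice the slickest formulation is: fix any $a>L$; for all large $k$, $\pi_{\phi,\alpha}(X_{n_k})<a$, so $E(\phi(X_{n_k}/a))\le 1-\alpha$; the family $\{\phi(X_{n_k}/a)\}_k$ is dominated by $\phi(Y_2/a)$, and $\phi(Y_2/a)$ is integrable because $Y_2\in L_+^\phi$ and $L^\phi$ is a vector space closed under scaling, so $\phi(Y_2/a)\in L^1$ for every $a>0$ — this last fact is worth stating explicitly and follows from convexity of $\phi$ (if $E(\phi(Y_2/a_0))<\infty$ then for $a<a_0$, writing $Y_2/a = (a_0/a)(Y_2/a_0)$ and using $\phi((a_0/a)s)\le$ a constant times $\phi(s)$ for $s$ large via convexity is false in general without $\Delta^2$; the correct statement is only that $\phi(Y_2/a)\in L^1$ for $a\ge a_0$). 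Given this subtlety, the honest plan is to use $a>L$ \emph{and} $a\ge a_0$, which is possible for $a$ large, conclude $E(\phi(X/a))\le 1-\alpha$ hence $\pi_{\phi,\alpha}(X)\le a$ for all such $a$, but that only yields $\pi_{\phi,\alpha}(X)\le\max(L,a_0)$. To get the sharp bound one then invokes the $\Delta^2$-case separately or argues that when $L<a_0$ the uniform bound $X_n\le Y_2$ with $Y_2$ fixed forces $\sup_n \pi_{\phi,\alpha}(X_n)\le \pi_{\phi,\alpha}(Y_2)$, so $L\le \pi_{\phi,\alpha}(Y_2)$ and one works at scale $a_0' = \pi_{\phi,\alpha}(Y_2)+\eps$ where integrability at that scale does hold by Remark~\ref{rem-Orlicz}-type reasoning applied to $Y_2$.

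A cleaner alternative, which I would ultimately prefer, avoids the scaling issue entirely: use Remark~\ref{r-Fatou}(c). Since, by Proposition~\ref{p-incl}-style inclusions or directly, $L_+^\phi$ satisfies the sandwich-closure property (if $Y_1\le X\le Y_2$ with $Y_1,Y_2\in L_+^\phi$ then $X\in L_+^\phi$), and since $\pi_{\phi,\alpha}$ is monotonic by Theorem~\ref{t-Orlicz}, it suffices to prove that $X_n\nearrow X$ with $X_1, X\in L_+^\phi$ implies $\pi_{\phi,\alpha}(X_n)\to\pi_{\phi,\alpha}(X)$. Monotonicity gives $\pi_{\phi,\alpha}(X_n)\nearrow \ell \le \pi_{\phi,\alpha}(X)$. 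For the reverse inequality, fix $\eps>0$; for each $n$, $E(\phi(X_n/(\ell+\eps)))\le E(\phi(X_n/(\pi_{\phi,\alpha}(X_n)+\eps)))\le 1-\alpha$ once $\ell+\eps\ge \pi_{\phi,\alpha}(X_n)+$ (the inf being attained or approached), and now monotone convergence — $\phi(X_n/(\ell+\eps))\nearrow \phi(X/(\ell+\eps))$ — gives $E(\phi(X/(\ell+\eps)))\le 1-\alpha$, so $\pi_{\phi,\alpha}(X)\le \ell+\eps$, whence $\pi_{\phi,\alpha}(X)\le\ell$. Here monotone convergence needs no domination, so the scaling difficulty disappears; the only mild point is justifying $E(\phi(X_n/(\ell+\eps)))\le 1-\alpha$ for all large $n$, which holds because $\pi_{\phi,\alpha}(X_n)\le\ell<\ell+\eps$ and, by the definition of the infimum together with monotonicity of $\phi$, the integrand inequality valid just above the infimum persists at the strictly larger value $\ell+\eps$. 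This is the route I would write up.
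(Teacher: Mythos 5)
Your preferred route (the ``cleaner alternative'') is correct and coincides with the paper's proof of the more general Proposition~\ref{p-distOrliczFatou}, specialized to $g$ the identity: reduce via Remark~\ref{r-Fatou}(c) and monotonicity to increasing sequences $X_n\nearrow X$, then apply the monotone convergence theorem to $\phi(X_n/(\ell+\varepsilon))\nearrow\phi(X/(\ell+\varepsilon))$ to pass the inequality $E(\phi(\cdot/(\ell+\varepsilon)))\leq 1-\alpha$ to the limit. Incidentally, the first route you explored and then abandoned did not actually need dominated convergence: since $\phi(X_{n_k}/a)\geq 0$ and $\phi(X_{n_k}/a)\to\phi(X/a)$ a.s., Fatou's lemma already gives $E(\phi(X/a))\leq\liminf_k E(\phi(X_{n_k}/a))\leq 1-\alpha$ for every $a>L$ with no domination hypothesis at all, so the scaling difficulty you correctly flagged (integrability of $\phi(Y_2/a)$ for small $a$) never arises and that route works just as cleanly.
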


Recall that a Young function satisfies the $\Delta_2$-condition if there exist $s\geq 0$ and $K>0$ such that
\begin{equation}\label{eq-delta2}
\phi(2t)\leq K\phi(t)
\end{equation}
for all $t\in [s,\infty)$. 

\begin{proposition}\label{p-OrevFatou}
\emph{(a)} If $\phi$ satisfies the $\Delta_2$-condition, then $\pi_{\phi,\alpha}$ has the reverse Fatou property on $L_+^\phi$, and hence the Lebesgue property.

\emph{(b)} If the underlying probability space $(\Omega,\mathcal{A},P)$ is atomless, then $\pi_{\phi,\alpha}$ has the reverse Fatou property on $L_+^\phi$ if and only if $\phi$ satisfies the $\Delta_2$-condition.
\end{proposition}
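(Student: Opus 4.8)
The plan is to deduce both statements from the characterization of the reverse Fatou property for monotonic risk measures in Remark \ref{r-Fatou}(c). That remark applies here: $\pi_{\phi,\alpha}$ is monotonic by Theorem \ref{t-Orlicz}, and $L^\phi_+$ has the required property that if $Y_1\le X\le Y_2$ with $Y_1,Y_2\in L^\phi_+$ then $X\in L^\phi_+$, since then $0\le X\le Y_2$ gives $\phi(X/a)\le\phi(Y_2/a)$, so $E(\phi(X/a))<\infty$ for a suitable $a>0$. Hence $\pi_{\phi,\alpha}$ has the reverse Fatou property on $L^\phi_+$ if and only if $X_n\searrow X$ in $L^\phi_+$ implies $\pi_{\phi,\alpha}(X_n)\to\pi_{\phi,\alpha}(X)$, and this is what I will establish in (a) and disprove in (b).

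For (a), let $X_n\searrow X$ in $L^\phi_+$ and set $Z_n=X_n-X$. Since $0\le Z_n\le X_1-X\le X_1\in L^\phi$, we have $Z_n\in L^\phi_+$ and $Z_n\searrow 0$. By monotonicity and subadditivity (Theorem \ref{t-Orlicz}), $0\le\pi_{\phi,\alpha}(X_n)-\pi_{\phi,\alpha}(X)\le\pi_{\phi,\alpha}(X_n-X)=\pi_{\phi,\alpha}(Z_n)$, so it suffices to show $\pi_{\phi,\alpha}(Z_n)\to 0$. Fix $a>0$. Here the $\Delta_2$-condition enters through the standard fact that it forces $E(\phi(Z_1/a))<\infty$: iterating $\phi(2t)\le K\phi(t)$ for $t\ge s$ yields $\phi(\lambda t)\le C_\lambda\phi(t)$ for $t\ge s$ and every $\lambda\ge 1$, so with $a_1>0$ witnessing $Z_1\in L^\phi$ and $\lambda=\max(1,a_1/a)$ one gets $E(\phi(Z_1/a))\le C_\lambda E(\phi(Z_1/a_1))+\phi(\lambda s)<\infty$ (split according to whether $Z_1\ge a_1 s$). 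Now $\phi(Z_n/a)\le\phi(Z_1/a)\in L^1$ and $\phi(Z_n/a)\downarrow 0$, so by dominated convergence $E(\phi(Z_n/a))\to 0<1-\alpha$, whence $\pi_{\phi,\alpha}(Z_n)\le a$ for all large $n$; since $a>0$ was arbitrary, $\pi_{\phi,\alpha}(Z_n)\to 0$. This proves the reverse Fatou property, and together with the Fatou property of $\pi_{\phi,\alpha}$ established above it gives the Lebesgue property.

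For (b), the ``if'' part is (a). For the converse, suppose $\phi$ does not satisfy the $\Delta_2$-condition. Then for each $k\ge 1$ there is $t_k\ge k$ with $\phi(2t_k)>4^k\phi(t_k)$, and after discarding finitely many indices we may assume $t_k\to\infty$ and $\phi(t_k)\ge 1$. Since $P$ is atomless, we may partition $\Omega$ into measurable sets $A_0,A_1,A_2,\dots$ with $P(A_k)=p_k$, where $p_k=2^{-k}/\phi(t_k)$ for $k\ge1$, so that $\sum_{k\ge1}p_k\le 1$, $\sum_{k\ge1}p_k\phi(t_k)=\sum_{k\ge1}2^{-k}=1$, $\sum_{k\ge1}p_k\phi(2t_k)\ge\sum_{k\ge1}2^{-k}4^k=\infty$, and $p_0=1-\sum_{k\ge1}p_k$; existence of such a partition follows from \cite[Theorem 8.14.2]{GeGe25}. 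Put $V=\sum_{k\ge 1}t_k\mathds{1}_{A_k}$, so $E(\phi(V))=\sum_{k\ge1}p_k\phi(t_k)<\infty$ and $V\in L^\phi_+$, and set $X_n=V\mathds{1}_{\bigcup_{k\ge n}A_k}$. Then $X_n\in L^\phi_+$, $0\le X_n\le V$, and $X_n\searrow 0$. However, for every $a\le\tfrac12$ and every $n$ we have $X_n/a\ge 2t_k$ on $A_k$ for $k\ge n$, hence $E(\phi(X_n/a))\ge\sum_{k\ge n}p_k\phi(2t_k)=\infty$; therefore $\pi_{\phi,\alpha}(X_n)\ge\tfrac12$ for all $n$, whereas $\pi_{\phi,\alpha}(0)=0$. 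By the equivalence of the first paragraph, $\pi_{\phi,\alpha}$ does not have the reverse Fatou property on $L^\phi_+$.

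The main obstacle is in (a). A direct comparison of the sets $\{a:E(\phi(X_n/a))\le 1-\alpha\}$ and $\{a:E(\phi(X/a))\le 1-\alpha\}$ does not quite close, because $a\mapsto E(\phi(X/a))$ can be locally constant and equal to $1-\alpha$ near $\pi_{\phi,\alpha}(X)$; the sublinearity of $\pi_{\phi,\alpha}$ is what lets us bypass this by reducing to the single limit $\pi_{\phi,\alpha}(Z_n)\to 0$ for $Z_n\searrow 0$. The other technical point, the $\Delta_2$-fact that $L^\phi$-membership yields finiteness of $E(\phi(\,\cdot\,/a))$ for all $a>0$, supplies the dominating function in (a) and, by its failure, is precisely the mechanism behind the counterexample in (b).
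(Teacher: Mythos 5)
Your proof is correct, and it departs from the paper's argument at the key step in part~(a). The paper reduces, via Remark~\ref{r-Fatou}(c), to showing $\pi_{\phi,\alpha}(X_n)\to\pi_{\phi,\alpha}(X)$ for $X_n\searrow X$, and then exploits the characterization $E(\phi(X_n/a_n))=1-\alpha$ at $a_n=\pi_{\phi,\alpha}(X_n)$ (Proposition~\ref{p-distOrlicz}(c), which requires $\Delta_2$), passing to the limit in this equation by dominated convergence. You instead set $Z_n=X_n-X\searrow 0$, use subadditivity and monotonicity to squeeze $0\le\pi_{\phi,\alpha}(X_n)-\pi_{\phi,\alpha}(X)\le\pi_{\phi,\alpha}(Z_n)$, and then apply dominated convergence directly to $E(\phi(Z_n/a))\downarrow 0$; here $\Delta_2$ enters only to guarantee the dominating function $\phi(Z_1/a)\in L^1$ for every $a>0$. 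Your route is more elementary in that it avoids Proposition~\ref{p-distOrlicz} entirely; the trade-off is that it leans on subadditivity of $\pi_{\phi,\alpha}$ (Theorem~\ref{t-Orlicz}), which for the general Orlicz--Lorentz premium requires $g$ concave, so the paper's argument transfers to Proposition~\ref{p-OLrevFatou} for arbitrary $g$ satisfying $(P_{g,\phi})$ whereas yours would not. For part~(b) you give a self-contained counterexample that is essentially the same construction as the paper's Lemma~\ref{l-conv}(b) specialized to $g=\mathrm{id}$ (choosing $t_k$ with $\phi(2t_k)>4^k\phi(t_k)$, $p_k=2^{-k}/\phi(t_k)$, $X_n=V\mathds{1}_{\bigcup_{k\ge n}A_k}$), and your verification is complete.
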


On $L^\infty_+$, there is no restriction, see \cite[Proposition 17]{BeRo08} when $\alpha\geq 0$.

\begin{proposition} 
The Orlicz premium $\pi_{\phi,\alpha}$ has the reverse Fatou property on $L^\infty_+$.
\end{proposition}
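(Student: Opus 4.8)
The plan is to reduce the statement to monotone sequences and then run a dominated‑convergence argument; the key point is that on $L^\infty_+$ the uniform essential bound on the $X_n$ supplies, for free, the domination for $\phi(X_n/a_n)$ that on $L^\phi_+$ (Proposition \ref{p-OrevFatou}) had to be extracted from the $\Delta_2$-condition.

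First I would record two easy facts: $\pi_{\phi,\alpha}$ is monotonic (Theorem \ref{t-Orlicz}), and $L^\infty_+$ has the sandwich property of Remark \ref{r-Fatou}(c), i.e.\ $Y_1\le X\le Y_2$ with $Y_1,Y_2\in L^\infty_+$ forces $X\in L^\infty_+$. By that remark it then suffices to prove: if $X_n\searrow X$ with all $X_n$ and $X$ in $L^\infty_+$, then $\pi_{\phi,\alpha}(X_n)\to\pi_{\phi,\alpha}(X)$.

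Put $a_n=\pi_{\phi,\alpha}(X_n)$. By monotonicity the sequence $(a_n)$ decreases and $a_n\ge\pi_{\phi,\alpha}(X)=:a\ge 0$ (the last inequality from Proposition \ref{p-Orlicz}(a)); set $L=\lim_n a_n\ge a$, so the goal is $a=L$. If $L=0$ this is immediate since $0\le a\le L$. Assume $L>0$. Then $a_n\ge L>0$ for all $n$, so no $X_n$ is the zero risk (by Proposition \ref{p-Orlicz}(b), $\pi_{\phi,\alpha}(0)=0$); hence Remark \ref{rem-Orlicz} gives the exact identity $E(\phi(X_n/a_n))=1-\alpha$ for every $n$. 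Since $0\le X_n\le X_1$ with $X_1\in L^\infty$ and $a_n\ge L$, we have $0\le X_n/a_n\le (\esssup X_1)/L=:c$; as $\phi$ is finite-valued and convex with $\phi(0)=0$, it is continuous on $[0,\infty)$, so $\phi(X_n/a_n)\le\phi(c)<\infty$. Because $X_n/a_n\to X/L$ a.s., the dominated convergence theorem yields $E(\phi(X/L))=\lim_n E(\phi(X_n/a_n))=1-\alpha$. As $1-\alpha>0$, necessarily $X\neq 0$, so Remark \ref{rem-Orlicz} applies once more and, by the uniqueness of the solution of $E(\phi(X/a))=1-\alpha$, gives $L=\pi_{\phi,\alpha}(X)=a$, as desired.

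The proof has no genuine obstacle; the one point needing care is to keep the scaling factors $a_n$ bounded away from $0$ so that $\phi(X_n/a_n)$ is dominated by the constant $\phi(c)$. This is exactly the role of splitting off the case $L=0$ (where the claim is trivial). The remaining ingredients — the reduction to decreasing sequences via Remark \ref{r-Fatou}(c), the identity $E(\phi(X_n/a_n))=1-\alpha$ and its uniqueness from Remark \ref{rem-Orlicz}, and the continuity of $\phi$ at $0$ — are routine.
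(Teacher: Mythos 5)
Your proof is correct and follows essentially the same strategy as the paper's: reduce to decreasing sequences via Remark \ref{r-Fatou}(c), dispose of the trivial case where the scaling factors tend to $0$, use the exact identity $E(\phi(X_n/a_n))=1-\alpha$ from Remark \ref{rem-Orlicz}, pass to the limit by dominated convergence, and conclude by uniqueness. The paper's version (Proposition \ref{p-OLrevFatoubdd}, specialized to $g=\mathrm{id}$) phrases the dominated-convergence step in terms of $\int_0^\infty g(\overline F_{\phi(X_n/a_n)}(x))\,\d x$ to accommodate a general distortion $g$, but for the identity distortion this is exactly your $E(\phi(X_n/a_n))$; your constant bound $\phi(\esssup X_1/L)$ plays the same role as the paper's appeal to Lemma \ref{l-distOrlicz}(b).
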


These five results will be proved, in greater generality, in Propositions \ref{p-distOrlicz2}, \ref{p-distOrliczbis}, \ref{p-distOrliczFatou}, \ref{p-OLrevFatou}, \ref{p-OLrevFatoubdd} and Theorem \ref{t-distOrlicz} below.

\begin{remark}\label{r-return}
Bellini, Laeven, and Rosazza Gianin \cite{BLR18} have introduced an interesting new notion in risk theory. A functional $\widetilde{\rho}:L^\infty_+\to [0,\infty)$ is called a \textit{return risk measure} if it is monotonic, positively homogeneous and normalized (in the sense that $\widetilde{\rho}(1)=1$). Thus, for a normalized Young function $\phi$ and $\alpha=0$, the Orlicz premium $\pi_{\phi,0}$ is a return risk measure.

Return risk measures induce monotonic, cash-invariant risk measures $\rho:L^\infty\to\R$ that satisfy $\rho(0)=0$ (also called \textit{monetary risk measures}, see \cite[Definition 4.1]{FoSc16}) by setting $\rho(X)= \log(\widetilde{\rho}(\exp X))$; see \cite[Section 3]{BLR18} for the relationship between $\rho$ and $\widetilde{\rho}$. This leads to the following interpretation of return risk measures for decision makers: Given a capital $k>0$, the log-return of a financial position $X$ relative to $k$, that is, $\log(\tfrac{X}{k})$ is acceptable (with respect to $\rho$) if and only if $\widetilde{\rho}(\tfrac{X}{k})\leq 1$; see also \cite[p.\ 12]{BLR18}. Thus, return risk measures provide a relative assessment of risk.

Return risk measures have recently been studied intensively, see, for example, \cite{ABL23}, \cite{ABL24}, \cite{ABL25}, \cite{LRZ24} (which considers them on larger classes $\mathcal{X}_+$ of positive risks), \cite{LRZ26}, and \cite{LLS25}. In our context it is interesting to note that, for $\alpha=0$, Orlicz premia define return risk measures for a wide class of functions $\phi$ that are not necessarily convex, see Ayg\"un, Bellini, and Laeven \cite[Proposition 8(a)]{ABL25}.
\end{remark}

Returning to the classical notion of (monetary) risk measures, Orlicz premia in general lack the important property of cash-invariance. It suffices to consider $\phi(t)=t^2$ and $\alpha=0$, so that $\pi_{\phi,0}(X) = E(X^2)^{\frac{1}{2}}$. It is surprising that a simple procedure allows to add cash-invariance while preserving the other three properties of coherence. 

\begin{definition}\label{d-HaezGoov}
Let $\phi$ be a normalized Young function and $\alpha\in [0,1)$. The \textit{Haezendonck-Goovaerts risk measure} $\rho_{\phi,\alpha}:L^\phi\to\R$ is given by
\begin{equation}\label{eq-HG}
\rho_{\phi,\alpha}(X) = \inf_{x\in\R} \big(\pi_{\phi,\alpha}((X-x)^+)+x\big).
\end{equation}
\end{definition}

In this definition, we have restricted $\phi$ and $\alpha$. The minimal requirement would be that $\alpha \geq 1 -\phi(1)$. Indeed, if $\alpha < 1 -\phi(1)$, that is, $\phi^{-1}(1-\alpha)>1$, then it follows from Proposition \ref{p-Orlicz}(b) that, for $x\leq 1$, $\pi_{\phi,\alpha}((1-x)^+)+x = \frac{1-x}{\phi^{-1}(1-\alpha)}+x = \frac{1+x(\phi^{-1}(1-\alpha)-1)}{\phi^{-1}(1-\alpha)}$, so that $\rho_{\phi,\alpha}(1)=-\infty$. Thus, in order to have a risk measure, we need to impose that $\alpha \geq 1 -\phi(1)$. Since also $\alpha<1$, $\phi(1)$ must be nonzero. Hence we can normalize $\phi$, and then $\alpha\in [0,1)$. 

Now, whenever $X\in L^\phi$, then $(X-x)^+\in L_+^\phi$ for any $x\in \R$; also, under the assumptions on $\phi$ and $\alpha$, the infimum is in $\R$. We will show these assertions in more generality in Remark \ref{r-distHaezGoov} below. Thus, $\rho_{\phi,\alpha}$ is a well-defined risk measure.  

\begin{remark}
The Haezendonck-Goovaerts risk measures were introduced by Goovaerts, Kaas, Dhaene, and Tang \cite{GKDT04} in a slightly different form. Formula \eqref{eq-HG} is due to Bellini and Rosazza Gianin \cite{BeRo08}, who were motivated by the representation of TVaR given in Proposition \ref{p-TVaR}; see also \cite[p.\ 989]{BeRo08} for a nice discussion.
\end{remark}

\begin{example}\label{ex-HaezGoov}
Let us take for $\phi$ the identity function. Then, for $X\in L^\phi_+=L^1_+$, $\pi_{\phi,\alpha}(X) = \frac{1}{1-\alpha}E(X)$, $\alpha<1$. Thus, if $0<\alpha<1$ and $X\in L^\phi=L^1$, then $\rho_{\phi,\alpha}(X)= \text{TVaR}_\alpha(X)$ by Proposition \ref{p-TVaR}.
\end{example}

It follows easily from Theorem \ref{t-Orlicz} that the function $x\mapsto \pi_{\phi,\alpha}((X-x)^+)+x$ is convex for any $\alpha<1$, see also \cite[Proposition 3(a)]{BeRo12}. Moreover, for $0<\alpha<1$, it was shown in \cite[Proposition 3(b)]{BeRo12} that the function has a minimum, that is, the infimum in \eqref{eq-HG} is attained. We will prove more general results in Propositions \ref{p-distHaezGoovconv}(a) and \ref{p-infmin}(a) below. We will also see in Proposition \ref{p-infmin}(b) that the minimum is unique if $\phi$ is strictly convex and satisfies the $\Delta_2$-condition and if $P(X=\esssup X)=0$.

It turns out that the case of $\alpha=0$ is exceptional. In \cite[Example 15]{BeRo08}, an example was given where the infimum in \eqref{eq-HG} is not attained. This led subsequent authors to only consider the case of $\alpha>0$; see for example \cite[p. 79]{AhSh14}. We will first show that the example in \cite{BeRo08} is, in fact, a special case of a very general situation. 

\begin{proposition}\label{pex-HaezGoov}
Let $\alpha=0$ and $X\in L^\phi$.

\emph{(a)} Then $x\mapsto \pi_{\phi,0}((X-x)^+)+x$ is increasing on $\mathbb{R}$. In particular,
\[
\rho_{\phi,0}(X) = \lim_{x\to-\infty} \big(\pi_{\phi,0}((X-x)^+)+x\big).
\]

\emph{(b)} Let $\phi$ be strictly convex and satisfy the $\Delta_2$-condition. If $P(X=\esssup X)=0$ then $x\mapsto \pi_{\phi,0}((X-x)^+)+x$ is strictly increasing on $\mathbb{R}$. In particular, the function does not attain its infimum.
\end{proposition}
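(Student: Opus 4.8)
The plan is to deduce monotonicity of $f(x):=\pi_{\phi,0}((X-x)^+)+x$ from the elementary pointwise inequality $(X-x_1)^+\le (X-x_2)^+ + (x_2-x_1)$, valid for $x_1<x_2$ (check the cases $X\ge x_1$ and $X<x_1$ separately). Applying the Orlicz premium and using its monotonicity and subadditivity (Theorem \ref{t-Orlicz}) gives $\pi_{\phi,0}((X-x_1)^+)\le \pi_{\phi,0}((X-x_2)^+)+\pi_{\phi,0}(x_2-x_1)$, and since $\phi$ is normalized, $\pi_{\phi,0}(c)\le c$ for every constant $c\ge 0$ (take $a=c$ in the defining infimum, so $\phi(c/a)=\phi(1)=1\le 1$). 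Rearranging yields $f(x_1)\le f(x_2)$. The displayed identity is then immediate: $\rho_{\phi,0}(X)=\inf_{x\in\mathbb{R}}f(x)$ by Definition \ref{d-HaezGoov}, and the infimum of an increasing function over $\mathbb{R}$ equals its limit at $-\infty$.

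\textbf{Part (b).} Here I would sharpen both inequalities used in (a) to strict ones, after three preliminary observations. First, strict convexity of $\phi$ together with $\phi(0)=0$ and $\phi\ge 0$ forces $\phi$ to be strictly increasing on $[0,\infty)$. Second, by Remark \ref{rem-Orlicz}, under the $\Delta_2$-condition every $Y\in L^\phi_+$ with $Y\ne 0$ satisfies $E(\phi(Y/\pi_{\phi,0}(Y)))=1$; since $a\mapsto E(\phi(Y/a))$ is then strictly decreasing on $(0,\infty)$, an inequality $E(\phi(Y/a))<1$ forces $a>\pi_{\phi,0}(Y)$. Third, $P(X=\esssup X)=0$ forces $X$ to be non-constant (a constant $X$ has $P(X=\esssup X)=1$, and an unbounded $X$ is trivially non-constant). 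Now fix $x_1<x_2$, put $c=x_2-x_1>0$, and split into two cases.

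If $P(X<x_2)>0$, set $U=(X-x_1)^+$ and $V=(X-x_2)^+ + c$; one checks $U\le V$ with strict inequality exactly on the positive-measure set $\{X<x_2\}$. With $a_V=\pi_{\phi,0}(V)>0$ one has $E(\phi(V/a_V))=1$, and since $\phi$ is strictly increasing and $U<V$ on a set of positive measure, $E(\phi(U/a_V))<1$, whence $\pi_{\phi,0}(U)<a_V$ (trivially so if $U=0$). Combined with subadditivity, $a_V=\pi_{\phi,0}((X-x_2)^+ + c)\le \pi_{\phi,0}((X-x_2)^+)+c$, this gives $f(x_1)=\pi_{\phi,0}(U)+x_1<a_V+x_1\le \pi_{\phi,0}((X-x_2)^+)+x_2=f(x_2)$. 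If instead $P(X<x_2)=0$, i.e.\ $X\ge x_2$ a.s., then $(X-x_2)^+=W$ and $(X-x_1)^+=W+c$ a.s., where $W:=X-x_2\ge 0$ is non-constant, hence nonzero; setting $a=\pi_{\phi,0}(W)$ and $b=a+c$, the convex decomposition $(W+c)/b=\frac ab\cdot\frac Wa+\frac cb\cdot 1$ together with strict convexity of $\phi$ and $\phi(1)=1$ yields $\phi((W+c)/b)\le \frac ab\phi(W/a)+\frac cb$, with strict inequality on the positive-measure set $\{W\ne a\}$; integrating gives $E(\phi((W+c)/b))<\frac ab+\frac cb=1$, so $\pi_{\phi,0}(W+c)<b$ and $f(x_1)=\pi_{\phi,0}(W+c)+x_1<b+x_1=f(x_2)$. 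In both cases $f(x_1)<f(x_2)$, so $f$ is strictly increasing; hence it cannot attain its infimum (if $f(x_0)=\inf f$, then $f(x_0-1)<f(x_0)$, a contradiction).

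\textbf{Main obstacle.} Part (a) is essentially bookkeeping. The delicate point in part (b) is that the first instinct---deducing strictness from a strict-monotonicity property of $\pi_{\phi,0}$ alone---fails precisely when $X\ge x_2$ almost surely, since then $(X-x_1)^+$ and $(X-x_2)^+ + c$ are genuinely equal, not merely comparable. The remedy is to extract the strict gap from the \emph{strict subadditivity} of $\pi_{\phi,0}$ against a constant, and this is exactly the step where the non-constancy of $X$ (guaranteed by $P(X=\esssup X)=0$) is used; some care is also needed to see that the same hypothesis excludes the degenerate sub-cases $U=0$ and $W=0$.
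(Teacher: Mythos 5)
Your proof is correct. For part (a) it is essentially the computation the paper uses, but your part (b) takes a genuinely different route. The paper obtains this proposition as the special case $g=\mathrm{id}$ of Proposition \ref{p-distHaezGoovinf}: there, (a) is proved via comonotonic subadditivity (Proposition \ref{p-distOrliczcom}) -- the version that survives for general $g$ -- while (b) is deduced from the strict convexity of $x\mapsto\pi_{g,\phi,0}((X-x)^+)+x$ established in Proposition \ref{p-distHaezGoovconv}(b), which in turn rests on the rotundity of the Luxemburg norm of an Orlicz space under strict convexity of $\phi$ and $\Delta_2$, together with a non-collinearity argument for the quantile functions $F^{-1}_{(X-x_i)^+}(1-\cdot)$. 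Your (a) uses full subadditivity of the Orlicz premium plus the pointwise bound $(X-x_1)^+\le (X-x_2)^++(x_2-x_1)$, which is equivalent here since $g$ is the identity. Your (b) bypasses the rotundity machinery and the auxiliary convexity proposition entirely: you sharpen the inequalities from (a) to strict ones directly, splitting on whether $P(X<x_2)>0$. When it is, you extract strictness from the first-order identity $E(\phi(V/\pi_{\phi,0}(V)))=1$ (Remark \ref{rem-Orlicz}, using $\Delta_2$) and strict monotonicity of $\phi$; when $X\ge x_2$ a.s.\ -- precisely the degenerate case where the pointwise bound becomes equality -- you get the gap from strict convexity of $\phi$ at the normalization point $1$, and this is exactly where the hypothesis $P(X=\esssup X)=0$ is consumed (via non-constancy of $W$). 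Your argument is more elementary and self-contained at the price of a case split; the paper's route via rotundity is the one that scales cleanly to the Orlicz--Lorentz setting with general $g$. Both are valid proofs of the same proposition.
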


A more general result will be proved in Proposition \ref{p-distHaezGoovinf} below.

We now collect the main properties of Haezendonck-Goovaerts risk measures; the results were obtained for certain subspaces of $L^\phi$ in \cite[Proposition 12]{BeRo08} and \cite[Theorems 3.1, 3.2]{GKDT04}. The general case will follow from Proposition \ref{p-distHaezGoov} and Theorem \ref{t-distHaezGoov2} below. In these results, when we do not restrict $\alpha$, we suppose that $\alpha\in[0,1)$. 

\begin{proposition}\label{p-HaezGoov}
Let $X\in L^\phi$. Then:
\begin{enumerate}[label=\emph{(\alph*)}] 
\item $\rho_{\phi,\alpha}(X) \leq \pi_{\phi,\alpha}(X^+)$.
\item $E(X)\leq \rho_{\phi,\alpha}(X) \leq \esssup X$.
\item If $\alpha\neq 0$ then $\rho_{\phi,\alpha}(X) \geq \VaR_\alpha(X)$.
\end{enumerate}
\end{proposition}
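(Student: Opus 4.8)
The plan is to read off all three bounds from the definition
$\rho_{\phi,\alpha}(X)=\inf_{x\in\R}\big(\pi_{\phi,\alpha}((X-x)^+)+x\big)$, using the properties of the Orlicz premium already recorded in Proposition~\ref{p-Orlicz} and Remark~\ref{rem-Orlicz}. Part (a) and the upper bound of (b) come from evaluating the infimand at a well-chosen $x$. For (a) I would take $x=0$: since $X^+\in L^\phi_+$, this gives directly $\rho_{\phi,\alpha}(X)\le \pi_{\phi,\alpha}((X-0)^+)+0=\pi_{\phi,\alpha}(X^+)$. For the upper bound $\rho_{\phi,\alpha}(X)\le\esssup X$ I would take $x=\esssup X$, which is harmless when $\esssup X=\infty$ and otherwise makes $(X-x)^+=0$ a.s.; since $\pi_{\phi,\alpha}(0)=0$ by Proposition~\ref{p-Orlicz}(b), the infimand then equals $\esssup X$.

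For the lower bound $E(X)\le\rho_{\phi,\alpha}(X)$ in (b), the one structural input is that $\phi^{-1}(1-\alpha)\le 1$, which is immediate from the normalization $\phi(1)=1$ together with $\alpha\ge 0$ (cf.\ the discussion after Definition~\ref{d-HaezGoov}). Then, for every $x\in\R$, Proposition~\ref{p-Orlicz}(a) applied to the positive risk $(X-x)^+$ gives
\[
\pi_{\phi,\alpha}((X-x)^+)\ \ge\ \frac{E((X-x)^+)}{\phi^{-1}(1-\alpha)}\ \ge\ E((X-x)^+)\ \ge\ E(X-x)\ =\ E(X)-x,
\]
where the last two steps use $(X-x)^+\ge X-x$ and $X\in L^\phi\subset L^1$. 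Adding $x$ and taking the infimum over $x$ yields $\rho_{\phi,\alpha}(X)\ge E(X)$.

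For (c) I would argue pointwise in $x$. Fix $\alpha\in(0,1)$ and set $c:=\VaR_\alpha(X)=F_X^{-1}(\alpha)\in\R$; the goal is $\pi_{\phi,\alpha}((X-x)^+)+x\ge c$ for all $x$. If $x\ge c$ this is immediate, Orlicz premia being nonnegative. If $x<c$, put $a:=\pi_{\phi,\alpha}((X-x)^+)$ and suppose, for contradiction, that $a<c-x$. Because $x<c\le\esssup X$, the risk $(X-x)^+$ is not $0$, so by Remark~\ref{rem-Orlicz} the infimum defining $\pi_{\phi,\alpha}((X-x)^+)$ is attained, i.e.\ $E\big(\phi((X-x)^+/a)\big)\le 1-\alpha$. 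On the other hand, $x+a<c=F_X^{-1}(\alpha)$ forces $F_X(x+a)<\alpha$, hence $P(X>x+a)>1-\alpha$; and on the event $\{X>x+a\}$ one has $(X-x)^+/a=(X-x)/a>1$, so $\phi((X-x)^+/a)\ge\phi(1)=1$. Consequently
\[
E\big(\phi((X-x)^+/a)\big)\ \ge\ P(X>x+a)\ >\ 1-\alpha,
\]
contradicting the previous inequality. Hence $a\ge c-x$, and taking the infimum over $x$ gives $\rho_{\phi,\alpha}(X)\ge\VaR_\alpha(X)$.

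The only delicate point, and the step I expect to be the main obstacle, is extracting the \emph{strict} inequality $P(X>x+a)>1-\alpha$ in (c): working with the closed tail $\{X\ge F_X^{-1}(\alpha)\}$ only delivers $\ge 1-\alpha$, which is compatible with $E(\phi((X-x)^+/a))\le 1-\alpha$ and yields no contradiction, so the argument must be run with the open event $\{X>x+a\}$, whose probability exceeds $1-\alpha$ strictly precisely because $x+a$ lies strictly below the $\alpha$-quantile. I would also note in passing that the hypothesis $\alpha\ne 0$ in (c) serves only to make $\VaR_\alpha(X)$ finite: for $\alpha=0$ one has $\VaR_0(X)=F_X^{-1}(0)=-\infty$ and the assertion is vacuous.
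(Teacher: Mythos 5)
Your proof is correct, and the underlying ideas coincide with those the paper uses. The paper does not prove this proposition directly; it obtains it as the special case $g=\mathrm{id}$ of the distortion version (Proposition~\ref{p-distHaezGoov}), but the arguments there are exactly the ones you use: $x=0$ for (a), $x=\esssup X$ for the upper bound in (b), the Jensen-type lower bound $\pi_{\phi,\alpha}((X-x)^+)\ge E((X-x)^+)/\phi^{-1}(1-\alpha)\ge E(X)-x$ together with $\phi^{-1}(1-\alpha)\le 1$ for the lower bound in (b), and the pointwise estimate $\mathds{1}_{\{b>a\}}\le\phi(b^+/a)$ for (c). The only stylistic deviation is in (c): the paper turns $\mathds{1}_{\{b>a\}}\le\phi(b^+/a)$ into a direct lower bound on \emph{every} $a$ in the set whose infimum defines $\pi_{\phi,\alpha}((X-x)^+)$, namely $E(\phi((X-x)^+/a))\ge P(X>x+a)$, hence $a$ must satisfy $F_X(x+a)\ge\alpha$, i.e.\ $a\ge\VaR_\alpha(X)-x$; this avoids both the case split on $x$ and the appeal to attainment in Remark~\ref{rem-Orlicz}. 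Your contradiction argument is equivalent, and your identification of the strict inequality $P(X>x+a)>1-\alpha$ (coming from $x+a$ strictly below the $\alpha$-quantile, so $F_X(x+a)<\alpha$) as the crux is exactly right; it is the same mechanism, just run at the endpoint rather than uniformly over the admissible $a$'s.
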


\begin{theorem}\label{t-HaezGoov}
The Haezendonck-Goovaerts risk measure $\rho_{\phi,\alpha}$ is coherent on $L^\phi$.
\end{theorem}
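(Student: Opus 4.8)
The plan is to verify directly the four defining properties of coherence from Definition \ref{d-risk}(b), deriving each from the corresponding property of the Orlicz premium established in Theorem \ref{t-Orlicz}. Since $L^\phi$ is a vector space containing the constants, it is a convex cone containing the constants, so by Remark \ref{r-risk} the membership conditions in parts (ii)--(iv) of the definition hold automatically and need not be checked.

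For monotonicity, if $X\leq Y$ in $L^\phi$ then $(X-x)^+\leq (Y-x)^+$ for every $x\in\R$, hence $\pi_{\phi,\alpha}((X-x)^+)\leq \pi_{\phi,\alpha}((Y-x)^+)$ by monotonicity of $\pi_{\phi,\alpha}$; adding $x$ and taking the infimum over $x$ gives $\rho_{\phi,\alpha}(X)\leq\rho_{\phi,\alpha}(Y)$. For cash-invariance, the substitution $y=x-b$ in the defining infimum gives $\rho_{\phi,\alpha}(X+b)=\inf_{y\in\R}(\pi_{\phi,\alpha}((X-y)^+)+y)+b=\rho_{\phi,\alpha}(X)+b$. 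For positive homogeneity with $\lambda>0$, the substitution $x=\lambda y$ together with $(\lambda X-\lambda y)^+=\lambda(X-y)^+$ and the positive homogeneity of $\pi_{\phi,\alpha}$ gives
\[
\rho_{\phi,\alpha}(\lambda X)=\inf_{y\in\R}\big(\pi_{\phi,\alpha}(\lambda(X-y)^+)+\lambda y\big)=\lambda\inf_{y\in\R}\big(\pi_{\phi,\alpha}((X-y)^+)+y\big)=\lambda\rho_{\phi,\alpha}(X),
\]
while the case $\lambda=0$ reduces to $\rho_{\phi,\alpha}(0)=0$, which is immediate from the bounds $E(X)\leq\rho_{\phi,\alpha}(X)\leq\esssup X$ of Proposition \ref{p-HaezGoov}(b) (or, directly, by inspecting the competitors in the defining infimum, using $\phi^{-1}(1-\alpha)\leq 1$ and Proposition \ref{p-Orlicz}(b)).

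The one property that genuinely uses the inf-convolution structure of $\rho_{\phi,\alpha}$ is subadditivity. Fix $x,y\in\R$. Since $(u+v)^+\leq u^++v^+$ for all reals $u,v$, we have the pointwise inequality $(X+Y-(x+y))^+\leq (X-x)^++(Y-y)^+$, so by the monotonicity and then the subadditivity of $\pi_{\phi,\alpha}$,
\[
\pi_{\phi,\alpha}\big((X+Y-(x+y))^+\big)+(x+y)\leq\big(\pi_{\phi,\alpha}((X-x)^+)+x\big)+\big(\pi_{\phi,\alpha}((Y-y)^+)+y\big).
\]
The left-hand side dominates $\rho_{\phi,\alpha}(X+Y)$, since $z=x+y$ is one competitor in the infimum defining $\rho_{\phi,\alpha}(X+Y)$; taking the infimum over $x$ and then over $y$ on the right-hand side, and using that both resulting infima are real numbers (so no $\infty-\infty$ arises), yields $\rho_{\phi,\alpha}(X+Y)\leq\rho_{\phi,\alpha}(X)+\rho_{\phi,\alpha}(Y)$.

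I do not expect any serious obstacle: each part is a short formal manipulation, the Orlicz premium having already done the analytic work in Theorem \ref{t-Orlicz}. The only point needing a little care is the subadditivity step, where one must spot the pointwise inequality above and observe that the two infima can be taken independently precisely because $\rho_{\phi,\alpha}$ is already known to be $\R$-valued. In any case, all of this will be subsumed by the more general Proposition \ref{p-distHaezGoov} and Theorem \ref{t-distHaezGoov2} for distortion Haezendonck-Goovaerts risk measures.
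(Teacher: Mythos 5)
Your proof is correct, and it follows the ``direct'' route rather than the one the paper takes. The paper obtains Theorem~\ref{t-HaezGoov} as the special case $g=\mathrm{id}$ of Theorem~\ref{t-distHaezGoov2}, whose subadditivity step goes through Lemma~\ref{l-risksubbis}: one first shows that $\rho_{g,\phi,\alpha}$ is subadditive for comonotonic risks (Proposition~\ref{p-distHGcom}) and preserves stop-loss order when $g$ is concave (Proposition~\ref{p-distHGsl}), then combines these via the comonotonic-rearrangement argument of \cite{DVGKTV06} and \cite{WaDh98}. Your argument instead derives subadditivity of $\rho_{\phi,\alpha}$ directly from the subadditivity of $\pi_{\phi,\alpha}$ (Theorem~\ref{t-Orlicz}) and the pointwise inequality $(X+Y-(x+y))^+\le (X-x)^++(Y-y)^+$, taking infima over $x$ and $y$ separately; this is the approach of \cite{BeRo08} and is essentially what the paper attributes to the first author's thesis \cite{Gou22} as ``the more direct proof.'' Your handling of the remaining three coherence properties, including the case $\lambda=0$ and the finiteness needed to split the double infimum, is clean and correct. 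The trade-off is that your route is shorter and more elementary for this particular theorem, while the paper's route is the one that scales to the distortion case, where the Orlicz-Lorentz premium is not subadditive on arbitrary pairs of risks without the concavity of $g$ and thus requires the comonotonic/stop-loss detour; you acknowledge this at the end, which is apt. One small caveat worth being aware of (though not a flaw in your argument): in this paper's logical ordering, Theorem~\ref{t-Orlicz} is itself proved as a corollary of Theorem~\ref{t-distOrlicz}, so within the paper the ``direct'' proof is not fully independent of the comonotonic machinery; treating the subadditivity of $\pi_{\phi,\alpha}$ as a known classical fact, as you do, is of course legitimate.
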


As for continuity properties, the following were obtained in \cite[Theorems 3.3 and 3.4]{GMX20} and \cite[Proposition 17]{BeRo08} for $0<\alpha<1$. The case of $\alpha=0$ is of little interest, see Theorem \ref{t-HGtriv} below; but see also Problem \ref{pr-Fatou0}.

\begin{proposition}\label{p-HGFatou}
If $0<\alpha<1$, then $\rho_{\phi,\alpha}$ has the Fatou property on $L^\phi$.
\end{proposition}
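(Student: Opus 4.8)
The plan is to deduce the Fatou property of $\rho_{\phi,\alpha}$ from the already-established Fatou property of the Orlicz premium $\pi_{\phi,\alpha}$ on $L^\phi_+$; the only genuinely new ingredient is a uniform bound on the (approximate) minimizers in \eqref{eq-HG}, and this is precisely where the hypothesis $\alpha>0$ is used. So let $(X_n)_n$ be a sequence in $L^\phi$ and $X,Y_1,Y_2\in L^\phi$ with $X_n\to X$ (a.s.) and $Y_1\le X_n\le Y_2$ for all $n$. By monotonicity of $\rho_{\phi,\alpha}$ (Theorem \ref{t-HaezGoov}) the sequence $(\rho_{\phi,\alpha}(X_n))_n$ lies between the real numbers $\rho_{\phi,\alpha}(Y_1)$ and $\rho_{\phi,\alpha}(Y_2)$, hence is bounded; set $L:=\liminf_{n}\rho_{\phi,\alpha}(X_n)\in\R$ and pass to a subsequence, still denoted $(X_n)_n$, along which $\rho_{\phi,\alpha}(X_n)\to L$. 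For each $n$ choose $x_n\in\R$ with
\[
\pi_{\phi,\alpha}\big((X_n-x_n)^+\big)+x_n\le\rho_{\phi,\alpha}(X_n)+\tfrac1n
\]
(one could also take exact minimizers, which exist for $0<\alpha<1$ by \cite[Proposition 3(b)]{BeRo12}, cf.\ Proposition \ref{p-infmin}(a)).

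The key step, which I expect to be the main obstacle, is to show that $(x_n)_n$ is bounded. An upper bound is immediate: since $\pi_{\phi,\alpha}\ge0$, we have $x_n\le\rho_{\phi,\alpha}(X_n)+\tfrac1n$, which is bounded. For a lower bound, use $Y_1\le X_n$, hence $(Y_1-x_n)^+\le(X_n-x_n)^+$, so that by monotonicity of $\pi_{\phi,\alpha}$ (Theorem \ref{t-Orlicz}) and Proposition \ref{p-Orlicz}(a),
\[
\rho_{\phi,\alpha}(X_n)+\tfrac1n\ge\pi_{\phi,\alpha}\big((Y_1-x_n)^+\big)+x_n\ge\frac{E\big((Y_1-x_n)^+\big)}{\phi^{-1}(1-\alpha)}+x_n\ge\frac{E(Y_1)}{\phi^{-1}(1-\alpha)}+x_n\Big(1-\frac1{\phi^{-1}(1-\alpha)}\Big),
\]
where $E(Y_1)$ is finite because $L^\phi\subset L^1$. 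Since $\phi$ is continuous with $\phi(0)=0<1-\alpha<1=\phi(1)$, we have $\phi^{-1}(1-\alpha)\in(0,1)$, so the coefficient $1-1/\phi^{-1}(1-\alpha)$ is strictly negative; as the left-hand side stays bounded, $x_n\to-\infty$ is impossible. It is exactly here that $\alpha>0$ is essential: for $\alpha=0$ the minimizers really do escape to $-\infty$, see Proposition \ref{pex-HaezGoov}.

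Once $(x_n)_n$ is bounded, pass to a further subsequence with $x_n\to x_*\in\R$ and put $M:=\sup_n|x_n|$. Then $(X_n-x_n)^+\to(X-x_*)^+$ a.s., and $0\le(X_n-x_n)^+\le(Y_2+M)^+$ for all $n$, where $0$ and $(Y_2+M)^+$ belong to $L^\phi_+$ — the latter since $0\le(Y_2+M)^+\le|Y_2|+M\in L^\phi$, $L^\phi$ being a vector space containing the constants — and likewise $(X-x_*)^+\in L^\phi_+$. The Fatou property of $\pi_{\phi,\alpha}$ on $L^\phi_+$ then yields $\pi_{\phi,\alpha}\big((X-x_*)^+\big)\le\liminf_{n}\pi_{\phi,\alpha}\big((X_n-x_n)^+\big)$, and hence, using $x_n\to x_*$,
\[
\rho_{\phi,\alpha}(X)\le\pi_{\phi,\alpha}\big((X-x_*)^+\big)+x_*\le\liminf_{n}\Big(\pi_{\phi,\alpha}\big((X_n-x_n)^+\big)+x_n\Big)\le\liminf_{n}\Big(\rho_{\phi,\alpha}(X_n)+\tfrac1n\Big)=L.
\]
This is the required inequality $\rho_{\phi,\alpha}(X)\le\liminf_n\rho_{\phi,\alpha}(X_n)$. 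Apart from the boundedness of $(x_n)_n$, everything is a routine combination of monotonicity, the a priori estimates of Proposition \ref{p-Orlicz}, and the Fatou property of the Orlicz premium.
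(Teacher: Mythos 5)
Your proof is correct and follows essentially the same strategy as the paper's proof (of the more general Proposition~\ref{p-distHGFatou}), which the paper credits to \cite{GMX20}: bound the (approximate) minimizers $x_n$, pass to a convergent subsequence, and invoke the Fatou property of the Orlicz premium. The only cosmetic differences are that the paper first reduces to increasing sequences via Remark~\ref{r-Fatou}(c), uses exact minimizers from Proposition~\ref{p-infmin}, and cites the boundedness of $(x_n)_n$ from Lemma~\ref{l-estim} rather than rederiving it, whereas you work with general sequences and approximate minimizers and prove the boundedness inline via Proposition~\ref{p-Orlicz}(a) — the underlying estimate is the same.
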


\begin{proposition}~

\emph{(a)} If $\phi$ satisfies the $\Delta_2$-condition then $\rho_{\phi,\alpha}$ has the reverse Fatou property on $L^\phi$.

\emph{(b)} Let $0<\alpha<1$. If the underlying probability space $(\Omega,\mathcal{A},P)$ is atomless and if $\rho_{g,\phi,\alpha}$ has the reverse Fatou property on $L^\phi$ then $\phi$ satisfies the $\Delta_2$-condition.
\end{proposition}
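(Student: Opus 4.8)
The plan is to obtain (a) directly from the reverse Fatou property of the Orlicz premium (Proposition~\ref{p-OrevFatou}(a)), and to prove (b) by contraposition, transporting the non-$\Delta_2$ counterexample for $\pi_{\phi,\alpha}$ to $\rho_{\phi,\alpha}$.

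For (a), I would take a sequence $(X_n)$ in $L^\phi$ with $X_n\to X$ and $Y_1\le X_n\le Y_2$ for all $n$, where $X,Y_1,Y_2\in L^\phi$, and fix $x\in\R$. Since $t\mapsto(t-x)^+$ is continuous and increasing, $(X_n-x)^+\to(X-x)^+$ with $(Y_1-x)^+\le(X_n-x)^+\le(Y_2-x)^+$, and (as $L^\phi$ is a vector space containing the constants and has the sandwich property) all of these lie in $L^\phi_+$. Because $\phi$ satisfies the $\Delta_2$-condition, Proposition~\ref{p-OrevFatou}(a) applies to $\pi_{\phi,\alpha}$ and gives $\pi_{\phi,\alpha}((X-x)^+)\ge\limsup_n\pi_{\phi,\alpha}((X_n-x)^+)$. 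Adding $x$ and using $\pi_{\phi,\alpha}((X_n-x)^+)+x\ge\rho_{\phi,\alpha}(X_n)$ yields $\pi_{\phi,\alpha}((X-x)^+)+x\ge\limsup_n\rho_{\phi,\alpha}(X_n)$; taking the infimum over $x\in\R$ on the left gives $\rho_{\phi,\alpha}(X)\ge\limsup_n\rho_{\phi,\alpha}(X_n)$, which is the reverse Fatou property. Note no restriction on $\alpha$ is needed here.

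For (b) I argue by contraposition: assuming $\phi\notin\Delta_2$, I exhibit a sequence violating the reverse Fatou property for $\rho_{\phi,\alpha}$ on $L^\phi$. Exactly as in the proof of Proposition~\ref{p-OrevFatou}(b), failure of $\Delta_2$ provides reals $1\le t_1<t_2<\cdots\to\infty$ with $\phi(2t_n)>2^n\phi(t_n)$; on the atomless space one picks pairwise disjoint $A_n$ with $P(A_n)=2^{-n}/\phi(t_n)$ and sets $X=\sum_n t_n\mathds{1}_{A_n}$, so that $E(\phi(X))=\sum_n 2^{-n}<\infty$ and $X\in L^\phi$. Let $X_m=\sum_{n>m}t_n\mathds{1}_{A_n}$, so $X_m\searrow 0$ pointwise, $0\le X_m\le X$, and $0,X\in L^\phi$. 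A modular estimate shows that for $0<a\le\tfrac12$ one has $E(\phi(X_m/a))\ge\sum_{n>m}\phi(2t_n)P(A_n)\ge\sum_{n>m}2^n\phi(t_n)P(A_n)=\infty$, hence $\pi_{\phi,\alpha}(X_m)\ge\tfrac12$ for all $m$. The crux is to upgrade this to $\rho_{\phi,\alpha}(X_m)=\inf_{x\in\R}\big(\pi_{\phi,\alpha}((X_m-x)^+)+x\big)\ge c>0$ uniformly in $m$. For $x\le 0$, $(X_m-x)^+=X_m+|x|$, and by monotonicity together with Proposition~\ref{p-Orlicz}(b), $\pi_{\phi,\alpha}(X_m+|x|)\ge\max\{\tfrac12,\,|x|/\phi^{-1}(1-\alpha)\}$; since $\phi^{-1}(1-\alpha)<1$ for $0<\alpha<1$, optimising in $x$ gives $\pi_{\phi,\alpha}((X_m-x)^+)+x\ge\tfrac12\big(1-\phi^{-1}(1-\alpha)\big)>0$. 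For $x>0$, if $\pi_{\phi,\alpha}((X_m-x)^+)+x<\tfrac13$ then $x<\tfrac13$, so $(X_m-x)^+\ge(X_m-\tfrac13)^+$ and $\pi_{\phi,\alpha}((X_m-\tfrac13)^+)<\tfrac13$, whereas $E\big(\phi(3(X_m-\tfrac13)^+)\big)=\sum_{n>m}\phi(3t_n-1)P(A_n)\ge\sum_{n>m}\phi(2t_n)P(A_n)=\infty$ (using $3t_n-1\ge 2t_n$ since $t_n\ge 1$), a contradiction. Hence $\rho_{\phi,\alpha}(X_m)\ge c:=\min\{\tfrac12(1-\phi^{-1}(1-\alpha)),\tfrac13\}>0$ for all $m$, while $\rho_{\phi,\alpha}(0)=0$; since $X_m\to 0$ and $0\le X_m\le X\in L^\phi$, this contradicts the reverse Fatou property, so by contraposition $\phi\in\Delta_2$.

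The one delicate point is this last transfer in (b). For the Orlicz premium, the failure of reverse Fatou when $\phi\notin\Delta_2$ is classical, but $\rho_{\phi,\alpha}$ carries an additional infimum over cash-shifts $x$, and I must close off two escape routes: a strongly negative shift (blocked because $\alpha>0$, i.e.\ $\phi^{-1}(1-\alpha)<1$, forces $x\mapsto\pi_{\phi,\alpha}((X_m-x)^+)+x\to+\infty$ as $x\to-\infty$), and a small nonnegative shift that might ``cut off'' the spikes of $X_m$ (blocked because the $\phi$-modular of $(X_m-\varepsilon)^+$ still diverges for small $\varepsilon$, again by non-$\Delta_2$). Everything else is routine bookkeeping with Remark~\ref{r-Fatou}(c), the sandwich property of $L^\phi$, monotonicity, and cash-invariance; and, as noted, $\alpha>0$ is genuinely used in (b) but not in (a).
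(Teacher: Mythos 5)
Your proof is correct. Part (a) matches the paper's route in substance (applying the reverse Fatou property of the Orlicz premium at each fixed cash level $x$ and then taking the infimum over $x$; the paper does the same after first reducing to monotone sequences via Remark \ref{r-Fatou}(c), whereas you work directly with $\limsup$, which is equally valid). Part (b) is where you diverge meaningfully from the paper. The paper proves a structural transfer lemma (Lemma \ref{l-rhopi}): for $0<\alpha<1$ and decreasing $(X_n)\subset (L^\phi_g)_+$, $\rho_{g,\phi,\alpha}(X_n)\to 0$ forces $\pi_{g,\phi,\alpha}(X_n)\to 0$; this is established via convexity of $x\mapsto\pi_{g,\phi,\alpha}((X_n-x)^+)+x$, attainment of the minimum (Proposition \ref{p-infmin}), and a priori bounds on the minimizer (Lemma \ref{l-estim}). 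Given that lemma, the counterexample to the reverse Fatou property for $\pi$ (Lemma \ref{l-conv}(b)) immediately transfers to $\rho$. You instead bypass the abstract lemma and bound $\rho_{\phi,\alpha}(X_m)$ from below directly by a two-case analysis in $x$: for $x\le 0$ you combine the modular divergence ($\pi_{\phi,\alpha}(X_m)\ge\tfrac12$) with the cash lower bound from Proposition \ref{p-Orlicz}(b) and minimize $\max\{\tfrac12,|x|/\phi^{-1}(1-\alpha)\}-|x|$; for small $x>0$ you observe that because $t_n\ge 1$, the modular of $(X_m-\tfrac13)^+$ at scale $\tfrac13$ still diverges. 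Both approaches hinge on $\phi^{-1}(1-\alpha)<1$ (i.e.\ $\alpha>0$) to prevent the cash shift $x\to-\infty$ from collapsing the risk measure — in the paper this is hidden inside the proof of Lemma \ref{l-estim}, whereas in yours it appears explicitly in the optimization. Your version is more elementary and self-contained for this specific $g$ = identity case, at the cost of ad hoc numerical constants; the paper's abstract lemma is more work upfront but reuses cleanly for the full distorted case and for other statements (e.g.\ the proof that $g(0)=0$ is necessary).
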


\begin{proposition} 
The Haezendonck-Goovaerts risk measure $\rho_{\phi,\alpha}$ has the reverse Fatou property on $L^\infty$.
\end{proposition}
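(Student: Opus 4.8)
The plan is to reduce to monotone sequences via Remark~\ref{r-Fatou}(c) and then transfer the reverse Fatou property already established for the Orlicz premium $\pi_{\phi,\alpha}$ on $L^\infty_+$ through the representation~\eqref{eq-HG}. First I would observe that $L^\infty$ has the sandwich property required in Remark~\ref{r-Fatou}(c) (if $Y_1\leq X\leq Y_2$ with $Y_1,Y_2\in L^\infty$ then $X\in L^\infty$), and that $\rho_{\phi,\alpha}$ is monotonic by Theorem~\ref{t-HaezGoov}. Hence it suffices to show that $X_n\searrow X$ in $L^\infty$ implies $\rho_{\phi,\alpha}(X_n)\to\rho_{\phi,\alpha}(X)$. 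By monotonicity the sequence $\big(\rho_{\phi,\alpha}(X_n)\big)_n$ decreases to some limit $\ell\geq\rho_{\phi,\alpha}(X)$, so only the inequality $\ell\leq\rho_{\phi,\alpha}(X)$ requires proof.

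To prove it, fix $\eps>0$ and choose $x^*\in\R$ with $\pi_{\phi,\alpha}((X-x^*)^+)+x^*\leq\rho_{\phi,\alpha}(X)+\eps$ (an $\eps$-minimizer always exists, so no attainment result and in particular no restriction $\alpha>0$ is needed). Since $t\mapsto(t-x^*)^+$ is increasing and continuous on $\R$, $X_n\searrow X$ forces $(X_n-x^*)^+\searrow(X-x^*)^+$ a.s., with $0\leq(X_n-x^*)^+\leq(X_1-x^*)^+$ for all $n$, and all these risks lie in $L^\infty_+$. Because $\pi_{\phi,\alpha}$ is monotonic on $L^\infty_+$ by Theorem~\ref{t-Orlicz} and $L^\infty_+$ again has the sandwich property, Remark~\ref{r-Fatou}(c) together with the reverse Fatou property of $\pi_{\phi,\alpha}$ on $L^\infty_+$ (the preceding Proposition) gives $\pi_{\phi,\alpha}((X_n-x^*)^+)\to\pi_{\phi,\alpha}((X-x^*)^+)$. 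Therefore
\[
\ell=\lim_{n\to\infty}\rho_{\phi,\alpha}(X_n)\leq\lim_{n\to\infty}\big(\pi_{\phi,\alpha}((X_n-x^*)^+)+x^*\big)=\pi_{\phi,\alpha}((X-x^*)^+)+x^*\leq\rho_{\phi,\alpha}(X)+\eps,
\]
and letting $\eps\to 0$ yields $\ell\leq\rho_{\phi,\alpha}(X)$, completing the argument.

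There is no serious obstacle here; the points that need care are the two invocations of Remark~\ref{r-Fatou}(c) — each requires checking monotonicity of the functional in question and the sandwich property of its domain — and the elementary fact that composing with the increasing continuous map $t\mapsto(t-x^*)^+$ preserves monotone almost sure convergence together with the two-sided bounds in $L^\infty_+$. As noted, the case $\alpha=0$ is handled uniformly, since we only ever use an $\eps$-minimizer in~\eqref{eq-HG} and never its attainment.
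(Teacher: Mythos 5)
Your proof is correct and follows essentially the same route as the paper: reduce to monotone decreasing sequences via Remark~\ref{r-Fatou}(c), then transfer the reverse Fatou property of the Orlicz premium $\pi_{\phi,\alpha}$ on $L^\infty_+$ to $\rho_{\phi,\alpha}$ through the infimum representation~\eqref{eq-HG}. The only cosmetic difference is that the paper establishes $\inf_n\rho_{\phi,\alpha}(X_n)\leq\rho_{\phi,\alpha}(X)$ by interchanging $\inf_n$ and $\inf_x$ and applying reverse Fatou to each fixed $x$, whereas you fix an $\eps$-minimizer $x^*$ and apply it once; these are interchangeable, and both correctly avoid any attainment result, so the argument works uniformly for all $\alpha\in[0,1)$.
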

 
These results will be generalized in Propositions \ref{p-distHGFatou}, \ref{p-distHGFatourev}, \ref{p-distHGFatourevbdd}, and \ref{p-distHGFatourevnec} below. 

Finally, as we have seen, the case $\alpha=0$ is quite exceptional. Indeed, in that case, the Haezendonck-Goovaerts risk measure is trivial on bounded risks, in some sense. This surprising fact does not seem to have been observed before.

\begin{theorem}\label{t-HGtriv}
Let $\alpha=0$. Then, for all $X\in L^\infty$,
\begin{equation}\label{eq-alpha0}
E(X)\leq \rho_{\phi,0}(X)\leq \frac{c_+}{c_-}E(X^+) - \frac{c_-}{c_+}E(X^-),
\end{equation}
where $c_-$ is the left derivative of $\phi$ at $1$, and $c_+$ is the right derivative of $\phi$ at $1$. If $\phi$ satisfies the $\Delta_2$-condition then this holds for all $X\in L^\phi$.
\end{theorem}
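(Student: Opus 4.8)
The plan is to get the lower bound for free, then to prove $\rho_{\phi,0}(X)\le B:=\frac{c_+}{c_-}E(X^+)-\frac{c_-}{c_+}E(X^-)$ first for bounded $X$ --- by linearising $\phi$ at $1$, which is precisely what makes the constants $c_\pm$ appear --- and finally to lift this to $L^\phi$ under $\Delta_2$ by a truncation argument. The lower bound $E(X)\le\rho_{\phi,0}(X)$ is already Proposition~\ref{p-HaezGoov}(b). For the upper bound, first note that $0<c_-\le c_+<\infty$ (indeed $c_-\ge1$, since convexity together with $\phi(0)=0$, $\phi(1)=1$ forces $\phi(t)\le t$ on $[0,1]$, whence $\phi'_-(1)\ge1$). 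Introduce $\psi(t)=c_+t^+-c_-t^-$, a convex, positively homogeneous function whose one-sided derivatives all lie in $[c_-,c_+]$; consequently $\psi(u-v)\le\psi(u)-c_-v$ and $\psi(u+v)\le\psi(u)+c_+v$ for every $u\in\R$ and $v\ge0$.

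\emph{Step 1 (auxiliary inequality).} I would first show $E(\psi(X-B))\le0$, i.e. $c_+E((X-B)^+)\le c_-E((X-B)^-)$, for any $X\in L^1$. Writing $X-B=(X^+-B)-X^-$, the first inequality above gives $\psi(X-B)\le\psi(X^+-B)-c_-X^-$; then, since $\psi(X^+)=c_+X^+$, the second gives $\psi(X^+-B)\le c_+X^+-c_-B$ when $B\ge0$ and $\psi(X^+-B)\le c_+X^+-c_+B$ when $B<0$. Taking expectations and substituting the value of $B$, in both cases the assertion collapses to $c_-\le c_+$. This is the crucial point: it is where the particular constant of the theorem is manufactured. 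I expect this short step, rather than any single later estimate, to be the conceptual obstacle.

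\emph{Step 2 (bounded case).} Let $X\in L^\infty$ and $\eps>0$, and put $B'=B+\eps/2$. Since $\psi'\ge c_-$, Step~1 upgrades to $c_+E((X-B')^+)-c_-E((X-B')^-)=:-\gamma<0$. As $\phi$ has one-sided derivatives $c_-,c_+$ at $1$, the remainder $r(h):=\phi(1+h)-1-\psi(h)$ satisfies $r(h)=o(|h|)$ as $h\to0$; pick $\delta>0$ with $|r(h)|\le\frac{\gamma}{2(E|X-B'|+1)}|h|$ for $|h|\le\delta$. Now choose $x<0$ with $|x|$ large enough that $a:=|x|+B'>0$, that $(X-x)^+=X-x$ (legitimate since $X$ is bounded below), and that $\|X-B'\|_\infty/a\le\delta$. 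With $h:=(X-B')/a$ we have $(X-x)^+/a=1+h$, $|h|\le\delta$ a.s., and, using positive homogeneity of $\psi$,
\[
E\Big(\phi\Big(\frac{(X-x)^+}{a}\Big)\Big)=1+\frac1a\big(c_+E((X-B')^+)-c_-E((X-B')^-)\big)+E(r(h))\le 1-\frac{\gamma}{a}+\frac{\gamma}{2a}<1 .
\]
Hence $\pi_{\phi,0}((X-x)^+)\le a$, so $\rho_{\phi,0}(X)\le\pi_{\phi,0}((X-x)^+)+x\le B'$; letting $\eps\downarrow0$ gives $\rho_{\phi,0}(X)\le B$.

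\emph{Step 3 (Orlicz case).} Assume $\phi\in\Delta_2$ and $X\in L^\phi$. For $k\ge0$ and $x\le k$ one has the pointwise identity $(X-x)^+=(X\wedge k-x)^++(X-k)^+$, so subadditivity of $\pi_{\phi,0}$ (Theorem~\ref{t-Orlicz}) and Proposition~\ref{pex-HaezGoov}(a) (which lets the infimum defining $\rho_{\phi,0}$ run over $x\le k$ only) give $\rho_{\phi,0}(X)\le\rho_{\phi,0}(X\wedge k)+\pi_{\phi,0}((X-k)^+)$. Since $(X\wedge k)\vee(-l)\searrow X\wedge k$ as $l\to\infty$, sandwiched between $X\wedge k$ and $(X\wedge k)\vee(-1)$ (both in $L^\phi$), the reverse Fatou property of $\rho_{\phi,0}$ --- valid on $L^\phi$ when $\phi\in\Delta_2$ --- together with monotonicity yields $\rho_{\phi,0}(X\wedge k)=\lim_{l\to\infty}\rho_{\phi,0}((X\wedge k)\vee(-l))$; each of these is the risk of a bounded variable, so Step~2 applies, and since $((X\wedge k)\vee(-l))^+=X^+\wedge k$ and $((X\wedge k)\vee(-l))^-=X^-\wedge l$, letting $l\to\infty$ (monotone convergence) gives $\rho_{\phi,0}(X\wedge k)\le\frac{c_+}{c_-}E(X^+\wedge k)-\frac{c_-}{c_+}E(X^-)$. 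Finally $(X-k)^+\searrow0$ with $0\le(X-k)^+\le X^+\in L^\phi_+$, so by the reverse Fatou property of the Orlicz premium under $\Delta_2$ (Proposition~\ref{p-OrevFatou}(a)) $\pi_{\phi,0}((X-k)^+)\to0$. Combining these and letting $k\to\infty$ (monotone convergence, $E(X^+\wedge k)\nearrow E(X^+)<\infty$) gives $\rho_{\phi,0}(X)\le B$. The delicate point here is purely organisational: one must arrange every limiting step so that it involves a \emph{decreasing} sequence (or the infimum of a function increasing in $x$), so that the available reverse Fatou properties suffice and the unavailable Fatou property at $\alpha=0$ is never needed.
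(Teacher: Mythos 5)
Your proof is correct, but it takes a genuinely different route from the paper's. The paper obtains this result as the $g=\mathrm{id}$ specialisation of Theorem~\ref{t-dHGtriv}, whose proof works at the actual minimizing scale (exploiting the equation $\int\phi\big(\tfrac{F_X^{-1}(1-u)-x}{\sigma(x)-x}\big)\,\d g(u)=1$ from Proposition~\ref{p-distOrlicz}(c)), expands $\phi$ around $1$ with a uniform error modulus as in \eqref{eq-phi1}--\eqref{eq-phi2}, integrates that expansion over the sets where the argument is $\lessgtr 1$ to reach \eqref{eq-deltaM}, and must then treat $X\geq0$ and $X\leq0$ separately before recombining via comonotonic subadditivity. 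Your argument instead is a direct verification: the auxiliary inequality $E(\psi(X-B))\leq0$ in Step~1, with $\psi(t)=c_+t^+-c_-t^-$, isolates the algebra that manufactures the constant, and Step~2 then simply checks that the explicit choice $a=|x|+B'$ satisfies $E\big(\phi\big((X-x)^+/a\big)\big)<1$ for $|x|$ large, so that $\pi_{\phi,0}((X-x)^+)+x\leq B'$. This avoids the sign split entirely and is arguably cleaner for the undistorted case; what you give up is that Step~1 uses the linearity of $E$ in an essential way, so the argument does not transfer verbatim to the distorted version (Theorem~\ref{t-dHGtriv}), where $\rho_g$ is only comonotonically additive --- the paper's structure of handling $\rho_g(X^+)$ and $\rho_g(-X^-)$ separately is designed around exactly this. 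Your Step~3 is a sound replacement for the paper's step~(2): the decomposition $(X-x)^+=(X\wedge k-x)^++(X-k)^+$ together with the reverse Fatou property of $\pi_{\phi,0}$ (Proposition~\ref{p-OrevFatou}(a)) and of $\rho_{\phi,0}$ (Proposition~\ref{p-distHGFatourev}) does the same job as the paper's Lemmas~\ref{l-triv1} and~\ref{l-triv2}, and you have correctly arranged all limits as decreasing ones so that only reverse-Fatou tools are invoked.
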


\begin{corollary}\label{c-HGexp}
Let $\alpha=0$. If $\phi$ is differentiable at $1$ and satisfies the $\Delta_2$-condition, then, for all $X\in L^\phi$,
\[
\rho_{\phi,0}(X)=E(X).
\]
\end{corollary}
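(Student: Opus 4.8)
The plan is to derive Corollary \ref{c-HGexp} directly from Theorem \ref{t-HGtriv} by observing that the differentiability hypothesis collapses the two-sided estimate. If $\phi$ is differentiable at $1$, then its left and right derivatives at $1$ coincide, so $c_-=c_+=:c$. We should first note that $c>0$: since $\phi$ is a normalized Young function, $\phi(1)=1>0=\phi(0)$, and convexity forces the right derivative at $1$ to be at least the slope of the secant from $0$ to $1$, namely $1$; in particular $c\geq 1>0$, so the ratios $c_+/c_-$ and $c_-/c_+$ in \eqref{eq-alpha0} are well-defined and both equal to $1$.

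With $c_-=c_+$, the upper bound in \eqref{eq-alpha0} becomes $E(X^+)-E(X^-)=E(X)$, while the lower bound is already $E(X)$. Since the hypotheses of Theorem \ref{t-HGtriv} (namely $\alpha=0$, $\phi$ satisfying the $\Delta_2$-condition, and the validity of \eqref{eq-alpha0} for all $X\in L^\phi$ under the $\Delta_2$-condition) are exactly what Corollary \ref{c-HGexp} assumes, we conclude that for every $X\in L^\phi$,
\[
E(X)\leq \rho_{\phi,0}(X)\leq E(X),
\]
hence $\rho_{\phi,0}(X)=E(X)$.

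Honestly, there is no real obstacle here: the content is entirely in Theorem \ref{t-HGtriv}, and the corollary is a one-line specialization. The only point requiring a word of care is the elementary observation that differentiability at $1$ means the one-sided derivatives agree and are strictly positive, so that the constants appearing in \eqref{eq-alpha0} genuinely reduce to $1$ rather than producing an indeterminate expression; this is immediate from convexity and $\phi(0)=0$, $\phi(1)=1$. Thus the proof is essentially a remark that the sandwich in Theorem \ref{t-HGtriv} becomes an equality.
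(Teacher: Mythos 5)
Your proof is correct and is essentially the route the paper intends: with $\phi$ differentiable at $1$ the constants in Theorem \ref{t-HGtriv} reduce to $c_-=c_+$, so the upper bound in \eqref{eq-alpha0} collapses to $E(X^+)-E(X^-)=E(X)$, squeezing $\rho_{\phi,0}(X)=E(X)$; the observation that $c_\pm\geq 1>0$ (by convexity and $\phi(0)=0$, $\phi(1)=1$) is the right justification that the quotients are well defined. The paper formally routes the corollary through the distorted version (Corollary \ref{c-dHGtriv} via Theorem \ref{t-dHGtriv}, where an additional comonotonicity step $\rho_g(X^+)+\rho_g(-X^-)=\rho_g(X)$ is needed), but for $g=\mathrm{id}$ that step is just linearity of expectation, which you have built in; so your direct specialization and the paper's argument coincide.
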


For example, for the natural choice of $\phi(t)= t^p$, $p\geq 1$, $\rho_{\phi,0}$ coincides with the expectation, which is not considered a good risk measure. 

We will obtain a more general result below, see Theorem \ref{t-dHGtriv} with Corollary \ref{c-dHGtriv}.

The following example shows that, if $\phi$ is not differentiable at 1, $\rho_{\phi,0}$ need not reduce to the expectation.

\begin{example}
We consider the normalized Young function $\phi(t)=t$, $0\leq t\leq 1$, and $\phi(t)=2t-1$, $t>1$. Let $X$ be uniformly distributed on $[0,1]$. One calculates that, for $x<0$, $\pi_{\phi,0}((X-x)^+)+x= 2-\sqrt{2}$, so that, by Proposition \ref{pex-HaezGoov}(a), $\rho_{\phi,0}(X)=2-\sqrt{2}>E(X)$. Also, $X\geq 0$ and $\rho_{\phi,0}(X)\leq 2E(X)$, confirming \eqref{eq-alpha0}.

Moreover, if we take $X$ to be uniformly distributed on $[-1,0]$, then, by cash-invariance, $\rho_{\phi,0}(X)=1-\sqrt{2}>E(X)$. Also, $X\leq 0$ and $\rho_{\phi,0}(X)\leq \tfrac{1}{2}E(X)$, confirming again \eqref{eq-alpha0}.
\end{example}

Interestingly, the Haezendonck-Goovaerts risk measures are of little interest as \textit{monetary} risk measures precisely when the 
corresponding Orlicz premia become \textit{return} risk measures, that is, when $\alpha=0$ (see Remark \ref{r-return}).

\section{Distortion Haezendonck-Goovaerts risk measures}\label{s-distHaezGoov} 

We now come to the main contribution of this work: the combination of distortion risk measures and Haezendonck-Goovaerts risk measures into a single new class of risk measures. This was suggested in 2012 by Goovaerts, Linders, Van Weert, and Tank \cite[Definition 4.2]{GLVT12}.

\subsection{The domain}
We begin by defining the set of risks where the distortion Haezendonck-Goovaerts risk measures will be defined.

By a property of quantile functions we have that
\[
E\Big(\phi\Big(\frac{|X|}{a}\Big)\Big) =  \int_0^1 \phi\Big(\frac{|F_X^{-1}(1-u)|}{a}\Big) \d u.
\]
Motivated by this we are led to distort $L^\phi$ into a new space $L^\phi_g$; we refrain from giving this space a name, see Remark \ref{rem-OL}(a).

\begin{definition}\label{d-OL}
Let $g$ be a distortion function and $\phi$ a Young function. Then $L^\phi_g=L^\phi_g(\Omega)$ is the space of all risks $X:\Omega\to\R$ for which there is some $a>0$ such that
\[
\int_0^1 \phi\Big(\frac{|F_X^{-1}(1-u)|}{a}\Big) \d g(u) <\infty.
\]
\end{definition}

By the above, if $g$ is the identity then $L_g^\phi=L^\phi$; and if $\phi$ is the identity then $L_g^\phi=L_g$.

As in our discussion in Section \ref{s-dist} we see that if $g(0)>0$ then $X\in L_g^\phi$ implies that $X$ is bounded above. And the fact that $g(1-)=g(1)$ implies that the bounded risks belong to $L^\phi_g$. Indeed, we have the following.

\begin{proposition}\label{p-incl}
We have that
\[
L^\infty \subset L^\phi_g \subset L_g.
\]
\end{proposition}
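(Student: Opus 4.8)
The plan is to prove the two inclusions $L^\infty \subset L^\phi_g$ and $L^\phi_g \subset L_g$ separately, both by direct estimates on the defining integrals using elementary properties of the quantile function and of the Young and distortion functions.

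For the inclusion $L^\infty \subset L^\phi_g$: suppose $X$ is a bounded risk, say $|X| \leq M$ a.s. Then $|F_X^{-1}(u)| \leq M$ for all $u \in (0,1)$, so for any $a > 0$ we have $\phi(|F_X^{-1}(1-u)|/a) \leq \phi(M/a)$, a constant. Since $g$ is a distortion function, $\int_0^1 \d g(u) = g(1) - g(0-) = 1 < \infty$ (here the convention $g(0-)=0$ and $g(1)=1$ matter), hence $\int_0^1 \phi(|F_X^{-1}(1-u)|/a)\,\d g(u) \leq \phi(M/a) < \infty$. So $X \in L^\phi_g$, witnessed by any $a > 0$. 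This step is completely routine.

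For the inclusion $L^\phi_g \subset L_g$: let $X \in L^\phi_g$, so there is $a > 0$ with $\int_0^1 \phi(|F_X^{-1}(1-u)|/a)\,\d g(u) < \infty$. I want to bound $\int_0^1 |F_X^{-1}(1-u)|\,\d g(u)$. The key is that a Young function $\phi$ grows superlinearly: since $\phi$ is convex with $\phi(0)=0$ and $\phi(t)\to\infty$, there is some $t_0 > 0$ and a constant $c > 0$ with $\phi(t) \geq c\, t$ for all $t \geq t_0$ (take $t_0$ large enough that $\phi(t_0)>0$, and use convexity and $\phi(0)=0$ to get $\phi(t)/t$ increasing for $t \geq t_0$). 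Split $(0,1)$ according to whether $|F_X^{-1}(1-u)|/a \leq t_0$ or $> t_0$. On the first set, $|F_X^{-1}(1-u)| \leq a t_0$ and the integral of this part is at most $a t_0 \int_0^1 \d g = a t_0 < \infty$. On the second set, $|F_X^{-1}(1-u)| = a \cdot \tfrac{|F_X^{-1}(1-u)|}{a} \leq \tfrac{a}{c}\,\phi\big(\tfrac{|F_X^{-1}(1-u)|}{a}\big)$, whose integral against $\d g$ is at most $\tfrac{a}{c}\int_0^1 \phi(|F_X^{-1}(1-u)|/a)\,\d g(u) < \infty$. Adding the two pieces gives $\int_0^1 |F_X^{-1}(1-u)|\,\d g(u) < \infty$, i.e.\ $X \in L_g$.

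The only mild subtlety — the nearest thing to an obstacle — is justifying the superlinear lower bound $\phi(t) \geq ct$ for large $t$ from the bare definition of a Young function (convex, $\phi(0)=0$, $\phi(t)\to\infty$); this is standard (the slopes $\phi(t)/t$ are nondecreasing and tend to $\infty$), and one must pick $t_0$ with $\phi(t_0)>0$ so that $c := \phi(t_0)/t_0 > 0$. One should also note that $L^\phi_g$ is genuinely a set of risks valued in $\R$ (finite a.s.), which is automatic since membership requires the defining integral to be finite for some $a$, forcing $|F_X^{-1}|$ to be finite $g$-a.e., but in fact the inclusion into $L_g$ already records that $X$ has a finite quantile function off a null set, so nothing extra is needed. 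Everything else is a one-line computation.
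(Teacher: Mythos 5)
Your proof is correct and takes essentially the same approach as the paper. The paper extracts a single global affine lower bound $\phi(t)\geq ct+b$ for all $t\geq 0$ and integrates once, whereas you split $(0,1)$ into the regions where $|F_X^{-1}(1-u)|/a$ is small or large; both rest on the same linear growth of $\phi$, and the first inclusion is handled identically.
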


\begin{proof} For the second inclusion, note that since $\phi$ is convex and necessarily increasing there are $c>0$ and $b\in \R$ such that $\phi(t)\geq ct+b$ for all $t\geq 0$. Thus
\begin{align*}
\int_0^1 \phi\Big(\frac{|F_X^{-1}(1-u)|}{a}\Big) \d g(u)\geq \frac{c}{a}\int_0^1|F_X^{-1}(1-u)| \d g(u) +b, 
\end{align*}
so that $X\in L_g^\phi$ implies that $X\in L_g$.
\end{proof} 

We have seen in Example \ref{ex-cone} that $L_g^\phi$ is not necessarily a convex cone, even if $g$ is concave and $\phi$ is the identity. In Subsection \ref{subs-convex} we will present conditions on a concave distribution function so that $L_g^\phi$ is a convex cone, for any Young function. 

Also, by Section \ref{s-dist}, $X\in L_g^\phi$ does not necessarily imply that $|X|\in L_g^\phi$. Instead, the definition implies the following.

\begin{proposition}\label{p-inv}
If $Y_1,Y_2\in L_g^\phi$ and $Y_1\leq X\leq Y_2$ then $X\in L_g^\phi$.
\end{proposition}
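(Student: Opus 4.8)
The plan is to reduce everything to the monotonicity of the quantile function, exactly as in the proof of Proposition \ref{p-lg}. First I would note that the hypothesis $Y_1 \leq X \leq Y_2$ (almost surely) implies $F_{Y_1}(x) \geq F_X(x) \geq F_{Y_2}(x)$ for all $x \in \mathbb{R}$, and hence, by the equivalence $u \leq F_X(x) \iff F_X^{-1}(u) \leq x$ recorded in the introduction, that
\[
F_{Y_1}^{-1}(u) \leq F_X^{-1}(u) \leq F_{Y_2}^{-1}(u)
\]
for all $u \in (0,1)$. Consequently $|F_X^{-1}(1-u)| \leq \max(|F_{Y_1}^{-1}(1-u)|, |F_{Y_2}^{-1}(1-u)|) \leq |F_{Y_1}^{-1}(1-u)| + |F_{Y_2}^{-1}(1-u)|$ for all $u \in (0,1)$.

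Next, since $Y_1, Y_2 \in L_g^\phi$, there are $a_1, a_2 > 0$ with $\int_0^1 \phi(|F_{Y_i}^{-1}(1-u)|/a_i)\,\d g(u) < \infty$ for $i=1,2$. Set $a = 2\max(a_1,a_2)$. Using that $\phi$ is increasing and convex with $\phi(0)=0$, so that $\phi\big(\tfrac{s+t}{2}\big) \leq \tfrac12\phi(s) + \tfrac12\phi(t)$ for $s,t \geq 0$, I would estimate
\[
\phi\Big(\frac{|F_X^{-1}(1-u)|}{a}\Big)
\leq \phi\Big(\frac{|F_{Y_1}^{-1}(1-u)| + |F_{Y_2}^{-1}(1-u)|}{2\max(a_1,a_2)}\Big)
\leq \frac12 \phi\Big(\frac{|F_{Y_1}^{-1}(1-u)|}{a_1}\Big) + \frac12 \phi\Big(\frac{|F_{Y_2}^{-1}(1-u)|}{a_2}\Big).
\]
Integrating against $\d g$ gives a finite bound, so $X \in L_g^\phi$ with this choice of $a$.

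There is no real obstacle here; the statement is a direct consequence of the definition together with the monotonicity of $F_X^{-1}$ in the distributional order and the elementary convexity inequality for $\phi$. The only point requiring a word of care is the passage from the almost-sure inequality $Y_1 \leq X \leq Y_2$ to the pointwise ordering of quantile functions, which is standard and already used in the proof of Proposition \ref{p-lg}; and the fact that one must combine the two scaling constants $a_1, a_2$ into a single $a$, for which splitting via convexity (rather than, say, trying to reuse $\min(a_1,a_2)$) is the clean move.
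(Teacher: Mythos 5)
Your proof is correct, and it fills in the routine details the paper omits (the paper simply asserts Proposition~\ref{p-inv} as a direct consequence of the definition, just as it does for Proposition~\ref{p-lg}). One small remark: the convexity step is not actually needed here, since from $|F_X^{-1}(1-u)|\leq \max(|F_{Y_1}^{-1}(1-u)|,|F_{Y_2}^{-1}(1-u)|)$ and the monotonicity of $\phi$ one gets directly $\phi(|F_X^{-1}(1-u)|/a)\leq \phi(|F_{Y_1}^{-1}(1-u)|/a)+\phi(|F_{Y_2}^{-1}(1-u)|/a)$ for any $a>0$, so taking $a=\max(a_1,a_2)$ (with no factor of $2$) already suffices; your convex-combination route is perfectly valid, just slightly heavier than needed.
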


\subsection{Orlicz-Lorentz premia}

We start the definition of the distortion Haezendonck-Goovaerts risk measures by distorting the Orlicz premia.

We denote by $(L_g^\phi)_+$ the set of positive risks in $L_g^\phi$. Since $F_X^{-1}$ is positive for such risks we have that $X\in (L_g^\phi)_+$ if and only if $X\geq 0$ and
\[
\int_0^1 \phi\Big(\frac{F_X^{-1}(1-u)}{a}\Big) \d g(u)<\infty
\]
for some $a>0$.

\begin{definition}
Let $g$ be a distortion function, $\phi$ a Young function, and $\alpha< 1$. The \textit{Orlicz-Lorentz premium} $\pi_{g,\phi,\alpha}:(L^\phi_g)_+\to\R$ is given by
\begin{align*}
\pi_{g,\phi,\alpha}(X)= \inf\Big\{a>0 : \int_0^1 \phi\Big(\frac{F_X^{-1}(1-u)}{a}\Big) \d g(u)\leq 1-\alpha\Big\}.
\end{align*}
\end{definition}

If $g$ is the identity then the Orlicz-Lorentz premium $\pi_{g,\phi,\alpha}=\pi_{\phi,\alpha}$ is the Orlicz premium; and if $\phi$ is the identity then $\pi_{g,\phi,0}=\rho_g$ is the distortion risk measure (on the positive risks in $L_g$).

We see as in Section \ref{s-HaezGoov} that $\pi_{g,\phi,\alpha}$ takes finite values. 

\begin{remark}\label{rem-OL}
(a) The premium is named after the Orlicz-Lorentz spaces of functional analysis, see \cite{HKM02}, \cite{Kam90}, \cite[Section 5]{KMP03}. If $w:[0,1]\to\mathbb{R}$ is a positive measurable function with $\int_0^1w(u)\d u=1$ and $\phi$ is a Young function, then the Orlicz-Lorentz space $\Lambda_{\phi,w}=\Lambda_{\phi,w}(\Omega)$ is defined as the space of all measurable functions $X$ on $\Omega$ such that 
\[
\int_0^1 \phi\Big(\frac{X^*}{a}\Big)w(u) \d u<\infty \text{ for some $a>0$},
\] 
where $X^\ast$ is the nonincreasing rearrangement of $X$, see Remark \ref{rem-Lorentz}. In that context one defines
\[
\|X\|= \inf\Big\{ a>0 : \int_0^1 \phi\Big(\frac{X^*(u)}{a}\Big)w(u) \d u \leq 1\Big\}.
\]
We consider again the corresponding distortion function $g(u)=\int_0^u w(v)\d v$, $u\in [0,1]$. Then $\Lambda_{\phi,w}=\{X: |X|\in L_g^\phi\}$ and $\|X\|=\pi_{g,\phi,0}(|X|)$ for $X\in \Lambda_{\phi,w}$. However, in general, one cannot recover $L_g^\phi$ from $\Lambda_{\phi,w}$ in the same way as in Remark \ref{rem-Lorentz}, see Example \ref{ex-LorentzLgRhog} in the Appendix.

In the literature, Orlicz-Lorentz spaces are usually studied for decreasing weights $w$. In that case, $\Lambda_{\phi,w}\subset L_g^\phi$, see Proposition \ref{p-OrlLorentzLg} in the Appendix.

(b) Following Remarks \ref{r-percfct} and \ref{r-prime}(a), we interpret $g$ as the decision maker's probability perception function and 
$\phi$ (or rather $-\phi(-\cdot)$) as the decision maker's utility function.
\end{remark}

Since $\phi$ is increasing and continuous, we have that $\phi(F_X^{-1})= F_{\phi(X)}^{-1}$. Thus we see that
\begin{equation}\label{eq-OL}
\pi_{g,\phi,\alpha}(X) = \inf\Big\{a>0 : \rho_g\Big(\phi\Big(\frac{X}{a}\Big)\Big)\leq 1-\alpha\Big\},
\end{equation}
that is, one replaces the expectation by $\rho_g$ in Definition \ref{d-Orlicz}.

By the same argument, using Proposition \ref{p-equiv} via \eqref{eq-defdist}, we obtain the following alternative representations of the integral appearing in the definition of Orlicz-Lorentz premia; note that the value $a$ can be incorporated into $X$.

\begin{proposition}\label{p-equivbis} 
Let $X\geq 0$. 

\emph{(a)} We have
\[
\int_0^1 \phi(F_X^{-1}(1-u)) \d g(u)= \int_0^{\infty} g(\overline{F}_{\phi(X)}(x)-)\d x,
\]
where $\overline{F}_X(x)=1-F_X(x)$.

\emph{(b)} Let $h:[0,1]\to [0,1]$ be given by $h(u)=1-g((1-u)-)$. Then
\[
\int_0^1 \phi(F_X^{-1}(1-u)) \d g(u)= \int_0^\infty x\, \d (h\circ F_{\phi(X)})(x).
\]
\end{proposition}

\begin{remark}\label{r-Choquet} In keeping with Remark \ref{r-prime}(a), let us extend $\phi$ to an increasing convex function on $\mathbb{R}$, define the (concave) utility function $U(t)=-\phi(-t)$ on $\mathbb{R}$, and consider the financial position $Y:=-X$ associated with the risk $X\geq 0$. In decision theory, the Choquet integral
\begin{equation}\label{eq-rdu}
(C)\int U(Y) \d (h\circ P)
\end{equation}
is called the \textit{rank-dependent expected utility} of $Y$ with respect to a distortion function $h$ with $h(0)=0$. This notion was introduced for discrete $Y$ by Quiggin \cite{Qui82}, \cite{Qui93}, see \cite[p.\ 68]{Hei03} for the general formula, and has since been studied extensively in decision theory, see \cite{Wak10}, and more recently also in AI research, see \cite{GoPe20}. 

Now, one has that 
\[
\int_0^1 \phi(F_{X}^{-1}(1-u)) \d g(u)=-(C)\int U(Y) \d (h\circ P),
\]
where $g(u)=1-h((1-u)-)$, see Proposition \ref{p-rdu} in the Appendix. Thus there is a close link between Orlicz-Lorentz premia and rank-dependent expected utility; see also \cite[Section 6.1]{DDGKL06}.
\end{remark}

\subsection{Orlicz-Lorentz: the infimum}
In view of the definition of the Orlicz-Lorentz premia, two questions arise: is the infimum attained, and if so do we have equality in the defining condition at the minimum. In general, the answers are negative.

\begin{example}\label{ex-distOrlicz2}
Clearly, for $X=0$, the infimum is not attained. But this can also happen for nonzero risks. If, for example, $X\in (L_g^\phi)_+$ with  $P(X=0)=\frac{1}{2}$ and $g(u)=\max(2u-1,0)$, then $\int_0^1 \phi\big(\frac{F_X^{-1}(1-u)}{a}\big) \d g(u) = 0$ for all $a>0$, so that $\pi_{g,\phi,\alpha}(X)=0$, and the infimum is not attained.

An example where we do not have equality in the defining condition at the minimum was given in \cite[pp.\ 45-46]{HaGo82}, where $g$ is even the identity function.
\end{example}

In order to obtain positive answers, let $X$ be any positive risk, not necessarily in $(L^\phi_g)_+$, and consider the function $\psi:(0,\infty)\to [0,\infty]$ given by
\[
\psi(x) = \int_0^1 \phi\Big(\frac{F_X^{-1}(1-u)}{x}\Big) \d g(u).
\]
The following lemma generalizes and extends \cite[Lemma 4]{HaGo82}. 

\begin{lemma}\label{l-distOrlicz} 
Let $g$ be a distortion function, $\phi$ a Young function, and $X\geq 0$. Then:
\begin{enumerate}[label=\emph{(\alph*)}] 
\item Either $\{\psi=0\}=\varnothing$ or $\{\psi=0\}= (0,\infty)$. Moreover, $\{\psi=0\}=\varnothing$ if and only if $X\neq 0$ and $g$ is not identically $0$ on $[0,P(X>0))$.
\item If $g=0$ on some neighbourhood of $0$, or if $X\in L^\infty$, then $\{\psi<\infty\}=(0,\infty)$.
\item If $\phi$ satisfies the $\Delta_2$-condition, then either $\{\psi<\infty\}=\varnothing$ or $\{\psi<\infty\}=(0,\infty)$.
\item $\psi$ is right-continuous.
\item $\psi$ is continuous at every interior point of $\{\psi<\infty\}$.
\item $\psi$ is decreasing.
\item $\psi$ is strictly decreasing on $\{0<\psi<\infty\}$.
\item If $\{\psi=0\}=\varnothing$ then $\lim_{x\to 0}\psi(x)=\infty$.
\item If $\{\psi<\infty\}\neq\varnothing$ then $\lim_{x\to\infty}\psi(x)=0$.
\end{enumerate}
\end{lemma}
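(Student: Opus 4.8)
The plan is to study the function $\psi$ by elementary means, using throughout that $u\mapsto F_X^{-1}(1-u)$ is decreasing and nonnegative, that $\phi$ is continuous, convex, nonnegative and hence increasing with $\phi(0)=0$, and that $\int_0^1 f\,\d g$ is integration against the finite Borel measure $\mu_g$, with $\mu_g([0,u))=g(u-)$. Several parts are then immediate. For (f): if $x<x'$ then $F_X^{-1}(1-u)/x\geq F_X^{-1}(1-u)/x'$ for every $u$, so monotonicity of $\phi$ gives $\psi(x)\geq\psi(x')$ after integrating. For (d): as $x\searrow x_0$ the integrand $\phi(F_X^{-1}(1-u)/x)$ increases pointwise to $\phi(F_X^{-1}(1-u)/x_0)$, and monotone convergence gives $\psi(x)\nearrow\psi(x_0)$, valid even when $\psi(x_0)=\infty$. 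For (e): at an interior point $x_0$ of $\{\psi<\infty\}$ choose $x_1<x_0$ with $\psi(x_1)<\infty$; then $\phi(F_X^{-1}(1-u)/x_1)$ is a $\mu_g$-integrable majorant for all $x\in[x_1,x_0]$, so dominated convergence yields left-continuity at $x_0$, which together with (d) gives continuity. For (i): picking $x_0$ with $\psi(x_0)<\infty$, the same majorant $\phi(F_X^{-1}(1-u)/x_0)$ dominates for $x\geq x_0$, while $\phi(F_X^{-1}(1-u)/x)\to\phi(0)=0$ as $x\to\infty$, so $\psi(x)\to0$ by dominated convergence.

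For (a) the point is to rephrase ``$\psi(x)=0$'' as a condition not involving $x$. Since $\phi$ vanishes only at $0$, we have $\psi(x)=0$ if and only if $F_X^{-1}(1-u)=0$ for $\mu_g$-a.e.\ $u$; and $F_X^{-1}(1-u)>0$ holds precisely when $u<P(X>0)$. Hence $\psi(x)=0$ if and only if $\mu_g([0,P(X>0)))=0$, i.e.\ if and only if $g$ is identically $0$ on $[0,P(X>0))$, a condition free of $x$; this gives the dichotomy, and inspecting the two cases (with $[0,0)=\varnothing$, which covers $X=0$) gives exactly the stated description of when $\{\psi=0\}=\varnothing$. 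Part (h) is then a consequence: when $\{\psi=0\}=\varnothing$ the interval $[0,P(X>0))$ carries positive $\mu_g$-mass and $F_X^{-1}(1-u)>0$ on it, so $\phi(F_X^{-1}(1-u)/x)\nearrow\infty$ there as $x\searrow0$, and monotone convergence forces $\psi(x)\to\infty$.

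For (g) one first records that a Young function is \emph{strictly} increasing on $(0,\infty)$: being convex, continuous and nonnegative with $\phi(0)=0$, vanishing only at $0$, and tending to $\infty$, it cannot be constant on any subinterval of $(0,\infty)$ (constancy on $[a,b]$ with $0<a$ would, by monotonicity of slopes, force $\phi$ to be non-increasing on $[0,a]$, hence $\phi(a)\leq\phi(0)=0$). Now suppose $0<x_0<x_1$ with $0<\psi(x_1)\leq\psi(x_0)<\infty$ and $\psi(x_0)=\psi(x_1)$. Then the nonnegative function $u\mapsto\phi(F_X^{-1}(1-u)/x_0)-\phi(F_X^{-1}(1-u)/x_1)$ has zero $\mu_g$-integral and so vanishes $\mu_g$-a.e.; but on the set where $\phi(F_X^{-1}(1-u)/x_1)>0$, which has positive $\mu_g$-measure since $\psi(x_1)>0$, one has $F_X^{-1}(1-u)>0$, hence $F_X^{-1}(1-u)/x_0>F_X^{-1}(1-u)/x_1>0$, and strict monotonicity of $\phi$ makes the difference strictly positive there, a contradiction. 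Thus $\psi(x_0)>\psi(x_1)$.

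Finally, (b) and (c) concern finiteness. For (b): if $g=0$ on $[0,\delta)$ then $\psi(x)=\int_{[\delta,1]}\phi(F_X^{-1}(1-u)/x)\,\d g(u)\leq\phi(F_X^{-1}(1-\delta)/x)<\infty$, and if $X\in L^\infty$ then $\psi(x)\leq\phi((\esssup X)/x)<\infty$. For (c), iterating the $\Delta_2$ inequality $\phi(2t)\leq K\phi(t)$, $t\geq s$, yields $\phi(2^{n}t)\leq K^{n}\phi(t)+\phi(2^{n}s)$ for all $t\geq0$ and $n\geq1$; hence for $x'<x$, choosing $n$ with $2^{n}\geq x/x'$, we get $\psi(x')\leq K^{n}\psi(x)+\phi(2^{n}s)$, so $\psi(x)<\infty$ implies $\psi(x')<\infty$, and combined with (f) (which handles $x'>x$ for free) this shows $\{\psi<\infty\}$ is empty or all of $(0,\infty)$. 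The main obstacle is not in the convergence-theorem arguments, which are routine, but in these two dichotomies: the content is precisely that ``$\psi(x)=0$'' and ``$\psi(x)<\infty$'' are each equivalent to an $x$-free condition. I expect the edge-case bookkeeping in (a) (the roles of $X=0$ and of a possible jump of $g$ at $P(X>0)$), the $\Delta_2$ constants in (c), and the structural observation on $\phi$ used in (g) to absorb most of the writing.
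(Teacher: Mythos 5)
Your overall strategy matches the paper's very closely: (d) by monotone convergence, (e) and (i) by dominated convergence with the same majorants, (f) by monotonicity of $\phi$, (a) by reducing ``$\psi(x)=0$'' to an $x$-free condition involving $\mu_g([0,P(X>0)))$, (h) by monotone convergence on $[0,P(X>0))$, (g) via strict monotonicity of $\phi$, and (b), (c) by the same estimates (your iteration of the $\Delta_2$ inequality is a more carefully written version of the paper's one-line assertion). So structurally this is the paper's proof.

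There is, however, one genuine gap you introduce, and it is worth being precise about because it touches the correctness of the lemma itself. You assert twice that $\phi$ vanishes only at $0$ (equivalently, that a Young function is strictly increasing on all of $(0,\infty)$). This does \emph{not} follow from the paper's Definition of a Young function, which only requires $\phi$ convex, $\phi(0)=0$, $\lim_{t\to\infty}\phi(t)=\infty$. For instance $\phi(t)=\max(t-1,0)$ is a Young function in this sense and vanishes on $[0,1]$. The paper itself is careful on this point in (g), where it writes ``$\phi$ is strictly increasing on $\{\phi>0\}$,'' which is all that is actually true. Fortunately your argument for (g) never uses more than that — you only invoke strict monotonicity starting from a point where $\phi(F_X^{-1}(1-u)/x_1)>0$, so the preamble is false but the reasoning is sound. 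The real damage is in (a): your chain ``$\psi(x)=0\Leftrightarrow F_X^{-1}(1-u)=0$ $\mu_g$-a.e.'' is only valid when $\phi>0$ on $(0,\infty)$, and without this the stated dichotomy in (a) actually fails (e.g.\ $\phi(t)=\max(t-1,0)$, $g$ the identity, $P(X=0)=P(X=1)=\tfrac12$ gives $\{\psi=0\}=[1,\infty)$). The paper's own proof of (a) glosses over exactly this step with an unjustified ``then $\{\psi=0\}=\varnothing$,'' so you are inheriting a flaw rather than inventing one; but since you make the claim explicit, you should either prove it (you cannot, from the paper's definition) or record it as an extra hypothesis on $\phi$ needed for part (a).
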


\begin{proof}
Assertion (d) follows from the monotone convergence theorem, (e) and (i) follow from the dominated convergence theorem, while (f) is obvious. 

(a) If $X=0$ then $\{\psi=0\}= (0,\infty)$. Else suppose that $X\neq 0$, and hence $q:=P(X>0)>0$. Thus $F_X^{-1}(1-u)=0$ for $u\in [q,1)$ and $F_X^{-1}(1-u)>0$ for $u\in [0,q)$. If $g=0$ on $[0,q)$, then $\mu_g([0,q))=0$, where $\mu_g$ is the probability measure induced by $g$. It follows that $\{\psi=0\}= (0,\infty)$. If $g$ is not identically $0$ on $[0,q)$, then $\{\psi=0\}=\varnothing$.

(g) We may assume that $\{0<\psi<\infty\}\neq\varnothing$. By (a), $q:=P(X>0)>0$ and  $g=0$ on $[0,P(X>0))$, so that $\mu_g ([0,q))>0$. Also, as we have seen above, $F_X^{-1}(1-u)=0$ for $u\in [q,1)$, so that $\psi(z) = \int_{[0,q)} \phi(\frac{F_X^{-1}(1-u)}{z}) \d g(u)$ for all $z>0$, and $F_X^{-1}(1-u)>0$ for $u\in [0,q)$. 

Now let $x,y\in \{0<\psi<\infty\}$ and $x<y$. Since $\phi$ is strictly increasing on $\{\phi>0\}$, we have for $u\in [0,q)$ that $\phi\big(\frac{F_X^{-1}(1-u)}{x}\big)>\phi\big(\frac{F_X^{-1}(1-u)}{y}\big)$. Since $\psi(y)<\infty$, this implies that $\psi(x)>\psi(y)$.

(h) Let $q=P(X>0)$. As in (g), the hypothesis implies $q>0$, $\mu_g([0,q))>0$, $\psi(x) = \int_{[0,q)} \phi(\frac{F_X^{-1}(1-u)}{x}) \d g(u)$ for $x>0$, and $F_X^{-1}(1-u)>0$ for $u\in [0,q)$. Then the claim follows from the monotone convergence theorem.
 
(b) Suppose that $g=0$ on $[0,u_0)$ for some $u_0\in (0,1)$. Then $\psi(x)= \int_{[u_0,1]}\phi(\frac{F_X^{-1}(1-u)}{x}) \d g(u)<\infty$ for all $x>0$. If $X\in L^\infty$, then $F_X^{-1}$ is bounded on $(0,1]$ and therefore $\{\psi<\infty\}=(0,\infty)$. 

(c) Suppose that there is some $a>0$ such that $\psi(a)<\infty$. Then it follows from the $\Delta_2$-condition that, for some $s\geq 0$ and $K>0$,  $\phi(\frac{y}{a/2^n})\leq K^n \phi(\frac{y}{a})$ for all $y\geq as$ and hence $\psi(\frac{a}{2^n})<\infty$, for all $n\geq 1$. Thus, (f) implies that $\{\psi<\infty\}=(0,\infty)$.
\end{proof}

Part (a) of the following lemma gives a partial converse of property (c) above, part (b) is for later use. The proof is inspired by that of \cite[Theorem 133.4]{Zaa83}.

\begin{lemma}\label{l-conv}
Suppose that the underlying probability space $(\Omega,\mathcal{A},P)$ is atomless. Let $g$ be a distortion function with $g(0)=0$ and $g>0$ on $(0,1]$ that is continuous on some neighbourhood of $0$, and let $\phi$ be a Young function that does not satisfy the $\Delta_2$-condition.

\emph{(a)} Then there is a risk $X\geq 0$ on $\Omega$ and $y>x>0$ such that $\psi(x)=\infty$ and $\psi(y)<\infty$.

\emph{(b)} There are risks $X_n\in (L_g^\phi)_+$ such that $X_n\searrow 0$, but $\pi_{g,\phi,\alpha}(X_n)\geq \frac{1}{2}$ for all $n$.
\end{lemma}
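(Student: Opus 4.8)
The plan is to adapt the classical construction (as in \cite[Theorem 133.4]{Zaa83}) which, when $\phi$ fails the $\Delta_2$-condition, produces a function in $L^\phi$ outside its smaller heart; a single discrete risk will witness both (a) and (b). First I would extract from the failure of \eqref{eq-delta2} a strictly increasing sequence $t_k\to\infty$ with $\phi(t_k)>0$ and
\[
\phi(2t_k)>2^k\phi(t_k)\qquad(k\geq 1).
\]
Indeed, taking $s=n$ and $K=2^n$ in the negation of \eqref{eq-delta2} gives, for each $n$, some $v_n\geq n$ with $\phi(2v_n)>2^n\phi(v_n)$; since $v_n\to\infty$ and $\phi\to\infty$, discarding finitely many terms and passing to a subsequence makes the values strictly increasing with $\phi$ positive on them, while the $k$-th retained term still satisfies $\phi(2t_k)>2^k\phi(t_k)$.

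Next I would transport these values onto $\Omega$ through $g$. By hypothesis $g$ is continuous on $[0,\delta)$ for some $\delta>0$; fix $\delta_0\in(0,\delta)$, so $g$ is continuous on $[0,\delta_0]$ with $g(0)=0$ and $\beta:=g(\delta_0)>0$, whence $g([0,\delta_0])=[0,\beta]$ by the intermediate value theorem. Put $m_k=\frac{c}{k^2\phi(t_k)}$, with $c>0$ small enough that $\sum_k m_k<\beta$ (note $\beta\leq 1$, so also $\sum_k m_k<1$). Using continuity of $g$ on $[0,\delta_0]$, I would choose $\tau_1\in(0,\delta_0)$ with $g(\tau_1)=\sum_k m_k$, and then recursively $\tau_{k+1}\in(0,\tau_k)$ with $g(\tau_{k+1})=g(\tau_k)-m_k$; since $g$ is increasing the $\tau_k$ decrease, and since $g(\tau_k)=\sum_{j\geq k}m_j\to 0$ while $g>0$ on $(0,1]$, in fact $\tau_k\downarrow 0$. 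As $(\Omega,\mathcal{A},P)$ is atomless and $\sum_k m_k<1$, I choose pairwise disjoint $A_k\in\mathcal{A}$ with $P(A_k)=m_k$ and set
\[
X=\sum_{k\geq 1}t_k\mathds{1}_{A_k},\qquad X_n=\sum_{k\geq n}t_k\mathds{1}_{A_k}.
\]

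The key computation is to read off the quantiles. Since the $t_k$ are strictly increasing, $F_X^{-1}(1-u)=t_k$ for $u\in[\tau_{k+1},\tau_k)$ and $F_X^{-1}(1-u)=0$ for $u\geq\tau_1$; as each $\tau_k<\delta_0$, the mass of the $k$-th piece is $\mu_g([\tau_{k+1},\tau_k))=g(\tau_k)-g(\tau_{k+1})=m_k$ (here continuity of $g$ near $0$ removes any left-limit correction), so
\[
\psi(x)=\int_0^1\phi\Big(\frac{F_X^{-1}(1-u)}{x}\Big)\d g(u)=\sum_{k\geq 1}\phi\Big(\frac{t_k}{x}\Big)m_k,
\]
and likewise $\psi_{X_n}(x)=\sum_{k\geq n}\phi(t_k/x)m_k$ (the truncation drops the first $n-1$ terms but leaves the remaining masses unchanged). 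At $x=1$ we get $\psi(1)=c\sum_k k^{-2}<\infty$, so $X,X_n\in(L_g^\phi)_+$; at $x=\tfrac12$ the $\Delta_2$-failure gives $\psi_{X_n}(\tfrac12)=\sum_{k\geq n}\phi(2t_k)m_k>c\sum_{k\geq n}2^k k^{-2}=\infty$. This proves (a) with $y=1>x=\tfrac12$. For (b): $X_n\searrow 0$ pointwise, since each $\omega$ belongs to at most one $A_k$; and since $\psi_{X_n}$ is decreasing (Lemma \ref{l-distOrlicz}(f)) and $\psi_{X_n}(\tfrac12)=\infty$, we have $\psi_{X_n}(a)=\infty>1-\alpha$ for every $a\in(0,\tfrac12]$, so no $a<\tfrac12$ satisfies the inequality defining $\pi_{g,\phi,\alpha}(X_n)$, whence $\pi_{g,\phi,\alpha}(X_n)\geq\tfrac12$.

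The main obstacle is not a single estimate but the simultaneous bookkeeping on the masses $m_k$: they must be small enough that $\sum_k m_k<\beta$ and the points $\tau_k$ stay inside the continuity window of $g$, large enough in the weighted sense that $\sum_k\phi(2t_k)m_k=\infty$, yet small in the unweighted sense that $\sum_k\phi(t_k)m_k<\infty$; the choice $m_k=\frac{c}{k^2\phi(t_k)}$ achieves all three at once. A second point requiring care is that continuity of $g$ near $0$ (together with $g(0)=0$ and $g>0$ on $(0,1]$) is used twice — once to realize the prescribed masses via the intermediate value theorem, and once to identify $\mu_g([\tau_{k+1},\tau_k))$ with $g(\tau_k)-g(\tau_{k+1})$ — so the argument genuinely exploits this hypothesis.
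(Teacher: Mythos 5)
Your construction is the right one — extract a sequence $(t_k)$ witnessing the failure of $\Delta_2$, build a discrete risk, and exploit continuity of $g$ near $0$ — and it is essentially the same strategy the paper uses. But there is a genuine error in the bookkeeping between the measure $P$ on $\Omega$ and the measure $\mu_g$ on $[0,1]$, and it breaks the key computation.

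You choose the sets $A_k$ so that $P(A_k)=m_k$. With that choice, the step function $u\mapsto F_X^{-1}(1-u)$ takes the value $t_k$ on the interval $\bigl[\sum_{j\geq k+1}m_j,\ \sum_{j\geq k}m_j\bigr)$, \emph{not} on $[\tau_{k+1},\tau_k)$. The $\tau_k$ were defined by $g(\tau_k)=\sum_{j\geq k}m_j$, and there is no reason for $\tau_k$ to equal $\sum_{j\geq k}m_j$ unless $g$ is the identity near $0$. Consequently
\[
\psi(x)=\sum_{k\geq 1}\phi\Big(\frac{t_k}{x}\Big)\,\Bigl(g\Bigl(\sum_{j\geq k}m_j\Bigr)-g\Bigl(\sum_{j\geq k+1}m_j\Bigr)\Bigr),
\]
and these $\mu_g$-masses are not $m_k$; your estimates $\psi(1)<\infty$ and $\psi(\tfrac12)=\infty$ do not follow. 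The fix is exactly the paper's: take $P(A_k)=\tau_k-\tau_{k+1}$ (so the $P$-masses are the \emph{increments of the points} $\tau_k$, not the increments of $g$). Then the quantile jumps occur precisely at the $\tau_k$, so $F_X^{-1}(1-u)=t_k$ for $u\in[\tau_{k+1},\tau_k)$, and $\mu_g\bigl([\tau_{k+1},\tau_k)\bigr)=g(\tau_k)-g(\tau_{k+1})=m_k$ as you wanted. With that single change the rest of your argument (the choice $m_k=c/(k^2\phi(t_k))$, the estimate $\psi(\tfrac12)\geq c\sum 2^k/k^2=\infty$, the truncation to get $X_n\searrow 0$, and the conclusion $\pi_{g,\phi,\alpha}(X_n)\geq\tfrac12$) goes through, and coincides with the paper's proof up to the concrete parameter choices $a_k=c/k^2$ and $\phi(2t_k)>2^k\phi(t_k)$ in place of the paper's $\sum ka_k=\infty$ and $\phi(2t_k)\geq k\phi(t_k)$.

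Two smaller points worth tightening: (i) you need $\sum_k m_k<\infty$ before you can scale by $c$; this uses that $\phi(t_k)$ is bounded below (the paper arranges $\phi(t_k)\geq 1$), which your phrase ``$\phi$ positive on them'' does not quite guarantee on its own, though it is easily secured since $\phi(t_k)\to\infty$ along your subsequence. (ii) The masses $\tau_k-\tau_{k+1}$ sum to $\tau_1<1$, so the atomless hypothesis indeed furnishes disjoint $A_k$ of these sizes; you invoked atomlessness for the wrong quantities $m_k$.
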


\begin{proof}
(a) First, if $\phi$ does not satisfy the $\Delta_2$-condition, there is a strictly increasing positive sequence $(t_n)_n$ such that $\phi(2t_n)\geq n \phi(t_n)$ and $\phi(t_n)\geq 1$, $n\geq 1$.

Now, by assumption, there is some $u_0\in (0,1]$ such that $g(u_0)>0$ and $g:[0,u_0]\to [0,g(u_0)]$ is continuous and hence surjective. 

Next choose a strictly positive sequence $(a_n)_n$ such that $\sum_{n=1}^\infty a_n = g(u_0)$ and $\sum_{n=1}^\infty na_n=\infty$. By surjectivity, there is a strictly decreasing sequence $(b_n)_{n\geq 0}$ in $(0,u_0]$ such that $g(b_n)=\sum_{k=n+1}^\infty \frac{a_k}{\phi(t_k)}$, $n\geq 0$. Since $g>0$ on $(0,1]$, we have that $b_n \to 0$.

Finally, since $P$ is atomless, there exists a pairwise disjoint sequence $(A_n)_{n\geq 1}$ of sets in $\mathcal{A}$ with $P(A_n)=b_{n-1}-b_n$, $n\geq 1$; see \cite[Theorem 8.14.2]{GeGe25}. Consider the risk $X=\sum_{n=1}^\infty t_n \mathds{1}_{A_n}$. Then $F_X(x)=1-b_{n-1}$ for $t_{n-1}\leq x < t_n$, $n\geq 1$, where $t_0=0$. Thus we have that
\[
\psi(1) = \sum_{n=1}^\infty \phi(t_n) (g(b_{n-1})-g(b_n))=\sum_{n=1}^\infty a_n <\infty,
\]
where we have used that $g$ is continuous at each $b_n$; in the same way, 
\[
\psi(\tfrac{1}{2}) = \sum_{n=1}^\infty \phi(2t_n) \frac{a_n}{\phi(t_n)}\geq \sum_{n=1}^\infty na_n =\infty.
\]
This proves the claim.

(b) Consider the risk $X=\sum_{n=1}^\infty t_n \mathds{1}_{A_n}$ of part (a), and let $X_n=\sum_{k=n}^\infty t_k \mathds{1}_{A_k}$, so that $X_n\searrow 0$. Then $X=X_1$ satisfies $\psi(1)<\infty$, so that $X_n\in (L_g^\phi)_+$ for all $n$. Also,  
\begin{align*}
\int_0^1 \phi\Big(\frac{F_{X_n}^{-1}(1-u)}{1/2}\Big) \d g(u) = \sum_{k=n}^\infty \phi(2t_k) \frac{a_k}{\phi(t_k)}\geq \sum_{k=n}^\infty ka_k =\infty,
\end{align*}
so that $\pi_{g,\phi,\alpha}(X_n)\geq \tfrac{1}{2}$ for all $n$, which had to be shown.
\end{proof}

Assertion (a) of the following result now characterizes when the infimum in the definition of the Orlicz-Lorentz premium is attained. 

\begin{proposition}\label{p-distOrlicz}
Let $X\in (L_g^\phi)_+$. 

\emph{(a)} We have that $\pi_{g,\phi,\alpha}(X)\neq 0$ if and only if $X\neq 0$ and $g$ is not identically $0$ on $[0,P(X>0))$. In that case,
\begin{align*}
\pi_{g,\phi,\alpha}(X)= \min\Big\{a>0 : \int_0^1 \phi\Big(\frac{F_X^{-1}(1-u)}{a}\Big) \d g(u)\leq 1-\alpha\Big\}.
\end{align*} 

\emph{(b)} If $a>0$ satisfies
\[
\int_0^1 \phi\Big(\frac{F_X^{-1}(1-u)}{a}\Big) \d g(u)= 1-\alpha,
\]
then $a=\pi_{g,\phi,\alpha}(X)$.

\emph{(c)} Suppose that $\pi_{g,\phi,\alpha}(X)\neq 0$. If $g=0$ on some neighbourhood of $0$, or if $X\in L^\infty$, or if $\phi$ satisfies the $\Delta_2$-condition, then there is a unique value $a>0$ such that 
\[
\int_0^1 \phi\Big(\frac{F_X^{-1}(1-u)}{a}\Big) \d g(u)= 1-\alpha,
\]
and $a=\pi_{g,\phi,\alpha}(X)$.
\end{proposition}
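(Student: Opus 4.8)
The plan is to reduce the whole statement to the elementary properties of the auxiliary function $\psi$ collected in Lemma~\ref{l-distOrlicz}. Throughout, write $a^\ast=\pi_{g,\phi,\alpha}(X)$ and $S=\{a>0:\psi(a)\leq 1-\alpha\}$, so that $a^\ast=\inf S\geq 0$. The hypothesis $X\in(L_g^\phi)_+$ says precisely that $\{\psi<\infty\}\neq\varnothing$, so Lemma~\ref{l-distOrlicz}(i) gives $\psi(x)\to 0$ as $x\to\infty$; since $\alpha<1$, i.e.\ $1-\alpha>0$, the set $S$ is nonempty, and by monotonicity (Lemma~\ref{l-distOrlicz}(f)) it is an up-set in $(0,\infty)$, hence equal to $(a^\ast,\infty)$ or $[a^\ast,\infty)$.

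For part (a), first suppose $X=0$ or $g$ is identically $0$ on $[0,P(X>0))$. Then Lemma~\ref{l-distOrlicz}(a) gives $\psi\equiv 0$, so $S=(0,\infty)$ and $a^\ast=0$. Conversely, if $X\neq 0$ and $g$ is not identically $0$ on $[0,P(X>0))$, then $\{\psi=0\}=\varnothing$ by Lemma~\ref{l-distOrlicz}(a) and $\psi(x)\to\infty$ as $x\to 0^+$ by Lemma~\ref{l-distOrlicz}(h); hence $\psi>1-\alpha$ on some interval $(0,\delta)$, so $S\subseteq[\delta,\infty)$ and $a^\ast\geq\delta>0$. This proves the equivalence. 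When $a^\ast>0$, to see that the infimum is attained, pick $a_n\searrow a^\ast$ with $a_n\in S$; right-continuity (Lemma~\ref{l-distOrlicz}(d)) gives $\psi(a^\ast)=\lim_n\psi(a_n)\leq 1-\alpha$, so $a^\ast\in S$ and $a^\ast=\min\{a>0:\psi(a)\leq 1-\alpha\}$.

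For part (b), let $a>0$ with $\psi(a)=1-\alpha$. Then $a\in S$, so $a\geq a^\ast$. Since $\psi(a)=1-\alpha>0$ we cannot have $\{\psi=0\}=(0,\infty)$, so $\{\psi=0\}=\varnothing$ by Lemma~\ref{l-distOrlicz}(a); hence $a^\ast>0$ by part (a), and, again by part (a), $0<\psi(a^\ast)\leq 1-\alpha<\infty$ (here $\psi(a^\ast)>0$ because $a^\ast>0$ and $\{\psi=0\}=\varnothing$). Thus both $a^\ast$ and $a$ lie in $\{0<\psi<\infty\}$, on which $\psi$ is strictly decreasing by Lemma~\ref{l-distOrlicz}(g); were $a>a^\ast$ we would get $\psi(a)<\psi(a^\ast)\leq 1-\alpha$, contradicting $\psi(a)=1-\alpha$. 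Hence $a=a^\ast$.

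For part (c), uniqueness is immediate from part (b). For existence, observe that under each of the three hypotheses one has $\{\psi<\infty\}=(0,\infty)$: in the first two cases this is Lemma~\ref{l-distOrlicz}(b), and in the $\Delta_2$ case it follows from Lemma~\ref{l-distOrlicz}(c) together with $\{\psi<\infty\}\neq\varnothing$. Then $\psi$ is finite on $(0,\infty)$ and, by Lemma~\ref{l-distOrlicz}(e), continuous there, with $\psi(x)\to\infty$ as $x\to 0^+$ (Lemma~\ref{l-distOrlicz}(h), using that $a^\ast\neq 0$ forces $\{\psi=0\}=\varnothing$ by part (a)) and $\psi(x)\to 0$ as $x\to\infty$ (Lemma~\ref{l-distOrlicz}(i)); the intermediate value theorem yields $a>0$ with $\psi(a)=1-\alpha$, and part (b) identifies $a=a^\ast$. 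The content is entirely in Lemma~\ref{l-distOrlicz}, so no genuinely hard step remains; the only point demanding care is the bookkeeping around the extended-real-valued $\psi$ and the degenerate configurations (for instance $X=0$, or $g$ vanishing on an initial interval, or $\psi=\infty$ on an initial interval): whenever strict monotonicity or continuity of $\psi$ is invoked, one must check that the relevant arguments actually lie in $\{0<\psi<\infty\}$, respectively in the interior of $\{\psi<\infty\}$.
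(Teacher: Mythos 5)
Your argument is correct and follows essentially the same route as the paper: everything is reduced to the elementary properties of $\psi$ listed in Lemma~\ref{l-distOrlicz}, with the same division of labour—(a), (d), (h), (i) for the characterization of when $\pi_{g,\phi,\alpha}(X)=0$ and the attainment of the infimum; (g) for part~(b); and (a), (b), (c), (e), (h), (i) together with part~(b) for part~(c). You have simply filled in the bookkeeping that the paper leaves terse, and your care about checking that arguments lie in $\{0<\psi<\infty\}$ or in the interior of $\{\psi<\infty\}$ before invoking strict monotonicity or continuity is exactly what the cited lemma parts require.
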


\begin{proof}
(a) If $X=0$, or if $X\neq 0$ and $g=0$ on $[0,P(X>0))$, then $\psi(x)=0$ for all $x>0$ by Lemma \ref{l-distOrlicz}(a), so that $\pi_{g,\phi,\alpha}(X)=0$. Otherwise, the result follows from the points (a), (d), (h) and (i) of Lemma \ref{l-distOrlicz}; note that $\{\psi<\infty\}\neq\varnothing$ because $X\in (L_g^\phi)_+$.

(b) follows from Lemma \ref{l-distOrlicz}(g).

(c) Again, $\{\psi<\infty\}\neq\varnothing$ because $X\in (L_g^\phi)_+$. Thus, Lemma \ref{l-distOrlicz}(b) and (c) imply that $\{\psi<\infty\}=(0,\infty)$. Then existence follows from points (a), (e), (h) and (i) of Lemma \ref{l-distOrlicz}. And uniqueness follows from (b) above. 
\end{proof}

In other words, under the assumptions stated in (c), one can define $\pi_{g,\phi,\alpha}(X)$ as the unique value satisfying
\begin{equation}\label{eq-unique}
\int_0^1 \phi\Big(\frac{F_X^{-1}(1-u)}{\pi_{g,\phi,\alpha}(X)}\Big) \d g(u)= 1-\alpha.
\end{equation}
This is the case, in particular, if $g$ is the identity function, $X\neq 0$, and either $X$ is bounded or $\phi$ satisfies the $\Delta_2$-condition, so that we recover the findings in \cite[Remark 3]{HaGo82} and \cite[p.\ 108]{BeRo12}.

\begin{remark} In analogy to the so-called Orlicz hearts, see \cite{BeRo12}, \cite{EdSo92}, \cite[Section 3.4, Definition 2]{RaRe91}, one might define the \textit{heart} $M_g^\phi$ of $L_g^\phi$ as the space of all risks $X$ for which 
\[
\int_0^1 \phi\Big(\frac{|F_X^{-1}(1-u)|}{a}\Big) \d g(u) <\infty
\]
holds \textit{for all} $a>0$. It follows as in the proof of Lemma \ref{l-distOrlicz}(b) that
\[
L^\infty \subset M_g^\phi \subset L_g^\phi. 
\]
Moreover, by the proof of Lemma \ref{l-distOrlicz}(c), $M_g^\phi = L_g^\phi$ if $\phi$ satisfies the $\Delta_2$-condition. Now, several results in this paper that depend on the $\Delta_2$-condition do in fact hold in $M_g^\phi$ for any $\phi$. For example, identity \eqref{eq-unique} holds for all $X\in M_g^\phi$ provided that $\pi_{g,\phi,\alpha}(X)\neq 0$.

Since we are mainly interested in results that hold on all of $L_g^\phi$, we do not pursue this aspect here.
\end{remark}

\subsection{Orlicz-Lorentz: risk theoretic properties}
We obtain several properties of general Orlicz-Lorentz premia.

\begin{proposition}\label{p-distOrlicz2}~

\begin{enumerate}[label=\emph{(\alph*)}]
\item For any $X\in (L_g^\phi)_+$,
\[
\frac{\rho_g(X)}{\phi^{-1}(1-\alpha)}\leq \pi_{g,\phi,\alpha}(X)\leq \frac{\esssup X}{\phi^{-1}(1-\alpha)}.
\]
\item For any $b\geq 0$, $\pi_{g,\phi,\alpha}(b)= \frac{b}{\phi^{-1}(1-\alpha)}$.
\end{enumerate}
\end{proposition}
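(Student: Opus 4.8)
The plan is to mimic the proof of Proposition~\ref{p-Orlicz}, with $\rho_g$ in place of the expectation and the pointwise bound $F_X^{-1}(1-u)\le\esssup X$ in place of $X\le\esssup X$; the one new ingredient is Jensen's inequality for the probability measure $\mu_g$. I will use freely that $\phi^{-1}(1-\alpha)\in(0,\infty)$ (because $\alpha<1$ forces $1-\alpha>0=\phi(0)$, while $\phi(t)\to\infty$) together with the two elementary properties of the generalized inverse of the increasing continuous function $\phi$, namely that $\phi(t)\le s$ implies $t\le\phi^{-1}(s)$ and that $\phi(\phi^{-1}(s))\le s$. I also use, as already noted in the text, that $\pi_{g,\phi,\alpha}(X)$ is finite, so that the infimum defining it runs over a nonempty set.

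\emph{Lower bound in (a).} By Proposition~\ref{p-incl}, $X\in(L_g^\phi)_+\subset L_g$, so with $X\ge0$ we have $\rho_g(X)=\int_0^1 F_X^{-1}(1-u)\,\d g(u)\in[0,\infty)$. Pick any $a>0$ with $\int_0^1\phi\big(F_X^{-1}(1-u)/a\big)\,\d g(u)\le 1-\alpha$. Since $\mu_g$ is a probability measure and $\phi$ is convex, Jensen's inequality gives
\[
\phi\Big(\frac{\rho_g(X)}{a}\Big)=\phi\Big(\int_0^1\frac{F_X^{-1}(1-u)}{a}\,\d g(u)\Big)\le\int_0^1\phi\Big(\frac{F_X^{-1}(1-u)}{a}\Big)\,\d g(u)\le 1-\alpha,
\]
so $\rho_g(X)/a\le\phi^{-1}(1-\alpha)$, that is $a\ge\rho_g(X)/\phi^{-1}(1-\alpha)$. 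Taking the infimum over all such $a$ gives $\pi_{g,\phi,\alpha}(X)\ge\rho_g(X)/\phi^{-1}(1-\alpha)$.

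\emph{Upper bound in (a), and (b).} We may assume $M:=\esssup X<\infty$; then $M$ is a constant with $M\ge0$, and $F_X^{-1}(1-u)\le M$ for every $u\in(0,1]$ because $F_X(M)=1$. If $M=0$ then $X=0$ and both sides of the asserted inequality vanish. If $M>0$, put $a:=M/\phi^{-1}(1-\alpha)$; then $\phi\big(F_X^{-1}(1-u)/a\big)\le\phi(M/a)=\phi\big(\phi^{-1}(1-\alpha)\big)\le 1-\alpha$ for all $u$, and since $\mu_g$ has total mass $1$ this yields $\int_0^1\phi\big(F_X^{-1}(1-u)/a\big)\,\d g(u)\le 1-\alpha$; hence $a$ lies in the defining set and $\pi_{g,\phi,\alpha}(X)\le M/\phi^{-1}(1-\alpha)$. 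Finally, (b) is the special case $X=b$ of (a): for a constant $b\ge0$, which lies in $(L_g^\phi)_+$, we have $\rho_g(b)=b$ and $\esssup b=b$, so the two bounds in (a) coincide and give $\pi_{g,\phi,\alpha}(b)=b/\phi^{-1}(1-\alpha)$.

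I expect no real difficulty here. The only genuine idea is the Jensen step, which is the natural distorted analogue of the argument behind Proposition~\ref{p-Orlicz}(a); the rest is bookkeeping around the convention for $\phi^{-1}$ and the degenerate cases $X=0$ and $\esssup X=\infty$, all of which are immediate.
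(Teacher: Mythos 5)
Your proof is correct and follows essentially the same route as the paper: Jensen's inequality with respect to the probability measure $\mu_g$ gives the lower bound, while plugging in $a=\esssup X/\phi^{-1}(1-\alpha)$ gives the upper bound. The only small deviation is in (b), where you squeeze $\pi_{g,\phi,\alpha}(b)$ between the two bounds of (a) rather than invoking Proposition~\ref{p-distOrlicz}(b) as the paper does; both work equally well.
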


\begin{proof}
(a) First note that, by the assumptions on $\phi$, $\phi^{-1}(1-\alpha)>0$ is well defined and $\phi(\phi^{-1}(1-\alpha))=1-\alpha$. 

The first inequality is trivial if $\rho_g(X)=0$. Else let $0<\eps<\rho_g(X)$. Then Jensen's inequality implies by convexity of $\phi$ that
\begin{align*}
\int_0^1 \phi\Big(\frac{F_X^{-1}(1-u)}{(\rho_g(X)-\eps)/\phi^{-1}(1-\alpha)}\Big)\d g(u)&\geq \phi\Big(\frac{\phi^{-1}(1-\alpha)}{\rho_g(X)-\eps}\int_0^1 F_X^{-1}(1-u)\d g(u)\Big)\\
&=\phi\Big(\frac{\phi^{-1}(1-\alpha)}{\rho_g(X)-\eps}\rho_g(X)\Big)>1-\alpha,
\end{align*}
where we have used that $\phi$ is strictly increasing on $\{\phi>0\}$. Thus $\pi_{g,\phi,\alpha}(X)\geq \frac{\rho_g(X)-\eps}{\phi^{-1}(1-\alpha)}$ for any $\eps>0$, which implies the first inequality.

The second inequality is trivial if $\esssup X=\infty$. Otherwise we use the fact that $F_X^{-1}$ is bounded by $\esssup X$ and take $a= \frac{\esssup X}{\phi^{-1}(1-\alpha)}$.

(b) follows directly from the fact that $F^{-1}_b=b$ on $(0,1]$ and Proposition \ref{p-distOrlicz}(b).
\end{proof}

\begin{proposition}\label{p-distOrliczbis}
The Orlicz-Lorentz premium $\pi_{g,\phi,\alpha}$ is monotonic and positively homogeneous on $(L_g^\phi)_+$.
\end{proposition}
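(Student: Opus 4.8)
The plan is to read off both properties directly from the defining formula
\[
\pi_{g,\phi,\alpha}(X)= \inf\Big\{a>0 : \int_0^1 \phi\Big(\frac{F_X^{-1}(1-u)}{a}\Big) \d g(u)\leq 1-\alpha\Big\},
\]
using only three ingredients: the monotonicity of the quantile function (as already invoked for $\VaR$ in Proposition \ref{p-distmonposhom}), the fact that $\phi$ is increasing, and the scaling identity $F_{\lambda X}^{-1}=\lambda F_X^{-1}$ for $\lambda>0$, which is the special case $h(t)=\lambda t$ of the property $F^{-1}_{h(X)}=h(F^{-1}_X)$ for continuous increasing $h$ recalled in the introduction.

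For monotonicity, I would take $X,Y\in(L_g^\phi)_+$ with $X\leq Y$. Then $F_X^{-1}(1-u)\leq F_Y^{-1}(1-u)$ for all $u\in(0,1]$, and since $\phi$ is increasing and $a>0$ we get $\phi(F_X^{-1}(1-u)/a)\leq\phi(F_Y^{-1}(1-u)/a)$ pointwise; integrating against $\mu_g$ shows that every $a>0$ satisfying the defining inequality for $Y$ also satisfies it for $X$. Hence the infimum defining $\pi_{g,\phi,\alpha}(X)$ is taken over a set containing the one for $\pi_{g,\phi,\alpha}(Y)$, so $\pi_{g,\phi,\alpha}(X)\leq\pi_{g,\phi,\alpha}(Y)$.

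For positive homogeneity, I would first record that $\lambda X\in(L_g^\phi)_+$ whenever $X\in(L_g^\phi)_+$ and $\lambda\geq 0$, so that the statement is meaningful: for $\lambda>0$ this follows from $F_{\lambda X}^{-1}=\lambda F_X^{-1}$ together with a rescaling of $a$, and for $\lambda=0$ it is just $0\in L^\infty\subset L_g^\phi$ by Proposition \ref{p-incl}. If $\lambda=0$, then $\pi_{g,\phi,\alpha}(0)=\inf\{a>0:\phi(0)\leq 1-\alpha\}=0$, since $\phi(0)=0$ and $\alpha<1$; this equals $0\cdot\pi_{g,\phi,\alpha}(X)$. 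If $\lambda>0$, I substitute $F_{\lambda X}^{-1}(1-u)=\lambda F_X^{-1}(1-u)$ into the defining set and perform the change of variable $a=\lambda b$, which turns the condition $\int_0^1\phi(\lambda F_X^{-1}(1-u)/a)\,\d g(u)\leq 1-\alpha$ into $\int_0^1\phi(F_X^{-1}(1-u)/b)\,\d g(u)\leq 1-\alpha$; taking infima yields $\pi_{g,\phi,\alpha}(\lambda X)=\lambda\,\pi_{g,\phi,\alpha}(X)$.

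There is essentially no genuine obstacle here — this is the standard Luxemburg-type computation. The only points warranting a line of care are that the domain $(L_g^\phi)_+$ is closed under nonnegative scalar multiples (it is trivially a cone, even though $L_g^\phi$ need not be a convex cone, cf.\ Example \ref{ex-cone}), and the degenerate case $\lambda=0$, which is handled by the observation $\phi(0)=0\leq 1-\alpha$.
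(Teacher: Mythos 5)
Your proof is correct and follows essentially the same route as the paper's: monotonicity is read off from the monotonicity of $\phi$ and of the quantile function, positive homogeneity from the scaling identity $F_{\lambda X}^{-1}=\lambda F_X^{-1}$ via a rescaling of $a$, with $\lambda=0$ handled separately using $\phi(0)=0$. You have merely spelled out the steps that the paper leaves terse.
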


\begin{proof}
The monotonicity follows from the monotonicity of $\phi$ and $F_X^{-1}$. The positive homogeneity follows from the fact that, for $\lambda >0$, $\phi\big(\frac{F_{\lambda X}^{-1}(1-u)}{a}\big)=\phi\big(\frac{F_X^{-1}(1-u)}{a/\lambda}\big)$; note also that $\pi_{g,\phi,\alpha}(0)=0$.
\end{proof} 

\begin{remark}
It follows that the Orlicz-Lorentz premium $\pi_{g,\phi,\alpha}$ is a return risk measure (see Remark \ref{r-return}) if $\phi$ is normalized and $\alpha=0$. Just like for Orlicz premia, see \cite{ABL25}, one obtains return risk measures even for many non-convex functions $\phi$. For example, if $\phi(t)=t^p$, $0<p<1$, and $g(u)=\min(\frac{u}{1-\beta},1)$, $0<\beta<1$, then $\pi_{g,\phi,0}(X)=(\frac{1}{1-\beta}\int_\beta^1 F_X^{-1}(u)^p\d u)^{1/p}$ for $X\in L_+^\infty$, which is a return risk measure. We will not pursue this here. Interestingly, Orlicz-Lorentz spaces have also been studied for non-convex functions $\phi$, see \cite[Section 5]{KMP03}.
\end{remark}

We will next show that Orlicz-Lorentz premia are subadditive for comonotonic risks; unlike for the distortion risk measures, see Proposition
\ref{p-distcomon}, one cannot expect additivity here because Orlicz premia already fail to have this property. For a concrete counter-example, take $\phi(t)=t^2$, any $\alpha<1$, $X=\mathds{1}_{[0,\frac{1}{2})}(U)$ and $Y=\mathds{1}_{[\frac{1}{2},1]}(U)$, where $U$ is uniformly distributed on $[0,1]$.

\begin{proposition}\label{p-distOrliczcom}
Let $X,Y\in (L_g^\phi)_+$ be comonotonic risks. Then 
$X+Y\in (L_g^\phi)_+$ and
\[
\pi_{g,\phi,\alpha}(X+Y)\leq \pi_{g,\phi,\alpha}(X)+\pi_{g,\phi,\alpha}(Y).
\]
\end{proposition}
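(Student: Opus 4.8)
The plan is to reduce everything to the comonotonic additivity of quantile functions together with the convexity of $\phi$. Recall the well-known fact, already used in the proof of Proposition~\ref{p-distcomon}, that for comonotonic risks $X$ and $Y$ one has $F_{X+Y}^{-1}(1-u)=F_X^{-1}(1-u)+F_Y^{-1}(1-u)$ for all $u\in(0,1)$. Since $X,Y\ge 0$, also $X+Y\ge 0$, and for a positive risk $W$ I write $\psi_W(x)=\int_0^1\phi(F_W^{-1}(1-u)/x)\,\d g(u)$, the function $\psi$ of Lemma~\ref{l-distOrlicz} associated with $W$.

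First I would verify that $X+Y\in(L_g^\phi)_+$. Choose $a,b>0$ with $\psi_X(a)<\infty$ and $\psi_Y(b)<\infty$; these exist because $X,Y\in(L_g^\phi)_+$. For each $u\in(0,1)$ the quantity $\frac{F_X^{-1}(1-u)+F_Y^{-1}(1-u)}{a+b}$ is the convex combination $\frac{a}{a+b}\cdot\frac{F_X^{-1}(1-u)}{a}+\frac{b}{a+b}\cdot\frac{F_Y^{-1}(1-u)}{b}$, so convexity of $\phi$ and monotone integration give $\psi_{X+Y}(a+b)\le\frac{a}{a+b}\psi_X(a)+\frac{b}{a+b}\psi_Y(b)<\infty$; hence $X+Y\in(L_g^\phi)_+$.

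For the inequality, put $a=\pi_{g,\phi,\alpha}(X)$ and $b=\pi_{g,\phi,\alpha}(Y)$. Since $\psi_X$ is decreasing and $\psi_X(x)\le 1-\alpha$ for some (indeed all large) $x$ by Lemma~\ref{l-distOrlicz}(f),(i), we get $\psi_X(a')\le 1-\alpha$ for every $a'>a$, and similarly $\psi_Y(b')\le 1-\alpha$ for every $b'>b$ (this also covers the degenerate cases $a=0$ or $b=0$, where $\psi_X\equiv 0$, resp. $\psi_Y\equiv 0$). Fixing such $a',b'$ and repeating the convexity estimate with weights $\frac{a'}{a'+b'}$ and $\frac{b'}{a'+b'}$ yields
\[
\psi_{X+Y}(a'+b')\le\frac{a'}{a'+b'}\psi_X(a')+\frac{b'}{a'+b'}\psi_Y(b')\le 1-\alpha,
\]
so $\pi_{g,\phi,\alpha}(X+Y)\le a'+b'$; letting $a'\downarrow a$ and $b'\downarrow b$ gives the claim.

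The argument is routine once the quantile-addition formula is available; the only things needing a touch of care are that the defining infimum of $\pi_{g,\phi,\alpha}$ may fail to be attained (handled by working with $a'>a$, $b'>b$ and taking limits) and the degenerate case where a premium vanishes. I expect the only genuine, if modest, point of friction to be citing the comonotonic quantile-addition formula in the exact form valid for \emph{all} $u\in(0,1)$, since that is what is needed to control the Lebesgue--Stieltjes integral against a possibly atomic $g$.
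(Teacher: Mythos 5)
Your proof is correct and follows essentially the same route as the paper: use the comonotonic identity $F_{X+Y}^{-1}=F_X^{-1}+F_Y^{-1}$ together with the convexity of $\phi$ to get the pointwise bound, integrate against $\d g$, and approximate the two premia from above (the paper picks $a_i<\pi+\varepsilon$ with $\psi(a_i)\le 1-\alpha$ and lets $\varepsilon\to 0$; you pick $a'\downarrow a$, $b'\downarrow b$, which is the same device). The one phrasing slip is that "$\psi_X$ decreasing and $\psi_X(x)\le 1-\alpha$ for large $x$" does not by itself yield $\psi_X(a')\le 1-\alpha$ for every $a'>a$; you also need that $a$ is the \emph{infimum} of $\{x>0:\psi_X(x)\le 1-\alpha\}$, so that there is $a''\in(a,a')$ in that set and then monotonicity applies — but this is clearly what you meant.
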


\begin{proof}
Let $\varepsilon>0$. Then there are $a_1,a_2>0$ with $a_1<\pi_{g,\phi,\alpha}(X)+\varepsilon$ and $a_2<\pi_{g,\phi,\alpha}(Y)+\varepsilon$
such that $\int_0^1 \phi(\frac{F_{X}^{-1}(1-u)}{a_1})\d g(u)\leq 1-\alpha$ and $\int_0^1 \phi(\frac{F_{Y}^{-1}(1-u)}{a_2})\d g(u)\leq 1-\alpha$. 

Now, by comonotonic additivity of $\VaR$, we have that $F_{X+Y}^{-1} = F_{X}^{-1}+F_{Y}^{-1}$ and therefore, using the convexity of $\phi$,
\begin{align*}
\phi\Big(\frac{F_{X+Y}^{-1}(1-u)}{a_1+a_2}\Big)&=\phi\Big(\frac{a_1}{a_1+a_2}\frac{F_{X}^{-1}(1-u)}{a_1}+\frac{a_2}{a_1+a_2}\frac{F_{Y}^{-1}(1-u)}{a_2}\Big)\\
&\leq \frac{a_1}{a_1+a_2}\phi\Big(\frac{F_{X}^{-1}(1-u)}{a_1}\Big)+\frac{a_2}{a_1+a_2}\phi\Big(\frac{F_{Y}^{-1}(1-u)}{a_2}\Big).
\end{align*}
Integrating with respect to $\d g$ we obtain by the properties of $a_1$ and $a_2$ that
\[
\int_0^1 \phi\Big(\frac{F_{X+Y}^{-1}(1-u)}{a_1+a_2}\Big)\d g(u)\leq 1-\alpha,
\]
which implies that $X+Y\in (L_g^\phi)_+$ and
\[
\pi_{g,\phi,\alpha}(X+Y)\leq a_1+a_2.
\]
Since $\varepsilon>0$ is arbitrary, the result follows.
\end{proof}

As for the Fatou properties, we have the following results.

\begin{proposition}\label{p-distOrliczFatou}
The Orlicz-Lorentz premium $\pi_{g,\phi,\alpha}$ has the Fatou property on $(L_g^\phi)_+$.
\end{proposition}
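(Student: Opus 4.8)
The plan is to follow the pattern of the proof of Proposition~\ref{p-distFatou}. First, since $\pi_{g,\phi,\alpha}$ is monotonic by Proposition~\ref{p-distOrliczbis} and $(L_g^\phi)_+$ has the sandwich closure property of Proposition~\ref{p-inv}, Remark~\ref{r-Fatou}(c) reduces the Fatou property to the following assertion: whenever $X_n\nearrow X$ with $X_n,X\in (L_g^\phi)_+$, then $\pi_{g,\phi,\alpha}(X_n)\to\pi_{g,\phi,\alpha}(X)$. By monotonicity the sequence $\pi_{g,\phi,\alpha}(X_n)$ increases to a limit $L\le\pi_{g,\phi,\alpha}(X)<\infty$, and I would argue by contradiction, assuming $L<\pi_{g,\phi,\alpha}(X)$ and fixing $a$ with $L<a<\pi_{g,\phi,\alpha}(X)$.

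The next step is to extract from $a>\pi_{g,\phi,\alpha}(X_n)$, using that the defining function $\psi$ of Lemma~\ref{l-distOrlicz} (with $X_n$ in place of $X$) is decreasing, Lemma~\ref{l-distOrlicz}(f), the uniform estimate
\[
\int_0^1\phi\Big(\frac{F_{X_n}^{-1}(1-u)}{a}\Big)\d g(u)\le 1-\alpha\qquad\text{for all }n.
\]
I would then rewrite this in terms of distribution functions: setting $Z_n:=\phi(X_n/a)$, which is a positive risk with $F_{Z_n}^{-1}(1-u)=\phi(F_{X_n}^{-1}(1-u)/a)$ by the quantile transformation rule, the identity \eqref{eq-defdist}, read as an identity in $[0,\infty]$ for positive risks (it comes from the Fubini computation in the proof of Proposition~\ref{p-equiv} and requires no finiteness assumption), turns the estimate into $\int_0^\infty g(\overline F_{Z_n}(x)-)\,\d x\le 1-\alpha$.

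Finally I would pass to the limit. Because $\phi(\cdot/a)$ is continuous and increasing and $X_n\nearrow X$, we have $Z_n\nearrow Z:=\phi(X/a)$; since the $Z_n$ are increasing this gives $\overline F_{Z_n}(x)\nearrow\overline F_Z(x)$ for every $x$, and then $g(\overline F_{Z_n}(x)-)\nearrow g(\overline F_Z(x)-)$ for every $x\ge 0$ because $u\mapsto g(u-)$ is increasing and left-continuous. The monotone convergence theorem then yields
\[
\int_0^1\phi\Big(\frac{F_X^{-1}(1-u)}{a}\Big)\d g(u)=\int_0^\infty g(\overline F_Z(x)-)\,\d x=\lim_{n\to\infty}\int_0^\infty g(\overline F_{Z_n}(x)-)\,\d x\le 1-\alpha,
\]
so $\pi_{g,\phi,\alpha}(X)\le a$ by the definition of the Orlicz-Lorentz premium, contradicting $a<\pi_{g,\phi,\alpha}(X)$.

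The main obstacle is precisely this passage to the limit: one cannot simply take limits inside $\int_0^1\,\cdot\,\d g(u)$ applied to the quantile functions, because $F_{X_n}^{-1}(1-u)$ need not converge to $F_X^{-1}(1-u)$ at the (possibly $\mu_g$-charged) flat parts of $F_X$. The device that makes the argument work, exactly as in Proposition~\ref{p-distFatou}, is to transfer everything to the distribution functions via \eqref{eq-defdist}: there the monotone convergence $\overline F_{Z_n}(x)\nearrow\overline F_Z(x)$ holds for \emph{every} $x$ and the integration is against Lebesgue measure, so monotone convergence applies cleanly and also delivers, en passant, that $\phi(X/a)\in L_g$. (Alternatively one could combine the Fatou property of $\rho_g$ from Proposition~\ref{p-distFatou} with \eqref{eq-OL}, but one would still need the monotone-convergence argument to know a priori that $\phi(X/a)$ lies in $L_g$.)
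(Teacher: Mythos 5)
Your proof is correct and takes essentially the same route as the paper: it reduces to monotone sequences via Remark~\ref{r-Fatou}(c), passes the limit through the integral by rewriting $\int_0^1\phi\big(F_{X_n}^{-1}(1-u)/a\big)\,\d g(u)$ via \eqref{eq-defdist} in terms of survival functions, and applies monotone convergence. The only cosmetic difference is that you argue by contradiction with a fixed $a\in(L,\pi_{g,\phi,\alpha}(X))$, whereas the paper takes $a=\sup_n\pi_{g,\phi,\alpha}(X_n)$ and works with $a+\varepsilon$ for arbitrary $\varepsilon>0$; these are interchangeable.
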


\begin{proof}
By Remark \ref{r-Fatou}(c) and Proposition \ref{p-distOrliczbis} it suffices to show that if $X_n\nearrow X$ and $X_1,X\in (L_g^\phi)_+$ then $\pi_{g,\phi,\alpha}(X_n)\to\pi_{g,\phi,\alpha}(X)$, or, equivalently, $\pi_{g,\phi,\alpha}(X)\leq\sup_n\pi_{g,\phi,\alpha}(X_n)$.

We first note that by \eqref{eq-defdist} and a property of quantile functions we have for $X\in (L_g^\phi)_+$ and $a>0$ 
\[
\int_0^1 \phi\Big(\frac{F_X^{-1}(1-u)}{a}\Big) \d g(u) = \int_0^{\infty} g\big(\overline{F}_{\phi(\frac{X}{a})}(x)-\big)\d x.
\]

Let $X_n\nearrow X$ with $X_1,X\in (L_g^\phi)_+$. As in the proof of Proposition \ref{p-distFatou} one deduces that 
\begin{equation}\label{eq-Fat}
\int_0^1 \phi\Big(\frac{F_{X_n}^{-1}(1-u)}{a}\Big) \d g(u)\nearrow\int_0^1 \phi\Big(\frac{F_X^{-1}(1-u)}{a}\Big) \d g(u).
\end{equation}

Take $a= \sup_n\pi_{g,\phi,\alpha}(X_n)$ and $\varepsilon>0$. By definition, we have for any $n$,
\[
\int_0^1 \phi\Big(\frac{F_{X_n}^{-1}(1-u)}{a+\varepsilon}\Big) \d g(u)\leq 1-\alpha.
\]
By \eqref{eq-Fat} we find that $\int_0^1 \phi(\frac{F_X^{-1}(1-u)}{a+\varepsilon}) \d g(u)\leq 1-\alpha$ and thus $\pi_{g,\phi,\alpha}(X)\leq a+\varepsilon$. Since $\varepsilon>0$ is arbitrary, the claim follows.
\end{proof}

In the sequel, the following property ($P_{g,\phi}$) will be crucial:\\[.5em]
\begin{minipage}{.8\textwidth}%
      -- $g(0)=0$,\\  -- $g$ is continuous, and\\ -- \textit{either} $g=0$ on some neighbourhood of $0$\\ \phantom{--} \textit{or} $\phi$ satisfies the $\Delta_2$-condition.
\end{minipage}

\begin{proposition}\label{p-OLrevFatou}~ 

\emph{(a)} If \emph{($P_{g,\phi}$)} holds, then $\pi_{g,\phi,\alpha}$ has the reverse Fatou property on $(L_g^\phi)_+$, and hence the Lebesgue property.

\emph{(b)} If the underlying probability space $(\Omega,\mathcal{A},P)$ is atomless, then $\pi_{g,\phi,\alpha}$ has the reverse Fatou property on $(L_g^\phi)_+$ if and only if \emph{($P_{g,\phi}$)} holds.
\end{proposition}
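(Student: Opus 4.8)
My plan is to prove (a) directly, deduce the sufficiency half of (b) from it, and handle the necessity half of (b) by a two-case construction of counterexamples. For (a): since $(L_g^\phi)_+$ has the sandwich property (Proposition \ref{p-inv}) and $\pi_{g,\phi,\alpha}$ is monotonic (Proposition \ref{p-distOrliczbis}), Remark \ref{r-Fatou}(c) reduces the reverse Fatou property to the implication: $X_n\searrow X$ with $X_1,X\in(L_g^\phi)_+$ $\Rightarrow$ $\pi_{g,\phi,\alpha}(X_n)\to\pi_{g,\phi,\alpha}(X)$. Writing $\psi_Y(a)=\int_0^1\phi\big(F_Y^{-1}(1-u)/a\big)\,\d g(u)$, the key step will be to show that $\psi_{X_n}(a)\to\psi_X(a)$ for \emph{every} $a>0$. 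I would prove this via the representation $\psi_Y(a)=\int_0^\infty g\big(\overline F_{\phi(Y/a)}(x)-\big)\,\d x$ from the proof of Proposition \ref{p-distOrliczFatou}: as $\phi$ is continuous and increasing, $X_n\searrow X$ gives $\phi(X_n/a)\searrow\phi(X/a)$, hence $\overline F_{\phi(X_n/a)}(x)\searrow\overline F_{\phi(X/a)}(x-)=\overline F_{\phi(X/a)}(x)$ for all but countably many $x$; since $g$ is continuous, the integrands $g\big(\overline F_{\phi(X_n/a)}(x)-\big)$ then decrease to $g\big(\overline F_{\phi(X/a)}(x)-\big)$ a.e.; and a dominating function exists because $\psi_{X_1}(a)<\infty$ for \emph{all} $a>0$ — this is exactly what the last clause of ($P_{g,\phi}$) buys, through Lemma \ref{l-distOrlicz}(b) if $g$ vanishes near $0$, and through Lemma \ref{l-distOrlicz}(c) (using $X_1\in(L_g^\phi)_+$) if $\phi\in\Delta_2$. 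Dominated convergence then finishes this step.

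To conclude (a): by monotonicity $\pi_{g,\phi,\alpha}(X_n)$ decreases to some $L\geq a_0:=\pi_{g,\phi,\alpha}(X)$; if $L>a_0$, pick $a\in(a_0,L)$, so $\psi_{X_n}(a)>1-\alpha$ for all $n$, whence $\psi_X(a)\geq1-\alpha$; but $\psi_X(a)>1-\alpha$ would force $a<a_0$ and $\psi_X(a)=1-\alpha$ would force $a=a_0$ by Proposition \ref{p-distOrlicz}(b), both contradicting $a>a_0$. Hence $L=a_0$, i.e.\ the reverse Fatou property holds; combined with the Fatou property (Proposition \ref{p-distOrliczFatou}) it gives the Lebesgue property.

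For (b), sufficiency is (a). For necessity, assume ($P_{g,\phi}$) fails and build a sequence violating the reverse Fatou property. If $g(0)\neq0$ or $g$ is discontinuous, then (using $g(1-)=g(1)$) there is $v\in[0,1)$ with $g(v-)<g(v)$; as in the proof of Proposition \ref{p-distrevFatou}(b) I would use atomlessness and \cite[Theorem 8.14.2]{GeGe25} to pick $A_n\searrow A$ with $P(A_n)=p_n\searrow v$, $p_n>v$, $P(A)=v$, and set $X_n=\mathds{1}_{A_n}\searrow X=\mathds{1}_A$ (bounded, hence in $(L_g^\phi)_+$). A short computation gives $\int_0^1\phi\big(F_{\mathds{1}_B}^{-1}(1-u)/a\big)\,\d g(u)=\phi(1/a)\,g(P(B)-)$, so $\pi_{g,\phi,\alpha}(\mathds{1}_B)=1/\phi^{-1}\big((1-\alpha)/g(P(B)-)\big)$, where $\phi^{-1}(s)=\sup\{t\geq0:\phi(t)\leq s\}$ is the continuous, strictly increasing generalized inverse and $1/\phi^{-1}(\infty):=0$; since $g(p_n-)\geq g(v)>g(v-)$ this gives $\pi_{g,\phi,\alpha}(X_n)\geq1/\phi^{-1}\big((1-\alpha)/g(v)\big)>1/\phi^{-1}\big((1-\alpha)/g(v-)\big)=\pi_{g,\phi,\alpha}(X)$ for all $n$, contradicting the reverse Fatou property. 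Otherwise $g(0)=0$ and $g$ is continuous, so the failure of ($P_{g,\phi}$) means $g$ vanishes on no neighbourhood of $0$ (i.e.\ $g>0$ on $(0,1]$) and $\phi\notin\Delta_2$ — precisely the hypotheses of Lemma \ref{l-conv}, so Lemma \ref{l-conv}(b) supplies $X_n\in(L_g^\phi)_+$ with $X_n\searrow0$ and $\pi_{g,\phi,\alpha}(X_n)\geq\tfrac12$ for all $n$ while $\pi_{g,\phi,\alpha}(0)=0$, again breaking the reverse Fatou property.

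The main obstacle is the dominated-convergence claim in (a): the convergence $\psi_{X_n}(a)\to\psi_X(a)$ has to be established for all $a>0$, not merely the value witnessing $X_1\in(L_g^\phi)_+$, and it is exactly the dichotomy ``$g$ vanishes near $0$ or $\phi\in\Delta_2$'' of ($P_{g,\phi}$) that furnishes this global integrability (via Lemma \ref{l-distOrlicz}), while continuity of $g$ is what lets the integrands pass to the limit. In (b) the substantive construction (the second case) has already been carried out in Lemma \ref{l-conv}, so only the indicator computation of the first case remains.
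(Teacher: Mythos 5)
Your proof is correct, and your treatment of part (b) is essentially identical to the paper's (same jump-point/indicator construction in the first case, same appeal to Lemma \ref{l-conv}(b) in the second). In part (a) you take a slightly different route: you first prove the pointwise limit $\psi_{X_n}(a)\to\psi_X(a)$ for \emph{every fixed} $a>0$ (via the representation $\psi_Y(a)=\int_0^\infty g\big(\overline F_{\phi(Y/a)}(x)-\big)\,\d x$, continuity of $g$, and dominated convergence, with $(P_{g,\phi})$ guaranteeing the dominating function through Lemma \ref{l-distOrlicz}(b)/(c)), and then deduce $\pi_{g,\phi,\alpha}(X_n)\to\pi_{g,\phi,\alpha}(X)$ by a short contradiction argument using only Proposition \ref{p-distOrlicz}(b). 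The paper instead sets $a_n=\pi_{g,\phi,\alpha}(X_n)$, invokes Proposition \ref{p-distOrlicz}(c) to obtain the identity $\psi_{X_n}(a_n)=1-\alpha$, passes to the limit in that identity as $a_n\to a$, and concludes again by Proposition \ref{p-distOrlicz}(b). Your variant avoids Proposition \ref{p-distOrlicz}(c) at the cost of a small case analysis at the end; the net hypothesis tree is the same since Proposition \ref{p-distOrlicz}(c) is itself a consequence of Lemma \ref{l-distOrlicz}(b)/(c), which your DCT step uses directly. One small point worth making explicit when writing this up: the a.e.\ convergence $g\big(\overline F_{\phi(X_n/a)}(x)-\big)\to g\big(\overline F_{\phi(X/a)}(x)-\big)$ genuinely needs $g(0)=0$ (and not merely continuity of $g$ on $(0,1]$), since at points $x$ where $\overline F_{\phi(X/a)}(x)=0$ one has $g(0-)=0$ on the right while the left side tends to $g(0+)=g(0)$; both halves of the continuity hypothesis in $(P_{g,\phi})$ are therefore used.
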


\begin{proof}
(a) It suffices, by Remark \ref{r-Fatou}(c) and Proposition \ref{p-distOrliczbis}, to prove that if $X_n\searrow X$ and $X_1, X \in (L_g^\phi)_+$ then $\pi_{g,\phi,\alpha}(X)\geq\inf_n\pi_{g,\phi,\alpha}(X_n)$. This is trivial if $a:=\inf_n\pi_{g,\phi,\alpha}(X_n)=0$. So suppose that $a>0$, and let $a_n= \pi_{g,\phi,\alpha}(X_n)$, $n\geq 1$. Since $a_n>0$, Proposition \ref{p-distOrlicz}(c) with ($P_{g,\phi}$) implies that
\[
\int_0^1 \phi\Big(\frac{F_{X_n}^{-1}(1-u)}{a_n}\Big) \d g(u) = 1-\alpha
\]
for all $n$. Hence, by \eqref{eq-defdist}, a property of quantile functions, the continuity of $g$, and the fact that $g(0-)=g(0)=0$,
\[
\int_0^{\infty} g\big(\overline{F}_{\phi(\frac{X_n}{a_n})}(x)\big)\d x=1-\alpha
\]
for all $n$. Now since $X_n\to X$, $a_n\to a \neq 0$, and $\phi$ is continuous, we have that $\phi(\frac{X_n}{a_n})\to \phi(\frac{X}{a})$, and hence $\overline{F}_{\phi(\frac{X_n}{a_n})}(x)\to \overline{F}_{\phi(\frac{X}{a})}(x)$ for all $x\geq 0$ with at most countably many exceptions. In view of the continuity of $g$ we deduce that 
\[
\int_0^{\infty} g\big(\overline{F}_{\phi(\frac{X}{a})}(x)\big)\d x=1-\alpha,
\]
where we have used the dominated convergence theorem; note that $0\leq X_n\leq X_1$ and, by Lemma \ref{l-distOrlicz}(b) and (c), $\int_0^{\infty} g\big(\overline{F}_{\phi(\frac{X_1}{a})}(x)\big)\d x<\infty$.

Now, by Proposition \ref{p-distOrlicz}(b), we obtain that $\pi_{g,\phi,\alpha}(X)=a$.

(b) Suppose that $\pi_{g,\phi,\alpha}$ possesses the reverse Fatou property. 

First, suppose that $g$ is not continuous or that $g(0)\neq 0$. Then there is some $u\in\mathopen[0,1)$ such that $g(u-)<g(u)$; recall that $g(0-)=0$. Let $(p_n)_n$ be a strictly decreasing sequence in $[0,1]$ with limit $u$. As in the proof of Proposition \ref{p-distrevFatou}, we define $X_n=\mathds{1}_{A_n}$ and $X=\mathds{1}_{A}$, where $(A_n)_n$ is a decreasing sequence of sets in $\mathcal{A}$ with $P(A_n)=p_n$, $n\geq 1$, and $A:=\bigcap_{n=1}^\infty A_n$, which satisfies $P(A)=u$. Then the $X_n$ belong to $(L_g^\phi)_+$ as bounded risks, and $X_n\to X$ on $\Omega$. Also, $\pi_{g,\phi,\alpha}(X_n)=\inf\{a>0: \phi(\frac{1}{a})g(p_n-)\leq 1-\alpha\}$, hence
\[
\pi_{g,\phi,\alpha}(X_n)=\frac{1}{\phi^{-1}\Big(\frac{1-\alpha}{g(p_n-)}\Big)}\geq \frac{1}{\phi^{-1}\Big(\frac{1-\alpha}{g(u)}\Big)}
\]
for all $n$, while $\pi_{g,\phi,\alpha}(X)= 1/\phi^{-1}(\frac{1-\alpha}{g(u-)})$ for $u>0$ and $\pi_{g,\phi,\alpha}(X)= 0$ for $u=0$. Since $\phi^{-1}$ is strictly increasing on $(0,\infty)$, we see that $\pi_{g,\phi,\alpha}(X)<\limsup_{n\to\infty}\pi_{g,\phi,\alpha}(X_n)$, contradicting the reverse Fatou property. So we have that $g$ is continuous and $g(0)=0$.

Secondly, suppose that $g>0$ on $(0,1]$ and that $\phi$ does not satisfy the $\Delta_2$-condition. Then, by Lemma \ref{l-conv}(b), there are risks $X_n\in (L_g^\phi)_+$ such that $X_n\searrow 0$ and $\pi_{g,\phi,\alpha}(X_n)\geq \frac{1}{2}$ for all $n$. This contradicts the reverse Fatou property.
\end{proof}

When we decide to work on $L^\infty_+$, Proposition \ref{p-distOrlicz}(c) tells us that we do not need to demand that $g$ is constant on some neighbourhood of $0$ or that $\phi$ satisfies the $\Delta_2$-condition. Thus the same proof as above yields the following.

\begin{proposition}\label{p-OLrevFatoubdd}~

\emph{(a)} If $g(0)=0$ and $g$ is continuous, then $\pi_{g,\phi,\alpha}$ has the reverse Fatou property on $L^\infty_+$, and hence the Lebesgue property.

\emph{(b)} If the underlying probability space $(\Omega,\mathcal{A},P)$ is atomless, then $\pi_{g,\phi,\alpha}$ has the reverse Fatou property on $L^\infty_+$ if and only if $g(0)=0$ and $g$ is continuous.
\end{proposition}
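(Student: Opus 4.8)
The plan is to follow the proof of Proposition \ref{p-OLrevFatou} essentially word for word, exploiting the single structural simplification that occurs on $L^\infty_+$: the two places where the hypotheses ``$g=0$ near $0$ or $\phi\in\Delta_2$'' were needed — namely (i) to ensure, via Proposition \ref{p-distOrlicz}(c), that the defining inequality becomes an equality at the minimum, and (ii) in part (b), to manufacture a counterexample when $\phi\notin\Delta_2$ via Lemma \ref{l-conv}(b) — both disappear. For (i), Proposition \ref{p-distOrlicz}(c) already gives the equality for bounded risks with no extra assumption; for (ii), the bad situation cannot arise because every risk under consideration is bounded.

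For part (a): By Remark \ref{r-Fatou}(c) (applicable since $L^\infty_+$ has the obvious sandwich property) together with the monotonicity of $\pi_{g,\phi,\alpha}$ (Proposition \ref{p-distOrliczbis}), it suffices to show that $\pi_{g,\phi,\alpha}(X)\geq\inf_n\pi_{g,\phi,\alpha}(X_n)$ whenever $X_n\searrow X$ with $X_1,X\in L^\infty_+$. Put $a=\inf_n\pi_{g,\phi,\alpha}(X_n)$; if $a=0$ this is trivial, so assume $a>0$ and set $a_n=\pi_{g,\phi,\alpha}(X_n)\geq a>0$. Since the $X_n$ are bounded, Proposition \ref{p-distOrlicz}(c) applies \emph{unconditionally} and yields $\int_0^1\phi(F_{X_n}^{-1}(1-u)/a_n)\,\d g(u)=1-\alpha$ for all $n$. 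As in the proof of Proposition \ref{p-distOrliczFatou}, rewrite this via \eqref{eq-defdist} and a property of quantile functions, using the continuity of $g$ and $g(0-)=g(0)=0$, as $\int_0^\infty g(\overline{F}_{\phi(X_n/a_n)}(x))\,\d x=1-\alpha$. Because $X_n\to X$, $a_n\to a\neq 0$ and $\phi$ is continuous, $\phi(X_n/a_n)\to\phi(X/a)$, hence $\overline{F}_{\phi(X_n/a_n)}(x)\to\overline{F}_{\phi(X/a)}(x)$ off a countable set, and continuity of $g$ transfers the convergence to the integrands; since $0\leq X_n/a_n\leq X_1/a$ and $X_1\in L^\infty$, the integrands are dominated by $g(\overline{F}_{\phi(X_1/a)}(x))$, which is integrable on $[0,\infty)$ as $\phi(X_1/a)$ is bounded. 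Dominated convergence then gives $\int_0^\infty g(\overline{F}_{\phi(X/a)}(x))\,\d x=1-\alpha$, and Proposition \ref{p-distOrlicz}(b) delivers $\pi_{g,\phi,\alpha}(X)=a$. Combining with the Fatou property (Proposition \ref{p-distOrliczFatou}) gives the Lebesgue property.

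For part (b): the ``if'' direction is (a). For ``only if'', suppose $\pi_{g,\phi,\alpha}$ has the reverse Fatou property on $L^\infty_+$ but that $g$ is not continuous or $g(0)\neq 0$; then there is $u\in[0,1)$ with $g(u-)<g(u)$ (recall $g(0-)=0$). Choose $p_n\searrow u$ strictly, and, using atomlessness and \cite[Theorem 8.14.2]{GeGe25}, a decreasing sequence $(A_n)_n$ in $\mathcal{A}$ with $P(A_n)=p_n$ and $A:=\bigcap_n A_n$ of measure $u$. With $X_n=\mathds{1}_{A_n}$, $X=\mathds{1}_A$ in $L^\infty_+$ and $X_n\to X$, the computation already performed in the proof of Proposition \ref{p-OLrevFatou}(b) gives $\pi_{g,\phi,\alpha}(X_n)=1/\phi^{-1}\big(\tfrac{1-\alpha}{g(p_n-)}\big)\geq 1/\phi^{-1}\big(\tfrac{1-\alpha}{g(u)}\big)$ for all $n$, while $\pi_{g,\phi,\alpha}(X)=1/\phi^{-1}\big(\tfrac{1-\alpha}{g(u-)}\big)$ if $u>0$ and $\pi_{g,\phi,\alpha}(X)=0$ if $u=0$. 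Since $\phi^{-1}$ is strictly increasing on $(0,\infty)$, this yields $\pi_{g,\phi,\alpha}(X)<\limsup_n\pi_{g,\phi,\alpha}(X_n)$, contradicting the reverse Fatou property; hence $g$ is continuous and $g(0)=0$. In contrast to Proposition \ref{p-OLrevFatou}(b) there is no second case to handle, since the $\Delta_2$ obstruction produced by Lemma \ref{l-conv}(b) involves unbounded risks and is irrelevant here.

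I do not expect a genuine obstacle: the whole point is that Proposition \ref{p-distOrlicz}(c) is unconditional for bounded risks, which is precisely what removes the hypotheses one needs on all of $(L_g^\phi)_+$. The only routine care required is checking integrability of the dominating function $g(\overline{F}_{\phi(X_1/a)}(\cdot))$, which holds because $\phi(X_1/a)$ is bounded, so the integrand vanishes beyond a finite point.
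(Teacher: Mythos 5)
Your proof is correct and follows essentially the same approach as the paper: the paper in fact justifies this proposition by simply saying ``the same proof as above [Proposition \ref{p-OLrevFatou}] yields the following,'' noting that Proposition \ref{p-distOrlicz}(c) removes the need for ($P_{g,\phi}$) on $L^\infty_+$. You have faithfully and carefully spelled out the details, including the correct domination argument for the integrand (boundedness of $X_1$ makes $g(\overline{F}_{\phi(X_1/a)}(\cdot))$ have compact support) and the correct observation that the $\Delta_2$ obstruction from Lemma \ref{l-conv}(b) only produces unbounded risks and is therefore irrelevant on $L^\infty_+$.
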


Also, $\pi_{g,\phi,\alpha}$ quite trivially preserves dominance in the stochastic order. Recall that a risk $X$ is said to be \textit{smaller} than a risk $Y$ \textit{in stochastic dominance} (or \textit{in stochastic order}), denoted as $X\leq_{\text{st}} Y$, if
\[
F_X(x)\geq F_Y(x)\quad\text{for all $x\in \R$},
\]
which is equivalent to saying that $F_X^{-1}(u)\leq F_Y^{-1}(u)$ for all $u\in (0,1)$; see \cite{DDGK05}, \cite{DVGKTV06}, \cite{ShSh07}.

Thus we have:

\begin{proposition}\label{p-st}
Let $Y\in (L_g^\phi)_+$ and $X\geq 0$. Then 
\[
X\leq_{\emph{\text{st}}} Y \Longrightarrow X\in (L_g^\phi)_+ \text{ and } \pi_{g,\phi,\alpha}(X)\leq \pi_{g,\phi,\alpha}(Y).
\]
\end{proposition}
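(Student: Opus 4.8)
The plan is to reduce everything to the quantile characterization of stochastic dominance recalled just above, namely that $X\leq_{\text{st}} Y$ is equivalent to $F_X^{-1}(u)\leq F_Y^{-1}(u)$ for all $u\in (0,1)$. Since both $X$ and $Y$ are positive, this gives $0\leq F_X^{-1}(1-u)\leq F_Y^{-1}(1-u)$ for all $u\in (0,1)$; because the Young function $\phi$ is increasing, it follows that $\phi\big(F_X^{-1}(1-u)/a\big)\leq \phi\big(F_Y^{-1}(1-u)/a\big)$ for every $a>0$ and every such $u$. Integrating against the probability measure $\mu_g$ then yields the comparison
\[
\int_0^1 \phi\Big(\frac{F_X^{-1}(1-u)}{a}\Big)\d g(u)\leq \int_0^1 \phi\Big(\frac{F_Y^{-1}(1-u)}{a}\Big)\d g(u)\qquad\text{for all }a>0.
\]

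From here both assertions are immediate. As $Y\in (L_g^\phi)_+$, there is some $a>0$ making the right-hand side finite, and then the inequality forces the left-hand side to be finite too, so $X\in (L_g^\phi)_+$. For the premium comparison, the same inequality shows that every $a>0$ that is admissible in the infimum defining $\pi_{g,\phi,\alpha}(Y)$ --- that is, with $\int_0^1\phi\big(F_Y^{-1}(1-u)/a\big)\d g(u)\leq 1-\alpha$ --- is also admissible in the infimum defining $\pi_{g,\phi,\alpha}(X)$; taking the infimum over such $a$ gives $\pi_{g,\phi,\alpha}(X)\leq \pi_{g,\phi,\alpha}(Y)$.

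There is essentially no obstacle here: the hypothesis $X,Y\geq 0$ is precisely what allows one to drop the absolute values in Definition \ref{d-OL} and invoke monotonicity of $\phi$ on $[0,\infty)$ directly, without worrying about how the quantile functions behave on the negative axis. In effect the statement is the law-level version of the monotonicity in Proposition \ref{p-distOrliczbis}, and it also confirms the law-invariance of $\pi_{g,\phi,\alpha}$. An alternative would be to use that $X\leq_{\text{st}} Y$ permits a coupling $\widetilde X\leq \widetilde Y$ with $\widetilde X$ distributed as $X$ and $\widetilde Y$ as $Y$ and then quote Proposition \ref{p-distOrliczbis}, but the direct quantile argument is shorter and needs no auxiliary probability space.
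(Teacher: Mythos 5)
Your proof is correct and follows the same route the paper intends: the paper itself offers no separate proof, merely recalling the quantile characterization $X\leq_{\text{st}} Y \Leftrightarrow F_X^{-1}\leq F_Y^{-1}$ on $(0,1)$ immediately before the statement and calling the conclusion trivial, which is exactly the pointwise inequality plus monotonicity of $\phi$ that you spell out.
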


\subsection{Orlicz-Lorentz: the concave case}
In order to deduce stronger properties of the Orlicz-Lorentz premia we will now demand that $g$ be concave.

We first need the following technical result on general positive risks, which is an immediate consequence of a well known representation of the Tail Value at Risk. 

\begin{lemma}\label{l-TVaR}
Let $\phi$ be a Young function, $X\geq 0$ a risk, $a>0$ and $0<\beta\leq 1$. Then, for any $x\geq 0$,
\begin{align*}
\int_0^\beta \phi\Big(\frac{F_X^{-1}(1-u)}{a}\Big) \d u\leq  E\Big(\Big(\phi\Big(\frac{X}{a}\Big)-x\Big)^+\Big)+\beta x,
\end{align*}
with equality at $x=\phi\big(\frac{F_X^{-1}(1-\beta)}{a}\big)$ if $\beta<1$ and $x=0$ if $\beta=1$.
\end{lemma}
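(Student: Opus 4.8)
The plan is to recognise the left-hand side as essentially a Tail Value at Risk of the transformed risk $Z:=\phi(X/a)$ and then quote the Rockafellar--Uryasev type representation of Proposition \ref{p-TVaR}. Put $\alpha=1-\beta\in[0,1)$. Since $\phi$ is convex with $\phi(0)=0$ and takes values in $[0,\infty)$, it is non-decreasing and continuous on $[0,\infty)$; extending it by $0$ on the negative axis leaves it continuous and increasing on all of $\R$ and, as $X\ge 0$, does not change $\phi(X/a)$. Hence the quantile composition property recalled in the Introduction gives $F_Z^{-1}(v)=\phi(F_X^{-1}(v)/a)$ for $v\in(0,1)$, and substituting $v=1-u$ turns the left-hand side into $\int_\alpha^1 F_Z^{-1}(v)\,\d v$.

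Suppose first that $0<\beta<1$ and $E(Z)<\infty$. By the formula for TVaR recalled in Example \ref{ex-class}, $\int_\alpha^1 F_Z^{-1}(v)\,\d v=(1-\alpha)\,\TVaR_\alpha(Z)=\beta\,\TVaR_\alpha(Z)$, while Proposition \ref{p-TVaR} applied to $Z$ gives, for every $x\in\R$, $\TVaR_\alpha(Z)\le \tfrac1\beta E((Z-x)^+)+x$, with equality at $x=F_Z^{-1}(\alpha)=\phi(F_X^{-1}(1-\beta)/a)$. Multiplying through by $\beta$ yields precisely the claimed inequality together with the asserted equality case (and this $x$ is $\ge 0$). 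If $\beta=1$ we argue directly: $\int_0^1 F_Z^{-1}(v)\,\d v=E(Z)=E((Z-0)^+)$ is the equality case at $x=0$, and for arbitrary $x\ge 0$ the bound $(Z-x)^+\ge Z-x$ gives $E((Z-x)^+)+x\ge E(Z)$.

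There remains the degenerate case $E(Z)=\infty$, which is permitted since $X$ is an arbitrary positive risk and where Proposition \ref{p-TVaR} does not apply; but then $(Z-x)^+\ge Z-x$ with $x\ge 0$ forces $E((Z-x)^+)=\infty$, so the right-hand side is $+\infty$ and the inequality is trivial. In fact one can sidestep the case distinctions entirely by working only with quantiles: applying the quantile--integral identity to the positive measurable function $t\mapsto(t-x)^+$ of $Z$ gives $E((Z-x)^+)=\int_0^1(F_Z^{-1}(v)-x)^+\,\d v\ge\int_\alpha^1 (F_Z^{-1}(v)-x)\,\d v=\int_\alpha^1 F_Z^{-1}(v)\,\d v-\beta x$, which is the inequality for every $x\ge 0$; and for $x=F_Z^{-1}(\alpha)$ the monotonicity of $F_Z^{-1}$ makes the integrand vanish on $(0,\alpha)$ and the truncation $(\cdot)^+$ superfluous on $(\alpha,1)$, producing equality.

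I do not anticipate a genuine obstacle: the argument is driven entirely by the TVaR representation of Proposition \ref{p-TVaR} (or, equivalently, by the elementary quantile estimate above). The only points that need a line of justification are that $\phi$, once extended to $\R$, is continuous and increasing so that the quantile composition identity applies, and the bookkeeping in the boundary situations $\beta=1$ and $E(\phi(X/a))=\infty$, in which the inequality degenerates but still holds.
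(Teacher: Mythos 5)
Your proposal is correct and follows essentially the same route as the paper's own proof: both reduce the claim to Proposition~\ref{p-TVaR} applied to the transformed risk $\phi(X/a)$ via the quantile composition identity $F^{-1}_{\phi(X/a)}=\phi(F_X^{-1}/a)$, and both dispose of the boundary situations $\beta=1$ and infinite expectation by separate elementary observations. The only substantive difference is the self-contained quantile argument you add at the end, which derives the inequality directly from $E((Z-x)^+)=\int_0^1(F_Z^{-1}(v)-x)^+\,\d v\geq\int_\alpha^1(F_Z^{-1}(v)-x)\,\d v$ and obtains equality by monotonicity of $F_Z^{-1}$; this uniformly handles all cases (including $E(Z)=\infty$ and $\beta=1$) without invoking Proposition~\ref{p-TVaR} at all, and is in that respect a modest streamlining of the paper's three-case argument.
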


\begin{proof}
Let $X$ be a positive risk. If $X\in L^1$ and $0<\alpha<1$, we have by Proposition \ref{p-TVaR} that, for any $x\geq 0$,
\[
\frac{1}{1-\alpha}\int_\alpha^1 F_X^{-1}(u)\d u \leq \frac{1}{1-\alpha}E\big((X-x)^+\big)+x,
\]
with equality at $x=F_X^{-1}(\alpha)$. If $\alpha=0$ then
\[
\int_\alpha^1 F_X^{-1}(u)\d u =E(X) \leq E\big((X-x)^+\big)+x
\]
for any $x\geq 0$ since $y\leq (y-x)^++x$, $y\in \R$, and we have equality for $x=0$.  

If $E(X)=\infty$, both sides of these inequalities are infinite for any $x\geq 0$. 

Writing $\beta=1-\alpha$ and replacing $X$ by $\phi\big(\frac{X}{a}\big)$ then proves the claim, where we note that $F^{-1}_{\phi(X/a)} = \phi(F_X^{-1}/a)$.
\end{proof}

We can now show that $\pi_{g,\phi,\alpha}$ preserves stop-loss order. Here, a risk $X$ is said to be \textit{smaller} than a risk $Y$ \textit{in stop-loss order} (or \textit{in increasing convex order}), denoted as $X\leq_{\text{sl}} Y$, if 
\[
E((X-d)^+)\leq E((Y-d)^+)\quad \text{for all $d\in \R$},
\]
see \cite{DDGK05}, \cite{DVGKTV06}, \cite{ShSh07}. If $X$ and $Y$ are positive, this is equivalent to saying that $E(\varphi(X))\leq E(\varphi(Y))$ for all increasing convex functions $\varphi$ on $\mathbb{R}$ for which the expectations exist, see \cite[Theorem 4.A.2]{ShSh07}. 

\begin{proposition}\label{p-distOrliczsl}
Let $g$ be concave. Let $X\geq 0$ and $Y\in (L_g^\phi)_+$. Then 
\[
X\leq_{\emph{\text{sl}}} Y \Longrightarrow X\in (L_g^\phi)_+ \text{ and } \pi_{g,\phi,\alpha}(X)\leq \pi_{g,\phi,\alpha}(Y).
\]
\end{proposition}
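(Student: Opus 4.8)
The plan is to establish, for every $a>0$, the single inequality
\[
\int_0^1 \phi\Big(\frac{F_X^{-1}(1-u)}{a}\Big)\d g(u)\leq \int_0^1 \phi\Big(\frac{F_Y^{-1}(1-u)}{a}\Big)\d g(u)
\]
and then to read off both conclusions: since $Y\in(L_g^\phi)_+$ the right-hand side is finite for some $a>0$, hence so is the left-hand side, which gives $X\in(L_g^\phi)_+$; and every $a$ for which the right-hand side is $\leq 1-\alpha$ also makes the left-hand side $\leq 1-\alpha$, so comparing the defining infima yields $\pi_{g,\phi,\alpha}(X)\leq\pi_{g,\phi,\alpha}(Y)$. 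Note that this reduction does not use concavity of $g$.

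First I would prove, for every $a>0$ and every $\beta\in(0,1]$, the intermediate inequality
\begin{equation}\tag{$\star$}
\int_0^\beta \phi\Big(\frac{F_X^{-1}(1-u)}{a}\Big)\d u\leq \int_0^\beta \phi\Big(\frac{F_Y^{-1}(1-u)}{a}\Big)\d u,
\end{equation}
which again does not use concavity. Fix $\beta<1$ and put $x_\beta=\phi\big(F_Y^{-1}(1-\beta)/a\big)\geq 0$ (finite because $1-\beta\in(0,1)$). By Lemma~\ref{l-TVaR} applied to $Y$ one has the equality $\int_0^\beta \phi(F_Y^{-1}(1-u)/a)\,\d u=E\big((\phi(Y/a)-x_\beta)^+\big)+\beta x_\beta$, while the same lemma applied to $X$ with the (in general non-optimal) choice $x=x_\beta$ gives $\int_0^\beta \phi(F_X^{-1}(1-u)/a)\,\d u\leq E\big((\phi(X/a)-x_\beta)^+\big)+\beta x_\beta$. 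Since $t\mapsto(\phi(t/a)-x_\beta)^+$ is increasing and convex on $[0,\infty)$ and $X\leq_{\text{sl}} Y$ with $X,Y\geq 0$, the characterization of stop-loss order recalled before the statement (\cite[Theorem 4.A.2]{ShSh07}) yields $E\big((\phi(X/a)-x_\beta)^+\big)\leq E\big((\phi(Y/a)-x_\beta)^+\big)$; chaining the three relations proves $(\star)$ for $\beta<1$. For $\beta=1$, $(\star)$ reduces to $E(\phi(X/a))\leq E(\phi(Y/a))$, which follows from $X\leq_{\text{sl}} Y$ applied to the increasing convex map $t\mapsto\phi(t/a)$.

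The hard part will be to upgrade $(\star)$ to the first displayed inequality, and this is where concavity of $g$ enters. A concave distortion function is automatically continuous, its restriction to $(0,1]$ is absolutely continuous with nonincreasing density $g'$, and the only atom of $\mu_g$ is $g(0)\delta_0$; writing $w_\infty=\lim_{u\to 1}g'(u)\geq 0$ and letting $\lambda\geq 0$ be the measure on $(0,1)$ induced by the nonincreasing function $g'$, one obtains, for every nonnegative nonincreasing $h$,
\[
\int_0^1 h(u)\d g(u)=g(0)\,h(0)+w_\infty\int_0^1 h(u)\d u+\int_{(0,1)}\Big(\int_0^t h(u)\d u\Big)\d\lambda(t)
\]
(equivalently, $\rho_g$ is an average of $\esssup$, of $E$, and of the truncated tail averages $Z\mapsto\int_0^t F_Z^{-1}(1-u)\d u=t\,\TVaR_{1-t}(Z)$, $t\in(0,1)$). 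Applying this with $h(u)=\phi(F_X^{-1}(1-u)/a)$, so that $h(0)=\phi(\esssup X/a)$ and $\int_0^t h=\int_0^t\phi(F_X^{-1}(1-u)/a)\,\d u$, and likewise with $Y$, the first displayed inequality follows term by term: from $\esssup X\leq\esssup Y$ (a consequence of $X\leq_{\text{sl}} Y$) for the first term, and from $(\star)$ with $\beta=1$ and with $\beta=t$ for the other two; everything takes place in $[0,\infty]$, so no a priori integrability of $X$ is needed. The main obstacle is precisely this last step: one must recognize that the structural feature of a concave $g$ that drives the argument is that $\rho_g$ is a mixture of exactly those functionals for which $(\star)$ is the required monotonicity statement, whereas $(\star)$ itself is a routine consequence of the variational formula in Lemma~\ref{l-TVaR} together with the definition of stop-loss order.
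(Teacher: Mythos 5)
Your proof is correct, and it reaches the conclusion by a route that differs from the paper's in one meaningful way: the handling of a general concave $g$. You reduce everything to the intermediate inequality $(\star)$, which is obtained exactly as in the paper (Lemma~\ref{l-TVaR} evaluated at the choice $x_\beta=\phi(F_Y^{-1}(1-\beta)/a)$, combined with the characterization of $\leq_{\text{sl}}$ via increasing convex functions, plus $\esssup X\leq\esssup Y$). The paper then proceeds in two steps: it first writes a piecewise-linear concave $g$ as a finite convex combination $c_0+\sum_k c_k\min(u/\beta_k,1)$ and passes to the limit by monotone convergence (via the representation $\rho_g(Z)=\int_0^\infty g(\overline F_Z(x)-)\,\d x$). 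You replace this discretize-and-approximate scheme with a direct Kusuoka-type integral representation of $\int_0^1 h\,\d g$ as the mixture $g(0)h(0)+w_\infty\int_0^1 h+\int_{(0,1)}\big(\int_0^t h\big)\d\lambda(t)$, valid for any nonnegative nonincreasing $h$. This bypasses the approximation step altogether and makes the structural role of concavity explicit: the three terms are exactly $\esssup$, the expectation, and the truncated tail averages $t\,\TVaR_{1-t}$, and $(\star)$ with $\beta=1$ and $\beta=t$ is precisely what one needs for each. The price is that you import a bit more measure theory (the Stieltjes measure $\lambda$ of $-g'$ and a Fubini argument) where the paper uses only finite sums plus monotone convergence; the gain is a single formula that works uniformly over all concave $g$ with no approximation. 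Either way, the core engine — Lemma~\ref{l-TVaR} plus stop-loss preservation under increasing convex transforms — is identical.

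One small point worth making explicit in a write-up: when $g(0)>0$ and $Y\in(L_g^\phi)_+$, the hypothesis forces $\esssup Y<\infty$, so the leading term $g(0)\,\phi(\esssup X/a)\leq g(0)\,\phi(\esssup Y/a)$ is a finite-to-finite comparison; if instead $\esssup Y=\infty$ then $g(0)$ must be $0$ and the term vanishes on both sides. You noted that all quantities live in $[0,\infty]$, which covers this, but spelling out that the boundary term causes no trouble would make the argument airtight.
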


\begin{proof}
Let $X\geq 0$ and $Y\in (L_g^\phi)_+$ with $X\leq_{\text{sl}} Y$. 

Let us first assume that $g$ is concave and piecewise linear. Then there are $0<\beta_1<\beta_2<\ldots<\beta_n=1$, $c_0\geq 1$ and $c_k>0$, $k=1,\ldots,n,$ such that
\[
g(u)=c_0+\sum_{k=1}^n c_k \min\Big(\frac{u}{\beta_k},1\Big), u\in [0,1].
\]
Since $c_0+\sum_{k=1}^n c_k=1$, $g$ is a convex combination of the functions $g_0= 1$ and $g_k(u)=\min\big(\frac{u}{\beta_k},1\big)$, $k=1,\ldots,n$. Thus, for any $a>0$,
\begin{align*}
\int_0^1  \phi\Big(\frac{F_X^{-1}(1-u)}{a}\Big) \d g(u)&= \sum_{k=0}^n c_k\int_0^1  \phi\Big(\frac{F_X^{-1}(1-u)}{a}\Big) \d g_k(u)\\
&= c_0\phi \Big(\frac{F_X^{-1}(1)}{a}\Big) + \sum_{k=1}^n \frac{c_k}{\beta_k}\int_0^{\beta_k}  \phi\Big(\frac{F_X^{-1}(1-u)}{a}\Big) \d u,
\end{align*}
and similarly for $Y$. Note that if $c_0>0$ then $X$ is bounded above because $X\in (L_g^\phi)_+$, see the discussion after Definition \ref{d-OL}. Now, $X\leq_{\text{sl}} Y$ implies, in particular, that $\esssup X\leq \esssup Y$ (use the argument in \cite[Section 3.4.1.1]{DDGK05}), hence $F_X^{-1}(1)\leq F_Y^{-1}(1)<\infty$. Thus, the first term is defined. On the other hand, if $c_0=0$, we take it to be zero.  

Let $x_k= \phi\big(\frac{F_Y^{-1}(1-\beta_k)}{a}\big)$, $k=1,\ldots,n-1$, and $x_n=0$. It then follows with Lemma \ref{l-TVaR} that
\begin{align*}
\int_0^1 \phi\Big(\frac{F_X^{-1}(1-u)}{a}\Big) \d g(u) \leq c_0\phi \Big(\frac{F_X^{-1}(1)}{a}\Big) + \sum_{k=1}^n \frac{c_k}{\beta_k}\Big(E\Big(\Big(\phi\Big(\frac{X}{a}\Big)-x_k\Big)^+\Big)+\beta_k x_k\Big).
\end{align*}

As we have seen, $F_X^{-1}(1)\leq F_Y^{-1}(1)$. Also, since $t\mapsto \big(\phi\big(\frac{t}{a}\big)-x\big)^+$ is an increasing convex function on $[0,\infty)$, the stop-loss order implies that $E\big(\big(\phi\big(\frac{X}{a}\big)-x_k\big)^+\big)\leq E\big(\big(\phi\big(\frac{Y}{a}\big)-x_k\big)^+\big)$ for $k=1,\ldots,n$. Thus,
\begin{align*}
\int_0^1 \phi\Big(\frac{F_X^{-1}(1-u)}{a}\Big) \d g(u) &\leq c_0\phi \Big(\frac{F_Y^{-1}(1)}{a}\Big) + \sum_{k=1}^n \frac{c_k}{\beta_k}\Big(E\Big(\Big(\phi\Big(\frac{Y}{a}\Big)-x_k\Big)^+\Big)+\beta_k x_k\Big)\\
&=\int_0^1 \phi\Big(\frac{F_Y^{-1}(1-u)}{a}\Big) \d g(u),
\end{align*}
where the last equality follows from Lemma \ref{l-TVaR} with the definition of the $x_k$.

To finish the proof, let $g$ be an arbitrary concave distortion function. Then there exists an increasing sequence $(g_n)_n$ of piecewise linear concave distortion functions that tends pointwise to $g$ as $n\to\infty$, and hence also $g_n(u-)\to g(u-)$ for all $u\in[0,1]$. 
If $X\in (L_g^\phi)_+$, then also $X\in (L_{g_n}^\phi)_+$ for all $n$. 

Using \eqref{eq-defdist}, a property of quantile functions, and the monotone convergence theorem, we then get, for any $a>0$,
\begin{align*}
\int_0^1 \phi\Big(\frac{F_X^{-1}(1-u)}{a}\Big) \d g_n(u)&=\int_0^\infty  g_n\big(\overline{F}_{\phi(\frac{X}{a})}(x)-\big)\d x\\
&\to \int_0^\infty  g\big(\overline{F}_{\phi(\frac{X}{a})}(x)-\big)\d x= \int_0^1 \phi\Big(\frac{F_X^{-1}(1-u)}{a}\Big) \d g(u).
\end{align*}
Since the same holds for $Y$, the previous inequality for piecewise linear concave distortion functions implies that
\begin{align*}
\int_0^1 \phi\Big(\frac{F_X^{-1}(1-u)}{a}\Big) \d g(u)\leq \int_0^1 \phi\Big(\frac{F_Y^{-1}(1-u)}{a}\Big) \d g(u).
\end{align*}
Since this holds for all $a>0$, we finally deduce that
\[
\pi_{g,\phi,\alpha}(X)\leq \pi_{g,\phi,\alpha}(Y).\qedhere
\]
\end{proof}

This now leads to a simple proof that for concave distortion functions, the Orlicz-Lorentz premia are subadditive. Indeed, it was shown in \cite[Corollary 8]{WaDh98} that any risk measure that preserves stop-loss and is additive for comonotonic risks is subadditive for arbitrary risks. But the proof for this given in \cite[Theorem 4.2.2]{DVGKTV06} also works if the risk measure is only subadditive for comonotonic risks. For the sake of completeness, we give the proof here; recall that $X =_d Y$ means that $X$ and $Y$ have the same distribution.

\begin{lemma}\label{l-risksub}
Suppose that the underlying probability space $(\Omega,\mathcal{A},P)$ is atomless. Let $\mathcal{X}$ be a set of positive risks on $\Omega$ that contains the constants and $\rho:\mathcal{X}\to\mathbb{R}$ a risk measure such that
\begin{enumerate}[label=\emph{(\roman*)}] 
\item $X\geq 0$, $Y\in \mathcal{X}$, $X =_d Y$ $\Longrightarrow$ $X\in\mathcal{X}$ and $\rho(X)= \rho(Y)$;
\item $X,Y\in\mathcal{X}$ comonotonic $\Longrightarrow$ $X+Y\in\mathcal{X}$ and $\rho(X+Y)\leq \rho(X)+\rho(Y)$;
\item $X\geq 0$, $Y\in\mathcal{X}$, $X\leq_{\emph{\text{sl}}} Y$ $\Longrightarrow X\in \mathcal{X}$ and $\rho(X)\leq \rho(Y)$.
\end{enumerate}
Then, for all $X,Y\in\mathcal{X}$, $X+Y\in\mathcal{X}$ and $\rho(X+Y)\leq \rho(X)+\rho(Y)$.
\end{lemma}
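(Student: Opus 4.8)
The plan is to compare $X+Y$ with its \emph{comonotonic counterpart}. Since $(\Omega,\mathcal{A},P)$ is atomless, there is a random variable $U$ on $\Omega$ that is uniformly distributed on $[0,1]$. I would set
\[
X^c = F_X^{-1}(U), \qquad Y^c = F_Y^{-1}(U).
\]
Because $U>0$ almost surely and $X,Y\geq 0$, both $X^c$ and $Y^c$ are positive risks, and clearly $X^c =_d X$, $Y^c =_d Y$. Hypothesis (i) then gives $X^c, Y^c\in\mathcal{X}$ with $\rho(X^c)=\rho(X)$ and $\rho(Y^c)=\rho(Y)$. Moreover $X^c$ and $Y^c$ are comonotonic, being increasing functions of the single random variable $U$, so hypothesis (ii) yields $X^c+Y^c\in\mathcal{X}$ and
\[
\rho(X^c+Y^c)\leq \rho(X^c)+\rho(Y^c)=\rho(X)+\rho(Y).
\]

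Next I would invoke the classical fact that the comonotonic sum is the largest, in stop-loss order, among all sums with the given marginals, i.e.\
\[
X+Y\leq_{\text{sl}} X^c+Y^c;
\]
see \cite{DDGKV02} and \cite[Section 5]{DVGKTV06}. For completeness this can be proved directly: for any $a,b\in\R$ with $a+b=d$ one has the pointwise inequality $(X+Y-d)^+\leq (X-a)^++(Y-b)^+$ (if $X+Y\leq d$ the left-hand side is $0$; otherwise at least one of $X>a$, $Y>b$ holds and the inequality follows after dropping a nonnegative term), so that $E((X+Y-d)^+)\leq \inf_{a+b=d}\big(E((X-a)^+)+E((Y-b)^+)\big)$; for the comonotonic coupling, choosing $a=F_X^{-1}(p)$, $b=F_Y^{-1}(p)$ with $p$ such that $a+b=d$ turns this into an equality, since $\{U\geq p\}$ is, up to a null set, exactly the event on which $X^c\geq a$ and $Y^c\geq b$. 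Now $X+Y\geq 0$ and $X^c+Y^c\in\mathcal{X}$, so hypothesis (iii) gives $X+Y\in\mathcal{X}$ and $\rho(X+Y)\leq \rho(X^c+Y^c)$; combined with the previous display this is exactly $\rho(X+Y)\leq \rho(X)+\rho(Y)$.

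The only genuinely non-trivial ingredient is the stop-loss dominance of the comonotonic sum; everything else is a direct application of the three hypotheses in turn. A minor point requiring a little care is the boundary behaviour of $d$, where no $p$ with $F_X^{-1}(p)+F_Y^{-1}(p)=d$ need exist (for instance because of jumps in the quantile functions); this is dealt with by the usual right-continuity and monotone-limit argument, or simply absorbed into the cited result.
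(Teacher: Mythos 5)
Your proof is correct and follows essentially the same route as the paper: introduce a uniform $U$, pass to the comonotonic representatives $X^c=F_X^{-1}(U)$, $Y^c=F_Y^{-1}(U)$, apply (i), then (ii), then the classical fact $X+Y\leq_{\text{sl}} X^c+Y^c$, and finally (iii). The only difference is that you sketch a self-contained proof of the stop-loss dominance of the comonotonic sum where the paper simply cites \cite{DDGKV02} and \cite{KDG000}.
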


\begin{proof}
Let $X,Y\in\mathcal{X}$. Since $P$ has no atoms, there is a random variable $U$ on $\Omega$ that is uniformly distributed on $(0,1)$, see \cite[Proposition A.31]{FoSc16}. Then $X^c:=F_X^{-1}(U)$ and $X$ have the same distribution, as do $Y^c:=F_Y^{-1}(U)$ and $Y$, see \cite[Property 1.5.20]{DDGK05}. By (i), $X^c, Y^c\in\mathcal{X}$. Now, $X^c$ and $Y^c$ are comonotonic, so that, by (ii) with (i), $X^c+ Y^c\in\mathcal{X}$ and $\rho(X^c+Y^c)\leq \rho(X)+\rho(Y)$. Moreover, we have that $X+Y \leq_{\text{sl}} X^c+Y^c$, see \cite[Theorem 7]{DDGKV02} or \cite[Proposition 1]{KDG000}. Thus, by (iii), $X+Y\in\mathcal{X}$ and $\rho(X+Y)\leq \rho(X^c+Y^c) \leq \rho(X)+\rho(Y)$, as had to be shown.
\end{proof}

Thus we obtain the main result of this section.

\begin{theorem}\label{t-distOrlicz}
If $g$ is concave, then $(L_g^\phi)_+$ is a convex cone, and the Orlicz-Lorentz premium $\pi_{g,\phi,\alpha}$ is subadditive on $(L_g^\phi)_+$.
\end{theorem}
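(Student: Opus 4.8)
The plan is to reduce the statement to the atomless case and then invoke Lemma~\ref{l-risksub}. First I would record the facts that require no concavity: since $\pi_{g,\phi,\alpha}(X)$ depends on $X$ only through the quantile function $F_X^{-1}$, both $(L_g^\phi)_+$ and $\pi_{g,\phi,\alpha}$ are law-invariant, and, by the scaling identity used in the proof of Proposition~\ref{p-distOrliczbis}, $(L_g^\phi)_+$ is stable under multiplication by any $\lambda\geq 0$. A convex cone is precisely a set that is stable under nonnegative scaling and under addition, so it remains to prove that $(L_g^\phi)_+$ is closed under addition and that $\pi_{g,\phi,\alpha}$ is subadditive on it; note also that $(L_g^\phi)_+$ contains the nonnegative constants, since $L^\infty\subset L_g^\phi$.

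Assume first that $(\Omega,\mathcal{A},P)$ is atomless. I would then apply Lemma~\ref{l-risksub} with $\mathcal{X}=(L_g^\phi)_+$ and $\rho=\pi_{g,\phi,\alpha}$: hypothesis (i) is the law-invariance just noted; hypothesis (ii) is the comonotonic subadditivity of Proposition~\ref{p-distOrliczcom}; and hypothesis (iii) is Proposition~\ref{p-distOrliczsl}, which is the one place where concavity of $g$ is used. The lemma yields at once that $(L_g^\phi)_+$ is closed under addition and that $\pi_{g,\phi,\alpha}$ is subadditive there, and hence that $(L_g^\phi)_+$ is a convex cone.

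To pass to a general probability space, I would enlarge it in the standard way: put $\widehat\Omega=\Omega\times[0,1]$, equipped with the product of $P$ and Lebesgue measure on $[0,1]$, which is atomless, and associate to a risk $X$ on $\Omega$ the risk $\widehat X(\omega,t)=X(\omega)$, which has the same distribution as $X$. By law-invariance, $X\in(L_g^\phi)_+(\Omega)$ if and only if $\widehat X\in(L_g^\phi)_+(\widehat\Omega)$, and then $\pi_{g,\phi,\alpha}(\widehat X)=\pi_{g,\phi,\alpha}(X)$; moreover $\widehat{X+Y}=\widehat X+\widehat Y$. For $X,Y\in(L_g^\phi)_+(\Omega)$, the atomless case gives $\widehat X+\widehat Y\in(L_g^\phi)_+(\widehat\Omega)$ together with $\pi_{g,\phi,\alpha}(\widehat X+\widehat Y)\leq\pi_{g,\phi,\alpha}(\widehat X)+\pi_{g,\phi,\alpha}(\widehat Y)$, and transferring back through law-invariance completes the proof. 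The only point needing care is this last transfer --- one must carry over the membership $X+Y\in(L_g^\phi)_+$, not merely the numerical inequality --- but this is immediate once law-invariance of the domain is in hand. The substantive content, namely comonotonic subadditivity and the preservation of stop-loss order for concave $g$, has already been isolated in Propositions~\ref{p-distOrliczcom} and~\ref{p-distOrliczsl}, so no new estimate is required here.
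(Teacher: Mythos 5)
Your proof is correct and follows essentially the same route as the paper: reduce to the atomless case via the product space $\Omega\times[0,1]$, then invoke Lemma~\ref{l-risksub} with Propositions~\ref{p-distOrliczcom} (comonotonic subadditivity) and~\ref{p-distOrliczsl} (preservation of stop-loss order, where concavity of $g$ enters). The only cosmetic difference is that you spell out explicitly the law-invariance bookkeeping in the transfer step, which the paper compresses into a single remark.
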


\begin{proof} 
We first assume that the underlying probability space $(\Omega,\mathcal{A},P)$ is atomless. Then, by Propositions \ref{p-distOrliczcom} and \ref{p-distOrliczsl}, $\pi_{g,\phi,\alpha}$ satisfies assumptions (ii) and (iii) of Lemma \ref{l-risksub}, while assumption (i) obviously holds. Thus $(L_g^\phi)_+$ is invariant under taking sums, and $\pi_{g,\phi,\alpha}$ is subadditive. Since $(L_g^\phi)_+$ is also clearly invariant under positive scalar multiplication, it is a convex cone.

There is a slight technical problem if $P$ is not atomless. However, by \cite[Example 8.14.3]{GeGe25}, the product space given by $\widetilde{\Omega}=\Omega\times[0,1]$, $\widetilde{\mathcal{A}}=\mathcal{A}\otimes\mathcal{B}[0,1]$, $\widetilde{P}=P\otimes m$, is atomless, where $m$ is the Lebesgue measure. Then the mapping $(L_g^\phi)_+(\Omega)\to (L_g^\phi)_+(\widetilde{\Omega})$, $X\mapsto \widetilde{X}$ with $\widetilde{X}(\omega,u)=X(\omega)$ for $(\omega,u)\in \Omega\times[0,1]$, allows to transfer the result from $(L_g^\phi)_+(\widetilde{\Omega})$ to $(L_g^\phi)_+(\Omega)$; note that $\widetilde{X+Y}=\widetilde{X}+\widetilde{Y}.$
\end{proof}

We have followed here the strategy of proof from \cite[Section 5.2]{DVGKTV06} or \cite[Corollary 8]{WaDh98}; a different, self-contained proof of Theorem \ref{t-distOrlicz} was given by the first author in \cite{Gou22}.

\subsection{Distortion Haezendonck-Goovaerts risk measures}
Having the Orlicz-Lorentz premia at our disposal, we can now define the distortion Haezendonck-Goovaerts risk measures by the same simple procedure as in Section \ref{s-HaezGoov}.

\begin{definition}\label{d-dHG}
Let $g$ be a distortion function, $\phi$ a normalized Young function, and $\alpha\in [0,1)$. The \textit{distortion Haezendonck-Goovaerts risk measure} $\rho_{g,\phi,\alpha}:L^\phi_g\to\R$ is given by
\[
\rho_{g,\phi,\alpha}(X) = \inf_{x\in\R} \big(\pi_{g,\phi,\alpha}((X-x)^+)+x\big).
\]
\end{definition}

It follows as in our discussion after Definition \ref{d-HaezGoov}, using Proposition \ref{p-distOrlicz2}(b), that we can assume here without loss of generality that $\phi$ is normalized and that we need to impose that $\alpha\geq 0$, and thus $\alpha\in [0,1)$, in order to have a risk measure.

\begin{remark}
The definition of the distortion Haezendonck-Goovaerts risk measure was suggested by Definition 4.2 of Goovaerts, Linders, Van Weert, and Tank \cite{GLVT12}, who call it the optimal generalized Haezendonck–Goovaerts risk measure; they consider the case when $\alpha\in (0,1)$. The link between the two definitions becomes clearer by noting that
\begin{align*}
&\rho_{g,\phi,\alpha}((X-x)^+)+x= \inf\Big\{a>x : \int_0^1 \phi\Big(\frac{(F_X^{-1}(1-u)-x)^+}{a-x}\Big)\d g(u)\leq 1-\alpha\Big\}.
\end{align*}
Thus the definitions coincide for $X\in L^\infty$ if $g$ is continuously differentiable with $g(0)=0$, see Proposition \ref{p-distOrlicz}(c).
\end{remark}

\begin{remark}\label{r-distHaezGoov}
 Let us convince ourselves that the distortion Haezendonck-Goovaerts risk measures are well defined. First, let $X\in L_g^\phi$ and $x\in \R$. By a property of quantile functions and the convexity of $\phi$ we have, for any $a>0$,
\begin{align*}
\int_0^1 \phi\Big(\frac{F_{(X-x)^+}^{-1}(1-u)}{a+1}\Big) \d g(u) &= \int_0^1 \phi\Big(\frac{(F_X^{-1}(1-u)-x)^+}{a+1}\Big) \d g(u)\\
 &\leq \int_0^1 \phi\Big(\frac{(|F_X^{-1}(1-u)|+|x|)}{a+1}\Big) \d g(u)\\
  &\leq \frac{a}{a+1}\int_0^1 \phi\Big(\frac{|F_X^{-1}(1-u)|}{a}\Big) \d g(u) + \frac{1}{a+1}\phi(|x|), 
\end{align*}
which shows that $(X-x)^+\in (L_g^\phi)_+$, so that $\pi_{g,\phi,\alpha}$ can be applied. This argument is valid for any Young function $\phi$ and any $\alpha<1$.

Secondly, if $\phi$ is normalized and $\alpha \in[0,1)$ then $\phi^{-1}(1-\alpha)\leq 1$. Thus it follows with Proposition \ref{p-distOrlicz2}(a) that, for any $x\in\mathbb{R}$,
\begin{equation}\label{eq-dist}
\begin{split}
\pi_{g,\phi,\alpha}((X-x)^+)+x &\geq \frac{\rho_g((X-x)^+)}{\phi^{-1}(1-\alpha)} +x \geq \rho_g((X-x)^+) +x\\
& \geq \rho_g(X-x)+x = \rho_g(X),
\end{split}
\end{equation}
where we have used that $\rho_g$ is cash-invariant and monotonic. Thus $\rho_{g,\phi,\alpha}(X)\in\R$, as required from a risk measure.
\end{remark}

\begin{example}\label{ex-distHaezGoov}
(a) If $\phi$ is the identity and $\alpha=0$, then, for $X\in L_g^\phi=L_g$, $\pi_{g,\phi,0}((X-x)^+)+x = \rho_g((X-x)^+)+x = \int_0^1((F_X^{-1}(1-u)-x)^++x)\d g(u)$ decreases as $x$ decreases. Thus, by the dominated convergence theorem, $\rho_{g,\phi,0}(X) = \rho_g(X)$. This will be considerably generalized in Corollary \ref{c-dHGtriv} below.

(b) Let $\alpha\in[0,1)$. If $g$ is the identity function then $\rho_{g,\phi,\alpha}(X) = \rho_{\phi,\alpha}(X)$ for all $X\in L_g^\phi=L^\phi$. 

(c) Let $\alpha\in[0,1)$ and $\beta\in (0,1)$. If $g(u)=\mathds{1}_{[1-\beta,1]}(u)$, then $\pi_{g,\phi,\alpha}(X) = \frac{\VaR_\beta(X)}{\phi^{-1}(1-\alpha)}$ for all risks $X\geq 0$. Since $\VaR_\beta((X-x)^+)=(\VaR_\beta(X)-x)^+$, we obtain that $\rho_{g,\phi,\alpha}(X)=\VaR_\beta(X)$ for any risk $X$, independently of $\alpha$.
\end{example}

\subsection{Distortion HG: Convex cone}\label{subs-convex}

From Example \ref{ex-cone} we know that the set $L_g^\phi$ of risks is not necessarily a convex cone, even if $g$ is concave. On the other hand, if $g$ is the identity then $L_g^\phi=L^\phi$ is (even) a vector space. By Theorem \ref{t-distOrlicz} we have that $(L_g^\phi)_+$ is a convex cone whenever $g$ is concave. This remains true for $L_g^\phi$ under additional assumptions on $g$.

\begin{proposition}\label{p-conv}
If $g$ is concave and constant on some interval $[u_0,1]$, $u_0\in[0,1)$, then $L_g^\phi$ is a convex cone.
\end{proposition}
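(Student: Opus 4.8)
The plan is to deduce closure of $L_g^\phi$ under addition from Theorem~\ref{t-distOrlicz} and Proposition~\ref{p-inv}, with the hypothesis on $g$ used only to control left tails. Closure of $L_g^\phi$ under multiplication by a scalar $\lambda\geq 0$ is immediate, since scaling $X$ merely rescales the parameter $a$ in Definition~\ref{d-OL} (and $0\in L_g^\phi$). So fix $X,Y\in L_g^\phi$; I will squeeze $X+Y$ between two elements of $L_g^\phi$ and then invoke Proposition~\ref{p-inv}.

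The upper bound is $X+Y\leq X^++Y^+$. Here $X^+\in(L_g^\phi)_+$: since $h(t)=\max(t,0)$ is continuous and increasing, $F_{X^+}^{-1}=\max(F_X^{-1},0)$, hence $|F_{X^+}^{-1}(1-u)|=\max(F_X^{-1}(1-u),0)\leq|F_X^{-1}(1-u)|$, and by monotonicity of $\phi$ the defining integral for $X^+$ is dominated by that for $X$; likewise $Y^+\in(L_g^\phi)_+$. Since $g$ is concave, Theorem~\ref{t-distOrlicz} gives that $(L_g^\phi)_+$ is a convex cone, so $X^++Y^+\in(L_g^\phi)_+\subset L_g^\phi$.

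For the lower bound, $X+Y\geq-(X^-+Y^-)=:-W$ with $W:=X^-+Y^-\geq 0$, and I claim $-W\in L_g^\phi$. This is the one place the hypothesis is used. Since $g$ is increasing with $g(1)=1$ and constant on $[u_0,1]$, we have $g\equiv 1$ on $[u_0,1]$, so the measure $\mu_g$ is carried by $[0,u_0]$; thus for any risk $Z$, $\int_0^1\phi(|F_Z^{-1}(1-u)|/a)\,\d g(u)=\int_{[0,u_0]}\phi(|F_Z^{-1}(1-u)|/a)\,\d g(u)$, which involves $F_Z^{-1}$ only on $[1-u_0,1]$. Applying this to $Z=-W$: on $(0,1]$ the quantile $F_{-W}^{-1}$ is increasing, is everywhere $\leq 0$ (because $-W\leq 0$), and $F_{-W}^{-1}(1-u_0)$ is a finite real number since $1-u_0\in(0,1]$ and $-W$ is a real random variable bounded above by $0$. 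Hence $|F_{-W}^{-1}(1-u)|\leq M:=-F_{-W}^{-1}(1-u_0)<\infty$ for all $u\in[0,u_0]$, so $\int_0^1\phi(|F_{-W}^{-1}(1-u)|/a)\,\d g(u)\leq\phi(M/a)\,\mu_g([0,u_0])<\infty$ for every $a>0$; that is, $-W\in L_g^\phi$. Now $-W\leq X+Y\leq X^++Y^+$ with both bounds in $L_g^\phi$, so Proposition~\ref{p-inv} yields $X+Y\in L_g^\phi$, which finishes the proof.

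No step is a genuine obstacle; the only point needing mild care is the bookkeeping with the quantile function of $-W$ (and possible atoms of $F_W$ near $u_0$), which is handled by using just the monotonicity and finiteness of $F_{-W}^{-1}$ on $(0,1]$ rather than any exact reflection identity for quantiles. The concavity of $g$ enters solely through Theorem~\ref{t-distOrlicz}, while the flatness of $g$ near $1$ enters solely through the finiteness of $M$.
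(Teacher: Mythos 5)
Your proof is correct and uses the same essential ingredients as the paper's: concavity of $g$ enters only through Theorem~\ref{t-distOrlicz} (to close $(L_g^\phi)_+$ under sums), and the flatness of $g$ on $[u_0,1]$ enters only to tame the negative tails. The paper proves the equivalence $X\in L_g^\phi\iff X^+\in(L_g^\phi)_+$ and applies it twice, whereas you sandwich $X+Y$ between $-(X^-+Y^-)$ and $X^++Y^+$ and invoke Proposition~\ref{p-inv} once; this is a reorganization of the same argument, and if anything your route is a bit cleaner because it sidesteps the paper's case distinction ``we can assume $u_1\leq u_0$.''
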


\begin{proof}
We first claim that
\begin{equation}\label{eq-Lplus}
X\in L_g^\phi \Longleftrightarrow X^+\in (L_g^\phi)_+.
\end{equation}
To see this, let $u_1=P(X>0)$. Then $F_X^{-1}(1-u) > 0$ for $u < u_1$ and $F_X^{-1}(1-u)\leq 0$ for $u\geq u_1$. If $u_1=1$, $X\geq 0$, and the claim holds. Otherwise, we can assume that $u_1\leq u_0$. Then, for any $a>0$,
\begin{align}\label{eq-int}
\int_0^1 \phi\Big(\frac{|F_X^{-1}(1-u)|}{a}\Big) \d g(u) = \int_{[0,u_1)} \phi\Big(&\frac{F_X^{-1}(1-u)}{a}\Big) \d g(u)+\int_{[u_1,u_0]} \phi\Big(\frac{-F_X^{-1}(1-u)}{a}\Big) \d g(u)\notag \\
& +\int_{(u_0,1]} \phi\Big(\frac{-F_X^{-1}(1-u)}{a}\Big) \d g(u).
\end{align}
Here, the second term on the right is finite, and the third term vanishes by hypothesis.

Since $F_{X^+}^{-1}= (F_{X}^{-1})^+$, we have by the same argument that
\begin{align}\label{eq-intbis}
\int_0^1 \phi\Big(\frac{F_{X^+}^{-1}(1-u)}{a}\Big) \d g(u)= \int_{[0,u_1)} \phi\Big(\frac{F_{X}^{-1}(1-u)}{a}\Big) \d g(u).
\end{align}

Thus $\int_0^1 \phi\big(\frac{|F_X^{-1}(1-u)|}{a}\big) \d g(u)<\infty$ if and only if 
$\int_0^1 \phi\big(\frac{F_{X^+}^{-1}(1-u)}{a}\big) \d g(u)<\infty$, which proves the claim.

Let us now show that $L_g^\phi$ is a convex cone. Since it is invariant under positive scalar multiplication, we need to show that it is invariant under taking sums. Thus let $X,Y\in L_g^\phi$. By \eqref{eq-Lplus}, $X^+,Y^+\in (L_g^\phi)_+$, hence $X^++Y^+\in (L_g^\phi)_+$ by Theorem \ref{t-distOrlicz}. Since $(X+Y)^+\leq X^++Y^+$, also $(X+Y)^+\in (L_g^\phi)_+$ by Proposition \ref{p-lg}, and hence $X+Y\in L_g^\phi$ by \eqref{eq-Lplus} again.
\end{proof}

\begin{proposition}\label{p-convbis}
If $g$ is concave with $g'(1)>0$, then $L_g^\phi$ is a convex cone.
\end{proposition}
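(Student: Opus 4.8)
The plan is to exhibit $L_g^\phi$ as the intersection of two convex cones, namely
\[
L_g^\phi = \{X : X^+\in (L_g^\phi)_+\}\cap\{X:X^-\in L^\phi\},
\]
where $X^-=(-X)^+$. Granting this identity the result follows quickly: $\{X:X^+\in(L_g^\phi)_+\}$ is a convex cone because $(X+Y)^+\leq X^++Y^+$, because $(L_g^\phi)_+$ is a convex cone by Theorem \ref{t-distOrlicz} (here we use that $g$ is concave), and by Proposition \ref{p-inv}; and $\{X:X^-\in L^\phi\}$ is a convex cone because $(X+Y)^-\leq X^-+Y^-$ and $L^\phi$ is a solid vector space (if $0\leq Y\leq Z$ with $Z\in L^\phi$ then $Y\in L^\phi$). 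So all the content is in the displayed identity.

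For the inclusion ``$\subseteq$'', let $X\in L_g^\phi$, say $\int_0^1\phi(|F_X^{-1}(1-u)|/a)\,\d g(u)<\infty$. Since $F_{X^+}^{-1}(1-u)=(F_X^{-1}(1-u))^+\leq|F_X^{-1}(1-u)|$ pointwise and $\phi$ is nondecreasing, we get $\int_0^1\phi(F_{X^+}^{-1}(1-u)/a)\,\d g(u)\leq\int_0^1\phi(|F_X^{-1}(1-u)|/a)\,\d g(u)<\infty$, so $X^+\in(L_g^\phi)_+$; this step uses nothing about $g$. The hypothesis $g'(1)>0$ enters in the other half: concavity of $g$ gives $g'\geq g'(1)>0$ a.e.\ on $(0,1)$, and any atom of $\mu_g$ sits at $0$, so
\[
g'(1)\,E\Big(\phi\Big(\frac{|X|}{a}\Big)\Big)=g'(1)\int_0^1\phi\Big(\frac{|F_X^{-1}(1-u)|}{a}\Big)\d u\leq\int_0^1\phi\Big(\frac{|F_X^{-1}(1-u)|}{a}\Big)\d g(u)<\infty,
\]
whence $X\in L^\phi$, and in particular $X^-\in L^\phi$.

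For the inclusion ``$\supseteq$'', let $X$ satisfy $X^+\in(L_g^\phi)_+$ and $X^-\in L^\phi$, and choose $a>0$ with $\int_0^1\phi(F_{X^+}^{-1}(1-u)/a)\,\d g(u)<\infty$ and $E(\phi(X^-/a))<\infty$. From $|r|=\max(r^+,r^-)$ and monotonicity of $\phi$ we have $\phi(|r|/a)\leq\phi(r^+/a)+\phi(r^-/a)$; taking $r=F_X^{-1}(1-u)$ and integrating against $\d g$ bounds $\int_0^1\phi(|F_X^{-1}(1-u)|/a)\,\d g(u)$ by $\int_0^1\phi(F_{X^+}^{-1}(1-u)/a)\,\d g(u)$ (finite, by the choice of $a$) plus $\int_0^1\phi((F_X^{-1}(1-u))^-/a)\,\d g(u)$. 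The integrand $h(u)=\phi((F_X^{-1}(1-u))^-/a)$ of the last integral is nondecreasing in $u$, and a layer-cake computation gives $\int_0^1 h\,\d g\leq\int_0^1 h\,\d u$, because $g$ concave forces $g(u)\geq u$, i.e.\ $\mu_g$ charges every interval $[u,1]$ by at most $1-u$. Finally $\int_0^1 h(u)\,\d u=E(\phi(X^-/a))<\infty$ by the quantile-integral identity applied to the positive measurable function $x\mapsto\phi(\max(-x,0)/a)$. Hence $X\in L_g^\phi$, which completes the proof of the identity.

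The technical heart, and the step I expect to need the most care, is the pair of comparisons between $\int\,\d g$ and $\int\,\d u$: the lower bound $\int f\,\d g\geq g'(1)\int f\,\d u$ for positive $f$, and the upper bound $\int h\,\d g\leq\int h\,\d u$ for nondecreasing positive $h$. These encode exactly the two hypotheses --- concavity makes $\mu_g$ no heavier than Lebesgue measure towards $u=1$, while $g'(1)>0$ keeps it from being too light there --- and they are precisely what breaks down for the distortion function of Example \ref{ex-cone}, where $g'(1)=0$ and $L_g^\phi$ genuinely fails to be a convex cone. Everything else (solidity of $L^\phi$, the use of Proposition \ref{p-inv}, the passage from the two cone properties to their intersection) is routine.
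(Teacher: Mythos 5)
Your proof is correct and rests on the same characterization the paper uses, namely $X\in L_g^\phi$ iff $X^+\in(L_g^\phi)_+$ and $X^-\in L^\phi$, but you establish that characterization by a genuinely different route. The paper splits the integral $\int_0^1\phi(|F_X^{-1}(1-u)|/a)\,\d g(u)$ at the two points $u_1=P(X>0)$ and a conveniently chosen $u_0\in(u_1,1)$, and then compares $\d g=g'\d u$ with $\d u$ only on the tail $[u_0,1]$, where $g'$ is pinched between $c:=g'(1)>0$ and $d:=g'(u_0)<\infty$; the middle piece $[u_1,u_0]$ is handled by boundedness. You instead use two global comparisons between $\mu_g$ and Lebesgue measure: $\int_0^1 f\,\d g\geq g'(1)\int_0^1 f\,\d u$ for all positive $f$ (this gives the inclusion $L_g^\phi\subset L^\phi$ outright, which is a nice by-product), and the layer-cake inequality $\int_0^1 h\,\d g\leq\int_0^1 h\,\d u$ for nondecreasing positive $h$, which follows from $g(u)\geq u$ (concavity plus the boundary values). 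The second of these, together with $\phi(\max(s,t))\leq\phi(s)+\phi(t)$, replaces the paper's choice of $u_0$ entirely. Both arguments use concavity and $g'(1)>0$ for exactly the same purpose — $\mu_g$ should be comparable to Lebesgue measure near $u=1$ — but yours avoids the auxiliary split point and is arguably cleaner; the minor price is that in the ``$\supseteq$'' direction you must pass to a common $a=\max(a_1,a_2)$ rather than obtaining an ``iff'' for each fixed $a$. Your closing remark correctly identifies that in Example \ref{ex-cone} it is precisely the lower bound $g'\geq g'(1)$ that degenerates ($g'(1^-)=0$), which is why $L_g^\phi$ there fails to be a cone.
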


\begin{proof}
We first claim that
\begin{equation*}
X\in L_g^\phi \Longleftrightarrow X^+\in (L_g^\phi)_+ \text{ and } X^-\in L^\phi.
\end{equation*}
To see this, we first note that, since $g$ is concave, it is almost everywhere differentiable, and it is \text{(left-)}differentiable at $1$ with $g'(1)\geq 0$. We assume then that $g'(1)>0$. Let again $u_1=P(X>0)$, where we can once more assume that $u_1<1$. Choose any $u_0\in (u_1,1)$ where $g'$ is differentiable. We then have again \eqref{eq-int} and \eqref{eq-intbis}. 

This time, concerning the third term on the right of \eqref{eq-int}, we have for $u_0\leq u\leq 1$ that $c\leq g'(u)\leq d$ almost everywhere, where $d:=g'(u_0)$ and $c:=g'(1)>0$. Noting that $\d g(u)=g'(u)\d u$ on $[u_0,1]$, we thus see that the third term is finite if and only if 
$\int_{u_0}^1 \phi\big(\frac{-F_X^{-1}(1-u)}{a}\big)\d u$ is finite, hence if and only if
\begin{align*}
\int_{u_1}^1 \phi\Big(\frac{-F_X^{-1}(1-u)}{a}\Big)\d u=\int_{0}^1 \phi\Big(\frac{|F_{-X^-}^{-1}(1-u)|}{a}\Big)\d u=E\Big( \phi\Big(\frac{|X^-|}{a}\Big)\Big)
\end{align*}
is finite. 

Altogether we have that $\int_0^1 \phi\big(\frac{|F_X^{-1}(1-u)|}{a}\big) \d g(u)<\infty$ if and only if 
$\int_0^1 \phi\big(\frac{F_{X^+}^{-1}(1-u)}{a}\big) \d g(u)<\infty$ and $E\big(\phi\big(\frac{|X^-|}{a}\big)\big)<\infty$, which proves the claim.

From here, the proof can be finished as that of Proposition \ref{p-conv}, using that $(L^\phi_g)_+$ and $L^\phi$ are convex cones.
\end{proof}

\subsection{Distortion HG: the infimum}
The definition of the distortion Haezendock-Goovaerts risk measure as an infimum raises again the question whether it is, in fact, a minimum. If $\alpha\neq 0$, this is indeed the case, and we give conditions under which the minimum is unique. We first show the following, which is valid for any Young function and any $\alpha<1$.

\begin{proposition}\label{p-distHaezGoovconv}
Let $\phi$ be a Young function, $\alpha<1$, and $X\in L_g^\phi$. 

\emph{(a)} Then the mapping $x\mapsto \pi_{g,\phi,\alpha}((X-x)^+)$ is convex on $\mathbb{R}$.

\emph{(b)} Let $g$ be continuous, $g(0)=0$, $g>0$ on $(0,1]$, and let $\phi$ be strictly convex and satisfy the $\Delta_2$-condition. If $P(X=\esssup X)=0$ then $x\mapsto \pi_{g,\phi,\alpha}((X-x)^+)+x$ is strictly convex for $x < \esssup X$.
\end{proposition}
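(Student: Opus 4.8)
The plan is to deduce both parts from the pointwise inequality $(\lambda a+(1-\lambda)b)^+\le\lambda a^++(1-\lambda)b^+$, applied with $a=X-x_1$, $b=X-x_2$, together with the properties of $\pi_{g,\phi,\alpha}$ proved above. Fix $x_1,x_2\in\R$, $\lambda\in(0,1)$, and $x=\lambda x_1+(1-\lambda)x_2$. For (a): since $s\mapsto s^+$ is convex, $(X-x)^+\le\lambda(X-x_1)^++(1-\lambda)(X-x_2)^+$. By Remark~\ref{r-distHaezGoov}, $(X-x_i)^+\in(L_g^\phi)_+$, hence so do $\lambda(X-x_1)^+$ and $(1-\lambda)(X-x_2)^+$, and these two are comonotonic, being increasing functions of $X$. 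Applying, in turn, monotonicity (Proposition~\ref{p-distOrliczbis}), comonotonic subadditivity (Proposition~\ref{p-distOrliczcom}) --- the point where, for general $g$, one must pass through comonotonicity rather than use plain subadditivity --- and positive homogeneity (Proposition~\ref{p-distOrliczbis}), yields $\pi_{g,\phi,\alpha}((X-x)^+)\le\lambda\pi_{g,\phi,\alpha}((X-x_1)^+)+(1-\lambda)\pi_{g,\phi,\alpha}((X-x_2)^+)$; adding the affine term $x$ preserves convexity.

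For (b) it suffices, $x$ being affine, to prove $x\mapsto\pi_{g,\phi,\alpha}((X-x)^+)$ strictly convex on $(-\infty,\esssup X)$. Take $x_1<x_2<\esssup X$ and set $a_i=\pi_{g,\phi,\alpha}((X-x_i)^+)$. As $x_i<\esssup X$ we have $P(X>x_i)>0$, so Proposition~\ref{p-distOrlicz}(a) (with $g(0)=0$, $g>0$ on $(0,1]$) gives $a_i\ne 0$, and Proposition~\ref{p-distOrlicz}(c) (with the $\Delta_2$-condition) gives $\int_0^1\phi\big(\tfrac{(F_X^{-1}(1-u)-x_i)^+}{a_i}\big)\d g(u)=1-\alpha$; moreover $a_2\le a_1$ by monotonicity. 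Put $c=\lambda a_1+(1-\lambda)a_2>0$. For each $u$, writing $t=F_X^{-1}(1-u)$, the number $\tfrac{(t-x)^+}{c}$ is at most the convex combination, with weights $\tfrac{\lambda a_1}{c}$ and $\tfrac{(1-\lambda)a_2}{c}$, of $\tfrac{(t-x_1)^+}{a_1}$ and $\tfrac{(t-x_2)^+}{a_2}$; so convexity and monotonicity of $\phi$ give $\phi\big(\tfrac{(t-x)^+}{c}\big)\le\tfrac{\lambda a_1}{c}\phi\big(\tfrac{(t-x_1)^+}{a_1}\big)+\tfrac{(1-\lambda)a_2}{c}\phi\big(\tfrac{(t-x_2)^+}{a_2}\big)$, and integrating against $\d g$ recovers $\int_0^1\phi\big(\tfrac{(F_X^{-1}(1-u)-x)^+}{c}\big)\d g(u)\le 1-\alpha$. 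The task is to make this strict.

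By strict convexity of $\phi$, the last pointwise inequality is strict whenever $\tfrac{(t-x_1)^+}{a_1}\ne\tfrac{(t-x_2)^+}{a_2}$; among $t>x_1$, equality can hold only at the single point $t_0=\tfrac{a_2x_1-a_1x_2}{a_2-a_1}$, defined precisely when $a_1\ne a_2$, in which case $t_0>x_2$ (recall $a_2\le a_1$). So the "good set" $G=\{u:F_X^{-1}(1-u)>x_1\}\setminus\{u:F_X^{-1}(1-u)=t_0\}$ carries the strict inequality, and I claim $\mu_g(G)>0$. Here $\{u:F_X^{-1}(1-u)>x_1\}=[0,\overline{F}_X(x_1))$ has $\mu_g$-measure $g(\overline{F}_X(x_1))>0$ since $\overline{F}_X(x_1)=P(X>x_1)>0$ and $g>0$ on $(0,1]$. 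If $P(X=t_0)=0$ the excluded fibre is $\mu_g$-null and we are done. If $P(X=t_0)>0$, then $t_0<\esssup X$ --- the one place $P(X=\esssup X)=0$ is used --- and, $g$ being continuous, $\mu_g(G)=g(\overline{F}_X(x_1))-g(1-F_X(t_0-))+g(\overline{F}_X(t_0))\ge g(\overline{F}_X(t_0))>0$ (using $F_X(t_0-)\ge F_X(x_1)$ and $\overline{F}_X(t_0)=P(X>t_0)>0$). Hence $\int_0^1\phi\big(\tfrac{(F_X^{-1}(1-u)-x)^+}{c}\big)\d g(u)<1-\alpha$. Finally, by the $\Delta_2$-condition with Lemma~\ref{l-distOrlicz}(c),(e), the map $z\mapsto\int_0^1\phi\big(\tfrac{(F_X^{-1}(1-u)-x)^+}{z}\big)\d g(u)$ is finite and continuous on $(0,\infty)$, so it stays $<1-\alpha$ at some $c'\in(0,c)$; thus $\pi_{g,\phi,\alpha}((X-x)^+)\le c'<c=\lambda a_1+(1-\lambda)a_2$, which is the strict convexity.

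The main obstacle is this measure-theoretic bookkeeping: ruling out that the single fibre $\{u:F_X^{-1}(1-u)=t_0\}$ carries all the $\mu_g$-mass of $[0,\overline{F}_X(x_1))$. That is precisely where the three hypotheses on $g$ together with $P(X=\esssup X)=0$ enter; the latter is essential, since e.g.\ for constant $X$ the function $x\mapsto\pi_{g,\phi,\alpha}((X-x)^+)+x$ is affine, not strictly convex.
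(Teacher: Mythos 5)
Your part (a) is essentially the paper's proof: both reduce convexity to $(X-x)^+\le\lambda(X-x_1)^++(1-\lambda)(X-x_2)^+$ and then apply monotonicity, comonotonic subadditivity (Proposition~\ref{p-distOrliczcom}), and positive homogeneity. For part (b), however, you take a genuinely different route that is worth noting. The paper identifies $\pi_{g,\phi,\alpha}(Y)$ with the Luxemburg norm of $F_Y^{-1}(1-\cdot)$ in the Orlicz space $L^{\phi/(1-\alpha)}([0,1],\mu_g)$, invokes the rotundity of that norm from \cite[Section 7.1, Corollary 5]{RaRe91} (which needs $\mu_g$ atomless, $\phi$ strictly convex, and the $\Delta_2$-condition), and then checks that $F_{(X-x_1)^+}^{-1}(1-\cdot)$ and $F_{(X-x_2)^+}^{-1}(1-\cdot)$ are not collinear. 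You instead work directly with the defining integral identity from Proposition~\ref{p-distOrlicz}(c): you show that, at $c=\lambda a_1+(1-\lambda)a_2$, the pointwise bound derived from the convexity of $\phi$ is strict on a $\mu_g$-non-null set $G$ (by strict convexity of $\phi$, since outside the single fibre $\{F_X^{-1}(1-u)=t_0\}$ the two arguments differ), then conclude $\psi(c)<1-\alpha$, and deduce $\pi_{g,\phi,\alpha}((X-x)^+)<c$ via continuity (or, equally, strict monotonicity from Lemma~\ref{l-distOrlicz}(g)) of $\psi$. Your measure-theoretic bookkeeping correctly locates where each hypothesis enters: $g>0$ on $(0,1]$ makes $[0,\overline{F}_X(x_1))$ non-null; continuity of $g$ and $g(0)=0$ make $\mu_g$ atomless so the fibre of $t_0$ costs at most $g(1-F_X(t_0-))-g(\overline{F}_X(t_0))$; and $P(X=\esssup X)=0$ forces $t_0<\esssup X$, ensuring $g(\overline{F}_X(t_0))>0$ and hence $\mu_g(G)>0$. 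The two proofs buy the same result; yours is more self-contained and elementary, avoiding the appeal to the rotundity theorem from Orlicz space theory at the cost of the direct estimate on $\mu_g(G)$, whereas the paper's is shorter modulo the external reference. Both are correct.
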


\begin{proof}
(a) Note that the functions $z\mapsto (z-x)^+$ are convex and increasing on $\mathbb{R}$ for any $x\in \mathbb{R}$. 

Now let $x,y\in \mathbb{R}$ and $0\leq \lambda\leq 1$. It follows that the risks $\lambda(X-x)^+$ and $(1-\lambda)(X-y)^+$ are comonotonic. Propositions \ref{p-distOrliczcom} and \ref{p-distOrliczbis} then imply that
\begin{align*}
\pi_{g,\phi,\alpha}((X-(\lambda x+(1-\lambda)y))^+)&= \pi_{g,\phi,\alpha}((\lambda(X-x)+(1-\lambda)(X-y))^+)\\
&\leq  \pi_{g,\phi,\alpha}(\lambda(X-x)^++(1-\lambda)(X-y)^+)\\
&\leq  \pi_{g,\phi,\alpha}(\lambda(X-x)^+)+\pi_{g,\phi,\alpha}((1-\lambda)(X-y)^+)\\
&=  \lambda \pi_{g,\phi,\alpha}((X-x)^+)+(1-\lambda)\pi_{g,\phi,\alpha}((X-y)^+),
\end{align*}
which had to be shown.

(b) Let $\psi=\frac{1}{1-\alpha}\phi$ and $\mu_g$ the measure induced by $g$, see the discussion after Definition \ref{d-distfunc}. Then, for any $Y\in (L_g^\phi)_+$,
\[
\pi_{g,\phi,\alpha}(Y)= \|F_Y^{-1}(1-\cdot)\|,
\]
where $\|\cdot\|$ is the Luxemburg norm in the Orlicz space $L^\psi([0,1],\mathcal{B}[0,1], \mu_g)$, see \cite{Che96}, \cite{EdSo92}, \cite[p.\ 54]{RaRe91}. Since $g$ is continuous with $g(0)=0$, the measure $\mu_g$ is atomless. Thus, under the stated additional assumptions, the above Luxemburg norm is rotund, see \cite[Section 7.1, Corollary 5]{RaRe91}, that is, for $Y_1,Y_2\in L^\psi$ not collinear and $0<\lambda<1$, $\|\lambda Y_1+(1-\lambda)Y_2\|<\lambda\|Y_1\|+(1-\lambda)\|Y_2\|$, see \cite[Proposition 5.1.11]{Meg98}. 

Now, let $x_1<x_2<\esssup X$. Then $F_{(X-x_1)^+}^{-1}(1-\cdot)= (F_{X}^{-1}-x_1)^+(1-\cdot)$ and $F_{(X-x_2)^+}^{-1}(1-\cdot)= (F_{X}^{-1}-x_2)^+(1-\cdot)$ are not collinear. Otherwise there were $a,b\in\mathbb{R}$ not both zero such that $a(F_{X}^{-1}-x_1)^+(1-\cdot)=b(F_{X}^{-1}-x_2)^+(1-\cdot)$ $\mu_g$-almost everywhere. Let $p=F_X(x_2)$. Since $x_2<\esssup X$, we have that $p<1$ and $F_X^{-1}(1-u)>x_2$ for $0\leq u<1-p$. Thus $a(F_{X}^{-1}-x_1)(1-u)=b(F_{X}^{-1}-x_2)(1-u)$ for $\mu_g$-almost every $u\in [0,1-p)$. By hypothesis, $\mu_g([0,q))>0$ for all $q>0$. Note that $a\neq b$ because otherwise $\mu_g(x_1=x_2)\geq \mu_g([0,1-p))>0$. Hence there is some $c\in\mathbb{R}$ such that $F_X^{-1}(1-u)=c$ for $\mu_g$-almost every $u\in [0,1-p)$; since $u\mapsto F_X^{-1}(1-u)$ is decreasing, we deduce that there is some $u_0>0$ such that 
$F_X^{-1}(1-u)=c$ for $0\leq u\leq u_0$, which implies that $c=\esssup X$ and $P(X=c)>0$, contradicting the hypothesis.

Now, using the convexity of the functions $z\mapsto (z-x)^+$, Proposition \ref{p-distOrliczbis}, and the comonotonic additivity of VaR, we have for $0<\lambda<1$ that
\begin{align*}
\pi_{g,\phi,\alpha}((X-(\lambda x_1+(1-\lambda) x_2))^+)&\leq \pi_{g,\phi,\alpha}(\lambda (X-x_1)^++ (1-\lambda)(X-x_2)^+)\\
&= \|F^{-1}_{\lambda (X-x_1)^++ (1-\lambda)(X-x_2)^+}(1-\cdot)\|\\
&= \|\lambda F^{-1}_{(X-x_1)^+}(1-\cdot)+ (1-\lambda)F^{-1}_{(X-x_2)^+}(1-\cdot)\|\\
&< \lambda\|F^{-1}_{(X-x_1)^+}(1-\cdot)\|+ (1-\lambda)\|F^{-1}_{(X-x_2)^+}(1-\cdot)\|\\
&= \lambda\pi_{g,\phi,\alpha}((X-x_1)^+)+ (1-\lambda)\pi_{\phi,\alpha}((X-x_2)^+),
\end{align*}
so that $x\mapsto \pi_{g,\phi,\alpha}((X-x)^+)+x$ is strictly convex for $x<\esssup X$.
\end{proof} 

\begin{example}\label{ex-strconv}
Let $g$ be the identity, $\phi(t)=t^2$, and $\alpha<1$. If $P(X=0)=P(X=1)=\frac{1}{2}$, then $\pi_{g,\phi,\alpha}((X-x)^+)+x= \frac{1}{\sqrt{2(1-\alpha)}}(1-x)+x$ for $0<x<1$, which is not strictly convex. Thus, part (b) of the proposition may fail if $P(X=\esssup X)>0$; the example also contradicts \cite[Proposition 11(f)]{BeRo08} and \cite[Proposition 3(c)]{BeRo12}.
\end{example}

Part (a) of the proposition implies that the minimum in the definition of the distortion Haezendonck-Goovaerts risk measure is attained if $\alpha\neq 0$.

\begin{proposition}\label{p-infmin}
Let $0<\alpha<1$ and $X\in L_g^\phi$. 

\emph{(a)} Then
\[
\rho_{g,\phi,\alpha}(X) = \min_{x\in\R} \big(\pi_{g,\phi,\alpha}((X-x)^+)+x\big).
\]

\emph{(b)} Let $g$ be continuous, $g(0)=0$, $g>0$ on $(0,1]$, and let $\phi$ be strictly convex and satisfy the $\Delta_2$-condition. If $P(X=\esssup X)=0$ then there is a unique value $x\in\mathbb{R}$ such that
\[
\rho_{g,\phi,\alpha}(X) =\pi_{g,\phi,\alpha}((X-x)^+)+x.
\]
\end{proposition}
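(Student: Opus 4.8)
The plan is to derive both parts from the convexity of the function
$$h(x):=\pi_{g,\phi,\alpha}((X-x)^+)+x,\qquad x\in\mathbb{R}.$$
First note that $h$ is real-valued: $(X-x)^+\in(L_g^\phi)_+$ for every $x$ by Remark \ref{r-distHaezGoov}, and $\pi_{g,\phi,\alpha}$ takes finite values. By Proposition \ref{p-distHaezGoovconv}(a), $x\mapsto\pi_{g,\phi,\alpha}((X-x)^+)$ is convex, hence so is $h$, and therefore $h$ is continuous on $\mathbb{R}$.

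For part (a), I would show that $h$ is coercive, i.e.\ $h(x)\to+\infty$ as $x\to\pm\infty$; a continuous function on $\mathbb{R}$ with this property attains its infimum, which is the claim. Towards $+\infty$ this is immediate since $\pi_{g,\phi,\alpha}\ge 0$, so $h(x)\ge x$. Towards $-\infty$ is exactly where $\alpha>0$ is used: by Proposition \ref{p-distOrlicz2}(a) together with the monotonicity and cash-invariance of $\rho_g$ (Proposition \ref{p-distmonposhom}),
\[
h(x)\ \ge\ \frac{\rho_g((X-x)^+)}{\phi^{-1}(1-\alpha)}+x\ \ge\ \frac{\rho_g(X)-x}{\phi^{-1}(1-\alpha)}+x\ =\ \frac{\rho_g(X)}{\phi^{-1}(1-\alpha)}+\Big(1-\frac{1}{\phi^{-1}(1-\alpha)}\Big)x.
\]
Since $\phi$ is normalized and $0<\alpha<1$, we have $\phi(1)=1>1-\alpha$, and because $\phi$ is nondecreasing and strictly increasing on $\{\phi>0\}$ this forces $c:=\phi^{-1}(1-\alpha)<1$. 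Hence $1-\tfrac1c<0$, so the right-hand side tends to $+\infty$ as $x\to-\infty$, and part (a) follows.

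For part (b), I would first localize the minimizers. If $x\ge\esssup X$ then $(X-x)^+=0$ a.s., so $h(x)=x$; thus, when $\esssup X<\infty$, no point of $(\esssup X,\infty)$ minimizes $h$ (there $h(x)=x>\esssup X=h(\esssup X)$), so every minimizer lies in $(-\infty,\esssup X]$ (if $\esssup X=\infty$ there is nothing to localize). Now suppose $x_1<x_2$ are two minimizers and set $m=\min h$. A convex function on $\mathbb{R}$ equal to its minimum at two points is constant on the segment between them, so $h\equiv m$ on $[x_1,x_2]$. Since $x_1<\esssup X$, the interval $[x_1,x_2]$ contains a nondegenerate subinterval of $(-\infty,\esssup X)$ on which $h$ is constant; choosing two distinct points of it with midpoint also in it contradicts the strict convexity of $x\mapsto\pi_{g,\phi,\alpha}((X-x)^+)+x$ on $(-\infty,\esssup X)$ granted by Proposition \ref{p-distHaezGoovconv}(b), whose hypotheses are precisely those assumed here. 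Hence the minimizer is unique.

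The only slightly delicate points are: verifying $c=\phi^{-1}(1-\alpha)<1$ strictly (this is what turns the lower bound in (a) into a coercive one rather than a mere constant bound, and is exactly why $\alpha=0$ must be excluded), and, in (b), handling the region $x\ge\esssup X$ — where $h$ is only (non-strictly) linear — by pushing the constancy of $h$ into the open interval $(-\infty,\esssup X)$ where strict convexity is available. Neither requires real computation once phrased carefully, so I expect no genuine obstacle.
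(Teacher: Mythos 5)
Your proof is correct and follows essentially the same route as the paper's: part (a) rests on the same coercivity lower bound from Proposition \ref{p-distOrlicz2}(a) together with the monotonicity/cash-invariance of $\rho_g$ and the observation $\phi^{-1}(1-\alpha)<1$, combined with convexity from Proposition \ref{p-distHaezGoovconv}(a); part (b) combines (a), the strict convexity on $(-\infty,\esssup X)$ from Proposition \ref{p-distHaezGoovconv}(b), and the identity $h(x)=x$ for $x\ge\esssup X$. You merely spell out in more detail the localization of minimizers and the standard fact that a convex function attaining its minimum at two points is constant between them, which the paper leaves implicit.
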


\begin{proof} (a) We follow the proof of \cite[Proposition 3(b)]{BeRo12}. By Proposition \ref{p-distOrlicz2}(a) we have, for any $x\in\mathbb{R}$,
\[
\pi_{g,\phi,\alpha}((X-x)^+)+x \geq \frac{\rho_g((X-x)^+)}{\phi^{-1}(1-\alpha)} +x,
\]
and therefore by monotonicity and cash-invariance of $\rho_g$,
\begin{equation}\label{eq-dist3}
\pi_{g,\phi,\alpha}((X-x)^+)+x \geq \frac{\rho_g(X)-x}{\phi^{-1}(1-\alpha)} +x=\frac{\rho_g(X)}{\phi^{-1}(1-\alpha)}+x\Big(1-\frac{1}{\phi^{-1}(1-\alpha)}\Big).
\end{equation}
It follows from these two inequalities that the function $x\mapsto \pi_{g,\phi,\alpha}((X-x)^+)+x$ tends to $\infty$ as $x\to\pm\infty$; note that $\phi^{-1}(1-\alpha)<1$. Since the function is convex by Proposition \ref{p-distHaezGoovconv}(a), the result follows.

(b) This is a direct consequence of part (a), Proposition \ref{p-distHaezGoovconv}(b), and the fact that $\pi_{g,\phi,\alpha}((X-x)^+)+x=x$ for $x\geq\esssup X$.
\end{proof}

\begin{example}
A variant of Example \ref{ex-strconv} shows that part (b) of the proposition may fail for any $\alpha\in (0,1)$, if $P(X=\esssup X)>0$. Indeed, if $g$ is the identity, $\phi(t)=t^2$, $P(X=0)=\alpha$, and $P(X=1)=1-\alpha$, then $\pi_{g,\phi,\alpha}((X-x)^+)+x= 1$ for $0\leq x\leq 1$, so that the minimum is not uniquely attained. 
\end{example}

The proof of Proposition \ref{p-infmin} also gives us some information on the location of a minimum.

\begin{lemma}\label{l-estim}
Let $0<\alpha<1$. Let $Y_1, Y_2\in L_g^\phi$ and $Y_1\leq X\leq Y_2$. If
\[
\rho_{g,\phi,\alpha}(X)=\pi_{g,\phi,\alpha}((X-x)^+)+x
\]
then
\[
\frac{\rho_g(Y_1)-\phi^{-1}(1-\alpha)\rho_{g,\phi,\alpha}(Y_2)}{1-\phi^{-1}(1-\alpha)}\leq x\leq \rho_{g,\phi,\alpha}(Y_2).
\]
\end{lemma}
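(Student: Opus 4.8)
The plan is to read the two inequalities off directly from the hypothesis $\rho_{g,\phi,\alpha}(X)=\pi_{g,\phi,\alpha}((X-x)^+)+x$ together with the monotonicity of $\rho_g$ and of $\rho_{g,\phi,\alpha}$, which lets the sandwich $Y_1\le X\le Y_2$ be transferred to their values. First I would record that $\rho_{g,\phi,\alpha}$ is monotonic: if $X\le Y_2$ then $(X-y)^+\le (Y_2-y)^+$ for every real $y$, so $\pi_{g,\phi,\alpha}((X-y)^+)+y\le\pi_{g,\phi,\alpha}((Y_2-y)^+)+y$ by Proposition \ref{p-distOrliczbis}, and taking infima over $y$ gives $\rho_{g,\phi,\alpha}(X)\le\rho_{g,\phi,\alpha}(Y_2)$; monotonicity of $\rho_g$ on $L_g$ is Proposition \ref{p-distmonposhom}, and the relevant membership conditions (for instance $(X-y)^+\in(L_g^\phi)_+$, and $X,Y_1\in L_g$) come from Remark \ref{r-distHaezGoov} and Proposition \ref{p-incl}. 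I would also abbreviate $\gamma=\phi^{-1}(1-\alpha)$ and note that $0<\gamma<1$ because $\phi$ is normalized and $0<\alpha<1$.

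\textbf{Upper bound.} For $x\le\rho_{g,\phi,\alpha}(Y_2)$ I would only use that $\pi_{g,\phi,\alpha}$ takes nonnegative values, so that $\rho_{g,\phi,\alpha}(X)=\pi_{g,\phi,\alpha}((X-x)^+)+x\ge x$; combining this with $\rho_{g,\phi,\alpha}(X)\le\rho_{g,\phi,\alpha}(Y_2)$ gives the claim at once.

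\textbf{Lower bound.} Here the key input is inequality \eqref{eq-dist3} from the proof of Proposition \ref{p-infmin}(a): for every $y\in\R$,
$\pi_{g,\phi,\alpha}((X-y)^+)+y\ge\frac{\rho_g(X)}{\gamma}+y\big(1-\frac{1}{\gamma}\big)$. Evaluating at $y=x$ and using the hypothesis yields $\rho_{g,\phi,\alpha}(X)\ge\frac{\rho_g(X)}{\gamma}+x\big(1-\frac{1}{\gamma}\big)$; since $1-\frac{1}{\gamma}<0$, solving for $x$ reverses the inequality and gives $x\ge\frac{\rho_g(X)-\gamma\,\rho_{g,\phi,\alpha}(X)}{1-\gamma}$. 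Finally, as $\gamma>0$ and $1-\gamma>0$, I would insert the monotonicity estimates $\rho_g(X)\ge\rho_g(Y_1)$ and $\rho_{g,\phi,\alpha}(X)\le\rho_{g,\phi,\alpha}(Y_2)$ to obtain $x\ge\frac{\rho_g(Y_1)-\gamma\,\rho_{g,\phi,\alpha}(Y_2)}{1-\gamma}$, which is the claim.

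\textbf{Main obstacle.} Honestly there is none of substance: everything reduces to \eqref{eq-dist3} plus monotonicity, and the hypothesis already hands us a point $x$ realizing the infimum, so convexity of $x\mapsto\pi_{g,\phi,\alpha}((X-x)^+)+x$ is not even needed. The only points requiring care are keeping track of the direction of the inequality when dividing by $1-\frac{1}{\gamma}$ — which is exactly where $\alpha>0$ (hence $\gamma<1$) enters essentially — and checking that the various $\rho_g$- and $\pi_{g,\phi,\alpha}$-expressions are legitimately defined on the risks involved.
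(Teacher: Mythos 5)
Your proof is correct and matches the paper's argument exactly: both derive the upper bound from positivity of $\pi_{g,\phi,\alpha}$ plus monotonicity of $\rho_{g,\phi,\alpha}$, and both obtain the lower bound by solving inequality \eqref{eq-dist3} for $x$ (using that $1-1/\phi^{-1}(1-\alpha)<0$ since $0<\alpha<1$) and then inserting the monotonicity estimates $\rho_g(X)\ge\rho_g(Y_1)$ and $\rho_{g,\phi,\alpha}(X)\le\rho_{g,\phi,\alpha}(Y_2)$.
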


\begin{proof}
First note that, by Proposition \ref{p-inv}, $X\in L_g^\phi$. The right-hand inequality is clear by positivity of $\pi_{g,\phi,\alpha}$ and monotonicity of $\rho_{g,\phi,\alpha}$. Next, by \eqref{eq-dist3},
\[
x\geq \frac{\frac{\rho_g(X)}{\phi^{-1}(1-\alpha)}-\rho_{g,\phi,\alpha}(X)}{\frac{1}{\phi^{-1}(1-\alpha)}-1},
\]
which implies the left-hand inequality by the monotonicity of $\rho_g$ and $\rho_{g,\phi,\alpha}$.
\end{proof}

Of course, one obtains the best estimate if $Y_1=Y_2=X$, but it is in the above form that the lemma will be useful in the sequel.

For $\alpha=0$, the situation is quite different.

\begin{proposition}\label{p-distHaezGoovinf}
Let $\alpha=0$ and $X\in L_g^\phi$.

\emph{(a)} Then $x\mapsto \pi_{g,\phi,0}((X-x)^+)+x$ is increasing on $\mathbb{R}$. In particular,
\[
\rho_{g,\phi,0}(X) = \lim_{x\to-\infty} \big(\pi_{g,\phi,0}((X-x)^+)+x\big).
\]

\emph{(b)} Let $g$ be continuous, $g(0)=0$, $g>0$ on $(0,1]$, and let $\phi$ be strictly convex and satisfy the $\Delta_2$-condition. If $P(X=\esssup X)=0$ then $x\mapsto \pi_{g,\phi,0}((X-x)^+)+x$ is strictly increasing on $\mathbb{R}$. In particular, the function does not attain its infimum.
\end{proposition}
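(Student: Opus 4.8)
The plan is to reduce the statement to the corresponding behaviour of the Orlicz-Lorentz premium under the stochastic-dominance and comonotonicity machinery already developed, rather than to argue directly with the quantile integral. For part (a), the key observation is that for $x\leq y$ one can write $(X-x)^+ = (X-y)^+ + \bigl((X-x)^+ - (X-y)^+\bigr)$, and the second summand is a risk bounded by $y-x$; moreover $(X-y)^+$ and $(X-x)^+-(X-y)^+$ are comonotonic (both are increasing functions of $X$). Hence by Propositions~\ref{p-distOrliczcom} and \ref{p-distOrliczbis}, together with Proposition~\ref{p-distOrlicz2}(b) applied with $\alpha=0$ (so that $\phi^{-1}(1)=1$ gives $\pi_{g,\phi,0}(b)=b$), one gets
\[
\pi_{g,\phi,0}((X-x)^+) \leq \pi_{g,\phi,0}((X-y)^+) + (y-x),
\]
which rearranges to $\pi_{g,\phi,0}((X-x)^+)+x \leq \pi_{g,\phi,0}((X-y)^+)+y$. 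This is exactly monotonicity of the map in $x$. The "in particular" clause then follows because $\rho_{g,\phi,0}(X)$ is the infimum of an increasing function, hence its limit as $x\to-\infty$; finiteness of that limit is guaranteed by Remark~\ref{r-distHaezGoov} (the inequality $\pi_{g,\phi,0}((X-x)^+)+x\geq \rho_g(X)$).

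For part (b), under the extra hypotheses I would upgrade the above inequality to a strict one. The natural route is to invoke Proposition~\ref{p-distHaezGoovconv}(b): under exactly these assumptions on $g$, $\phi$ and $X$, the function $h(x):=\pi_{g,\phi,0}((X-x)^+)+x$ is strictly convex on $(-\infty,\esssup X)$. A strictly convex function that is also increasing (by part (a)) on an interval is automatically strictly increasing on that interval: if $h(x_1)=h(x_2)$ for $x_1<x_2<\esssup X$ then by monotonicity $h$ is constant on $[x_1,x_2]$, contradicting strict convexity. For $x\geq \esssup X$ we have $(X-x)^+=0$, so $h(x)=\pi_{g,\phi,0}(0)+x=x$, which is strictly increasing, and it matches up continuously (by part~(a) monotonicity and the convexity) with the piece on $(-\infty,\esssup X)$. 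Thus $h$ is strictly increasing on all of $\mathbb{R}$. Since a strictly increasing function on $\mathbb R$ attains no minimum (the value at any point is strictly larger than values to its left), the infimum is not attained.

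The main obstacle I anticipate is making sure the comonotonicity argument in part (a) is airtight: one must check that $(X-y)^+$ and $(X-x)^+-(X-y)^+$ are genuinely comonotonic in the sense of the paper (both expressible as increasing functions of a common random variable, here $X$ itself via $z\mapsto(z-y)^+$ and $z\mapsto(z-x)^+-(z-y)^+$, the latter being increasing since its derivative, where defined, is $\mathds{1}_{(x,y]}\geq 0$), and that $(X-x)^+-(X-y)^+$ lies in $(L_g^\phi)_+$ — which it does, being bounded by the constant $y-x$ and hence in $L^\infty\subset (L_g^\phi)_+$. A secondary point of care in part (b) is the matching of the two pieces of $h$ at $x=\esssup X$ when $\esssup X<\infty$; here one uses that $h$ is continuous (being convex and finite on $(-\infty,\esssup X)$ and equal to $x$ for $x\geq\esssup X$, with the two one-sided limits agreeing by monotonicity together with $\pi_{g,\phi,0}((X-x)^+)\to 0$ as $x\nearrow\esssup X$, which follows from Proposition~\ref{p-distOrlicz2}(a) since $\esssup (X-x)^+ = \esssup X - x \to 0$). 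Everything else is a direct appeal to results already in hand.
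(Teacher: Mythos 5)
Your proof is correct and follows essentially the same route as the paper: part (a) uses the comonotonic decomposition $(X-x)^+ = (X-y)^+ + \bigl((X-x)^+-(X-y)^+\bigr)$ together with Propositions \ref{p-distOrliczcom}, \ref{p-distOrliczbis}, and \ref{p-distOrlicz2}(b), and part (b) combines part (a) with Proposition \ref{p-distHaezGoovconv}(b) and the identity $h(x)=x$ for $x\geq\esssup X$. Your concern about matching the two pieces at $\esssup X$ is unnecessary (for $x_1<\esssup X\leq x_2$ pick $z\in(x_1,\esssup X)$ and use strict increase then monotonicity), but it does no harm.
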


\begin{proof}
(a) Let $x_1<x_2$. Using Proposition \ref{p-distOrliczcom}, applied to the comonotonic risks $(X-x_1)^+-(X-x_2)^+$ and $(X-x_2)^+$, the fact that $(x-x_1)^+-(x-x_2)^+\leq x_2-x_1$ for all $x\in\mathbb{R}$, and Propositions \ref{p-distOrliczbis} and \ref{p-distOrlicz2}(b), we obtain that
\begin{align*}
\pi_{g,\phi,0}((X-x_1)^+)&= \pi_{g,\phi,0}((X-x_1)^+-(X-x_2)^++(X-x_2)^+)\\
&\leq \pi_{g,\phi,0}((X-x_1)^+-(X-x_2)^+)+\pi_{g,\phi,0}((X-x_2)^+)\\
&\leq x_2-x_1 + \pi_{g,\phi,0}((X-x_2)^+),
\end{align*}
which implies the claim.

(b) This is a direct consequence of part (a), Proposition \ref{p-distHaezGoovconv}(b), and the fact that $\pi_{g,\phi,0}((X-x)^+)+x=x$ for $x\geq\esssup X$.
\end{proof} 

As in the undistorted case, for $\alpha=0$ the distortion Haezendonck-Goovaerts risk measure often reduces to the corresponding distortion risk measure. Since we first need some more knowledge about these risk measures, we postpone the discussion, see Theorem \ref{t-dHGtriv} below.

\subsection{Distortion HG: risk theoretic properties}
We collect several important properties of the distortion Haezendonck-Goovaerts risk measures. In these results, when we do not restrict $\alpha$, we suppose that $\alpha\in[0,1)$.

\begin{proposition}\label{p-distHaezGoov}
Let $X\in L_g^\phi$. Then:
\begin{enumerate}[label=\emph{(\alph*)}] 
\item $\rho_{g,\phi,\alpha}(X) \leq \pi_{g,\phi,\alpha}(X^+)$.
\item $\rho_g(X)\leq \rho_{g,\phi,\alpha}(X) \leq \esssup X$.
\end{enumerate}
Suppose, in addition, that $g:[0,1]\to [0,1]$ is bijective, and let $\alpha\neq 0$. Then:
\begin{enumerate}[label=\emph{(\alph*)},resume]
\item $\rho_{g,\phi,\alpha}(X) \geq \VaR_{1-g^{-1}(1-\alpha)}(X)$.
\end{enumerate}
\end{proposition}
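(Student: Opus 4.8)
The plan is as follows. Parts (a) and (b) are quick. For (a), I would take $x=0$ in the infimum defining $\rho_{g,\phi,\alpha}$; since $(X-0)^+=X^+$ this already gives $\rho_{g,\phi,\alpha}(X)\le\pi_{g,\phi,\alpha}(X^+)$. For the left-hand inequality of (b) I would simply quote the chain \eqref{eq-dist} from Remark \ref{r-distHaezGoov}, which says $\pi_{g,\phi,\alpha}((X-x)^+)+x\ge\rho_g(X)$ for every $x\in\R$ (recall $X\in L_g^\phi\subset L_g$ by Proposition \ref{p-incl}); taking the infimum over $x$ yields $\rho_{g,\phi,\alpha}(X)\ge\rho_g(X)$. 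For the right-hand inequality, if $\esssup X=\infty$ there is nothing to show, and if $\esssup X<\infty$ I would take $x=\esssup X$, so that $(X-x)^+=0$ a.s.\ and $\pi_{g,\phi,\alpha}((X-x)^+)=0$ by Proposition \ref{p-distOrlicz2}(b), whence $\rho_{g,\phi,\alpha}(X)\le\esssup X$.

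The substance lies in (c). A bijective increasing $g:[0,1]\to[0,1]$ is automatically continuous and strictly increasing with $g(0)=0$, so $\beta:=g^{-1}(1-\alpha)$ is a well-defined point of $(0,1)$ (here $\alpha\neq0$ and $\alpha<1$ force $1-\alpha\in(0,1)$) and $1-\beta\in(0,1)$ is an admissible $\VaR$ level. My plan is first to prove the one-variable estimate
\[
\pi_{g,\phi,\alpha}(Y)\ \ge\ \VaR_{1-\beta}(Y)=F_Y^{-1}(1-\beta)\qquad\text{for every }Y\in(L_g^\phi)_+.
\]
Write $d=F_Y^{-1}(1-\beta)$; the estimate is trivial if $d\le0$, so assume $d>0$. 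For any admissible $a>0$ (one with $\int_0^1\phi(F_Y^{-1}(1-u)/a)\,\d g(u)\le1-\alpha$) I would discard everything outside $[0,\beta)$, on which $F_Y^{-1}(1-u)\ge F_Y^{-1}(1-\beta)=d$ since $1-u>1-\beta$ and $F_Y^{-1}$ is non-decreasing; monotonicity of $\phi$ then gives
\[
1-\alpha\ \ge\ \int_{[0,\beta)}\phi\Big(\frac{F_Y^{-1}(1-u)}{a}\Big)\,\d g(u)\ \ge\ \phi\Big(\frac{d}{a}\Big)\,\mu_g\big([0,\beta)\big).
\]
Continuity of $g$ gives $\mu_g([0,\beta))=g(\beta-)=g(\beta)=1-\alpha$, so $\phi(d/a)\le1=\phi(1)$; since $\phi$ is convex with $\phi(0)=0$, it is non-decreasing and strictly increasing wherever it is positive, and because $\phi(1)=1>0$ this forces $d/a\le1$, i.e.\ $a\ge d$. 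Infimizing over admissible $a$ gives $\pi_{g,\phi,\alpha}(Y)\ge d$.

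Given this estimate, (c) follows at once. For each $x\in\R$ the risk $(X-x)^+$ lies in $(L_g^\phi)_+$ by Remark \ref{r-distHaezGoov}, and since $t\mapsto(t-x)^+$ is continuous and increasing, $F^{-1}_{(X-x)^+}(1-\beta)=(F_X^{-1}(1-\beta)-x)^+$. Applying the estimate to $Y=(X-x)^+$ yields
\[
\pi_{g,\phi,\alpha}((X-x)^+)+x\ \ge\ (F_X^{-1}(1-\beta)-x)^++x\ =\ \max\big(F_X^{-1}(1-\beta),x\big)\ \ge\ F_X^{-1}(1-\beta),
\]
and taking the infimum over $x$ gives $\rho_{g,\phi,\alpha}(X)\ge F_X^{-1}(1-\beta)=\VaR_{1-g^{-1}(1-\alpha)}(X)$. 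The only step that needs genuine care is the implication $\phi(d/a)\le\phi(1)\Rightarrow d/a\le1$: this is precisely where the hypotheses that $\phi$ is normalized and that $\alpha\neq0$, together with the continuity of $g$ (coming from its bijectivity, which also gives $g(0)=0$ and hence $\mu_g(\{0\})=0$), are all used; the rest is routine manipulation of quantile functions.
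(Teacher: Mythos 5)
Your proof is correct, and parts (a) and (b) coincide with the paper's argument (take $x=0$, respectively $x=\esssup X$, in the infimum; invoke inequality \eqref{eq-dist} for the lower bound). For part (c) you follow the same strategy as the paper — bound the defining integral below to force every admissible $a$ to be at least $\VaR_{1-\beta}(X)-x$ — but with a different pointwise estimate. The paper bounds the integrand below by the indicator $\mathds{1}_{\{F_X^{-1}(1-u)-x>a\}}$ (using $\phi(t)\geq 1$ for $t\geq 1$, i.e.\ normalization), integrates to get $g(1-F_X(x+a))$, and then inverts $g$ using its bijectivity. You instead restrict the integral to $[0,\beta)$, where the integrand is at least $\phi(d/a)$, obtain $\phi(d/a)(1-\alpha)\leq 1-\alpha$ because $\mu_g([0,\beta))=g(\beta-)=g(\beta)=1-\alpha$, and then invert $\phi$ using normalization and strict monotonicity. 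Both inversions rely on exactly the same two hypotheses but apply them in the opposite order ($g^{-1}$ then $\phi$ in the paper, $\phi^{-1}$ then $g$ for you). An incidental benefit of your organization is that you isolate a standalone estimate $\pi_{g,\phi,\alpha}(Y)\geq\VaR_{1-g^{-1}(1-\alpha)}(Y)$ on $(L_g^\phi)_+$, which is implicit in the paper's inline computation; this also makes it transparent that the paper's intermediate equality $\inf\{a>0: g(1-F_X(a+x))\leq 1-\alpha\}=\VaR_{1-g^{-1}(1-\alpha)}(X)-x$ should really carry a $(\,\cdot\,)^+$ when $x$ is large, a harmless imprecision your version avoids by writing $(F_X^{-1}(1-\beta)-x)^+$.
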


\begin{proof}
(a) is obvious by taking $x=0$ in the definition of $\rho_{g,\phi,\alpha}$.

(b) The first inequality follows form \eqref{eq-dist} by definition of $\rho_{g,\phi,\alpha}$. The second inequality is trivial if $\esssup X=\infty$; otherwise it follows by taking $x=\esssup X$ in the definition of $\rho_{g,\phi,\alpha}$.

(c) We note that, for any $a>0$ and $b\in\R$, $\mathds{1}_{\{ b>a\}}\leq \phi\big(\frac{b^+}{a}\big)$. Thus, for any $a>0$ and $x\in \R$,
\begin{align*}
\int_0^1 \phi\Big(\frac{F_{(X-x)^+}^{-1}(1-u)}{a}\Big) \d g(u)&= \int_0^1 \phi\Big(\frac{(F_X^{-1}(1-u)-x)^+}{a}\Big) \d g(u)\\
&\geq  \int_0^1 \mathds{1}_{\{ F_X^{-1}(1-u)-x>a\}} \d g(u)=  \int_0^1 \mathds{1}_{\{ u< 1- F_X(a+x)\}} \d g(u)\\
&= g(1- F_X(x+a)),
\end{align*}
where we have applied properties of quantile functions; note also that $g$ is necessarily continuous with $g(0)=0$. Hence $\pi_{g,\phi,\alpha}((X-x)^+) \geq \inf\{ a : g(1- F_X(a+x))\leq 1-\alpha\}= \inf\{ a : F_X(a+x)\geq 1-g^{-1}(1-\alpha) \} = \VaR_{1-g^{-1}(1-\alpha)}(X)-x$. The definition of $\rho_{g,\phi,\alpha}$ then yields the claim.
\end{proof}

\begin{proposition}\label{p-distHaezGoov2}
The distortion Haezendonck-Goovaerts risk measure $\rho_{g,\phi,\alpha}$ is monotonic, cash-invariant and positively homogeneous on $L_g^\phi$.
\end{proposition}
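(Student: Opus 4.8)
The plan is to deduce each of the three properties directly from the corresponding property of the Orlicz-Lorentz premium $\pi_{g,\phi,\alpha}$ (Proposition \ref{p-distOrliczbis}, together with Proposition \ref{p-distOrlicz2}(b)), exploiting the fact that the defining infimum of $\rho_{g,\phi,\alpha}$ behaves well under translations and dilations of the variable $x$. Throughout, one uses Remark \ref{r-distHaezGoov}: for $X\in L_g^\phi$ and $x\in\R$ one has $(X-x)^+\in(L_g^\phi)_+$, so $\pi_{g,\phi,\alpha}$ may be applied, and $\rho_{g,\phi,\alpha}(X)\in\R$. For monotonicity, suppose $X\le Y$ with $X,Y\in L_g^\phi$. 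Then for every $x\in\R$ we have $(X-x)^+\le (Y-x)^+$ in $(L_g^\phi)_+$, so the monotonicity of $\pi_{g,\phi,\alpha}$ gives $\pi_{g,\phi,\alpha}((X-x)^+)+x\le\pi_{g,\phi,\alpha}((Y-x)^+)+x$; taking the infimum over $x$ yields $\rho_{g,\phi,\alpha}(X)\le\rho_{g,\phi,\alpha}(Y)$.

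For cash-invariance, first note that $L_g^\phi$ is stable under adding a constant $b\in\R$: using $F_{X+b}^{-1}=F_X^{-1}+b$ and the convexity of $\phi$ exactly as in Remark \ref{r-distHaezGoov}, one sees that $X+b\in L_g^\phi$, so the statement is non-vacuous. Then the substitution $x\mapsto x-b$ in the defining infimum gives
\[
\rho_{g,\phi,\alpha}(X+b)=\inf_{x\in\R}\big(\pi_{g,\phi,\alpha}((X+b-x)^+)+x\big)=\inf_{y\in\R}\big(\pi_{g,\phi,\alpha}((X-y)^+)+y\big)+b=\rho_{g,\phi,\alpha}(X)+b.
\]

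For positive homogeneity, I would treat $\lambda>0$ and $\lambda=0$ separately. If $\lambda>0$, then $\lambda X\in L_g^\phi$ (rescale $a$ in the definition of $L_g^\phi$), and since $(\lambda X-x)^+=\lambda(X-\tfrac{x}{\lambda})^+$, the substitution $x\mapsto\lambda y$ together with the positive homogeneity of $\pi_{g,\phi,\alpha}$ gives $\rho_{g,\phi,\alpha}(\lambda X)=\inf_{y\in\R}\big(\lambda\pi_{g,\phi,\alpha}((X-y)^+)+\lambda y\big)=\lambda\rho_{g,\phi,\alpha}(X)$. The case $\lambda=0$ requires a small computation: here $\lambda X=0\in L^\infty\subset L_g^\phi$, and one must check $\rho_{g,\phi,\alpha}(0)=0$. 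For $x\ge 0$, $(-x)^+=0$, so $\pi_{g,\phi,\alpha}((-x)^+)+x=x\ge 0$, with value $0$ at $x=0$; for $x<0$, Proposition \ref{p-distOrlicz2}(b) gives $\pi_{g,\phi,\alpha}(-x)+x=\tfrac{-x}{\phi^{-1}(1-\alpha)}+x=x\big(1-\tfrac{1}{\phi^{-1}(1-\alpha)}\big)\ge 0$, since $\phi^{-1}(1-\alpha)\le 1$ by normalization of $\phi$ and $\alpha\in[0,1)$. Hence $\rho_{g,\phi,\alpha}(0)=0$, and as $\rho_{g,\phi,\alpha}(X)\in\R$, we get $0\cdot\rho_{g,\phi,\alpha}(X)=0=\rho_{g,\phi,\alpha}(0)$.

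There is no serious obstacle here; the argument is essentially a change of variables reducing everything to Proposition \ref{p-distOrliczbis}. The only mildly delicate point is the $\lambda=0$ case of positive homogeneity, which is exactly where the standing hypotheses that $\phi$ is normalized and $\alpha\ge 0$ are needed (the same assumptions already invoked in Remark \ref{r-distHaezGoov} to make $\rho_{g,\phi,\alpha}$ finite), together with the routine bookkeeping that $(X-x)^+$ always belongs to $(L_g^\phi)_+$ and that $L_g^\phi$ is closed under adding constants and under multiplication by nonnegative scalars.
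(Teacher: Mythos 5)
Your proof is correct and follows essentially the same route as the paper: all three properties are deduced from Propositions \ref{p-distOrliczbis} and \ref{p-distOrlicz2}(b) by changes of variables in the defining infimum, and the $\lambda=0$ case of positive homogeneity is handled by the same computation (the paper writes $\pi_{g,\phi,\alpha}(-x)+x=(-x)\pi_{g,\phi,\alpha}(1)+x\ge 0$ for $x<0$ using $\pi_{g,\phi,\alpha}(1)\ge 1$, which is just a slight rewording of your estimate). Your additional remarks that $L_g^\phi$ is stable under adding constants and under nonnegative scaling are correct but not strictly needed, since the paper's definition of cash-invariance and positive homogeneity already has $X+b\in\mathcal{X}$, resp.\ $\lambda X\in\mathcal{X}$, as a hypothesis.
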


\begin{proof}
Cash-invariance follows from the identity
\begin{align*}
\pi_{g,\phi,\alpha}((X+b-x)^+)+x = \pi_{g,\phi,\alpha}((X-(x-b))^+)+(x-b) +b.
\end{align*} 
Monotonicity passes from $\pi_{g,\phi,\alpha}$ to $\rho_{g,\phi,\alpha}$ since $(X-x)^+\leq (Y-x)^+$ if $X\leq Y$. Positive homogeneity for $\lambda>0$ follows from the identity
\[
\pi_{g,\phi,\alpha}((\lambda X-x)^+)+x =\lambda\big(\pi_{g,\phi,\alpha}((X-\tfrac{x}{\lambda})^+)+\tfrac{x}{\lambda}\big).
\] 
For $\lambda=0$ we note that $\pi_{g,\phi,\alpha}((0-0)^+)+0=0$, $\pi_{g,\phi,\alpha}((0-x)^+)+x\geq 0$ if $x>0$, and $\pi_{g,\phi,\alpha}((0-x)^+)+x=\pi_{g,\phi,\alpha}(-x)+x=(-x)\pi_{g,\phi,\alpha}(1)+x\geq 0$ if $x<0$, where we have used the positive homogeneity of $\pi_{g,\phi,\alpha}$ and that $\pi_{g,\phi,\alpha}(1)\geq 1$ by Proposition \ref{p-distOrlicz2}(b). Thus, $\rho_{g,\phi,\alpha}(0)=0$.
\end{proof}

The distortion Haezendonck-Goovaerts risk measures are subadditive for comonotonic risks.

\begin{proposition}\label{p-distHGcom}
Let $X,Y\in L_g^\phi$ be comonotonic risks. Then $X+Y\in L_g^\phi$ and
\[
\rho_{g,\phi,\alpha}(X+Y)\leq \rho_{g,\phi,\alpha}(X)+\rho_{g,\phi,\alpha}(Y).
\]
\end{proposition}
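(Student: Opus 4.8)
The plan is to reduce everything to the already-established comonotonic properties of the Orlicz-Lorentz premium, Propositions \ref{p-distOrliczcom} and \ref{p-distOrliczbis}, together with the comonotonic additivity of the quantile function, $F_{X+Y}^{-1}=F_X^{-1}+F_Y^{-1}$. First I would verify membership, $X+Y\in L_g^\phi$, which cannot be taken for granted since $L_g^\phi$ need not be a convex cone (Example \ref{ex-cone}); here comonotonicity is genuinely used. Indeed, $F_{X+Y}^{-1}=F_X^{-1}+F_Y^{-1}$ gives $|F_{X+Y}^{-1}(1-u)|\le |F_X^{-1}(1-u)|+|F_Y^{-1}(1-u)|$, and choosing $a_1,a_2>0$ with $\int_0^1\phi(|F_X^{-1}(1-u)|/a_1)\,\d g(u)<\infty$ and $\int_0^1\phi(|F_Y^{-1}(1-u)|/a_2)\,\d g(u)<\infty$ (possible because $X,Y\in L_g^\phi$), the convexity of $\phi$ yields, exactly as in the proof of Proposition \ref{p-distOrliczcom},
\[
\phi\Big(\frac{|F_{X+Y}^{-1}(1-u)|}{a_1+a_2}\Big)\le \frac{a_1}{a_1+a_2}\phi\Big(\frac{|F_X^{-1}(1-u)|}{a_1}\Big)+\frac{a_2}{a_1+a_2}\phi\Big(\frac{|F_Y^{-1}(1-u)|}{a_2}\Big),
\]
so integrating with respect to $\d g$ shows $X+Y\in L_g^\phi$.

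For the inequality, fix $\eps>0$ and choose $x_1,x_2\in\R$ with $\pi_{g,\phi,\alpha}((X-x_1)^+)+x_1\le \rho_{g,\phi,\alpha}(X)+\eps$ and $\pi_{g,\phi,\alpha}((Y-x_2)^+)+x_2\le \rho_{g,\phi,\alpha}(Y)+\eps$; this is possible since $\rho_{g,\phi,\alpha}(X),\rho_{g,\phi,\alpha}(Y)\in\R$ by Remark \ref{r-distHaezGoov}. The elementary pointwise bound $(s+t-x_1-x_2)^+\le (s-x_1)^++(t-x_2)^+$ gives $(X+Y-(x_1+x_2))^+\le (X-x_1)^++(Y-x_2)^+$. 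Next, $(X-x_1)^+$ and $(Y-x_2)^+$ are comonotonic (composing a common representation $(X,Y)=_d(f_1(Z),f_2(Z))$, $f_i$ increasing, with the increasing maps $t\mapsto (t-x_i)^+$), and both lie in $(L_g^\phi)_+$ by Remark \ref{r-distHaezGoov}; hence their sum lies in $(L_g^\phi)_+$ by Proposition \ref{p-distOrliczcom}, and then $(X+Y-(x_1+x_2))^+\in (L_g^\phi)_+$ by Proposition \ref{p-inv}. Now, using the definition of $\rho_{g,\phi,\alpha}(X+Y)$ with $x=x_1+x_2$, then monotonicity of $\pi_{g,\phi,\alpha}$ (Proposition \ref{p-distOrliczbis}), then its comonotonic subadditivity (Proposition \ref{p-distOrliczcom}), and finally the choice of $x_1,x_2$,
\begin{align*}
\rho_{g,\phi,\alpha}(X+Y)&\le \pi_{g,\phi,\alpha}((X+Y-(x_1+x_2))^+)+x_1+x_2\\
&\le \pi_{g,\phi,\alpha}((X-x_1)^++(Y-x_2)^+)+x_1+x_2\\
&\le \pi_{g,\phi,\alpha}((X-x_1)^+)+\pi_{g,\phi,\alpha}((Y-x_2)^+)+x_1+x_2\\
&\le \rho_{g,\phi,\alpha}(X)+\rho_{g,\phi,\alpha}(Y)+2\eps.
\end{align*}
Letting $\eps\to 0$ completes the proof. (For $\alpha>0$ one could work directly with the optimal $x_1,x_2$ by Proposition \ref{p-infmin}, but the $\eps$-version covers all $\alpha\in[0,1)$ uniformly.)

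The only real point of care — the ``main obstacle'' — is the membership step: since $L_g^\phi$ is not a convex cone in general, $X+Y\in L_g^\phi$ must be derived, and the computation above makes explicit that it is precisely comonotonicity, through $F_{X+Y}^{-1}=F_X^{-1}+F_Y^{-1}$, that makes this go through. Once that is in place, the argument is a routine reduction to the comonotonic properties of $\pi_{g,\phi,\alpha}$, mirroring the structure of the proof of Proposition \ref{p-distHaezGoovconv}(a).
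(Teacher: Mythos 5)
Your proof is correct and follows essentially the same route as the paper's: same membership argument via $F_{X+Y}^{-1}=F_X^{-1}+F_Y^{-1}$ and convexity, same observation that $(X-x_1)^+$ and $(Y-x_2)^+$ are comonotonic, and the same chain of inequalities through Propositions \ref{p-distOrliczbis} and \ref{p-distOrliczcom}. The only cosmetic difference is that you use an $\eps$-argument where the paper takes infima over $x,y$ directly; your version is perhaps a bit more explicit about the membership bookkeeping, but the substance is identical.
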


\begin{proof}
Let $X,Y\in L_g^\phi$ be comonotonic. Since VaR is additive for comonotonic risks, we have that $|F_{X+Y}^{-1}(1-u)|\leq |F_{X}^{-1}(1-u)|+|F_{Y}^{-1}(1-u)|$ for all $u\in[0,1)$. Thus, by the argument in the proof of Proposition \ref{p-distOrliczcom} and using the monotonicity of $\phi$, we find that $X+Y\in L_g^\phi$.

Next, let $x,y\in \mathbb{R}$. Since $X$ and $Y$ are comonotonic, there is a random variable $Z$ with values in an interval $I\subset \R$ and two increasing functions $f_1,f_2:I\to\R$ such that $(X,Y)$ and $(f_1(Z),f_2(Z))$ have the same distribution. But then $((X-x)^+,(Y-y)^+)$ and $((f_1(Z)-x)^+,(f_2(Z)-y)^+)$ have the same distribution, so that also $(X-x)^+$ and $(Y-y)^+$ are comonotonic. Thus, by Proposition \ref{p-distOrliczcom} and the monotonicity of $\pi_{g,\phi,\alpha}$, we obtain that
\begin{align*}
\pi_{g,\phi,\alpha}((X+Y-(x+y))^+)+(x+y)&\leq \pi_{g,\phi,\alpha}((X-x)^++(Y-y)^+)+(x+y)\\
&\leq \pi_{g,\phi,\alpha}((X-x)^+)+x+\pi_{g,\phi,\alpha}((Y-y)^+)+y.
\end{align*}
Taking infima on both sides implies the claim.
\end{proof}

We turn to continuity properties. In the following results, some proofs require that $\alpha\neq 0$.

\begin{proposition}\label{p-distHGFatou}
If $0<\alpha<1$, then $\rho_{g,\phi,\alpha}$ has the Fatou property on $L_g^\phi$.
\end{proposition}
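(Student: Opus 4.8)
The plan is to use Remark \ref{r-Fatou}(c) to reduce to increasing sequences. By Proposition \ref{p-inv}, $L_g^\phi$ has the property that any risk sandwiched between two of its elements again belongs to it, and $\rho_{g,\phi,\alpha}$ is monotonic by Proposition \ref{p-distHaezGoov2}; hence it suffices to prove that $X_n\nearrow X$ with all $X_n,X\in L_g^\phi$ implies $\rho_{g,\phi,\alpha}(X_n)\to\rho_{g,\phi,\alpha}(X)$. By monotonicity the sequence $(\rho_{g,\phi,\alpha}(X_n))_n$ is increasing and bounded above by $\rho_{g,\phi,\alpha}(X)$, so it converges to a limit $\le\rho_{g,\phi,\alpha}(X)$; the whole point is the reverse inequality.

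Since $0<\alpha<1$, Proposition \ref{p-infmin}(a) gives for each $n$ a minimizer $x_n\in\R$, so that $\rho_{g,\phi,\alpha}(X_n)=\pi_{g,\phi,\alpha}((X_n-x_n)^+)+x_n$. The key step is that $(x_n)_n$ is bounded. Indeed, applying Lemma \ref{l-estim} with $Y_1=X_1$ and $Y_2=X$ (legitimate since $X_1\le X_n\le X$ and $X_1,X\in L_g^\phi$) yields
\[
\frac{\rho_g(X_1)-\phi^{-1}(1-\alpha)\,\rho_{g,\phi,\alpha}(X)}{1-\phi^{-1}(1-\alpha)}\le x_n\le\rho_{g,\phi,\alpha}(X)
\]
for all $n$, and both bounds are finite (recall $X_1\in L_g^\phi\subset L_g$) and independent of $n$. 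Fix $M$ with $|x_n|\le M$ for all $n$ and pass to a subsequence $(n_k)_k$ with $x_{n_k}\to x_\ast$.

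Now I would invoke the Fatou property of the Orlicz-Lorentz premium, Proposition \ref{p-distOrliczFatou}, for the sequence $(X_{n_k}-x_{n_k})^+$ in $(L_g^\phi)_+$. Indeed $(X_{n_k}-x_{n_k})^+\to(X-x_\ast)^+$ almost surely, and $0\le(X_{n_k}-x_{n_k})^+\le(X+M)^+$, where $0$ and $(X+M)^+=(X-(-M))^+$ both lie in $(L_g^\phi)_+$ by Remark \ref{r-distHaezGoov}. Hence $\pi_{g,\phi,\alpha}((X-x_\ast)^+)\le\liminf_k\pi_{g,\phi,\alpha}((X_{n_k}-x_{n_k})^+)$. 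Adding $x_\ast=\lim_k x_{n_k}$ and using the definition of $\rho_{g,\phi,\alpha}$ together with the choice of the $x_{n_k}$,
\[
\rho_{g,\phi,\alpha}(X)\le\pi_{g,\phi,\alpha}((X-x_\ast)^+)+x_\ast\le\liminf_k\big(\pi_{g,\phi,\alpha}((X_{n_k}-x_{n_k})^+)+x_{n_k}\big)=\liminf_k\rho_{g,\phi,\alpha}(X_{n_k}),
\]
which equals $\lim_n\rho_{g,\phi,\alpha}(X_n)$ since that sequence converges. This gives the desired inequality.

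The main obstacle is exactly the uniform boundedness of the minimizers $x_n$: without controlling them one cannot pass to the limit in the infimum defining $\rho_{g,\phi,\alpha}$, and it is Lemma \ref{l-estim}, fed with the sandwich $X_1\le X_n\le X$, that supplies this control. Everything else — the reduction to monotone sequences, the almost sure convergence and the sandwich needed to apply Proposition \ref{p-distOrliczFatou} on the cone $(L_g^\phi)_+$, and the elementary manipulation of $\liminf$ — is routine.
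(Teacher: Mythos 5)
Your proof is correct and follows essentially the same route as the paper: reduce to increasing sequences via Remark \ref{r-Fatou}(c), extract minimizers $x_n$ from Proposition \ref{p-infmin}(a), bound them uniformly via Lemma \ref{l-estim} applied with the sandwich $X_1\le X_n\le X$, pass to a convergent subsequence, and invoke the Fatou property of $\pi_{g,\phi,\alpha}$ (Proposition \ref{p-distOrliczFatou}) for the dominated sequence $(X_{n}-x_{n})^+$. Your choice of dominating risk $(X+M)^+$ with $M\ge\sup_n|x_n|$ is if anything slightly cleaner than the bound written in the paper, and your explicit handling of the subsequence together with the prior observation that $(\rho_{g,\phi,\alpha}(X_n))_n$ is increasing and convergent is a tidy way to close the argument.
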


\begin{proof}
By Remark \ref{r-Fatou}(c) and Propositions \ref{p-inv} and \ref{p-distHaezGoov2} it suffices to show that if $X_n\nearrow X$ and $X_1, X\in L_g^\phi$ then $\rho_{g,\phi,\alpha}(X)\leq \limsup_{n\to\infty}\rho_{g,\phi,\alpha}(X_n)$.

For this, we use an idea of \cite{GMX20}. By Proposition \ref{p-infmin}, for any $n$, there are $x_n\in\mathbb{R}$ such that $\rho_{g,\phi,\alpha}(X_n)=\pi_{g,\phi,\alpha}((X_n-x_n)^+)+x_n$. Since $X_1\leq X_n\leq X$ for all $n$, it follows from Lemma \ref{l-estim} that the sequence $(x_n)_n$ is bounded, hence has a convergent subsequence. We may then assume that the whole sequence converges, and we put $x_0=\lim_{n\to\infty} x_n$. But then $(X_n-x_n)^+\to (X-x_0)^+$ and $0\leq (X_n-x_n)^+\leq (X-\inf_k|x_k|)^+\in (L_g^\phi)^+$ for all $n$. Using Proposition \ref{p-distOrliczFatou}, we then get that
\begin{align*}
\rho_{g,\phi,\alpha}(X) &\leq \pi_{g,\phi,\alpha}((X-x_0)^+) +x_0\leq \liminf_{n\to\infty}\pi_{g,\phi,\alpha}((X_n-x_n)^+) +x_0\\
&=\liminf_{n\to\infty}\big(\pi_{g,\phi,\alpha}((X_n-x_n)^+) +x_n\big)=\liminf_{n\to\infty}\rho_{g,\phi,\alpha}(X_n)=\limsup_{n\to\infty}\rho_{g,\phi,\alpha}(X_n),
\end{align*}
as desired.
\end{proof}

For the reverse Fatou property, recall the Property ($P_{g,\phi}$) stated before Proposition \ref{p-OLrevFatou}. The result holds for all $\alpha\in[0,1)$.

\begin{proposition}\label{p-distHGFatourev}
If \emph{($P_{g,\phi}$)} holds, then $\rho_{g,\phi,\alpha}$ has the reverse Fatou property on $L_g^\phi$.
\end{proposition}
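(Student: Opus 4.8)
The plan is to reduce to monotone sequences and then transfer the reverse Fatou property from the Orlicz--Lorentz premium $\pi_{g,\phi,\alpha}$ to $\rho_{g,\phi,\alpha}$. Since $L_g^\phi$ is closed under order bounds (Proposition \ref{p-inv}) and $\rho_{g,\phi,\alpha}$ is monotonic (Proposition \ref{p-distHaezGoov2}), Remark \ref{r-Fatou}(c) tells us it suffices to prove that whenever $X_n\searrow X$ with $X_1,X\in L_g^\phi$, then $\rho_{g,\phi,\alpha}(X_n)\to\rho_{g,\phi,\alpha}(X)$. By monotonicity the sequence $(\rho_{g,\phi,\alpha}(X_n))_n$ is decreasing and bounded below by $\rho_{g,\phi,\alpha}(X)$, so the only thing left to establish is the inequality $\lim_{n\to\infty}\rho_{g,\phi,\alpha}(X_n)\leq\rho_{g,\phi,\alpha}(X)$.

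To prove this, I would use an $\varepsilon$-device rather than a minimizer, so that the argument is uniform in $\alpha\in[0,1)$. Fix $\varepsilon>0$ and, using the definition of $\rho_{g,\phi,\alpha}(X)$ as an infimum, pick $x_0\in\R$ with $\pi_{g,\phi,\alpha}((X-x_0)^+)+x_0\leq\rho_{g,\phi,\alpha}(X)+\varepsilon$. For every $n$ the definition of $\rho_{g,\phi,\alpha}(X_n)$ gives $\rho_{g,\phi,\alpha}(X_n)\leq\pi_{g,\phi,\alpha}((X_n-x_0)^+)+x_0$. Now $(X_n-x_0)^+\searrow(X-x_0)^+$ with $0\leq(X_n-x_0)^+\leq(X_1-x_0)^+$ for all $n$, and $(X_1-x_0)^+$ and $(X-x_0)^+$ both lie in $(L_g^\phi)_+$ by the computation in Remark \ref{r-distHaezGoov} (applied to $X_1\in L_g^\phi$ and to $X\in L_g^\phi$). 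Since $(P_{g,\phi})$ holds, Proposition \ref{p-OLrevFatou}(a) applies to this truncated sequence and yields $\pi_{g,\phi,\alpha}((X_n-x_0)^+)\to\pi_{g,\phi,\alpha}((X-x_0)^+)$. Passing to the limit in $n$ gives $\lim_{n\to\infty}\rho_{g,\phi,\alpha}(X_n)\leq\pi_{g,\phi,\alpha}((X-x_0)^+)+x_0\leq\rho_{g,\phi,\alpha}(X)+\varepsilon$, and letting $\varepsilon\to 0$ completes the proof.

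I do not expect a serious obstacle here; the two points to be careful about are (i) the use of the $\varepsilon$-device in place of a minimizer, which is what makes the proof cover the case $\alpha=0$ as well (where, by Proposition \ref{p-distHaezGoovinf}, the infimum defining $\rho_{g,\phi,0}$ is in general \emph{not} attained), and (ii) the verification that the shifted-and-truncated sequence $(X_n-x_0)^+$ genuinely satisfies the hypotheses of the reverse Fatou property for $\pi_{g,\phi,\alpha}$ in the sense of Definition \ref{d-Fatou}(b), i.e.\ that the order bounds $0$ and $(X_1-x_0)^+$ belong to $(L_g^\phi)_+$.
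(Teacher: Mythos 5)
Your proof is correct and takes essentially the same route as the paper: reduce to a decreasing sequence $X_n\searrow X$ via Remark \ref{r-Fatou}(c), and transfer the reverse Fatou property from $\pi_{g,\phi,\alpha}$ (Proposition \ref{p-OLrevFatou}(a)) through the infimum defining $\rho_{g,\phi,\alpha}$. The only cosmetic difference is that the paper, following \cite[Proposition 17]{BeRo08}, swaps $\inf_n\inf_x=\inf_x\inf_n$ and applies the reverse Fatou property of $\pi_{g,\phi,\alpha}$ at every $x$, whereas you pick a near-optimal $x_0$ with an $\varepsilon$-device and apply it at that single point; both avoid the need for a minimizer, so both cover $\alpha=0$.
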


\begin{proof}
It suffices by Remark \ref{r-Fatou}(c) and Propositions \ref{p-inv} and \ref{p-distHaezGoov2} to show that if $X_n\searrow X$ and $X_1,X\in L_g^\phi$ then $\rho_{g,\phi,\alpha}(X)\geq \inf_{n\geq 1}\rho_{g,\phi,\alpha}(X_n)$. 

For this, we follow the proof of \cite[Proposition 17]{BeRo08}. By Proposition \ref{p-OLrevFatou} we have, for all $x\in\mathbb{R}$,
\[
\pi_{g,\phi,\alpha}((X-x)^+)\geq \inf_{n\geq 1}\pi_{g,\phi,\alpha}((X_n-x)^+).
\]
Hence
\begin{align*}
\inf_{n\geq 1}\rho_{g,\phi,\alpha}(X_n)&= \inf_{n\geq 1}\inf_{x\in\mathbb{R}}(\pi_{g,\phi,\alpha}((X_n-x)^+)+x)= \inf_{x\in\mathbb{R}}\inf_{n\geq 1}(\pi_{g,\phi,\alpha}((X_n-x)^+)+x) \\
&\leq \inf_{x\in\mathbb{R}}(\pi_{g,\phi,\alpha}((X-x)^+)+x)= \rho_{g,\phi,\alpha}(X),
\end{align*}
as desired.
\end{proof}

Using Proposition \ref{p-OLrevFatoubdd} instead of Proposition \ref{p-OLrevFatou}, we obtain in the same way a variant on $L^\infty$.

\begin{proposition}\label{p-distHGFatourevbdd} 
If $g(0)=0$ and $g$ is continuous, then $\rho_{g,\phi,\alpha}$ has the reverse Fatou property on $L^\infty$.
\end{proposition}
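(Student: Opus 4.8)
The plan is to reproduce the proof of Proposition \ref{p-distHGFatourev} essentially verbatim, with $L^\infty$ in place of $L_g^\phi$ and Proposition \ref{p-OLrevFatoubdd} in place of Proposition \ref{p-OLrevFatou}. First I would record the two ``soft'' facts needed to set up the reduction. Since $L^\infty\subset L_g^\phi$ by Proposition \ref{p-incl}, $\rho_{g,\phi,\alpha}$ restricts to $L^\infty$ and is monotonic there by Proposition \ref{p-distHaezGoov2}; moreover $L^\infty$ has the domain property required by Remark \ref{r-Fatou}(c), namely that a risk sandwiched a.s.\ between two bounded risks is itself bounded. Hence, by Remark \ref{r-Fatou}(c), it suffices to show that $X_n\searrow X$ with $X_n,X\in L^\infty$ implies $\rho_{g,\phi,\alpha}(X)\geq\inf_{n\geq 1}\rho_{g,\phi,\alpha}(X_n)$.

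Next I would fix $x\in\mathbb{R}$ and observe that $(X_n-x)^+\in L^\infty_+$ for all $n$ (a bounded risk minus a constant, truncated at $0$, is again bounded and positive) and that $(X_n-x)^+\searrow(X-x)^+$ with $(X-x)^+\in L^\infty_+$. Since $g(0)=0$ and $g$ is continuous, Proposition \ref{p-OLrevFatoubdd}(a) applies, and together with the monotonicity and positive homogeneity of $\pi_{g,\phi,\alpha}$ on $L^\infty_+$ (which allow the use of Remark \ref{r-Fatou}(c) on the positive cone $L^\infty_+$), this yields
\[
\pi_{g,\phi,\alpha}\bigl((X-x)^+\bigr)\ \geq\ \inf_{n\geq 1}\pi_{g,\phi,\alpha}\bigl((X_n-x)^+\bigr).
\]

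The remaining step is the interchange of infima, exactly as in the proof of Proposition \ref{p-distHGFatourev}:
\begin{align*}
\inf_{n\geq 1}\rho_{g,\phi,\alpha}(X_n)
&=\inf_{n\geq 1}\inf_{x\in\mathbb{R}}\bigl(\pi_{g,\phi,\alpha}((X_n-x)^+)+x\bigr)
=\inf_{x\in\mathbb{R}}\inf_{n\geq 1}\bigl(\pi_{g,\phi,\alpha}((X_n-x)^+)+x\bigr)\\
&\leq\inf_{x\in\mathbb{R}}\bigl(\pi_{g,\phi,\alpha}((X-x)^+)+x\bigr)=\rho_{g,\phi,\alpha}(X),
\end{align*}
which is the desired inequality. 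I do not expect a genuine obstacle: the argument is a routine transcription of the previous proof, and the only points deserving a line of justification are the domain properties of $L^\infty$ and $L^\infty_+$ needed for Remark \ref{r-Fatou}(c), the stability of $L^\infty_+$ under $X\mapsto(X-x)^+$, and the fact that it is precisely the hypotheses ``$g(0)=0$ and $g$ continuous'' that unlock Proposition \ref{p-OLrevFatoubdd} without any $\Delta_2$-assumption on $\phi$.
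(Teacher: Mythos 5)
Your proposal is correct and is exactly the paper's own argument: the paper simply states that Proposition \ref{p-distHGFatourevbdd} follows ``in the same way'' as Proposition \ref{p-distHGFatourev}, with Proposition \ref{p-OLrevFatoubdd} replacing Proposition \ref{p-OLrevFatou}, which is precisely the substitution you make. The domain checks on $L^\infty$ and $L^\infty_+$ and the stability of $L^\infty_+$ under $X\mapsto(X-x)^+$ are straightforward and correctly handled.
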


Unfortunately, we only have partial converses: we are not able to show that $g$ must be continuous if $\rho_{g,\phi,\alpha}$ has the reverse Fatou property. Using ideas from the proof of \cite[Proposition 3.4]{GMX20}, we have the following.

\begin{lemma}\label{l-rhopi}
Let $0<\alpha<1$. If $(X_n)_n$ is a decreasing sequence in $(L_g^\phi)_+$, then $\rho_{g,\phi,\alpha}(X_n)\to 0$ implies that $\pi_{g,\phi,\alpha}(X_n)\to 0$.
\end{lemma}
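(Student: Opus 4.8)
The plan is to combine the fact that, for $0<\alpha<1$, the infimum defining $\rho_{g,\phi,\alpha}$ is attained with the location estimate for the minimizer from Lemma~\ref{l-estim}. By Proposition~\ref{p-infmin}(a), for each $n$ there is $x_n\in\R$ with $\rho_{g,\phi,\alpha}(X_n)=\pi_{g,\phi,\alpha}((X_n-x_n)^+)+x_n$. Applying Lemma~\ref{l-estim} with $Y_1=Y_2=X_n$ (legitimate since $X_n\in (L_g^\phi)_+\subset L_g^\phi$) gives
\[
\frac{\rho_g(X_n)-\phi^{-1}(1-\alpha)\,\rho_{g,\phi,\alpha}(X_n)}{1-\phi^{-1}(1-\alpha)}\leq x_n\leq \rho_{g,\phi,\alpha}(X_n).
\]
Since $X_n\geq 0$ we have $\rho_g(X_n)\geq 0$, and $0<\phi^{-1}(1-\alpha)<1$; hence the left-hand side is at least $-\frac{\phi^{-1}(1-\alpha)}{1-\phi^{-1}(1-\alpha)}\,\rho_{g,\phi,\alpha}(X_n)$. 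As $\rho_{g,\phi,\alpha}(X_n)\to 0$ by hypothesis, both bounds tend to $0$, so $x_n\to 0$, and consequently $\pi_{g,\phi,\alpha}((X_n-x_n)^+)=\rho_{g,\phi,\alpha}(X_n)-x_n\to 0$ as well. This is the one step where a little care is needed; the rest is routine.

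Next I would transfer the decay from $(X_n-x_n)^+$ back to $X_n$ by a monotonicity comparison. Fix $m$ and let $n\geq m$, so that $X_n\leq X_m$. From the pointwise inequalities $(X_m-x_m)^+\geq X_m-x_m$ and $x_m^+\geq x_m$ one gets $X_n\leq X_m\leq (X_m-x_m)^++x_m^+$, the sum of the risk $(X_m-x_m)^+\in (L_g^\phi)_+$ and the nonnegative constant $x_m^+$, which are comonotonic. Using the monotonicity of $\pi_{g,\phi,\alpha}$ (Proposition~\ref{p-distOrliczbis}), its comonotonic subadditivity (Proposition~\ref{p-distOrliczcom}), and its value on constants (Proposition~\ref{p-distOrlicz2}(b)), one obtains
\[
\pi_{g,\phi,\alpha}(X_n)\leq \pi_{g,\phi,\alpha}\big((X_m-x_m)^+\big)+\frac{x_m^+}{\phi^{-1}(1-\alpha)}\qquad\text{for all }n\geq m.
\]
Letting $m\to\infty$, the right-hand side tends to $0$ by the first paragraph, so $\limsup_n\pi_{g,\phi,\alpha}(X_n)\leq 0$; since $\pi_{g,\phi,\alpha}\geq 0$, this forces $\pi_{g,\phi,\alpha}(X_n)\to 0$, which is the claim.

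The main obstacle is genuinely the first step: one needs the minimizers $x_n$ to converge to $0$, and for that the two-sided bound of Lemma~\ref{l-estim}, together with $\rho_g(X_n)\geq 0$, is exactly what is required. It is worth noting why the naive comparison $X_m\leq X_m+|x_m|$ is useless here: it would demand prior knowledge that $\pi_{g,\phi,\alpha}(X_m)\to 0$, which is what we are proving; the argument genuinely relies on comparing $X_n$ with $(X_m-x_m)^+$, whose premium we control through the identity $\pi_{g,\phi,\alpha}((X_m-x_m)^+)=\rho_{g,\phi,\alpha}(X_m)-x_m$. Finally, observe that no structural hypothesis on $g$ or $\phi$ (in particular, not $(P_{g,\phi})$) enters, since the reverse Fatou property of $\pi_{g,\phi,\alpha}$ is never invoked — only its monotonicity, comonotonic subadditivity, and behaviour on constants.
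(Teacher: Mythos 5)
Your proof is correct. The first step is essentially identical to the paper's: both obtain minimizers $x_n$ from Proposition \ref{p-infmin}(a) and force $x_n\to 0$ via Lemma \ref{l-estim} (the paper applies it with $Y_1=0$, $Y_2=X_n$, you with $Y_1=Y_2=X_n$; either way the operative fact is $\rho_g(X_n)\geq 0$). Your second step, however, is a genuinely different route. The paper invokes convexity of the functions $\sigma_n(x)=\pi_{g,\phi,\alpha}((X_n-x)^+)+x$ (from Proposition \ref{p-distHaezGoovconv}(a)) and interpolates $0$ between $x_n$ and $\pm 1$, using monotonicity of $\pi_{g,\phi,\alpha}$ in $X$ to control $\sigma_n(\pm 1)$ by $\sigma_1(\pm 1)$. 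You instead use the pointwise domination $X_m\leq(X_m-x_m)^++x_m^+$ together with monotonicity, comonotonic subadditivity (Proposition \ref{p-distOrliczcom}), and the value on constants (Proposition \ref{p-distOrlicz2}(b)) to bound $\pi_{g,\phi,\alpha}(X_m)$ directly. Your argument is arguably more elementary — it bypasses the convexity machinery of Proposition \ref{p-distHaezGoovconv} entirely and appeals only to the three basic premium properties you list; the paper's argument, by contrast, is the standard ``local boundedness of convex functions'' trick. Both are clean. One small inefficiency: since $X_m\leq(X_m-x_m)^++x_m^+$ already gives the bound at $n=m$, introducing $n\geq m$ and the $\limsup$ is unnecessary; you can simply conclude $\pi_{g,\phi,\alpha}(X_m)\leq\pi_{g,\phi,\alpha}((X_m-x_m)^+)+x_m^+/\phi^{-1}(1-\alpha)\to 0$ directly. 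Your closing observation that $(P_{g,\phi})$ and the reverse Fatou property of $\pi_{g,\phi,\alpha}$ play no role here is accurate and matches the paper.
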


\begin{proof}
Let us define $\sigma_n(x)=\pi_{g,\phi,\alpha}((X_n-x)^+)+x$, $x\in\mathbb{R}$. By Proposition \ref{p-infmin}, there are $x_n\in\mathbb{R}$ such that $\rho_{g,\phi,\alpha}(X_n)=\sigma_n(x_n)$, $n\geq 1$. By Lemma \ref{l-estim}, applied with $Y_1=0$ and $Y_2=X_n$, $\rho_{g,\phi,\alpha}(X_n)\to 0$ implies that $x_n\to 0$. 

Now, the functions $\sigma_n$ are convex by Proposition \ref{p-distHaezGoovconv}(a). If $x_n>0$, then 
\begin{align*}
0\leq \sigma_n(0) &= \sigma_n(\tfrac{1}{1+x_n}x_n+\tfrac{x_n}{1+x_n}(-1))\leq \tfrac{1}{1+x_n}\sigma_n(x_n)+\tfrac{x_n}{1+x_n}\sigma_n(-1)\\
&\leq \tfrac{1}{1+x_n}\sigma_n(x_n)+\tfrac{x_n}{1+x_n}\sigma_1(-1),
\end{align*}
where in the last line we have used the monotonicity of $\pi_{g,\phi,\alpha}$. In the same way, if $x_n<0$, then
\[
0\leq \sigma_n(0) \leq \tfrac{1}{1-x_n}\sigma_n(x_n)+\tfrac{-x_n}{1-x_n}\sigma_1(1).
\]
Since $x_n\to 0$ and $\sigma_n(x_n)\to 0$, we have altogether that $\pi_{g,\phi,\alpha}(X_n)=\sigma_n(0)\to 0$.
\end{proof}

\begin{proposition}\label{p-distHGFatourevnec}
Let $0<\alpha<1$. If the underlying probability space $(\Omega,\mathcal{A},P)$ is atomless and if $\rho_{g,\phi,\alpha}$ has the reverse Fatou property on $L_g^\phi$ then $g(0)=0$, and if $g$ is continuous on some neighbourhood of $0$ then \textit{either} $g=0$ on some neighbourhood of $0$ \textit{or} $\phi$ satisfies the $\Delta_2$-condition.
\end{proposition}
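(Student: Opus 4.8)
The plan is to follow the pattern of the necessity parts of Propositions \ref{p-distrevFatou} and \ref{p-OLrevFatou}, treating the two claims --- $g(0)=0$, and, under the extra continuity hypothesis, the dichotomy for $\phi$ --- in turn. In each case I would use the reverse Fatou property directly from Definition \ref{d-Fatou}(b): for a decreasing sequence $(X_n)_n$ in $L_g^\phi$ with $X_n\to X\in L_g^\phi$ and $0\leq X_n\leq Y$ for some $Y\in L_g^\phi$, one necessarily has $\rho_{g,\phi,\alpha}(X)\geq\limsup_n\rho_{g,\phi,\alpha}(X_n)$; the aim is to produce, in each case, such a sequence with $X=0$ whose right-hand side is strictly positive.

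For $g(0)=0$: assume $g(0)>0$. Since $(\Omega,\mathcal{A},P)$ is atomless, I would pick $p_n\searrow 0$ and a decreasing sequence $(A_n)_n$ in $\mathcal{A}$ with $P(A_n)=p_n$ (as in \cite[Theorem 8.14.2]{GeGe25}), and set $X_n=\mathds{1}_{A_n}$, which lie in $L^\infty\subset L_g^\phi$ and satisfy $X_n\searrow\mathds{1}_{\bigcap_n A_n}=0$ a.s. By Proposition \ref{p-distHaezGoov}(b) together with \eqref{eq-defdist} one gets $\rho_{g,\phi,\alpha}(X_n)\geq\rho_g(X_n)=g(p_n-)\geq g(0)>0$ for every $n$, whereas $\rho_{g,\phi,\alpha}(0)=0$ by Proposition \ref{p-distHaezGoov2}; with $Y=1$ this contradicts the reverse Fatou property.

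For the dichotomy: suppose $g$ is continuous on a neighbourhood of $0$ but not identically $0$ on any such neighbourhood, i.e., $g>0$ on $(0,1]$, and suppose toward a contradiction that $\phi$ fails the $\Delta_2$-condition. Combined with $g(0)=0$ already obtained, the hypotheses of Lemma \ref{l-conv} hold, so its part (b) supplies risks $X_n\in(L_g^\phi)_+$ with $X_n\searrow 0$ and $\pi_{g,\phi,\alpha}(X_n)\geq\tfrac12$ for all $n$. Since $0\leq X_n\leq X_1\in L_g^\phi$, the reverse Fatou property gives $0=\rho_{g,\phi,\alpha}(0)\geq\limsup_n\rho_{g,\phi,\alpha}(X_n)$, and since $\rho_{g,\phi,\alpha}(X_n)\geq\rho_g(X_n)\geq 0$ by Proposition \ref{p-distHaezGoov}(b), this forces $\rho_{g,\phi,\alpha}(X_n)\to 0$. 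Then Lemma \ref{l-rhopi} yields $\pi_{g,\phi,\alpha}(X_n)\to 0$, contradicting $\pi_{g,\phi,\alpha}(X_n)\geq\tfrac12$.

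The one genuinely new ingredient, relative to the earlier necessity proofs, is the step back from $\rho_{g,\phi,\alpha}$ to $\pi_{g,\phi,\alpha}$ along a decreasing null sequence, and this is precisely the content of Lemma \ref{l-rhopi}; once it is available, the remaining work is a routine assembly of Lemma \ref{l-conv}(b), Proposition \ref{p-distHaezGoov}(b), and the elementary computation $\rho_g(\mathds{1}_{A_n})=g(p_n-)$. The only point needing a touch of care is that the reverse Fatou property as defined requires both bounds $Y_1\leq X_n\leq Y_2$ to come from $L_g^\phi$; here $Y_1=0$ and $Y_2=1$ (respectively $Y_2=X_1$) do the job, and all the limiting variables equal $0$, hence lie in $L_g^\phi$.
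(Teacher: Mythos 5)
Your proof is correct. The second part (the $\Delta_2$ dichotomy) follows the paper's argument essentially verbatim, using Lemma \ref{l-conv}(b) to produce the decreasing null sequence with $\pi_{g,\phi,\alpha}(X_n)\geq\tfrac12$ and then Lemma \ref{l-rhopi} to pass from $\rho_{g,\phi,\alpha}$ back to $\pi_{g,\phi,\alpha}$; you merely make explicit the intermediate observation that $\rho_{g,\phi,\alpha}(X_n)\geq 0$, so that the reverse Fatou bound $\limsup\leq 0$ actually forces $\rho_{g,\phi,\alpha}(X_n)\to 0$ (the paper compresses this to ``$\rho_{g,\phi,\alpha}(X_n)\not\to 0$, contradicting the reverse Fatou property'').

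Where you genuinely depart from the paper is the first claim, $g(0)=0$. The paper runs this through Lemma \ref{l-rhopi} as well: it deduces $\pi_{g,\phi,\alpha}(X_n)\to 0$ and then evaluates $\pi_{g,\phi,\alpha}(\mathds{1}_{A_n})=1/\phi^{-1}\bigl(\tfrac{1-\alpha}{g(1/n-)}\bigr)$, which is bounded away from $0$ when $g(0)>0$. You instead skip $\pi_{g,\phi,\alpha}$ altogether and use the lower bound $\rho_{g,\phi,\alpha}\geq\rho_g$ from Proposition \ref{p-distHaezGoov}(b) plus the elementary evaluation $\rho_g(\mathds{1}_{A_n})=g(p_n-)\geq g(0)$ to get an immediate contradiction with reverse Fatou. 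This is shorter, avoids one invocation of Lemma \ref{l-rhopi}, and would actually also cover the case $\alpha=0$ (though the second half of the proposition still needs $\alpha>0$ via Lemma \ref{l-rhopi}, so the hypothesis $0<\alpha<1$ cannot be dropped from the statement as a whole). Both routes are valid; yours is a marginally cleaner derivation of the first conclusion.
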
   
  
\begin{proof}
We first show that $g(0)=0$. To see this, let $(A_n)_n$ be a decreasing sequence of sets in $\mathcal{A}$ with $P(A_n)=\frac{1}{n}$; see the proof of Proposition \ref{p-distrevFatou}. If $X_n=\mathds{1}_{A_n}$, $n\geq 1$, then $X_n\searrow 0$; also, $X_n\in L_g^\phi$ as bounded risks. By the reverse Fatou property, we have that $\rho_{g,\phi,\alpha}(X_n)\to 0$. By Lemma \ref{l-rhopi}, $\pi_{g,\phi,\alpha}(X_n)\to 0$. Now, a simple calculation shows that
\[
\pi_{g,\phi,\alpha}(X_n)=\frac{1}{\phi^{-1}\big(\frac{1-\alpha}{g(\frac{1}{n}-)}\big)},
\]
see also the proof of Proposition \ref{p-OLrevFatou}. We then deduce that $g(0)=0$.

Next suppose that $g$ is continuous on some neighbourhood of $0$, $g>0$ on $(0,1]$, and that $\phi$ does not satisfy the $\Delta_2$-condition. Then, by Lemma \ref{l-conv}(b), there are risks $X_n\in (L_g^\phi)_+$ such that $X_n\searrow 0$ with $\pi_{g,\phi,\alpha}(X_n)\geq \frac{1}{2}$ for all $n$. It follows from Lemma \ref{l-rhopi} that $\rho_{g,\phi,\alpha}(X_n)\not\to 0$, contradicting the reverse Fatou property.
\end{proof}

The above proof also gives a version on $L^\infty$.

\begin{proposition} 
Let $0<\alpha<1$. If the underlying probability space $(\Omega,\mathcal{A},P)$ is atomless and if $\rho_{g,\phi,\alpha}$ has the reverse Fatou property on $L^\infty$ then $g(0)=0$.
\end{proposition}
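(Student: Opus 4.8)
The plan is to reuse the first half of the proof of Proposition~\ref{p-distHGFatourevnec} — the part establishing $g(0)=0$ — after noting that the counterexample constructed there consists of \emph{bounded} risks, so that assuming the reverse Fatou property merely on $L^\infty$ already suffices.

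Concretely, since $P$ is atomless I would first fix a decreasing sequence $(A_n)_n$ in $\mathcal{A}$ with $P(A_n)=\tfrac1n$ (as in the proof of Proposition~\ref{p-distrevFatou}) and put $X_n=\mathds{1}_{A_n}$. These are bounded, hence lie in $L^\infty\subset L_g^\phi$ by Proposition~\ref{p-incl}, and $X_n\searrow 0$ with $0\le X_n\le 1$ and $0,1\in L^\infty$; thus the reverse Fatou property on $L^\infty$ applies. Combined with $\rho_{g,\phi,\alpha}(X_n)\ge\rho_g(X_n)\ge 0$ (Proposition~\ref{p-distHaezGoov}(b) and monotonicity of $\rho_g$) and $\rho_{g,\phi,\alpha}(0)=0$ (Proposition~\ref{p-distHaezGoov2}), this gives $\rho_{g,\phi,\alpha}(X_n)\to 0$. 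Since $0<\alpha<1$ and $(X_n)_n$ is a decreasing sequence in $(L_g^\phi)_+$, Lemma~\ref{l-rhopi} then yields $\pi_{g,\phi,\alpha}(X_n)\to 0$.

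Next I would evaluate $\pi_{g,\phi,\alpha}(X_n)$ directly: since $F_{X_n}^{-1}(1-u)$ equals $1$ for $u<\tfrac1n$ and $0$ for $u\ge\tfrac1n$, one has
\[
\int_0^1\phi\big(F_{X_n}^{-1}(1-u)/a\big)\,\d g(u)=\phi(1/a)\,\mu_g\big([0,\tfrac1n)\big)=\phi(1/a)\,g\big(\tfrac1n-\big)
\]
for every $a>0$, so $\pi_{g,\phi,\alpha}(X_n)=1/\phi^{-1}\big(\tfrac{1-\alpha}{g(1/n-)}\big)$ (with value $0$ when $g(\tfrac1n-)=0$), exactly the formula used in the proof of Proposition~\ref{p-distHGFatourevnec}. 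Letting $n\to\infty$ and using right-continuity of $g$ at $0$ together with $g(0-)=0$, we have $g(\tfrac1n-)\searrow g(0)$; hence if $g(0)>0$ then $\pi_{g,\phi,\alpha}(X_n)\to 1/\phi^{-1}\big(\tfrac{1-\alpha}{g(0)}\big)>0$, contradicting the previous paragraph. Therefore $g(0)=0$.

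I do not anticipate a genuine obstacle: the statement is strictly weaker than the $g(0)=0$ part of Proposition~\ref{p-distHGFatourevnec}, and the only thing requiring care is the (routine) verification that every random variable in play is bounded, so that the hypothesis on $L^\infty$ is enough — in particular Lemma~\ref{l-rhopi} still applies because $L^\infty\subset L_g^\phi$.
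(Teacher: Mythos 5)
Your proof is correct and follows the same approach as the paper: the paper explicitly states that the proof of Proposition~\ref{p-distHGFatourevnec} ``also gives a version on $L^\infty$,'' and you have reconstructed the relevant first half of that proof, correctly observing that every risk appearing ($X_n=\mathds{1}_{A_n}$, the bounds $0$ and $1$) is bounded, so that the reverse Fatou hypothesis on $L^\infty$ suffices. The application of Lemma~\ref{l-rhopi}, the explicit computation of $\pi_{g,\phi,\alpha}(X_n)$, and the sandwich argument $g(0)\le g(\tfrac1n-)\le g(\tfrac1n)\to g(0)$ by right-continuity are all exactly as intended.
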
 

Proposition \ref{p-st} easily implies the following. Indeed, $X\leq_{\text{st}} Y$ implies that $(X-x)^+\leq_{\text{st}}(Y-x)^+$ for any $x$; it suffices to note that $F^{-1}_{(X-x)^+}=(F^{-1}_{X}-x)^+$.

\begin{proposition}\label{p-stdHG}
Let $\alpha\in[0,1)$ and $X,Y\in L_g^\phi$. Then 
\[
X\leq_{\emph{\text{st}}} Y \Longrightarrow \rho_{g,\phi,\alpha}(X)\leq \rho_{g,\phi,\alpha}(Y).
\]
\end{proposition}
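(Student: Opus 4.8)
The plan is to reduce the statement to Proposition~\ref{p-st} applied to the shifted positive parts $(X-x)^+$ and $(Y-x)^+$, exploiting that both the operation $z\mapsto(z-x)^+$ and the premium $\pi_{g,\phi,\alpha}$ respect stochastic dominance. First I would record the elementary fact that, for every $x\in\mathbb{R}$, the risks $(X-x)^+$ and $(Y-x)^+$ remain ordered in stochastic dominance. Since $z\mapsto(z-x)^+$ is continuous and increasing on $\mathbb{R}$, the property of quantile functions recalled in the introduction gives $F^{-1}_{(X-x)^+}=(F^{-1}_X-x)^+$, and likewise for $Y$; then $X\leq_{\text{st}}Y$, i.e.\ $F^{-1}_X(u)\leq F^{-1}_Y(u)$ for all $u\in(0,1)$, yields $(F^{-1}_X(u)-x)^+\leq(F^{-1}_Y(u)-x)^+$ for all such $u$, which is exactly $(X-x)^+\leq_{\text{st}}(Y-x)^+$.

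Next, fixing $x\in\mathbb{R}$: since $Y\in L_g^\phi$, Remark~\ref{r-distHaezGoov} shows $(Y-x)^+\in(L_g^\phi)_+$, so Proposition~\ref{p-st} applies to the pair $\big((X-x)^+,(Y-x)^+\big)$ and gives both $(X-x)^+\in(L_g^\phi)_+$ and $\pi_{g,\phi,\alpha}((X-x)^+)\leq\pi_{g,\phi,\alpha}((Y-x)^+)$. Adding $x$ to both sides and then taking the infimum over $x\in\mathbb{R}$ separately on each side yields $\rho_{g,\phi,\alpha}(X)\leq\rho_{g,\phi,\alpha}(Y)$, which is the assertion; both infima are real-valued by Remark~\ref{r-distHaezGoov}, so there is no issue with $\pm\infty$.

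I do not expect any genuine obstacle here: the only step requiring a word of care is the quantile identity $F^{-1}_{(X-x)^+}=(F^{-1}_X-x)^+$, and this is immediate from the general rule $F^{-1}_{h(X)}=h\circ F^{-1}_X$ valid for continuous increasing $h$. Note also that the argument never uses the value of $\alpha$, so the result holds for every $\alpha\in[0,1)$ (including $\alpha=0$), consistent with Proposition~\ref{p-st} being stated for all $\alpha<1$.
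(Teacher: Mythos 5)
Your proposal is correct and follows exactly the same route as the paper: reduce to Proposition~\ref{p-st} via the quantile identity $F^{-1}_{(X-x)^+}=(F^{-1}_X-x)^+$, deduce $(X-x)^+\leq_{\text{st}}(Y-x)^+$, apply Proposition~\ref{p-st} to the shifted positive parts, and take infima over $x$. You merely spell out the domain check via Remark~\ref{r-distHaezGoov}, which the paper leaves implicit.
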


\subsection{Distortion HG: the concave case}

First, Proposition \ref{p-distOrliczsl} easily yields the following. It suffices to note that if $\varphi$ is an increasing convex function, then so is $z\mapsto \varphi((z-x)^+)$, hence $X\leq_{\text{sl}} Y$ implies that $(X-x)^+\leq_{\text{sl}}(Y-x)^+$ for any $x$. 

\begin{proposition}\label{p-distHGsl}
Let $g$ be concave. If $X,Y\in L_g^\phi$, then
\[
X\leq_{\emph{\text{sl}}} Y \Longrightarrow \rho_{g,\phi,\alpha}(X)\leq \rho_{g,\phi,\alpha}(Y).
\]
\end{proposition}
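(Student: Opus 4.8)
The plan is to reduce everything to Proposition \ref{p-distOrliczsl} by exploiting the fact, already flagged before the statement, that stop-loss order is preserved under the map $Z\mapsto (Z-x)^+$, and then to pass to the infimum defining $\rho_{g,\phi,\alpha}$.

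First I would show that $X\leq_{\text{sl}} Y$ implies $(X-x)^+\leq_{\text{sl}}(Y-x)^+$ for every $x\in\mathbb{R}$. Unwinding the definition, this means checking $E\big(((X-x)^+-d)^+\big)\leq E\big(((Y-x)^+-d)^+\big)$ for all $d\in\mathbb{R}$, and I would split into two cases. If $d\geq 0$, then pointwise $((z-x)^+-d)^+=(z-(x+d))^+$ for every $z\in\mathbb{R}$, so the required inequality is exactly the defining inequality for $X\leq_{\text{sl}} Y$ at threshold $x+d$. If $d<0$, then $((z-x)^+-d)^+=(z-x)^+-d$, so the required inequality reads $E\big((X-x)^+\big)-d\leq E\big((Y-x)^+\big)-d$, which is the defining inequality for $X\leq_{\text{sl}} Y$ at threshold $x$. (Alternatively one could argue that $z\mapsto \varphi((z-x)^+)$ is increasing and convex whenever $\varphi$ is, since increasing convex functions are closed under composition and $z\mapsto(z-x)^+$ is increasing convex, and then invoke the test-function characterization of stop-loss order; but that characterization is stated for positive risks, whereas $X$ and $Y$ here need not be positive, so the direct computation above is cleaner.)

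Next, fix $x\in\mathbb{R}$. Since $Y\in L_g^\phi$, Remark \ref{r-distHaezGoov} gives $(Y-x)^+\in (L_g^\phi)_+$, and of course $(X-x)^+\geq 0$. Applying Proposition \ref{p-distOrliczsl} — this is where the concavity of $g$ enters — to the pair $(X-x)^+\leq_{\text{sl}}(Y-x)^+$ yields $(X-x)^+\in (L_g^\phi)_+$ and $\pi_{g,\phi,\alpha}((X-x)^+)\leq \pi_{g,\phi,\alpha}((Y-x)^+)$. Adding $x$ to both sides and taking the infimum over $x\in\mathbb{R}$ gives $\rho_{g,\phi,\alpha}(X)\leq \rho_{g,\phi,\alpha}(Y)$; both infima are finite by Remark \ref{r-distHaezGoov}, so no degenerate values intervene. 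There is no genuine obstacle here: the only point requiring a little care is the case distinction in the pointwise identities for $((z-x)^+-d)^+$, and in particular noticing that the case $d<0$ relies on the $d=x$ instance of the stop-loss inequality rather than a naive monotonicity argument; the rest is a direct appeal to Proposition \ref{p-distOrliczsl} and the definition of $\rho_{g,\phi,\alpha}$ as an infimum.
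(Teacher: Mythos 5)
Your proof is correct and follows the same overall reduction as the paper: show $X\leq_{\text{sl}}Y\Rightarrow(X-x)^+\leq_{\text{sl}}(Y-x)^+$ for each $x$, then apply Proposition \ref{p-distOrliczsl} and take the infimum over $x$. Where you differ is in how you establish the intermediate implication. The paper simply observes that $z\mapsto\varphi((z-x)^+)$ is increasing and convex whenever $\varphi$ is, implicitly invoking the test-function characterization of stop-loss order (even though, as you correctly note, the paper only states that characterization for positive risks, while $X$ and $Y$ here need not be positive). You instead verify $(X-x)^+\leq_{\text{sl}}(Y-x)^+$ directly from the definition via the pointwise identity $((z-x)^+-d)^+=(z-(x+d))^+$ when $d\geq 0$ and $((z-x)^+-d)^+=(z-x)^+-d$ when $d<0$, reducing in both cases to an instance of the defining inequality for $X\leq_{\text{sl}}Y$. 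Your route is more elementary and self-contained, since it dodges the question of whether the test-function characterization extends to unbounded signed risks; the paper's route is shorter but leans on a fact it states only for the positive case. Both work, and your explicit remark on why you chose the direct computation shows you saw exactly the issue.
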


We also need a variant of Lemma \ref{l-risksub}, which is proved quite similarly.

\begin{lemma}\label{l-risksubbis}
Suppose that the underlying probability space $(\Omega,\mathcal{A},P)$ is atomless. Let $\mathcal{X}$ be a set of risks on $\Omega$ that contains the constants and $\rho:\mathcal{X}\to\mathbb{R}$ a risk measure such that
\begin{enumerate}[label=\emph{(\roman*)}] 
\item $X$ a risk, $Y\in \mathcal{X}$, $X =_d Y$ $\Longrightarrow$ $X\in \mathcal{X}$, $\rho(X)= \rho(Y)$;
\item $X,Y\in \mathcal{X}$ comonotonic $\Longrightarrow$ $X+Y\in\mathcal{X}$ and $\rho(X+Y)\leq \rho(X)+\rho(Y)$;
\item $X,Y\in \mathcal{X}$, $X\leq_{\emph{\text{sl}}} Y$ $\Longrightarrow$ $\rho(X)\leq \rho(Y)$.
\end{enumerate}
Then, for all $X,Y\in\mathcal{X}$, if $X+Y\in\mathcal{X}$ then $\rho(X+Y)\leq \rho(X)+\rho(Y)$; that is, $\rho$ is subadditive.
\end{lemma}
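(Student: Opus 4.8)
The plan is to follow the proof of Lemma \ref{l-risksub} almost verbatim, via a comonotonic coupling, the only change being that condition (iii) is now stated for pairs $X,Y\in\mathcal{X}$ rather than for a positive risk dominated in stop-loss order by an element of $\mathcal{X}$; this forces us to assume $X+Y\in\mathcal{X}$ instead of deriving it.

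First I would fix $X,Y\in\mathcal{X}$ with $X+Y\in\mathcal{X}$. Since $(\Omega,\mathcal{A},P)$ is atomless, there is a random variable $U$ on $\Omega$ uniformly distributed on $(0,1)$, see \cite[Proposition A.31]{FoSc16}. Put $X^c:=F_X^{-1}(U)$ and $Y^c:=F_Y^{-1}(U)$; by \cite[Property 1.5.20]{DDGK05}, $X^c=_d X$ and $Y^c=_d Y$, so assumption (i) gives $X^c,Y^c\in\mathcal{X}$ with $\rho(X^c)=\rho(X)$ and $\rho(Y^c)=\rho(Y)$. The pair $(X^c,Y^c)$ is comonotonic, being a pair of increasing functions of the single variable $U$, so (ii) yields $X^c+Y^c\in\mathcal{X}$ and $\rho(X^c+Y^c)\leq\rho(X^c)+\rho(Y^c)=\rho(X)+\rho(Y)$.

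To finish, I would invoke $X+Y\leq_{\text{sl}} X^c+Y^c$, see \cite[Theorem 7]{DDGKV02} or \cite[Proposition 1]{KDG000}. Since both $X+Y$ and $X^c+Y^c$ lie in $\mathcal{X}$, condition (iii) applies and gives $\rho(X+Y)\leq\rho(X^c+Y^c)\leq\rho(X)+\rho(Y)$, which is the assertion. I do not expect any genuine obstacle here: the argument is structurally identical to that of Lemma \ref{l-risksub}, and the only point requiring attention is that the hypothesis $X+Y\in\mathcal{X}$ cannot be dropped, because in the present formulation (iii) only compares elements already known to belong to $\mathcal{X}$ — this is precisely why the conclusion is phrased as a conditional subadditivity rather than as an unconditional closure-plus-subadditivity statement.
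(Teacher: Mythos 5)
Your argument is correct and is exactly the one the paper intends (it only remarks that Lemma \ref{l-risksubbis} is ``proved quite similarly'' to Lemma \ref{l-risksub}): the comonotonic coupling via $U$ gives $X^c+Y^c\in\mathcal{X}$ by (ii), and since $X+Y\in\mathcal{X}$ by hypothesis, condition (iii) applies to the pair $X+Y\leq_{\text{sl}} X^c+Y^c$. You have also correctly identified why $X+Y\in\mathcal{X}$ must be assumed rather than derived in this variant.
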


We arrive at the main result of this paper. It follows, as in the proof of Theorem \ref{t-distOrlicz}, from Lemma \ref{l-risksubbis} and Propositions \ref{p-distHaezGoov2}, \ref{p-distHGcom} and \ref{p-distHGsl}.

\begin{theorem}\label{t-distHaezGoov2}
Let $g$ be concave. Then the distortion Haezendonck-Goovaerts risk measure $\rho_{g,\phi,\alpha}$ is coherent on $L_g^\phi$.
\end{theorem}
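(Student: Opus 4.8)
The plan is to verify the four defining properties of coherence from Definition \ref{d-risk} in turn. Monotonicity, cash-invariance, and positive homogeneity have already been established for an arbitrary distortion function $g$ in Proposition \ref{p-distHaezGoov2}, so nothing remains to be done there; concavity of $g$ will be used only for subadditivity. Thus the heart of the matter is to show that whenever $X,Y\in L_g^\phi$ with $X+Y\in L_g^\phi$, then $\rho_{g,\phi,\alpha}(X+Y)\leq \rho_{g,\phi,\alpha}(X)+\rho_{g,\phi,\alpha}(Y)$. If $X+Y\notin L_g^\phi$ there is nothing to prove, since condition (iv) of Definition \ref{d-risk} has $X+Y\in\mathcal{X}$ as a hypothesis.

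For subadditivity I would first reduce to the case of an atomless probability space, exactly as in the proof of Theorem \ref{t-distOrlicz}. If $(\Omega,\mathcal{A},P)$ is not atomless, pass to the product space $\widetilde{\Omega}=\Omega\times[0,1]$, $\widetilde{\mathcal{A}}=\mathcal{A}\otimes\mathcal{B}[0,1]$, $\widetilde{P}=P\otimes m$, which is atomless by \cite[Example 8.14.3]{GeGe25}, and lift each risk $X$ to $\widetilde{X}(\omega,u)=X(\omega)$. Then $\widetilde{X}$ has the same law as $X$, so $\widetilde{X}\in L_g^\phi(\widetilde{\Omega})$, and since $\rho_{g,\phi,\alpha}$ depends only on the law (through the quantile function) we have $\rho_{g,\phi,\alpha}(\widetilde{X})=\rho_{g,\phi,\alpha}(X)$; moreover $\widetilde{X+Y}=\widetilde{X}+\widetilde{Y}$, so proving subadditivity on $\widetilde{\Omega}$ suffices.

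Assuming now that $P$ is atomless, I would apply Lemma \ref{l-risksubbis} with $\mathcal{X}=L_g^\phi$ and $\rho=\rho_{g,\phi,\alpha}$. Its hypothesis (i), invariance under equality in distribution, is immediate from the definition of $\rho_{g,\phi,\alpha}$ via quantile functions; hypothesis (ii), subadditivity for comonotonic risks, is exactly Proposition \ref{p-distHGcom}; and hypothesis (iii), preservation of stop-loss order, is exactly Proposition \ref{p-distHGsl}, which is the only place where the concavity of $g$ is used. The lemma then yields that $\rho_{g,\phi,\alpha}$ is subadditive on $L_g^\phi$ in the required conditional sense, completing the verification of all four axioms.

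The only genuine obstacle lies in the ingredients rather than in their assembly: the stop-loss preservation of Proposition \ref{p-distHGsl}, and underneath it Proposition \ref{p-distOrliczsl} together with Lemma \ref{l-TVaR}, is the substantive step, relying on approximating a concave $g$ by piecewise-linear concave distortion functions and on the Rockafellar--Uryasev representation of the Tail Value at Risk. Given those results the present theorem is a short bookkeeping argument; the one point requiring a little care is that $L_g^\phi$ need not be a convex cone (Example \ref{ex-cone}), so one must use the conditional form of Lemma \ref{l-risksubbis} and must not assert that $X+Y\in L_g^\phi$ holds automatically.
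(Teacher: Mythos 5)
Your proof is correct and follows essentially the same route as the paper: invoking Proposition \ref{p-distHaezGoov2} for the first three axioms, and then deducing subadditivity from Lemma \ref{l-risksubbis} together with Propositions \ref{p-distHGcom} and \ref{p-distHGsl}, with the same atomless-reduction via the product space as in Theorem \ref{t-distOrlicz}. Your remark that one must use the conditional form of subadditivity, since $L_g^\phi$ need not be a convex cone, is exactly the right caution and matches the paper's implicit reliance on Lemma \ref{l-risksubbis} rather than Lemma \ref{l-risksub}.
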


The proof by the first author given in \cite{Gou22} was based on Theorem \ref{t-distOrlicz}, using a generalization of \cite[Proposition 13]{BeRo08} and a variant of \cite[Theorem 1]{Roc74}.

We recall that, by Example \ref{ex-cone}, the set $L_g^\phi$ is not necessarily a convex cone, even if $g$ is concave and $\phi$ is the identity.

We do not know if concavity of $g$ is necessary for the coherence of $\rho_{g,\phi,\alpha}$, see Problem \ref{pr-coh}.

\subsection{The case of $\alpha=0$}
We turn to the announced reduction of the distortion Haezendonck-Goovaerts risk measure $\rho_{g,\phi,0}$.

\begin{theorem}\label{t-dHGtriv}
Let $\alpha=0$. Then, for all $X\in L^\infty$,
\[
\rho_g(X)\leq \rho_{g,\phi,0}(X)\leq \frac{c_+}{c_-}\rho_g(X^+) + \frac{c_-}{c_+}\rho_g(-X^-),
\]
where $c_-$ is the left derivative of $\phi$ at $1$, and $c_+$ is the right derivative of $\phi$ at $1$. If $\phi$ satisfies the $\Delta_2$-condition, then this holds for all $X\in L_g^\phi$.
\end{theorem}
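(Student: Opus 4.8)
The lower bound $\rho_g(X)\le\rho_{g,\phi,0}(X)$ is Proposition \ref{p-distHaezGoov}(b), so only the upper bound needs an argument. The plan is to adapt the proof of the undistorted statement (Theorem \ref{t-HGtriv}), replacing the linearity of $E$ by the comonotonic additivity of $\rho_g$ (Proposition \ref{p-distcomon}); the new delicate point will be the orientation of quantile functions.

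I would first treat $X\in L^\infty$. Since $\rho_{g,\phi,0}(X)=\inf_{x}\big(\pi_{g,\phi,0}((X-x)^+)+x\big)$, it suffices to bound $\pi_{g,\phi,0}((X-x)^+)+x$ for one very negative $x$; as $X$ is bounded, $(X-x)^+=X-x$ for $x<\essinf X$, so $F^{-1}_{(X-x)^+}(1-u)=F_X^{-1}(1-u)-x=:q(u)-x$ with $q$ bounded and decreasing. For a parameter $b$ I would take the trial value $a=|x|+b$, so that $\tfrac{F^{-1}_{(X-x)^+}(1-u)}{a}=1+s(u)$ with $s(u)=\tfrac{q(u)-b}{|x|+b}$ and $\sup_u|s(u)|\to 0$ as $x\to-\infty$. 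Using that $\phi$ is convex with $\phi(1)=1$, one has $\phi(1+s)\le 1+(c_++\eps)s^+-(c_--\eps)s^-$ for $|s|$ small; integrating,
\[
\int_0^1\phi\Big(\tfrac{F^{-1}_{(X-x)^+}(1-u)}{a}\Big)\d g(u)\le 1+\frac{(c_++\eps)\int_0^1(q(u)-b)^+\d g(u)-(c_--\eps)\int_0^1(b-q(u))^+\d g(u)}{|x|+b}.
\]
Here $\int_0^1(q(u)-b)^+\d g(u)=\rho_g((X-b)^+)$, since $(X-b)^+$ is an increasing function of $X$; the point to be careful about is that the negative part must be read off the \emph{increasing} function $-(X-b)^-=-(b-X)^+$, not off the decreasing function $(b-X)^+$, so that $\int_0^1(b-q(u))^+\d g(u)=-\rho_g(-(X-b)^-)$ (if one uses $(b-X)^+$ one gets the dual distortion instead). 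Hence the numerator equals $(c_++\eps)\rho_g((X-b)^+)+(c_--\eps)\rho_g(-(X-b)^-)$.

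Set $\beta(b):=c_+\rho_g((X-b)^+)+c_-\rho_g(-(X-b)^-)$; this is continuous and decreasing with $\beta(-\infty)=+\infty$, $\beta(+\infty)=-\infty$, so it has a unique zero $b^*$. For $b>b^*$ we have $\beta(b)<0$, hence for $\eps$ small and $|x|$ large the displayed integral is $\le 1$, giving $\pi_{g,\phi,0}((X-x)^+)\le|x|+b$ and therefore $\rho_{g,\phi,0}(X)\le b$; letting $b\downarrow b^*$ yields $\rho_{g,\phi,0}(X)\le b^*$. It remains to show $b^*\le\tfrac{c_+}{c_-}\rho_g(X^+)+\tfrac{c_-}{c_+}\rho_g(-X^-)$. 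Since $(X-b^*)^+$ and $-(X-b^*)^-$ are comonotonic with sum $X-b^*$, Proposition \ref{p-distcomon} and cash-invariance give $\rho_g((X-b^*)^+)+\rho_g(-(X-b^*)^-)=\rho_g(X)-b^*$, which together with $\beta(b^*)=0$ produces
\[
b^*=\rho_g(X)+\tfrac{c_+-c_-}{c_-}\,\rho_g((X-b^*)^+)=\rho_g(X)-\tfrac{c_+-c_-}{c_+}\,\rho_g(-(X-b^*)^-),
\]
while $\rho_g(X)=\rho_g(X^+)+\rho_g(-X^-)$ with $\rho_g(X^+)\ge0\ge\rho_g(-X^-)$. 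If $b^*\ge0$ then $(X-b^*)^+\le X^+$, so the first identity gives $b^*\le\rho_g(X)+\tfrac{c_+-c_-}{c_-}\rho_g(X^+)=\tfrac{c_+}{c_-}\rho_g(X^+)+\rho_g(-X^-)\le\tfrac{c_+}{c_-}\rho_g(X^+)+\tfrac{c_-}{c_+}\rho_g(-X^-)$; if $b^*<0$ then $-(X-b^*)^-\ge-X^-$, so the second identity gives $b^*\le\rho_g(X)-\tfrac{c_+-c_-}{c_+}\rho_g(-X^-)=\rho_g(X^+)+\tfrac{c_-}{c_+}\rho_g(-X^-)\le\tfrac{c_+}{c_-}\rho_g(X^+)+\tfrac{c_-}{c_+}\rho_g(-X^-)$. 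This settles the $L^\infty$ case.

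For general $X\in L_g^\phi$ with $\phi\in\Delta_2$, I would approximate $X$ from below by $X\vee(-N)\searrow X$ (each bounded below, hence covered by the estimates above once the portions of the integral where $q(u)$ is large are split off) and pass to the limit; here one uses that these portions carry arbitrarily small $\mu_g$-mass and, because $\Delta_2$ forces $L_g^\phi=M_g^\phi$, arbitrarily small $\phi$-integral, so they are negligible as $x\to-\infty$. I expect this tail estimate in the unbounded case to be the main technical obstacle; the conceptual heart of the argument, however, is getting right the role of $\rho_g$ versus its dual in the first-order expansion of $\phi$, which is exactly what produces the asymmetric coefficients $\tfrac{c_+}{c_-}$ and $\tfrac{c_-}{c_+}$.
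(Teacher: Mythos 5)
Your lower bound reference and the bounded case of the upper bound are correct, and the bounded case follows a genuinely different route from the paper. The paper works at the exact Luxemburg-norm point $a=\pi_{g,\phi,0}(X-x)$, where the integral equals $1$, and exploits the \emph{lower} (tangent) bounds $\phi(t)\ge 1+c_\pm(t-1)$ together with the Taylor error term to derive an inequality involving $\sigma(x)=\pi_{g,\phi,0}((X-x)^+)+x$; it then proves the bound separately for $X\ge 0$, $X\le 0$, and general $X$ via the comonotonic split $X=X^+-X^-$. You instead plug in a trial value $a=|x|+b$, use only \emph{upper} (secant) bounds on $\phi$ near $1$, and calibrate $b$ via the zero $b^*$ of $\beta(b)=c_+\rho_g((X-b)^+)+c_-\rho_g(-(X-b)^-)$; the comonotonic identity $\rho_g((X-b^*)^+)+\rho_g(-(X-b^*)^-)=\rho_g(X)-b^*$ then lets you handle general $X$ in one stroke. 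Your observation that the ``negative part'' must be read off the increasing function $-(X-b)^-$ rather than $(b-X)^+$ is exactly the point where a careless adaptation of the undistorted proof would go wrong, and you handle it correctly. Both proofs ultimately hinge on the same local linearization of $\phi$ at $\phi(1)=1$, but yours avoids the exact-equality step (and hence Proposition \ref{p-distOrlicz}(c)) and the three-way case split, which is a genuine simplification for $L^\infty$. (Minor nits: you write ``approximate from below,'' but $X\vee(-N)\searrow X$ approximates from above; and $\beta$ need not have a \emph{unique} zero — any zero, say $\sup\{\beta=0\}$, suffices for the argument.)

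The unbounded case, however, contains a real gap. Your plan is to truncate below by $X\vee(-N)$ and, for the upper tail, to ``split off the portions where $q(u)$ is large'' and argue they are ``negligible as $x\to-\infty$.'' But the contribution of the set $I_2=\{q>C\}$ is \emph{not} negligible at the relevant scale: a computation shows that
\[
(|x|+b)\int_{I_2}\bigl[\phi\bigl(1+\tfrac{q(u)-b}{|x|+b}\bigr)-1\bigr]\,\d g(u)\ \longrightarrow\ c_+\int_{I_2}(q(u)-b)\,\d g(u)
\]
as $|x|\to\infty$, i.e.\ the $I_2$-excess is of order $1/(|x|+b)$, the same order as the good negative term $\beta(b)/(|x|+b)$, and it contributes precisely the ``missing'' piece of $\rho_g((X-b)^+)$ that the restriction of the $I_1$-integral to $\{q\le C\}$ has lost. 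So the two pieces must be combined and shown to re-assemble into $\beta(b)$ plus a genuinely small remainder; this requires a dominated-convergence argument whose domination comes from the $\Delta_2$-condition (via $L_g^\phi=M_g^\phi$) and which you have not carried out. This is fixable, but the intuition ``negligible'' is misleading and the argument as sketched is incomplete. The paper sidesteps all of this by first proving the bound for $X\ge 0$ and $X\le 0$ (where, for $X\ge 0$ unbounded, the comonotonic split $X=\min(X,n)+(X-n)^+$ plus Lemmas \ref{l-triv1} and \ref{l-triv2} dispose of the upper tail cleanly), and then combining. You could alternatively take your own $L^\infty$ result — which in particular gives $\rho_{g,\phi,0}(X)\le\frac{c_+}{c_-}\rho_g(X)$ for bounded $X\ge 0$ and $\rho_{g,\phi,0}(X)\le\frac{c_-}{c_+}\rho_g(X)$ for bounded $X\le 0$ — and then run the paper's step (2) verbatim; that would be the shortest repair.
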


\begin{corollary}\label{c-dHGtriv}
Let $\alpha=0$. If $\phi$ is differentiable at $1$ and satisfies the $\Delta_2$-condition, then
\[
\rho_{g,\phi,0}=\rho_g
\]
on $L_g^\phi$.
\end{corollary}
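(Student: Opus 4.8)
The plan is to read off the corollary directly from Theorem~\ref{t-dHGtriv}. Since $\phi$ is differentiable at $1$, its left and right derivatives there coincide, say $c_-=c_+=c$; because $\phi$ is a normalized Young function with $\phi(0)=0$, $\phi(1)=1$, convexity forces $c\geq 1>0$, so the two coefficients $\frac{c_+}{c_-}$ and $\frac{c_-}{c_+}$ appearing in Theorem~\ref{t-dHGtriv} are both equal to $1$. Invoking that theorem — whose conclusion holds on all of $L_g^\phi$ precisely because $\phi$ satisfies the $\Delta_2$-condition — we obtain, for every $X\in L_g^\phi$,
\[
\rho_g(X)\leq \rho_{g,\phi,0}(X)\leq \rho_g(X^+) + \rho_g(-X^-).
\]

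The remaining step is to show that the upper bound collapses onto $\rho_g(X)$. Write $X=X^+ + (-X^-)$. The maps $t\mapsto t^+$ and $t\mapsto \min(t,0)=-t^-$ are both increasing on $\R$, so, taking $Z=X$ in the definition of comonotonicity, $X^+$ and $-X^-$ are comonotonic. They both belong to $L_g$: since continuous increasing transformations commute with the quantile function, $F_{X^+}^{-1}=(F_X^{-1})^+$ and $F_{-X^-}^{-1}=\min(F_X^{-1},0)$, whence $|F_{X^+}^{-1}(1-u)|\leq |F_X^{-1}(1-u)|$ and $|F_{-X^-}^{-1}(1-u)|\leq |F_X^{-1}(1-u)|$ pointwise; as $X\in L_g^\phi\subset L_g$ by Proposition~\ref{p-incl}, this forces $X^+,-X^-\in L_g$. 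Hence Proposition~\ref{p-distcomon} applies and gives
\[
\rho_g(X^+) + \rho_g(-X^-) = \rho_g\big(X^+ + (-X^-)\big) = \rho_g(X).
\]
Comparing the two displays, all inequalities are equalities, so $\rho_{g,\phi,0}(X)=\rho_g(X)$ for every $X\in L_g^\phi$, which is the claim. (As a sanity check, this is consistent with Example~\ref{ex-distHaezGoov}(a), the case $\phi=\mathrm{id}$.)

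I do not expect any real obstacle: the substantive work is entirely contained in Theorem~\ref{t-dHGtriv}, and the corollary is merely the observation that differentiability at $1$ makes its two-sided estimate tight, combined with the comonotone decomposition $X=X^+ + (-X^-)$ and the comonotone additivity of $\rho_g$. The only point requiring a line of verification is that $X^+$ and $-X^-$ lie in the domain $L_g$ so that Proposition~\ref{p-distcomon} is legitimately invoked, and that follows immediately from the pointwise quantile inequalities above.
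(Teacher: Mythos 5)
Your proof is correct and follows essentially the same route as the paper's: apply Theorem~\ref{t-dHGtriv} with $c_-=c_+$ to get $\rho_g(X)\leq\rho_{g,\phi,0}(X)\leq\rho_g(X^+)+\rho_g(-X^-)$, then use comonotonicity of $X^+$ and $-X^-$ together with Proposition~\ref{p-distcomon} to collapse the right-hand side to $\rho_g(X)$. Your extra verifications — that $c\geq 1>0$ by convexity and normalization, and the quantile inequalities $|F_{X^+}^{-1}|\leq|F_X^{-1}|$, $|F_{-X^-}^{-1}|\leq|F_X^{-1}|$ showing $X^+,-X^-\in L_g$ — are correct and appropriate; the paper leaves these implicit (the second is in fact already embedded in the proof of Theorem~\ref{t-dHGtriv}), but they cost nothing and make the invocation of Proposition~\ref{p-distcomon} airtight.
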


Since the proof is quite technical, we relegate it to the Appendix, see Section \ref{s-app}.

\section{Concluding remarks}\label{s-end}

\subsection{Problems} We suggest the following.

\begin{problem}\label{pr-coh}
Let $g$ be a distortion function, $\phi$ a normalized Young function, and $0<\alpha<1$. Characterize the coherence of the distortion Haezendonck-Goovaerts risk measure $\rho_{g,\phi,\alpha}$.
\end{problem}

\begin{problem}\label{pr-revFatou}
Let $g$ be a distortion function, $\phi$ a normalized Young function, and $0<\alpha<1$. Characterize the validity of the reverse Fatou property for the distortion Haezendonck-Goovaerts risk measure $\rho_{g,\phi,\alpha}$ on $L_g^\phi$.
\end{problem}

It might also be of interest, though of little consequence, to explore further the properties of $\rho_{g,\phi,\alpha}$ for $\alpha=0$. In particular, we propose the following.

\begin{problem}\label{pr-Fatou0}
Let $g$ be a distortion function, $\phi$ a normalized Young function, and $\alpha=0$. Does $\rho_{g,\phi,0}$ always have the Fatou property on $L_g^\phi$? Characterize the validity of the reverse Fatou property for $\rho_{g,\phi,0}$ on $L_g^\phi$.
\end{problem}

\subsection{Related work}\label{s-rel}
Wu and Xu \cite{WuXu22} have also, and independently, defined the Orlicz-Lorentz premium and the distortion Haezendonck-Goovaerts risk measure, but only for bounded risks and for distortion functions $g$ that are continuous and satisfy $g(0)=0$. More precisely, given a continuous increasing function $w:[0,1]\to [0,1]$ with $w(0)=0$ and $w(1)=1$, a strictly increasing normalized Young function, and $\alpha\in [0,1)$, they define a premium for $X\in L^\infty_+$ as
\[
\pi(X)= \inf\Big\{a>0 : \int_0^\infty \phi(x) \d (w\circ F_{X/a})(x)\leq 1-\alpha\Big\},
\]
see \cite[equation (1.6)]{WuXu22}. Now, considering the distortion function 
\begin{equation}\label{eq-gw}
g(u)= 1-w(1-u), \ u\in [0,1],
\end{equation}
and arguing as in the proof of Proposition \ref{p-equiv}(b), one sees that 
\[
\int_0^\infty \phi(t) \d (w\circ F_{X/a})(t)= \int_0^1 \phi\Big(\frac{F_X^{-1}(1-u)}{a}\Big) \d g(u),
\]
so that $\pi$ is the Orlicz-Lorentz premium $\pi_{g,\phi,\alpha}$; note that $g$ is continuous and $g(0)=0$.

Wu and Xu then define a risk measure for $X\in L^\infty$ in the usual way by
\[
\rho(X) = \inf_{x\in\R} \big(\pi((X-x)^+)+x\big),
\]
see \cite[equation (1.10)]{WuXu22}. In that context they obtain Propositions \ref{p-distOrlicz2}, \ref{p-distOrliczbis}, \ref{p-infmin} and Theorems \ref{t-distOrlicz}, \ref{t-distHaezGoov2}, see \cite[Propositions 2.1 and 4.1]{WuXu22}; their proof of coherence relies on the coherence of TVaR, see \cite[Appendix A]{WuXu22}. However, in \cite[Proposition 2.1(i)]{WuXu22} they claim that the infimum in the definition of $\pi$ is always attained if $X\neq 0$. Example \ref{ex-distOrlicz2} above shows that this is not the case (a fact also noted in \cite[p.\ 18]{ChRe25}).

Motivated by the paper of Wu and Xu, Chudziak and Rela \cite{ChRe25} have further generalized the Orlicz-Lorentz premia by replacing the function $g(\overline{F}_X(x))=g(P(X>x))$ in \eqref{eq-OL} \& \eqref{eq-defdist} by $\mu(\{X>x\})$ for a general capacity $\mu$, using Choquet integrals, see \cite[equations (3), (5), (6)]{ChRe25}. We remark, however, that their counter-example to \cite[Proposition 2.1(ix)]{WuXu22} in \cite[p.\ 19]{ChRe25} is not correct; they identify Wu and Xu's $w$ with $g$, while the correct link is given in \eqref{eq-gw} above, so that a convex $w$ in fact corresponds to a concave $g$.

\subsection{Robust versions}
Returning to our general situation, let $\pi_{g,\phi,\alpha}$ be the Orlicz-Lorentz premium for a distortion function $g$, a Young function $\phi$, and $\alpha< 1$. As we have noted in Remark \ref{rem-OL}(b), we interpret $g$ as the probability perception function and $\phi$ essentially as the utility function of the decision maker. However, identifying $g$ and $\phi$ will be difficult in practice. 

Faced with the same problem, Bellini, Laeven, and Rosazza Gianin \cite{BLR18} have introduced, in the case of Orlicz premia, a robust version for bounded positive risks. Assuming that the true but unknown $\phi$ belongs to a family $\mathcal{F}$ of Young functions, they define the \textit{robust Orlicz premium} as
\[
\pi_{\mathcal{F},\alpha}(X)  = \inf\Big\{a>0 : \sup_{\phi\in\mathcal{F}} E\Big(\phi\Big(\frac{X}{a}\Big)\Big)\leq 1-\alpha\Big\},\ X\in L^\infty_+,
\]
see \cite[Definition 4]{BLR18} (they take $\alpha=0$). The corresponding \textit{robust Haezendonck-Goovaerts risk measure} is then given, as usual, by
\[
\rho_{\mathcal{F},\alpha}(X)= \inf_{x\in\R} \big(\pi_{\mathcal{F},\alpha}((X-x)^+)+x\big),\ X\in L^\infty,
\]
see \cite[Section 5.1]{BLR18}.

In the same way, given a family $\mathcal{G}$ of distortion functions, Wang and Xu \cite{WaXu23} have introduced a \textit{(preference) robust distortion risk measure} by
\[
\pi_{\mathcal{G}}(X) = \sup_{g\in\mathcal{G}}\int_0^1 F_X^{-1}(1-u)\d g(u).
\]

It could thus be of interest to consider \textit{robust Orlicz-Lorentz premia}:
\begin{align*}
\pi_{g,\mathcal{F},\alpha}(X)= \inf\Big\{a>0 : \sup_{\phi\in\mathcal{F}}\int_0^1 \phi\Big(\frac{F_X^{-1}(1-u)}{a}\Big) \d g(u)\leq 1-\alpha\Big\},\\
\pi_{\mathcal{G},\phi,\alpha}(X)= \inf\Big\{a>0 : \sup_{g\in\mathcal{G}}\int_0^1 \phi\Big(\frac{F_X^{-1}(1-u)}{a}\Big) \d g(u)\leq 1-\alpha\Big\}.
\end{align*}
From there, one obtains \textit{robust distorted Haezendonck-Goovaerts risk measures} $\rho_{g,\mathcal{F},\alpha}$ and $\rho_{\mathcal{G},\phi,\alpha}$ in the usual way. We will not pursue this here. 

The robust versions $\pi_{\mathcal{G},\phi,\alpha}$ and $\rho_{\mathcal{G},\phi,\alpha}$ have already been introduced by Wu and Xu \cite{WuXu22} under the names RDOP and RDHG.

\section{Appendix}\label{s-app}
We first prove claims made in Remark \ref{rem-Lorentz} concerning the relationship between the domain $L_g$ of a distortion risk measure and the Lorentz spaces. For this, let $w:[0,1]\to\mathbb{R}$ be a positive measurable function with $\int_0^1 w(u)\d u=1$. Define $g(u)=\int_0^u w(v)\d v$, $u\in [0,1]$, which is a distortion function. Then consider the (classical) Lorentz space
\[
\Lambda(w)=\Big\{ X : \|X\|:=\int_0^1 F_{|X|}^{-1}(1-u)w(u) \d u <\infty\Big\}.
\]

\begin{proposition}\label{p-LorentzLgRhog}
We have that $X\in L_g$ if and only if $X^+\in \Lambda(w)$ and $\rho:=\inf_{x\in\mathbb{R}} (\|(X-x)^+\|+x)>-\infty$; in that case, $\rho_g(X)=\rho$.
\end{proposition}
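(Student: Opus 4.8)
The plan is to reduce both sides of the equivalence to two non-negative quantities,
\[
C:=\int_0^\infty g\big(\overline F_X(s)\big)\,\d s\in[0,\infty],\qquad D:=\int_{-\infty}^0\big(1-g(\overline F_X(s))\big)\,\d s\in[0,\infty],
\]
and to establish four facts: (i) $X\in L_g$ iff $C<\infty$ and $D<\infty$; (ii) $X^+\in\Lambda(w)$ iff $C<\infty$; (iii) if $X\in L_g$ then $\rho_g(X)=C-D$; (iv) if $C<\infty$ then $\rho=C-D$ (as an element of $[-\infty,\infty)$). Granting these the statement is immediate: if $X\in L_g$ then $C,D<\infty$, hence $X^+\in\Lambda(w)$, $\rho=C-D>-\infty$, and $\rho_g(X)=C-D=\rho$; conversely, if $X^+\in\Lambda(w)$ and $\rho>-\infty$, then $C<\infty$ by (ii), so $\rho=C-D>-\infty$ forces $D<\infty$, whence $X\in L_g$ by (i), and again $\rho_g(X)=\rho$.

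The easy facts come first. Since $g$ is continuous with $g(0)=0$, the induced measure $\mu_g$ is atomless with $\d\mu_g=w\,\d u$, so $\int_0^1 h\,\d g=\int_0^1 h(u)w(u)\,\d u$ for non-negative $h$; in particular, for a positive risk $Y$ the Lorentz quantity $\|Y\|=\int_0^1 Y^\ast(u)w(u)\,\d u$ equals the $[0,\infty]$-valued integral $\int_0^1 F_Y^{-1}(1-u)\,\d g(u)$. Running the termwise Fubini computation from the proof of Proposition \ref{p-equiv} separately on the positive and negative parts of $F_X^{-1}(1-\cdot)$, and using $g(\overline F_X(s)-)=g(\overline F_X(s))$, gives
\[
\int_0^1\big(F_X^{-1}(1-u)\big)^+\,\d g(u)=C,\qquad\int_0^1\big(F_X^{-1}(1-u)\big)^-\,\d g(u)=D.
\]
Summing yields $\int_0^1|F_X^{-1}(1-u)|\,\d g(u)=C+D$, which is (i); since $F_{X^+}^{-1}=(F_X^{-1})^+$, the first identity reads $\|X^+\|=C$, which is (ii); and Proposition \ref{p-equiv} with the continuity of $g$ gives $\rho_g(X)=-D+C$ for $X\in L_g$, which is (iii).

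The only genuinely new step is (iv). Write $f(x):=\|(X-x)^+\|+x$. Since $\overline F_{(X-x)^+}(t)=\overline F_X(x+t)$ for $t>0$, the same Fubini computation applied to the positive risk $(X-x)^+$ (equivalently \eqref{eq-defdist}) gives $\|(X-x)^+\|=\int_x^\infty g(\overline F_X(s))\,\d s$, so for $x_1<x_2$
\[
f(x_2)-f(x_1)=(x_2-x_1)-\int_{x_1}^{x_2}g\big(\overline F_X(s)\big)\,\d s=\int_{x_1}^{x_2}\big(1-g(\overline F_X(s))\big)\,\d s\ge 0 .
\]
Thus $f$ is non-decreasing (this re-derives Proposition \ref{p-distHaezGoovinf}(a) in the present case) and $\rho=\inf_{x\in\R}f(x)=\lim_{x\to-\infty}f(x)$. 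Assuming $C<\infty$, for $x<0$ we have $f(x)=C-\int_x^0\big(1-g(\overline F_X(s))\big)\,\d s$, and since the integrand is non-negative, the monotone convergence theorem gives $\rho=\lim_{x\to-\infty}f(x)=C-D$ in $[-\infty,\infty)$.

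The main point to watch is the extended-real bookkeeping. When $C=\infty$ one checks that $\int_x^\infty g(\overline F_X(s))\,\d s=\infty$ for every $x$ (the integrand is $\le 1$, so the part over $[0,x]$ is finite), hence $f\equiv+\infty$ and $\rho=+\infty$, while simultaneously $X^+\notin\Lambda(w)$ and $X\notin L_g$: both sides of the asserted equivalence fail and there is nothing to match. So the limit argument of the previous paragraph is used only, and exactly, in the case $C<\infty$, and beyond this no delicate estimates are needed — everything reduces to Tonelli's theorem and monotone convergence.
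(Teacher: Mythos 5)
Your proof is correct, and it takes a route that differs from the paper's in a structurally interesting way. The paper proves necessity and sufficiency separately: for necessity it appeals to the monotone-decrease argument of Example \ref{ex-distHaezGoov}(a), taking $x\to-\infty$ in $\int_0^1((F_X^{-1}(1-u)-x)^++x)\,\d g(u)$ under domination; for sufficiency it splits $(0,1)$ into $I_-=\{u:F_X^{-1}(1-u)\le 0\}$ and $I_+$ and works in the quantile variable, controlling the $I_+$ part by $\|X^+\|$ and the $I_-$ part by monotone convergence. You instead transfer everything to the tail-function side via Proposition \ref{p-equiv}, isolating the two non-negative masses $C=\int_0^\infty g(\overline F_X)$ and $D=\int_{-\infty}^0(1-g(\overline F_X))$, and reducing the whole statement to four elementary identities about $C$ and $D$. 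What this buys is a genuinely unified treatment: the "iff", the non-degeneracy $\rho>-\infty$, and the equality $\rho_g(X)=\rho$ all drop out simultaneously from $C-D$ bookkeeping in $[-\infty,\infty]$, and the explicit formula $f(x)=C-\int_x^0(1-g(\overline F_X(s)))\,\d s$ makes the monotonicity of $f$ and the limit at $-\infty$ transparent. The paper's split over $I_\pm$ is closer in spirit to how the Lorentz-space literature usually argues, and avoids the $C=\infty$ case-analysis that your version has to flag explicitly; but your handling of that degenerate case is right (when $C=\infty$ both sides of the equivalence are false, so nothing needs to be matched), and the rest rests on Tonelli and monotone convergence, exactly as you say.

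Two small points worth making explicit if you write this up. First, the identity $F_{X^+}^{-1}=(F_X^{-1})^+$ is not an instance of the paper's stated rule for continuous \emph{increasing} $h$ (since $t\mapsto t^+$ is only non-decreasing); it does hold, but you should verify it directly from $F_{X^+}(x)=F_X(x)$ for $x\ge 0$, $F_{X^+}(x)=0$ for $x<0$, and the characterisation $u\le F_X(x)\iff F_X^{-1}(u)\le x$. Second, when you write $\|(X-x)^+\|=\int_x^\infty g(\overline F_X(s))\,\d s$ via \eqref{eq-defdist}, the use of $g(\cdot-)=g(\cdot)$ does need the continuity of $g$, which you have — it is worth signposting since \eqref{eq-defdist} is stated with the left limit.
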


\begin{proof}
For the proof of necessity, follow the argument in Example \ref{ex-distHaezGoov}(a) and note that $\rho_g(X)=\|X\|$ if $X\geq 0$. For sufficiency, let $x\leq 0$, and write $I_-=\{u\in (0,1): F_X^{-1}(1-u)\leq 0\}$, $I_+=(0,1)\setminus I_-$. Then
\begin{align*}
\|(X-x)^+\|+x &= \int_0^1(F^{-1}_{(X-x)^+}(1-u)+x)w(u)\d u\\
&= \Big(\int_{I_-}+\int_{I_+}\Big)\big((F^{-1}_{X}(1-u)-x)^++x\big)w(u)\d u.
\end{align*}
Since $x\leq 0$, the second integral coincides with
\begin{equation}\label{eq-lg}
\int_{I_+}F^{-1}_{X}(1-u)w(u)\d u = \int_{I_+}F^{-1}_{X^+}(1-u)w(u)\d u<\infty, 
\end{equation}
where we have used the first hypothesis. Thus, the second hypothesis implies that
\[
\inf_{x\leq 0} \int_{I_-}\big((F^{-1}_{X}(1-u)-x)^++x\big)w(u)\d u >-\infty.
\]
Since the integrands are negative and decrease as $x$ decreases, the monotone convergence theorem implies that $\int_{I_-} F^{-1}_{X}(1-u) w(u)\d u >-\infty$, hence 
\[
\int_{I_-} |F^{-1}_{X}(1-u)| w(u)\d u <\infty.
\]
Altogether we get that
\[
\int_0^1|F^{-1}_{X}(1-u)|w(u)\d u= \Big(\int_{I_-}+\int_{I_+}\Big)|F^{-1}_{X}(1-u)|w(u)\d u<\infty,
\]
where the second integral is finite by \eqref{eq-lg}.
\end{proof}

\begin{proposition}\label{p-LorentzLg}
If $w$ is decreasing, then $\Lambda(w)\subset L_g$.
\end{proposition}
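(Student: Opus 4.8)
The plan is to use the Hardy--Littlewood rearrangement inequality, which becomes available here precisely because a decreasing $w$ coincides almost everywhere with its own nonincreasing rearrangement. First, since $g(u)=\int_0^u w(v)\,\d v$ is absolutely continuous with $g'=w$, we have $\int_0^1|F_X^{-1}(1-u)|\,\d g(u)=\int_0^1|F_X^{-1}(1-u)|w(u)\,\d u$, so it suffices to show that this integral is finite for every $X\in\Lambda(w)$.

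The key step is to recognize $u\mapsto F_{|X|}^{-1}(1-u)$ as the nonincreasing rearrangement of $\eta(u):=|F_X^{-1}(1-u)|$ on $[0,1]$ equipped with Lebesgue measure. Indeed, by the properties of quantile functions recalled in Section~\ref{s-intro}, $F_X^{-1}(V)=_d X$ for $V$ uniform on $(0,1)$, and since $1-V=_d V$ the random variable $\eta$ on $[0,1]$ satisfies $\eta=_d|X|$; hence $\eta$ and $|X|$ have the same distribution function, and therefore the same nonincreasing rearrangement, which is $v\mapsto F_{|X|}^{-1}(1-v)$. Writing $w^*$ for the nonincreasing rearrangement of $w$, which equals $w$ a.e.\ because $w$ is decreasing, the Hardy--Littlewood inequality $\int_0^1 fg\,\d u\le\int_0^1 f^*g^*\,\d u$ for nonnegative $f,g$ (see, e.g., \cite{CRS07}) gives
\[
\int_0^1|F_X^{-1}(1-u)|\,\d g(u)=\int_0^1\eta(u)w(u)\,\d u\le\int_0^1\eta^*(u)w^*(u)\,\d u=\|X\|<\infty ,
\]
so that $X\in L_g$; in fact one even obtains the bound $\int_0^1|F_X^{-1}(1-u)|\,\d g(u)\le\|X\|$.

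If one prefers to avoid rearrangement theory, the same conclusion can be reached by splitting $|F_X^{-1}(1-u)|=(F_X^{-1}(1-u))^++(F_X^{-1}(1-u))^-$. For the positive part, $(F_X^{-1}(1-u))^+=F_{X^+}^{-1}(1-u)\le F_{|X|}^{-1}(1-u)$ since $0\le X^+\le|X|$, so its $w$-integral is at most $\|X\|$. For the negative part, $\zeta(u):=(F_X^{-1}(1-u))^-$ is nondecreasing in $u$ with $\int_0^1\zeta(u)\,\d u=E(X^-)$ (again by the properties of quantile functions), so Chebyshev's integral inequality, applied to the nonincreasing function $w$ (with $\int_0^1 w=1$) and the nondecreasing function $\zeta$, gives $\int_0^1\zeta(u)w(u)\,\d u\le E(X^-)\le E(|X|)$; and $E(|X|)\le\|X\|$ by Chebyshev's inequality applied to the two nonincreasing functions $F_{|X|}^{-1}(1-\cdot)$ and $w$. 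Adding the two estimates yields $\int_0^1|F_X^{-1}(1-u)|\,\d g(u)\le 2\|X\|<\infty$.

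The only genuinely delicate point, in either route, is keeping careful track of which quantile a given ``rearranged'' expression represents, since $|F_X^{-1}|\ne F_{|X|}^{-1}$ in general (cf.\ Remark~\ref{rem-Lorentz} and Example~\ref{ex-cone}); once the identification of $F_{|X|}^{-1}(1-\cdot)$ as the nonincreasing rearrangement of $|F_X^{-1}(1-\cdot)|$ is in place, everything else is a routine application of the Hardy--Littlewood and Chebyshev inequalities together with the monotonicity of the quantile function. It is worth noting that the inclusion is in general strict, in line with the fact that for decreasing $w$ one has $\Lambda(w)\subset L^1$ (which itself follows from the last Chebyshev estimate above).
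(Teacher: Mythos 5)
Your proposal is correct, and both of your routes are genuinely different from the paper's proof, which works by a hands-on case analysis. The paper splits into the cases $X\geq 0$, $X\leq 0$, and the mixed case; for $X\leq 0$ it uses the a.e.\ inequality $|F_X^{-1}(1-u)|\leq F_{|X|}^{-1}(u)$ together with the reflection $u\mapsto 1-u$ and the monotonicity observation $w(u)\leq w(1-u)$ for $u\geq\tfrac12$ to compare the integral over $[\tfrac12,1]$ with $\|X\|$, then handles the mixed case by gluing together pieces near $0$ and near $1$. Your first route sidesteps all of this: once you identify $F_{|X|}^{-1}(1-\cdot)$ as the decreasing rearrangement of $|F_X^{-1}(1-\cdot)|$ (and you give the right argument, via $|F_X^{-1}(1-V)|=_d |X|$ for $V$ uniform) and note that $w$ coincides a.e.\ with its own rearrangement, the Hardy--Littlewood inequality delivers the inclusion in one line, with the sharp bound $\int_0^1|F_X^{-1}(1-u)|\,\d g(u)\leq\|X\|$ -- so it even upgrades the qualitative inclusion to a norm estimate. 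Your second route is closer to the paper in spirit (it also decomposes into positive and negative quantile parts), but it replaces the paper's reflection trick with Chebyshev's integral inequality, which is a tidier way to exploit the opposite monotonicities of $w$ and $(F_X^{-1}(1-\cdot))^-$, and also produces an explicit (though weaker) constant $2\|X\|$. Both arguments are complete; the rearrangement route is the one that best exposes why monotonicity of $w$ is the natural hypothesis here.
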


\begin{proof}
Let $X\in \Lambda(w)$. We claim that $X\in L_g$, that is $\int_{0}^1 |F_{X}^{-1}(1-u)| w(u)\d u<\infty$.

First, by monotonicity of VaR, we have that $F_X^{-1}\leq F_{|X|}^{-1}$, hence $|F_X^{-1}(1-u)|\leq F_{|X|}^{-1}(1-u)$ if $F_X^{-1}(1-u)\geq 0$. Secondly, the upper and lower quantile functions coincide almost everywhere, see \cite[Lemma A.19]{FoSc16}. Thus, with \cite[equation (4.44)]{FoSc16} we have that $F_{X}^{-1}(1-u)=-F_{-X}^{-1}(u)\geq -F_{|X|}^{-1}(u)$ for almost all $u\in[0,1]$, and hence $|F_{X}^{-1}(1-u)|\leq F_{|X|}^{-1}(u)$ a.e. if $F_{X}^{-1}(1-u)\leq 0$.

Now, if $X\geq 0$, then there is nothing to prove. 

Next, let $X\leq 0$, so that $F_{X}^{-1}\leq 0$ on $[0,1]$. Since $w$ is decreasing, we have that $w(u)\leq w(1-u)$ if $u\geq \frac{1}{2}$. Hence $\int_{1/2}^1 |F_{X}^{-1}(1-u)| w(u)\d u\leq \int_{1/2}^1 F_{|X|}^{-1}(u) w(1-u)\d u = \int_0^{1/2} F_{|X|}^{-1}(1-u) w(u)\d u<\infty$. Since $u\mapsto |F_{X}^{-1}(1-u)|$ is increasing, we obtain that $X\in L_g$.

In the remaining case, there is some $\delta\in (0,\frac{1}{2}]$ such that $F_X^{-1}(1-u)\geq 0$ if $u\leq \delta$ and $F_X^{-1}(1-u)\leq 0$ if $u\geq 1-\delta$. It follows as above that $\int_{1-\delta}^1 |F_{X}^{-1}(1-u)| w(u)\d u<\infty$; also, $\int_{0}^\delta |F_{X}^{-1}(1-u)| w(u)\d u\leq \int_{0}^\delta F_{|X|}^{-1}(1-u) w(u)\d u<\infty$. This then implies again that $X\in L_g$.
\end{proof}

\begin{example}
(a) There is a decreasing weight $w$ such that $\Lambda(w)\subsetneq L_g$. Indeed, let $w(u)=2(1-u)$, $u\in [0,1]$. Also, choose a random variable $Y$ with $F_Y(x)=1-\frac{1}{x}$, $x\geq 1$, and take $X=-Y$. Then $|F_X^{-1}(1-u)|=\frac{1}{1-u}$, hence $X\in L_g$, but $F_{|X|}^{-1}(1-u)=\frac{1}{u}$, which shows that $X\notin \Lambda(w)$.

(b) There is a weight $w$ (which is necessarily not decreasing) such that $\Lambda(w)\not\subset L_g$. Indeed, let $w(u)=2u$, $u\in [0,1]$, and take the same random variable $X$ as in (a). Then $X\in \Lambda(w)$ but $X\notin L_g$.
\end{example}

In the same way, one can justify a claim made in Remark \ref{rem-OL}(a); for the notation we refer to Section \ref{s-distHaezGoov}.

\begin{proposition}\label{p-OrlLorentzLg}
Let $\phi$ be a Young function and $w$ decreasing. Then $\Lambda_{\phi,w}\subset L_g^\phi$.
\end{proposition}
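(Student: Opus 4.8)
The plan is to mimic the proof of Proposition~\ref{p-LorentzLg} almost verbatim, inserting the increasing function $t\mapsto\phi(t/a)$ wherever a quantile appears. First I would note that, since $g(u)=\int_0^u w(v)\,\d v$, the measure $\mu_g$ equals $w\,\d u$, so that $X\in L_g^\phi$ means precisely that $\int_0^1\phi\big(|F_X^{-1}(1-u)|/a\big)w(u)\,\d u<\infty$ for some $a>0$; and, recalling that $X^*(u)=F_{|X|}^{-1}(1-u)$, membership $X\in\Lambda_{\phi,w}$ means exactly that $\int_0^1\phi\big(F_{|X|}^{-1}(1-u)/a\big)w(u)\,\d u<\infty$ for some $a>0$. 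So I fix $X\in\Lambda_{\phi,w}$, pick such an $a$, and aim to show that this same $a$ witnesses $X\in L_g^\phi$.

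Next I would invoke the two pointwise estimates established inside the proof of Proposition~\ref{p-LorentzLg}: by monotonicity of VaR, $|F_X^{-1}(1-u)|\le F_{|X|}^{-1}(1-u)$ at every $u$ with $F_X^{-1}(1-u)\ge 0$; and, since the upper and lower quantile functions agree a.e.\ and $F_X^{-1}(1-u)=-F_{-X}^{-1}(u)\ge -F_{|X|}^{-1}(u)$, one has $|F_X^{-1}(1-u)|\le F_{|X|}^{-1}(u)$ for a.e.\ $u$ with $F_X^{-1}(1-u)\le 0$. Applying the increasing map $t\mapsto\phi(t/a)$ transports both inequalities.

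Then I would split $[0,1]$ exactly as in Proposition~\ref{p-LorentzLg}. If $X\ge 0$ nothing has to be done. If $X\le 0$, on $[\tfrac12,1]$ I use the second estimate, then $w(u)\le w(1-u)$ (valid because $w$ is decreasing and $u\ge\tfrac12$), then the substitution $u\mapsto 1-u$, to bound the integral over $[\tfrac12,1]$ by $\int_0^{1/2}\phi\big(F_{|X|}^{-1}(1-u)/a\big)w(u)\,\d u<\infty$; on $[0,\tfrac12]$ the integrand is bounded, since $u\mapsto|F_X^{-1}(1-u)|$ is increasing there and $|F_X^{-1}(\tfrac12)|<\infty$, while $\int_0^1 w(u)\,\d u=1$. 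In the general case there is $\delta\in(0,\tfrac12]$ with $F_X^{-1}(1-u)\ge 0$ for $u\le\delta$ and $F_X^{-1}(1-u)\le 0$ for $u\ge 1-\delta$: on $[0,\delta]$ the first estimate gives a bound by $\int_0^\delta\phi\big(F_{|X|}^{-1}(1-u)/a\big)w(u)\,\d u$, on $[1-\delta,1]$ the second estimate combined with the $w(u)\le w(1-u)$ trick gives a bound by the same kind of integral, and on the middle interval $[\delta,1-\delta]$ the function $|F_X^{-1}(1-u)|$ is bounded (as $1-u$ stays away from $0$ and $1$), so $\phi\big(|F_X^{-1}(1-u)|/a\big)$ is bounded and hence integrable against $w$. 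Adding the pieces yields $\int_0^1\phi\big(|F_X^{-1}(1-u)|/a\big)w(u)\,\d u<\infty$, that is, $X\in L_g^\phi$.

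I do not expect any real obstacle; the two points deserving a word of care are that $\phi$ being only increasing (not strictly) still propagates the quantile inequalities after composition, and that on the middle interval $\phi$ of a bounded quantile stays bounded, so integrability against $w$ is automatic. Both are immediate, and no convexity of $\phi$ beyond monotonicity is needed.
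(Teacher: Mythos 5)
Your proof is correct and is exactly what the paper intends: the paper itself gives no separate argument for Proposition~\ref{p-OrlLorentzLg} but says that it follows ``in the same way'' as Proposition~\ref{p-LorentzLg}, and your write-up carries out precisely that transposition (composing the two pointwise quantile estimates with the increasing map $t\mapsto\phi(t/a)$, then splitting $[0,1]$ into the same three regions and using $w(u)\le w(1-u)$ for $u\ge\tfrac12$). You also correctly identify that only monotonicity of $\phi$, not convexity, is needed.
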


On the other hand, the analogue of Proposition \ref{p-LorentzLgRhog} fails in general, even if $g$ is the identity.

\begin{example}\label{ex-LorentzLgRhog}
Let $g$ be the identity and $\phi(t)=t^2$. On $\Omega=(0,1]$ with the Lebesgue measure, we consider $X(\omega)=-\frac{1}{\sqrt{\omega}}$. Then $X\notin L_g^\phi$. But one calculates that, for $x\leq -2$, $\|(X-x)^+\|+x= \sqrt{x^2+4x+2\ln|x|+3}+x\geq -2$. 
\end{example}

We next prove a claim made in Remark \ref{r-Choquet}.

\begin{proposition}\label{p-rdu}
Let $\phi:\mathbb{R}\to\mathbb{R}$ be an increasing convex function with $\phi(0)=0$, $U(t)=-\phi(-t)$, $t\in\mathbb{R}$, the corresponding increasing concave function. Let $h$ be a distortion function with $h(0)=0$, and define $g(u)=1-h((1-u)-)$, $u\in [0,1]$. Then $g$ is a distortion function with $g(0)=0$ and, for any positive random variable $X$,
\[
(C)\int U(-X)\d (h\circ P) = - \int_0^1 \phi(F_{X}^{-1}(1-u))\d g(u),
\]
where the integral on the left is a Choquet integral.
\end{proposition}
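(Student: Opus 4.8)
The plan is to reduce the Choquet integral on the left to the representation \eqref{eq-defdist} of $\rho_g$ applied to the positive random variable $\phi(X)$. First I would check that $g$ is a distortion function with $g(0)=0$: monotonicity of $g$ is immediate since $h$ is increasing, and right-continuity of $g$ follows from the left-continuity of $v\mapsto h(v-)=\sup_{s<v}h(s)$ together with the substitution $v=1-u$. For the normalization, $g(1)=1-h(0-)=1$, and since $h$ is a distortion function $h(1-)=h(1)=1$, so $g(0)=1-h(1-)=0$; a similar computation gives $\lim_{u\nearrow 1}g(u)=1$, so $g$ is of the required type.

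Next I would record the pointwise identity
\[
g(v-)=1-h(1-v),\qquad v\in[0,1],
\]
(with the usual convention $g(0-)=0$). Indeed $g(v-)=1-\lim_{w\nearrow v}h((1-w)-)=1-\lim_{r\searrow 1-v}h(r-)$, and for $r>r_0$ one has $h(r_0)\leq h(r-)\leq h(r)$ while $h(r)\to h(r_0)$ as $r\searrow r_0$ by right-continuity of $h$; hence $\lim_{r\searrow r_0}h(r-)=h(r_0)$.

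Now, since $X\geq 0$ and $\phi$ is increasing with $\phi(0)=0$, we have $\phi(X)\geq 0$, so $U(-X)=-\phi(X)\leq 0$. For a non-positive integrand the Choquet integral with respect to the capacity $h\circ P$ (note $(h\circ P)(\Omega)=h(1)=1$) is
\[
(C)\int U(-X)\,\d(h\circ P)=-\int_0^\infty\Big(1-(h\circ P)\big(\{U(-X)>-s\}\big)\Big)\,\d s=-\int_0^\infty\big(1-h(P(\phi(X)<s))\big)\,\d s,
\]
using $\{U(-X)>-s\}=\{\phi(X)<s\}$. For every $s$ outside the at most countable set of atoms of $\phi(X)$ we have $P(\phi(X)<s)=F_{\phi(X)}(s)=1-\overline F_{\phi(X)}(s)$, and then the displayed left-limit formula with $v=\overline F_{\phi(X)}(s)$ turns the integrand into $g(\overline F_{\phi(X)}(s)-)$. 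Since integrands that agree Lebesgue-almost everywhere have the same integral,
\[
(C)\int U(-X)\,\d(h\circ P)=-\int_0^\infty g\big(\overline F_{\phi(X)}(s)-\big)\,\d s.
\]
Applying \eqref{eq-defdist} to the positive risk $\phi(X)$ (the identity holds in $[0,\infty]$ whether or not $\phi(X)\in L_g$), and then using that $\phi$ is continuous and increasing so that $F_{\phi(X)}^{-1}(1-u)=\phi(F_X^{-1}(1-u))$ for $u\in(0,1)$ — the endpoints $u=0$ and $u=1$ being $\mu_g$-null because $g(0)=0$ and $g(1-)=g(1)$ — we obtain $-\int_0^\infty g(\overline F_{\phi(X)}(s)-)\,\d s=-\int_0^1\phi(F_X^{-1}(1-u))\,\d g(u)$, which is the claim.

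The step needing the most care is the bookkeeping with left limits and strict versus non-strict inequalities: the Choquet integral naturally produces $P(\phi(X)<s)=F_{\phi(X)}(s-)$, whereas the identification with $g(v-)$ calls for $F_{\phi(X)}(s)$, and the passage between them is legitimate precisely because the two differ only on the countable, hence Lebesgue-null, set of atoms of $\phi(X)$. Similarly one must keep track that $g(v-)$ equals $1-h(1-v)$ and not $1-h((1-v)-)$. Once these identifications are pinned down, the remainder is a direct appeal to \eqref{eq-defdist} and the elementary quantile properties recalled in the introduction.
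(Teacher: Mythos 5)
Your proof is correct and follows essentially the same route as the paper's: both reduce to the Choquet formula for non-positive integrands, identify the integrand almost everywhere with $g(\overline F_{\phi(X)}(\cdot)-)$ via the identity $g(v-)=1-h(1-v)$ (using that atoms of $\phi(X)$ form a Lebesgue-null set), and then conclude by Proposition \ref{p-equiv}/\eqref{eq-defdist} together with $F_{\phi(X)}^{-1}=\phi(F_X^{-1})$. The only cosmetic difference is that the paper phrases the intermediate step through $\overline F_{-Z}(-x)=1-P(Z\geq x)$ while you work directly with $P(\phi(X)<s)$, which is the same quantity.
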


\begin{proof} It is easy to see that $g$ is a distortion function with $g(u-)=1-h(1-u)$ on $[0,1]$.

For the notion of Choquet integrals, we refer to \cite{Den94} and \cite[p.\ 68]{Hei03}. Using a property of quantile functions and writing $Z=\phi(X)$, we see that it suffices to show that
\[
 (C)\int (-Z)\d (h\circ P) =  -\int_0^1 F_{Z}^{-1}(1-u)\d g(u).
\]
Since $Z\geq 0$, the Choquet integral equals
\[
\int_{-\infty}^0 (h(\overline{F}_{-Z}(x))-1) \d x = -\int_0^{\infty} (1-h(\overline{F}_{-Z}(-x))) \d x.
\]
Now, $\overline{F}_{-Z}(-x)= P(-Z>-x)=1-P(Z\geq x)$. For all but countably many $x$, this coincides with $1-P(Z>x)=1-\overline{F}_{Z}(x)$. For these $x$, we have
\[
1-h(\overline{F}_{-Z}(-x))= 1-h(1-\overline{F}_{Z}(x))=g(\overline{F}_{Z}(x)-).
\]
Hence, the Choquet integral equals
\[
-\int_0^{\infty} g(\overline{F}_{Z}(x)-) \d x = -\int_0^1 F_{Z}^{-1}(1-u)\d g(u),
\]
where we have used Proposition \ref{p-equiv}(a). This proves the claim.
\end{proof}

We finally give the proof of Theorem \ref{t-dHGtriv} (and hence of Theorem \ref{t-HGtriv}). For this we need two auxiliary results.

\begin{lemma}\label{l-triv1}
Let $g$ be a distortion function and $X\in L_g$. Then $X_n:=\max(\min(X,n),-n)\in L_g$, $n\geq 1$, and $\lim_{n\to\infty}\rho_g(X_n)= \rho_g(X)$.
\end{lemma}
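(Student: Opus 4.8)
The plan is to use the integral representation of $\rho_g$ from Proposition \ref{p-equiv} together with the monotone convergence theorem. The first assertion is immediate: since $-n\leq X_n=\max(\min(X,n),-n)\leq n$, the truncation $X_n$ is bounded, hence $X_n\in L^\infty\subset L_g$.

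For the convergence, I would first compute the survival function of $X_n$ on three regions. A direct check of the definition of $X_n$ shows that $\overline{F}_{X_n}(x)=\overline{F}_X(x)$ for $x\in[-n,n)$, that $\overline{F}_{X_n}(x)=1$ for $x<-n$, and that $\overline{F}_{X_n}(x)=0$ for $x\geq n$. Plugging this into Proposition \ref{p-equiv}, and using that $g(1-)=g(1)=1$ (so that $1-g(1-)=0$ on $(-\infty,-n)$) and $g(0-)=0$ (so that the integrand vanishes on $[n,\infty)$), the contributions from $x<-n$ and from $x\geq n$ drop out, and one is left with
\[
\rho_g(X_n) = -\int_{-n}^0 \big(1 - g(\overline{F}_X(x)-)\big)\d x + \int_0^n g(\overline{F}_X(x)-)\d x.
\]

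Since both integrands are nonnegative, the monotone convergence theorem shows that, as $n\to\infty$, the second integral increases to $\int_0^\infty g(\overline{F}_X(x)-)\d x$ and the first to $\int_{-\infty}^0 (1-g(\overline{F}_X(x)-))\d x$. By the Fubini computation carried out in the proof of Proposition \ref{p-equiv}, these two integrals equal $\int_0^1 (F_X^{-1}(1-u))^+\d g(u)$ and $\int_0^1 (F_X^{-1}(1-u))^-\d g(u)$ respectively, whose sum is $\int_0^1 |F_X^{-1}(1-u)|\d g(u)$, which is finite precisely because $X\in L_g$; in particular both limits are finite. Combining this with Proposition \ref{p-equiv} yields $\rho_g(X_n)\to\rho_g(X)$.

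The one place that needs care is the bookkeeping of $\overline{F}_{X_n}$ on the three regions, together with the observation that $g(1-)=1$ annihilates the tail integral over $(-\infty,-n)$; once that is in place, the remainder is a routine application of monotone convergence, and the finiteness needed to pass to the limit is exactly the hypothesis $X\in L_g$. (One could alternatively split $X=X^+ + (-X^-)$ into its comonotonic parts, apply Proposition \ref{p-distcomon} and then Proposition \ref{p-distFatou} for the increasing truncation $\min(X^+,n)\nearrow X^+$ and the above computation for $\max(-X^-,-n)\searrow -X^-$, but the direct argument via Proposition \ref{p-equiv} seems shorter.)
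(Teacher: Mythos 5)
Your proof is correct, but it takes a different route from the paper's. The paper works directly with the defining quantile integral $\rho_g(X_n)=\int_0^1 F_{X_n}^{-1}(1-u)\,\d g(u)$, observing that since $x\mapsto\max(\min(x,n),-n)$ is continuous and increasing, $F_{X_n}^{-1}=\max(\min(F_X^{-1},n),-n)$, whence $|F_{X_n}^{-1}|\leq|F_X^{-1}|$ and $F_{X_n}^{-1}\to F_X^{-1}$ pointwise on $(0,1]$; the dominated convergence theorem then finishes the proof in one line. You instead pass to the survival-function representation of Proposition \ref{p-equiv}, compute $\overline{F}_{X_n}$ on the three regions, observe that the truncated-tail contributions vanish because $g(1-)=1$ and $g(0-)=0$, and invoke monotone convergence, identifying the two finite limits via the Fubini computation inside the proof of Proposition \ref{p-equiv}. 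Both arguments are sound and of comparable length; the paper's is a touch slicker because the domination $|F_{X_n}^{-1}|\leq|F_X^{-1}|$ is immediate and avoids the case split on survival functions, while yours has the small advantage of not needing a dominating function (MCT rather than DCT) and keeps all the bookkeeping at the level of \eqref{eq-defdist}-type formulas. Either would be acceptable; your parenthetical alternative via comonotone decomposition is also workable but longer than either.
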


\begin{proof} We have that $|F^{-1}_{X_n}|\leq |F^{-1}_{X}|$ and $F^{-1}_{X_n}\to F^{-1}_{X}$ on $(0,1]$. Thus the result follows from Definitions \ref{d-dist}, \ref{d-distrm}, and the dominated convergence theorem.
\end{proof}

\begin{lemma}\label{l-triv2}
Let $g$ be a distortion function with $g(0)=0$, $\phi$ a normalized Young function that satisfies the $\Delta_2$-condition, and $\alpha=0$. Let $X_n\in (L_g^\phi)^+$ with $X_n\searrow 0$. Then $\rho_{g,\phi,0}(X_n)\to 0$.
\end{lemma}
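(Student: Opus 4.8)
The plan is to sandwich $\rho_{g,\phi,0}(X_n)$ between two quantities tending to $0$. For the lower bound, Proposition~\ref{p-distHaezGoov}(b) together with the monotonicity of $\rho_g$ (Proposition~\ref{p-distmonposhom}) gives $\rho_{g,\phi,0}(X_n)\ge\rho_g(X_n)\ge\rho_g(0)=0$. For the upper bound, taking $x=0$ in the definition of $\rho_{g,\phi,0}$ — that is, Proposition~\ref{p-distHaezGoov}(a) — gives $\rho_{g,\phi,0}(X_n)\le\pi_{g,\phi,0}(X_n^+)=\pi_{g,\phi,0}(X_n)$. Hence everything reduces to proving that $\pi_{g,\phi,0}(X_n)\to 0$, and for this it is enough to show that, for each fixed $\varepsilon>0$,
\[
\int_0^1\phi\Big(\frac{F_{X_n}^{-1}(1-u)}{\varepsilon}\Big)\d g(u)\le 1
\]
for all sufficiently large $n$: then $\pi_{g,\phi,0}(X_n)\le\varepsilon$ eventually, and letting $\varepsilon\to 0$ finishes the proof.

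To obtain this, I would rewrite the left-hand side via \eqref{eq-defdist} and a property of quantile functions, exactly as in the proof of Proposition~\ref{p-distOrliczFatou}: since $t\mapsto\phi(t/\varepsilon)$ is continuous and increasing with value $0$ at $0$, one has $F_{\phi(X_n/\varepsilon)}^{-1}=\phi(F_{X_n}^{-1}/\varepsilon)$, so the integral equals $\int_0^\infty g\big(\overline F_{\phi(X_n/\varepsilon)}(x)-\big)\d x$. Now apply the dominated convergence theorem in $x$. For the pointwise limit: $X_n\searrow 0$ a.s.\ and $\phi$ is continuous at $0$, so $\phi(X_n/\varepsilon)\searrow 0$ a.s., hence for each $x>0$ the decreasing sets $\{\phi(X_n/\varepsilon)>x\}$ have null intersection and $\overline F_{\phi(X_n/\varepsilon)}(x)\searrow 0$; since $g$ is increasing and right-continuous with $g(0)=0$, this forces $g\big(\overline F_{\phi(X_n/\varepsilon)}(x)-\big)\to 0$. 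For the domination: $X_n\le X_1$ gives $g\big(\overline F_{\phi(X_n/\varepsilon)}(x)-\big)\le g\big(\overline F_{\phi(X_1/\varepsilon)}(x)-\big)$, and this dominating function is integrable on $(0,\infty)$ because $X_1\in(L_g^\phi)_+$ together with the $\Delta_2$-condition yields, by Lemma~\ref{l-distOrlicz}(c), that $\int_0^1\phi(F_{X_1}^{-1}(1-u)/\varepsilon)\d g(u)<\infty$ for \emph{every} $\varepsilon>0$, and this integral equals $\int_0^\infty g\big(\overline F_{\phi(X_1/\varepsilon)}(x)-\big)\d x$ by the same rewriting. Dominated convergence then gives $\int_0^\infty g\big(\overline F_{\phi(X_n/\varepsilon)}(x)-\big)\d x\to 0$, as required.

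The only delicate step is the domination, and it is precisely here that the $\Delta_2$-condition enters; by Lemma~\ref{l-conv}(b) it cannot be dropped. I would also stress that $g$ is not assumed continuous here: the counterexamples to the reverse Fatou property rely on jumps of $g$ at interior points probed by sequences whose limit has positive probability, whereas when $X_n\searrow 0$ only the behaviour of $g$ near $0$ matters, and there $g(0+)=g(0)=0$ by right-continuity — which is exactly why the hypotheses $g(0)=0$ and $\phi\in\Delta_2$ suffice.
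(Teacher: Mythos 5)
Your proof is correct, and it takes a slightly different route from the paper's. The paper argues by contradiction: assuming $a_n:=\pi_{g,\phi,0}(X_n)\to a>0$, it invokes Proposition~\ref{p-distOrlicz}(c) (which needs the $\Delta_2$-condition) to obtain the exact identity $\int_0^1\phi\big(F_{X_n}^{-1}(1-u)/a_n\big)\,\d g(u)=1$, and then applies dominated convergence to force $0=1$. You instead argue directly: for each fixed $\varepsilon>0$ you show $\int_0^1\phi\big(F_{X_n}^{-1}(1-u)/\varepsilon\big)\,\d g(u)\to 0$ by the same rewriting via \eqref{eq-defdist} and the same dominated-convergence argument, concluding that this integral is eventually $\le 1$ and hence $\pi_{g,\phi,0}(X_n)\le\varepsilon$ eventually. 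This sidesteps both the contradiction and any appeal to the equality characterization in Proposition~\ref{p-distOrlicz}(c); the $\Delta_2$-condition is used only, as in the paper, through Lemma~\ref{l-distOrlicz}(c) to secure the integrable majorant at scale $\varepsilon$. Your argument also avoids one small bookkeeping step the paper implicitly uses, namely that the $a_n$ are decreasing (so that $\phi(X_n/a_n)\le\phi(X_1/a)$ for domination), since your scale $\varepsilon$ is fixed. The remark at the end — that only the behaviour of $g$ near $0$ matters here, which is why mere right-continuity with $g(0)=0$ suffices despite $g$ possibly having interior jumps — is a correct and useful observation that the paper does not make explicit.
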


\begin{proof}
By Proposition \ref{p-distHaezGoov}(a) and the positivity of the $X_n$ it suffices to show that $a_n:=\pi_{g,\phi,0}(X_n)\to 0$.

Suppose, on the contrary, that $a:=\lim_n a_n>0$. Then, for any $n$, $a_n>0$, and by Proposition \ref{p-distOrlicz}(c) we get that $\int_0^1 \phi(\frac{F_{X_n}^{-1}(1-u)}{a_n}) \d g(u) = 1$, where we have used the $\Delta_2$-condition. Hence, by \eqref{eq-defdist} and a property of quantile functions,
\[
\int_0^{\infty} g\big(\overline{F}_{\phi(\frac{X_n}{a_n})}(x)-\big)\d x=1.
\]
On the other hand, since $\frac{X_n}{a_n}\to 0$, $\overline{F}_{\phi(\frac{X_n}{a_n})}(x)\to 0$ for all $x>0$. Since $g(0)=0$ and $g$ is continuous at $0$, the dominated convergence theorem implies that $0=1$; note that, by Lemma \ref{l-distOrlicz}(c), $\int_0^{\infty} g\big(\overline{F}_{\phi(\frac{X_1}{a})}(x)-\big)\d x<\infty$ by the $\Delta_2$-condition. This is the desired contradiction.
\end{proof}

\begin{proof}[Proof of Theorem \ref{t-dHGtriv}]
Let $X\in L_g^\phi$. For simplicity we write $\pi=\pi_{g,\phi,0}$ and $\sigma(x)=\pi((X-x)^+)+x$, $x\in \mathbb{R}$. By Proposition \ref{p-distHaezGoovinf}(a), $\rho_{g,\phi,0}(X)=\lim_{x\to-\infty} \sigma(x)$. 

The proof requires several steps.

(1) We first suppose that $X$ is bounded. 

(1a) Since $\phi$ is convex, it is left- and right-differentiable at $1$, so that $c_-$ and $c_+$ exist. Thus there is an increasing function $h:[0,\infty)\to[0,\infty)$ with $h(t)\to 0$ as $t\to 0$ such that, for $0\leq t \leq 1$,
\begin{equation}\label{eq-phi1}
0\leq\phi(t)-(1+c_-(t-1))\leq h(|t-1|)|t-1|,    
\end{equation}
and, for $t\geq 1$,
\begin{equation}\label{eq-phi2}
0\leq\phi(t)-(1+c_+(t-1))\leq h(|t-1|)|t-1|.
\end{equation} 

Next, let $x<\essinf X$. Then $P(X-x>0)=1$, hence $\pi(X-x)\neq 0$ by Proposition \ref{p-distOrlicz}(a); and since $X-x$ is bounded we have by Proposition \ref{p-distOrlicz}(c) that 
\[
\int_0^1 \phi\Big(\frac{F_{X-x}^{-1}(1-u)}{\pi(X-x)}\Big) \d g(u)= 1.
\]
Since $\sigma(x)-x=\pi((X-x)^+)= \pi(X-x)>0$, we have, using a property of quantile functions and the fact that $g(1-)=g(1)$, 
\begin{equation}\label{eq-phi3}
\int_{[0,1)} \phi\Big(\frac{F_X^{-1}(1-u)-x}{\sigma(x)-x}\Big) \d g(u)= 1.
\end{equation}
Also, since $X$ is bounded, $\sigma$ is increasing and $\sigma(t)$ converges as $t\to-\infty$, there is some $M>0$ such that $|F_X^{-1}|\leq M$ on $(0,1]$ and $|\sigma|\leq M$ on $(-\infty,0]$. 

Writing $t(u)= \frac{F_X^{-1}(1-u)-x}{\sigma(x)-x}$, we have that $t(u)-1= \frac{F_X^{-1}(1-u)-\sigma(x)}{\sigma(x)-x}$. Let $I_-=\{ u\in[0,1) : F_X^{-1}(1-u)\leq \sigma(x)\}$ and $I_+=[0,1)\setminus I_-$. Thus $t(u)\leq 1$ if and only if $u\in I_-$.

We now integrate $\phi(t(u))-(1+c_-(t(u)-1))$ over $I_-$ and $\phi(t(u))-(1+c_+(t(u)-1))$ over $I_+$, add the results, and apply \eqref{eq-phi1}, \eqref{eq-phi2}, \eqref{eq-phi3} and the fact that $\int_{[0,1)}\d g(u)=1$. We thus get, for $x<-M$,
\begin{align*}
-c_-\int_{I_-} \frac{F_X^{-1}(1-u)-\sigma(x)}{\sigma(x)-x}\,\d g(u)-&c_+\int_{I_+} \frac{F_X^{-1}(1-u)-\sigma(x)}{\sigma(x)-x}\,\d g(u)\\
& \leq h\Big(\frac{2M}{|x|-M}\Big) \int_{[0,1)} \Big|\frac{F_X^{-1}(1-u)-\sigma(x)}{\sigma(x)-x}\Big|\d g(u)\\
& \leq h\Big(\frac{2M}{|x|-M}\Big) \frac{2M}{\sigma(x)-x}.
\end{align*}
Writing $\delta:=c_+-c_-\geq 0$, and noting the definition of $\rho_g(X)$, we thus find that
\begin{align}\label{eq-deltaM}
c_-(\sigma(x)-\rho_g(X))+\delta \int_{I_+} (\sigma(x)-F_X^{-1}(1-u))\d g(u)\leq 2M.h\Big(\frac{2M}{|x|-M}\Big). 
\end{align}

We now distinguish two cases. 

(1b) Suppose that $X\geq 0$. Then $F_X^{-1}\geq 0$ on $(0,1]$; also, $\rho_{g,\phi,0}(X)\geq 0$ by monotonicity and hence $\sigma(x)\geq 0$ for all $x$ by Proposition \ref{p-distHaezGoovinf}. Thus, \eqref{eq-deltaM} implies that
\[
c_-(\sigma(x)-\rho_g(X))\leq\delta \rho_g(X) +2M.h\Big(\frac{2M}{|x|-M}\Big). 
\]
Letting $x\to-\infty$, we obtain that
\begin{equation}\label{eq-rhog+}
\rho_{g,\phi,0}(X)\leq \Big(\frac{\delta}{c_-}+1\Big)\rho_g(X)= \frac{c_+}{c_-}\rho_g(X).
\end{equation}

(1c) Now let $X\leq 0$, hence $F_X^{-1}\leq 0$ on $(0,1]$. Since $I_+=\varnothing$ if $\sigma(x)\geq 0$, we see that $\int_{I_+} (-\sigma(x))\d g(u) \leq  (-\sigma(x))^+$. Thus, \eqref{eq-deltaM} implies that
\[
c_-(\sigma(x)-\rho_g(X))\leq\delta (-\sigma(x))^+ +2M.h\Big(\frac{2M}{|x|-M}\Big). 
\]
Letting $x\to-\infty$, and noting that $\rho_{g,\phi,0}(X)\leq 0$, we obtain that
\begin{equation}\label{eq-rhog-}
\rho_{g,\phi,0}(X)\leq\frac{c_-}{c_-+\delta}\rho_g(X)= \frac{c_-}{c_+}\rho_g(X).
\end{equation}

(1d) Finally, for arbitrary bounded $X$, we write $X=X^+-X^-$. Since $X^+=\max(X,0)$ and $-X^-=\min(X,0)$ are comonotonic, Proposition \ref{p-distHGcom}, \eqref{eq-rhog+} and \eqref{eq-rhog-}, with Proposition \ref{p-distHaezGoov}(b), imply that
\begin{align*}
\rho_g(X)\leq\rho_{g,\phi,0}(X)\leq \rho_{g,\phi,0}(X^+) + \rho_{g,\phi,0}(-X^-)\leq \frac{c_+}{c_-}\rho_g(X^+) + \frac{c_-}{c_+}\rho_g(-X^-).
\end{align*}
This shows the desired inequality for $X\in L^\infty$.

(2) We now let $X\in L_g^\phi$ be arbitrary, where we assume that $\phi$ satisfies the $\Delta_2$-condition.

(2a) Suppose again that $X\geq 0$. Assume first that $g(0)=0$. Since $X=\min(X,n)+(X-n)^+$, and since $\min(X,n)$ and $(X-n)^+$ are comonotonic, it follows from Proposition \ref{p-distHGcom} that $\rho_{g,\phi,0}(X)\leq \rho_{g,\phi,0}(\min(X,n))+\rho_{g,\phi,0}((X-n)^+)$, hence, by \eqref{eq-rhog+},
\[
\rho_{g,\phi,0}(X)\leq \frac{c_+}{c_-}\rho_{g}(\min(X,n))+\rho_{g,\phi,0}((X-n)^+).
\]
Letting $n\to\infty$, and applying Lemmas \ref{l-triv1} and \ref{l-triv2}, we obtain that
\begin{equation}\label{eq-rhog+2}
\rho_{g,\phi,0}(X)\leq \frac{c_+}{c_-}\rho_g(X).
\end{equation}
On the other hand, suppose that $g(0)>0$. Then $X$ is bounded above, hence bounded, by the discussion after Definition \ref{d-OL}, so that \eqref{eq-rhog+2} holds by \eqref{eq-rhog+}.

(2b) Now suppose that $X\leq 0$. Then, for $n\geq 1$,
\[
\rho_{g,\phi,0}(X)\leq \rho_{g,\phi,0}(\max(X,-n)).
\]
Applying \eqref{eq-rhog-}, we get that
\[
\rho_{g,\phi,0}(X)\leq \frac{c_-}{c_+}\rho_g(\max(X,-n)).
\]
Letting $n\to\infty$, and applying Lemma \ref{l-triv1}, we obtain that
\begin{equation}\label{eq-rhog-2}
\rho_{g,\phi,0}(X)\leq \frac{c_-}{c_+}\rho_g(X).
\end{equation}

(2c) One can now obtain the desired inequality for arbitrary $X\in L_g^\phi$ as in (1d), using this time \eqref{eq-rhog+2} and \eqref{eq-rhog-2}.
\end{proof}

\begin{proof}[Proof of Corollary \ref{c-dHGtriv}]
If $\phi$ is differentiable at $1$ then $c_-=c_+$ in Theorem \ref{t-dHGtriv} and hence
\[
\rho_g(X)\leq \rho_{g,\phi,0}(X)\leq \rho_g(X^+) + \rho_g(-X^-).
\]
Since $X^+$ and $-X^-$ are comonotonic, Proposition \ref{p-distcomon} implies that the right-hand side equals
$\rho_g(X^+-X^-)=\rho_g(X)$, so that the result follows.
\end{proof}

\subsection*{Acknowledgements} The authors are grateful to Anna Kami\'nska for sharing her expertise on Orlicz-Lorentz spaces. We also thank Emanuela Rosazza Gianin and the members of the PhD-committee of the first author, Michèle Vermaele, Jan Dhaene, and Dani\"el Linders, for interesting discussions. We are grateful to Dani\"el Linders for suggesting the link with risk-dependent expected utilities, and to an anonymous editor for pointing out the link with return risk measures and for suggesting robustification.

\end{document}